\def\llncs{0}
\def\fullpage{1}
\def\anonymous{0}
\def\authnote{0}
\def\notxfont{0}
\def\submission{0}
\def\reply{0}
\def\cameraready{0}
\def\noaux{0}
\def\anonymous{1}
\def\llncs{1}
\def\llncs{1}
\def\anonymous{0}
\def\authnote{0}
\def\authnote{0}
\renewcommand{\emph}{\textit}
\definecolor{darkblue}{rgb}{0,0,0.8}
\definecolor{darkgreen}{rgb}{0,0.5,0}
\definecolor{maroon}{rgb}{0.5,0.1,0.1}
\definecolor{dpurple}{rgb}{0.2,0,0.65}
\definecolor{chocolate}{rgb}{0.8,0.4,0.1}
\DeclareMathAlphabet{\mathpzc}{OT1}{pzc}{m}{it}
\renewcommand*{\backref}[1]{}
\def\notxfont{1}
\renewcommand{\subparagraph}{\paragraph}
\newtheoremstyle{thicktheorem}%
{\topsep}
{\topsep}
{\itshape}{}%
{\bfseries}%
{.}
{ }%
{\thmname{#1}\thmnumber{ #2}%
		\thmnote{ (#3)}%
}
\newtheoremstyle{remark}%name
{\topsep}
{\topsep}
	{}%body font
	{}%indent amount
	{}%theorem head font
	{.}%punctuation after theorem head
	{ }%space after theorem head
	{\textit{\thmname{#1}}\thmnumber{ #2}%theorem head specs
			\thmnote{ (#3)}%
	}
	\theoremstyle{thicktheorem}
	\newtheorem{theorem}{Theorem}[section]
	\newtheorem{lemma}[theorem]{Lemma}
	\newtheorem{corollary}[theorem]{Corollary}
	\newtheorem{proposition}[theorem]{Proposition}
	\newtheorem{definition}[theorem]{Definition}
	\theoremstyle{remark}
	\newtheorem{remark}[theorem]{Remark}
	\crefname{theorem}{Theorem}{Theorems}
	\crefname{assumption}{Assumption}{Assumptions}
	\crefname{construction}{Construction}{Constructions}
	\crefname{corollary}{Corollary}{Corollaries}
	\crefname{conjecture}{Conjecture}{Conjectures}
	\crefname{definition}{Definition}{Definitions}
	\crefname{exmaple}{Example}{Examples}
	\crefname{experiment}{Experiment}{Experiments}
	\crefname{counterexample}{Counterexample}{Counterexamples}
	\crefname{lemma}{Lemma}{Lemmata}
	\crefname{observation}{Observation}{Observations}
	\crefname{proposition}{Proposition}{Propositions}
	\crefname{remark}{Remark}{Remarks}
	\crefname{claim}{Claim}{Claims}
	\crefname{fact}{Fact}{Facts}
	\crefname{note}{Note}{Notes}
 \crefname{appendix}{App.}{Appendices}
 \crefname{section}{Sec.}{Sections}
\renewcommand*{\backref}[1]{}
	\renewcommand*{\backref}[1]{(Cited on page~#1.)}
\newcommand*{\keys}[1]{\mathsf{#1}}
\newcommand{\Oracle}[1]{O_{\mathtt{#1}}}
\newcommand*{\algo}[1]{\ensuremath{\mathsf{#1}}}
\newcommand*{\qalgo}[1]{\ensuremath{\mathpzc{#1}}}
\newcommand*{\qstate}[1]{\mathpzc{#1}}
\newcommand*{\qregidx}[2]{{\color{gray}{\mathsf{#1}_{#2}}}}
\newcommand*{\entity}[1]{\mathcal{#1}}
\newcounter{expitem}
\newcommand\expitem[2]{\refstepcounter{expitem}%
   \arabic{expitem}.&#1&\arabic{expitem}.&#2\\%
}
\newcommand\Heading[2]{\multicolumn2{c}{#1}&\multicolumn2{c}{#2}\\}
\newcommand{\chosen}{\leftarrow}
\newcommand{\lrun}{\leftarrow}
\newcommand{\la}{\leftarrow}
\newcommand{\ra}{\rightarrow}
\renewcommand{\gets}{\leftarrow}
\newcommand{\seteq}{\coloneqq}
\newcommand{\tensor}{\otimes}
\newcommand{\concat}{\|}
\newcommand{\setbk}[1]{\{#1\}}
\newcommand{\cC}{\mathcal{C}}
\newcommand{\cD}{\mathcal{D}}
\newcommand{\cE}{\mathcal{E}}
\newcommand{\cF}{\mathcal{F}}
\newcommand{\cI}{\mathcal{I}}
\newcommand{\cM}{\mathcal{M}}
\newcommand{\cN}{\mathcal{N}}
\newcommand{\cO}{\mathcal{O}}
\newcommand{\cP}{\mathcal{P}}
\newcommand{\cR}{\mathcal{R}}
\newcommand{\cU}{\mathcal{U}}
\newcommand{\cX}{\mathcal{X}}
\newcommand{\cY}{\mathcal{Y}}
\newcommand{\qA}{\qalgo{A}}
\newcommand{\qB}{\qalgo{B}}
\newcommand{\qC}{\qalgo{C}}
\newcommand{\qD}{\qalgo{D}}
\newcommand{\qG}{\qalgo{G}}
\newcommand{\qS}{\qalgo{S}}
\def\makeuppercase#1{
%\expandafter\newcommand\csname cal#1\endcsname{\mathcal{#1}}
%\expandafter\newcommand\csname adv#1\endcsname{\mathcal{#1}}
\expandafter\newcommand\csname sf#1\endcsname{\mathsf{#1}}
\expandafter\newcommand\csname frak#1\endcsname{\mathfrak{#1}}
\expandafter\newcommand\csname bb#1\endcsname{\mathbb{#1}}
\expandafter\newcommand\csname bf#1\endcsname{\textbf{#1}}
}
\def\makelowercase#1{
\expandafter\newcommand\csname frak#1\endcsname{\mathfrak{#1}}
\expandafter\newcommand\csname bf#1\endcsname{\textbf{#1}}
}
\newcounter{char}
   \edef\letter{\alph{char}}
   \edef\Letter{\Alph{char}}
\def\makeuppercase#1{
\expandafter\newcommand\csname tl#1\endcsname{\widetilde{#1}}
}
\def\makelowercase#1{
\expandafter\newcommand\csname tl#1\endcsname{\widetilde{#1}}
}
\newcommand{\R}{\mathbb{R}}
\newcommand{\bit}{\{0,1\}}
\newcommand{\Fs}{\mathcal{F}}
\newcommand{\Ys}{\mathcal{Y}}
\newcommand{\Xs}{\mathcal{X}}
\newcommand{\randspace}{\mathcal{R}}
\newcommand{\secp}{\lambda}
\newcommand{\sep}{\lambda}
\newcommand{\coin}{\keys{coin}}
\newcommand{\cert}{\keys{cert}}
\newcommand{\state}{\mathsf{st}}
\newcommand{\A}{\entity{A}}
\newcommand{\realgame}[2]{\mathsf{Real}^{\mathsf{#1}\textrm{-}\mathsf{#2}}}
\newcommand{\simgame}[2]{\mathsf{Sim}^{\mathsf{#1}\textrm{-}\mathsf{#2}}}
\newcommand{\adva}[2]{\mathsf{Adv}_{#1}^{\mathsf{#2}}}
\newcommand{\advb}[3]{\mathsf{Adv}_{#1}^{\mathsf{#2} \mbox{-} \mathsf{#3}}}
\newcommand{\advc}[4]{\mathsf{Adv}_{#1}^{\mathsf{#2} \mbox{-} \mathsf{#3} \mbox{-} \mathsf{#4}}}
\newcommand{\expt}[2]{\mathsf{Expt}_{#1}^{\mathsf{#2}}}
\newcommand{\expa}[2]{\mathsf{Expt}_{#1}^{\mathsf{#2}}}
\newcommand{\expb}[3]{\mathsf{Exp}_{#1}^{ \mathsf{#2} \mbox{-} \mathsf{#3}}}
\newcommand{\expc}[4]{\mathsf{Exp}_{#1}^{ \mathsf{#2} \mbox{-} \mathsf{#3} \mbox{-} \mathsf{#4}}}
\newcommand{\hybi}[1]{\mathsf{Hyb}_{#1}}
\newcommand{\hybij}[2]{\mathsf{Hyb}_{#1}^{#2}}
\newcommand*{\pk}{\keys{pk}}
\newcommand*{\sk}{\keys{sk}}
\newcommand*{\dk}{\keys{dk}}
\newcommand*{\ek}{\keys{ek}}
\newcommand*{\vk}{\keys{vk}}
\newcommand*{\fsk}{\keys{fsk}}
\newcommand*{\msk}{\keys{msk}}
\newcommand*{\mpk}{\keys{mpk}}
\newcommand*{\MPK}{\keys{MPK}}
\newcommand*{\MSK}{\keys{MSK}}
\newcommand*{\ct}{\keys{ct}}
\newcommand*{\msg}{\keys{m}}
\newcommand{\prfkey}{\keys{K}}
\newcommand{\qsk}{\qstate{sk}}
\newcommand{\qdk}{\qstate{dk}}
\newcommand{\qfsk}{\qstate{fsk}}
\newcommand{\qusk}{\qstate{usk}}
\newenvironment{boxfig}[2]{\begin{figure}[#1]\fbox{\begin{minipage}{0.97\linewidth}
                        \vspace{0.2em}
                        \makebox[0.025\linewidth]{}
                        \begin{minipage}{0.95\linewidth}
            {{
                        #2 }}
                        \end{minipage}
                        \vspace{0.2em}
                        \end{minipage}}
                        }
                        {\end{figure}}
\newcommand{\pprotocol}[4]{
\begin{boxfig}{h!}{\footnotesize 
%\begin{center}
\centering{\textbf{#1}}
%\end{center}
    #4
\vspace{0.2em} } \caption{\label{#3} #2}
\end{boxfig}
}
\newcommand{\protocol}[4]{
\pprotocol{#1}{#2}{#3}{#4} }
\newcommand{\prf}{\algo{F}}
\newcommand{\Good}{\mathtt{Good}}
\newcommand{\Setup}{\algo{Setup}}
\newcommand{\Gen}{\algo{Gen}}
\newcommand{\KeyGen}{\algo{KeyGen}}
\newcommand{\KG}{\algo{KG}}
\newcommand{\Enc}{\algo{Enc}}
\newcommand{\Dec}{\algo{Dec}}
\newcommand{\enc}{\algo{Enc}}
\newcommand{\Vrfy}{\algo{Vrfy}}
\newcommand{\certvrfy}{\algo{Vrfy}}
\newcommand{\qKG}{\qalgo{KG}}
\newcommand{\qDec}{\qalgo{Dec}}
\newcommand{\qCert}{\qalgo{Cert}}
\newcommand{\qSEnc}{\qalgo{SEnc}}
\newcommand{\qSKG}{\qalgo{SKG}}
\newcommand{\qVrfy}{\qalgo{Vrfy}}
\newcommand\PKE{\algo{PKE}}
\newcommand\ABE{\algo{ABE}}
\newcommand{\sfabe}{\mathsf{abe}}
\newcommand\FE{\algo{FE}}
\newcommand\PKFE{\algo{PKFE}}
\newcommand\SKFE{\algo{SKFE}}
\newcommand{\SKE}{\algo{SKE}}
\newcommand{\ske}{\algo{ske}}
\newcommand{\E}{\algo{E}}
\newcommand{\D}{\algo{D}}
\newcommand{\GC}{\algo{GC}}
\newcommand{\Garble}{\algo{Grbl}}
\newcommand{\SimGC}{\algo{Sim.GC}}
\newcommand{\GCEval}{\algo{GCEval}}
\newcommand{\lab}{\mathsf{lab}} % labels for GC
\newcommand{\tildeE}{\widetilde{E}}
\newcommand{\PuncPRF}{\algo{PPRF}}
\newcommand{\PRF}{\algo{PRF}}
\newcommand{\prfgen}{\PRF.\Gen}
\newcommand{\Puncture}{\algo{Puncture}}
\newcommand{\negl}{{\mathsf{negl}}}
\newcommand{\poly}{{\mathrm{poly}}}
\newcommand{\zo}[1]{\{0,1\}^{#1}}
\newcommand{\bin}{\{0,1\}}
\newcommand{\class}[1]{\mathsf{#1}}
\newcommand{\Ppoly}{\class{P/poly}}
\newcommand{\calF}{\mathcal{F}}
\newcommand{\calQ}{\mathcal{Q}}
\newcommand{\calR}{\mathcal{R}}
\newcommand{\calY}{\mathcal{Y}}
\newcommand{\numkey}{q}
\newcommand{\numct}{n}
\newcommand{\numcti}[1]{n_{#1}}
\newcommand{\SKFESKL}{\algo{SKFE}\textrm{-}\algo{SKL}}
\newcommand{\ABESKL}{\algo{ABE}\textrm{-}\algo{SKL}}
\newcommand{\tlx}{\widetilde{x}}
\newcommand{\PKESKL}{\algo{SKL}}
\newcommand{\cKLA}{\textrm{CoIC-KLA}}
\newcommand{\owKLA}{\textrm{OW-KLA}}
\newcommand{\qst}{\qstate{st}}
\newcommand{\SDE}{\algo{SDE}}
\newcommand{\sde}{\algo{sde}}
\newcommand{\cPKE}{\algo{cPKE}}
\newcommand{\qExt}{\qalgo{Ext}}
\newcommand{\qaux}{\qstate{aux}}
\newcommand{\SKL}{\algo{SKL}}
\newcommand{\skl}{\keys{skl}}
\newcommand{\OWSKL}{\algo{OW}}
\newcommand{\INDSKL}{\algo{IND}}
\newcommand{\event}{\mathsf{E}}
\newcommand{\returned}{\top}
\newcommand{\unreturned}{\bot}
\newcommand{\requestchallenge}{\mathsf{RequestChallenge}}
\newcommand{\fe}{\mathsf{fe}}
\newcommand{\skfe}{\algo{skfe}}
\newcommand{\PKFESKL}{\mathsf{PKFE\textrm{-}SKL}}
\newcommand{\List}[1]{L_{\mathtt{#1}}}
\newcommand{\xpadlen}{\ell_{\mathsf{pad}}}
\newcommand{\skectlen}{\ell_{\mathsf{ske}}}
\newcommand{\barx}{\overline{x}}
\newcommand{\Vhyb}{T_{\mathsf{hyb}}}
\newcommand{\Vembed}{T_{\mathsf{emb}}}
\newcommand{\ADAINDKLA}{\textrm{Ada-IND-KLA}}
\newcommand{\SELINDKLA}{\textrm{Sel-IND-KLA}}
\newcommand{\sklpklen}{\ell_{\mathsf{ek}}}
\newcommand{\sklrandlen}{\ell_{\mathsf{rand}}}
\newcommand{\oneABESKL}{\algo{1ABE}}
\newcommand{\qABESKL}{\algo{qABE}}
\newcommand{\oneabeskl}{\algo{1abe}}
\newcommand{\oneABE}{\algo{1ABE}}
\newcommand{\qABE}{\algo{qABE}}
\newcommand{\oneabe}{\algo{1abe}}
\newcommand{\decision}{d}
\newcommand{\authornote}[3]{\textcolor{#3}{[\textsc{#1:} {#2}]}}
\newcommand{\shweta}[1]{\authornote{Shweta}{#1}{magenta}}
\newcommand{\fuyuki}[1]{\authornote{Fuyuki}{#1}{chocolate}}
\newcommand{\ryo}[1]{\authornote{Ryo}{#1}{darkblue}}
\newcommand{\shota}[1]{\authornote{Shota}{#1}{darkgreen}}
\newcommand{\takashi}[1]{\authornote{Takashi}{#1}{dpurple}}
\newcommand{\shweta}[1]{}
\newcommand{\fuyuki}[1]{}
\newcommand{\ryo}[1]{}
\newcommand{\shota}[1]{}
\newcommand{\takashi}[1]{}
\let\oldvec\vec% Store \vec in \oldvec
\let\vec\oldvec% Restore \vec from \oldvec
\renewcommand*\l@author[2]{}
\renewcommand*\l@title[2]{}
\theoremstyle{remark}
\title{
\textbf{Public Key Encryption with Secure Key Leasing}\thanks{{\color{red}{\emph{We attached the full version of this paper as a supplementary material}}}.}
}
\title{
\textbf{Public Key Encryption with Secure Key Leasing}
}
\begin{document}
%\author{}
%\institute{}

\ifnum\anonymous=1 
\ifnum\llncs=1
% for anonymized LNCS version
\author{\empty}\institute{\empty}
\else
% for anonymized full version
\author{}
\fi
\else
%
%  For camera ready version.
%
\ifnum\llncs=1
\author{
	Shweta Agrawal\inst{1} \and Fuyuki Kitagawa\inst{2} \and Ryo Nishimaki\inst{2} \and Shota Yamada\inst{3} \and Takashi Yamakawa\inst{2}
}
\institute{
	IIT Madras, Chennai, India \and NTT Social Informatics Laboratories, Tokyo, Japan \and National Institute of Advanced Industrial Science and Technology (AIST), Tokyo, Japan
}
\else
%
%   For full/eprint version, etc.
%
\author[$\ast$]{Shweta Agrawal}
\author[$\dagger$]{\hskip 1em Fuyuki Kitagawa}
\author[$\dagger$]{\hskip 1em Ryo Nishimaki}
\author[$\ddagger$]{\\\hskip 1em Shota Yamada}
\author[$\dagger$]{\hskip 1em Takashi Yamakawa}
\affil[$\ast$]{{\small IIT Madras, Chennai, India}\authorcr{\small shweta.a@cse.iitm.ac.in}}
\affil[$\dagger$]{{\small NTT Social Informatics Laboratories, Tokyo, Japan}\authorcr{\small \{fuyuki.kitagawa.yh,ryo.nishimaki.zk,takashi.yamakawa.ga\}@hco.ntt.co.jp}}
\affil[$\ddagger$]{{\small National Institute of Advanced Industrial Science and Technology (AIST), Tokyo, Japan}\authorcr{\small yamada-shota@aist.go.jp}}
\renewcommand\Authands{, }
\fi %%%%% END OF LNCS branch
\fi

\ifnum\llncs=1
\date{}
\else
\ifnum\anonymous=0
\date{\today}
\else
\date{}
\fi
\fi

\maketitle

\begin{abstract}
We introduce the notion of public key encryption with secure key leasing (PKE-SKL). Our notion supports the leasing of decryption keys so that a leased key achieves the decryption functionality but comes with the guarantee that if the quantum decryption key returned by a user passes a validity test, then the user has lost the ability to decrypt. Our notion is similar in spirit to the notion of secure software leasing (SSL) introduced by Ananth and La Placa (Eurocrypt 2021) but captures significantly more general adversarial strategies\footnote{In more detail, our adversary is not restricted to use an honest evaluation algorithm to run pirated software.}. Our results can be summarized as follows:

\begin{enumerate}
    \item \emph{Definitions: } We introduce the definition of PKE with secure key leasing and formalize a security notion that we call indistinguishability against key leasing attacks (IND-KLA security). We also define a one-wayness notion for PKE-SKL that we call OW-KLA security and show that an OW-KLA secure PKE-SKL scheme can be lifted to an IND-KLA secure one by using the (quantum) Goldreich-Levin lemma.
    
    \item \emph{Constructing IND-KLA PKE with Secure Key Leasing:} We provide a construction of OW-KLA secure PKE-SKL (which implies IND-KLA secure PKE-SKL as discussed above) by leveraging a PKE scheme that satisfies a new security notion that we call \emph{consistent or inconsistent security against key leasing attacks (CoIC-KLA security)}. We then construct a CoIC-KLA secure PKE scheme using 1-key Ciphertext-Policy Functional Encryption (CPFE) that in turn can be based on any IND-CPA secure PKE scheme.
      \item \emph{Identity Based Encryption, Attribute Based Encryption and Functional Encryption with Secure Key Leasing:} We provide definitions of secure key leasing in the context of advanced encryption schemes such as identity based encryption (IBE), attribute-based encryption (ABE) and functional encryption (FE). Then we provide constructions by combining the above PKE-SKL with standard IBE, ABE and FE schemes. 
     
     Notably, our definitions allow the adversary to request \emph{distinguishing} keys in the security game, namely, keys that distinguish the challenge bit by simply decrypting the challenge ciphertext, as long as it returns them (and they pass the validity test) before it sees the challenge ciphertext. All our constructions satisfy this stronger definition, albeit with the restriction that only a bounded number of such keys is allowed to the adversary in the IBE and ABE (but not FE) security games.
\end{enumerate}

Prior to our work, the notion of single decryptor encryption (SDE) has been studied in the context of PKE (Georgiou and Zhandry, Eprint 2020) and FE (Kitigawa and Nishimaki, Asiacrypt 2022) but all their constructions rely on strong assumptions including indistinguishability obfuscation. In contrast, our constructions do not require any additional assumptions, showing that PKE/IBE/ABE/FE can be upgraded to support secure key leasing for free.
\end{abstract}

\ifnum\llncs=1
\else
\newpage
\setcounter{tocdepth}{2}
\tableofcontents

\newpage
\fi

\ifnum\llncs=0
% !TEX root = main.tex

\section{Introduction}\label{sec:intro}
Recent years have seen amazing advances in cryptography by leveraging the power of quantum computation. Several novel primitives such as perfectly secure key agreement \cite{BB84}, quantum money \cite{wiesner1983conjugate}, quantum copy protection \cite{aaronson2009quantum}, one shot signatures \cite{STOC:AGKZ20} and such others, which are not known to exist in the classical world, can be constructed in the quantum setting, significantly advancing cryptographic capabilities. 

In this work, we continue to study harnessing quantum powers to protect against software piracy. The quantum no-cloning principle intuitively suggests applicability to anti-piracy, an approach which was first investigated in the seminal work of Aaronson \cite{aaronson2009quantum}, who introduced the notion of quantum copy protection. At a high level, quantum copy protection prevents users from copying software in the sense that it guarantees that when an adversary is given a copy protected circuit for computing some function $f$, it cannot create two (possibly entangled) quantum states, both of which can compute $f$. While interesting in its own right for preventing software piracy, quantum copy protection (for some class of circuits) also has the amazing application of public-key quantum money \cite{STOC:AarChr12}. Perhaps unsurprisingly, constructions of quantum copy protection schemes from standard cryptographic assumptions have remained largely elusive. This motivates the study of primitives weaker than quantum copy protection, which nevertheless offer meaningful guarantees for anti-piracy.

Secure software leasing (SSL), introduced by Ananth and La Placa \cite{EC:AnaLaP21}, is such a primitive, which while being weaker than quantum copy-protection, is nevertheless still meaningful for software anti-piracy. Intuitively, this notion allows to encode software into a version which may be leased or rented out, for some specific term at some given cost. Once the lease expires, the lessee returns the software and the lessor can run an efficient procedure to verify its validity. If the software passes the test, we have the guarantee that the lessee is no longer able to run the software (using the honest evaluation algorithm).

In this work, we explore the possibility of equipping public key encryption (PKE) with a key leasing capability. The benefits of such a capability are indisputable -- in the real world, decryption keys of users often need to be revoked, for instance, when a user leaves an organization. In the classical setting, nothing prevents the user from maintaining a copy of her decryption key and misusing its power. Revocation mechanisms have been designed to prevent such attacks, but these are often cumbersome in practice. Typically, such a mechanism entails the revoked key being included in a Certificate Revocation List (CRL) or Certificate Revocation Trees (CRT), or some database which is publicly available, so that other users are warned against its usage. However, the challenges of effective certificate revocation are well acknowledged in public key infrastructure -- please see \cite{USENIX:BDTW01} for a detailed discussion. If the decryption keys of a PKE could be encoded as quantum states and allow for verifiable leasing, this would constitute a natural and well-fitting solution to the challenge of key revocation.

\subsection{Prior Work}\label{sec:prior}
In this section, we discuss prior work related to public key encryption (PKE) and public key functional encryption (PKFE), where decryption keys are encoded into quantum states to benefit from uncloneability. For a broader discussion on prior work related to quantum copy protection and secure software leasing, we refer the reader to 
%\ifnum\llncs=0
Section \ref{sec:other-rel}.
%\else 
%Section 1.4 of the full version. 
%\fi

 Georgiou and Zhandry \cite{EPRINT:GeoZha20} introduced the notion of single decryptor encryption (SDE), where the decryption keys are unclonable quantum objects. They showed how to use one-shot signatures together with extractable witness encryption with quantum auxiliary information to achieve public key SDE. Subsequently, Coladangelo, Liu, Liu, and Zhandry \cite{C:CLLZ21} achieved SDE assuming iO and extractable witness encryption or assuming  subexponential iO, subexponential OWF, LWE and a strong monogamy property (which was subsequently shown to be true \cite{Quantum:CulVid22}). Very recently, Kitagawa and Nishimaki \cite{AC:KitNis22} introduced the notion of single-decryptor functional encryption (SDFE), where each functional decryption key is copy protected and provided collusion-resistant single decryptor PKFE for $\Ppoly$ from the subexponential hardness of iO and LWE. 

It is well-known \cite{C:ALLZZ21,EC:AnaLaP21} that copy protection is a stronger notion than SSL\footnote{The informed reader may observe that this implication may not always be true due to some subtleties, but we ignore these for the purpose of the overview.} -- intuitively, if an adversary can generate two copies of a program, then it can return one of them while keeping the other for later use. Thus, constructions of single decryptor encryption \cite{EPRINT:GeoZha20,C:CLLZ21,AC:KitNis22} imply our notion of PKE with secure key leasing from their respective assumptions, which all include at least the assumption of iO 
\ifnum\llncs=0
(see \cref{sec:SDE_and_PKE-SKL} for the detail).
\else
(see Appendix A of the full version for the detail).
\fi
Additionally, in the context of public key FE, the only prior work by Kitagawa and Nishimaki \cite{AC:KitNis22} considers the restricted single-key setting where an adversary is given a single decryption key that can be used to detect the challenge bit. In contrast, we consider the more powerful multi-key setting, which makes our definition of FE-SKL incomparable to the SDFE considered by \cite{AC:KitNis22}. For the primitives of IBE and ABE, there has been no prior work achieving any notion of key leasing to the best of our knowledge. We also note that Aaronson et al. \cite{C:ALLZZ21} studied the notion of ``copy-detection'', which is a weaker form of copy protection, for any ``watermarkable'' functionalities based on iO and OWF. In particular, by instantiating the construction with the watermarkable PKE of \cite{C:GKMWW19}, they obtain PKE with copy-detection from iO + PKE. 

Overall, all previous works that imply PKE-SKL are designed to achieve the stronger goal of copy protection (or the incomparable goal of copy detection) and rely at least on the strong assumption of iO. In this work, our goal is to achieve the weaker goal of PKE-SKL from standard assumptions.

\subsection{Our Results}
In this work, we initiate the study of public key encryption with secure key leasing. Our results can be summarized as follows:

\begin{enumerate}
    \item \emph{Definitions: } We introduce the definition of PKE with secure key leasing (PKE-SKL) to formalize the arguably natural requirement that decryption keys of a PKE scheme is encoded into a leased version so that the leased key continues to achieve the decryption functionality but now comes with an additional ``returnability'' guarantee. In more detail, the security of PKE-SKL requires that if the quantum decryption key returned by a user passes a validity test, then the user has lost the ability to decrypt. To capture this intuition, we formalize a security notion that we call indistinguishability against key leasing attacks (IND-KLA security). We also define a one-wayness notion for PKE-SKL that we call OW-KLA security and show that an OW-KLA secure PKE-SKL scheme can be lifted to an IND-KLA secure one by using the (quantum) Goldreich-Levin lemma.
    
    \item \emph{Constructing IND-KLA PKE with Secure Key Leasing:} We provide a construction of OW-KLA secure PKE-SKL (which imples IND-KLA PKE-SKL as discussed above) by leveraging a PKE scheme that satisfies a new security notion that we call \emph{consistent or inconsistent security against key leasing attacks (CoIC-KLA security)}. We then construct a CoIC-KLA secure PKE scheme using 1-key Ciphertext-Policy Functional Encryption (CPFE) that in turn can be based on any IND-CPA secure PKE scheme.
    
     \item \emph{Identity Based Encryption, Attribute Based Encryption and Functional Encryption with Secure Key Leasing:} We provide definitions of secure key leasing in the context of advanced encryption schemes such as identity based encryption (IBE), attribute-based encryption (ABE) and functional encryption (FE). Then we provide constructions by combining the above PKE-SKL with standard IBE, ABE and FE schemes. 
     
     Notably, our definitions allow the adversary to request {\it distinguishing} keys in the security game, namely, keys that distinguish the challenge bit by simply decrypting the challenge ciphertext. Recall that this was not permitted in the classical setting to avoid trivializing the security definition. However, in the quantum setting, we consider a stronger definition where the adversary can request such keys so long as it returns them (and they pass the validity test) before it sees the challenge ciphertext. All our constructions satisfy this stronger definition, albeit with the restriction that only a bounded number of such keys be allowed to the adversary in the IBE and ABE (but not FE) security games. We emphasize that this restriction is a result of our techniques and could potentially be removed in future work.
\end{enumerate}
We note that, in general, secure software leasing (SSL) only ensures a notion of security where the adversary is forced to use an honest evaluation algorithm for the software. However, our definition (and hence constructions) of PKE/ABE/FE SKL do not suffer from this limitation. Our constructions do not require any additional assumptions, showing that PKE/IBE/ABE/FE can be upgraded to support secure key leasing for free.

% !TEX root = main.tex

\newcommand{\Basic}{\mathsf{Basic}}
\newcommand{\Pivrfy}{\Pi_{\mathrm{vrfy}}}
\newcommand{\CPFE}{\mathsf{CPFE}}
\newcommand{\qstateq}{\qstate{q}}
\newcommand{\mat}[1]{\boldsymbol{#1}}

\subsection{Technical Overview}\label{sec:technical_overview}
We proceed to give a technical overview of this work.

\paragraph{Definition of PKE with secure key leasing.}
We first introduce the definition of PKE with secure key leasing (PKE-SKL).
A PKE-SKL scheme $\PKESKL$ consists of four algorithms $(\qKG, \Enc, \qDec,\qVrfy)$, where the first three algorithms form a standard PKE scheme except the following differences on $\qKG$.\footnote{In this paper, standard math or sans serif font stands for classical algorithms and classical variables. The calligraphic font stands for quantum algorithms and the calligraphic font and/or the bracket notation for (mixed) quantum states.}
\begin{itemize}
\item $\qKG$ outputs a quantum decryption key $\qdk$ instead of a classical decryption key.
\item $\qKG$ outputs a (secret) verification key $\vk$, together with a public encryption key and quantum decryption key.
\end{itemize}
The verification algorithm $\qVrfy$ takes as input a verification key and a quantum decryption key, and outputs $\top$ or $\bot$.
In addition to decryption correctness, $\PKESKL$ should satisfy verification correctness that states that $\qVrfy(\vk,\qdk) = \top$ holds, where $(\ek,\qdk,\vk)\la\qKG(1^\secp)$.

The security of PKE-SKL requires that once a user holding a quantum decryption key returns the key correctly, the user can no longer use the key and lose the ability to decrypt.
We formalize this as a security notion that we call indistinguishability against key leasing attacks (IND-KLA security).
It is defined by using the following security game.
\begin{enumerate}
\item First, the challenger generates $(\ek,\qdk,\vk)\la\qKG(1^\secp)$ and sends $\ek$ and $\qdk$ to an adversary $\qA$.
\item $\qA$ sends two challenge plaintexts $(\msg_0^*,\msg_1^*)$ and a quantum state $\widetilde{\qdk}$ that is supposed to be a correct decryption key.  The challenger checks if $\qVrfy(\vk,\widetilde{\qdk})=\top$ holds. If not, $\qA$ is regarded as invalid and the game ends here. Otherwise, the game goes to the next step.\footnote{We also consider a slightly stronger definition where the adversary can get access to a verification oracle many times, and the adversary is regarded as valid if the answer to at least one query $\widetilde{\qdk}$ is $\top$. In this overview, we focus on the ``1-query'' security for simplicity.}
\item The challenger generates $\ct^*\la\Enc(\ek,\msg_\coin^*)$ and sends it to $\qA$, where $\coin\la\bit$.
\item $\qA$ outputs $\coin'$.
\end{enumerate}
IND-KLA security guarantees that any QPT $\qA$ cannot guess $\coin$ correctly significantly better than random guessing, conditioned on $\qA$ being valid.  In more detail, for any QPT adversary $\qA$ that passes the verification with a non-negligible probability, we have $\abs{\Pr[\coin'=\coin \mid \qVrfy(\vk,\widetilde{\qdk})=\top]-1/2}=\negl(\secp)$. %or $\Pr[\top=\qVrfy(\vk,\widetilde{\qdk})]=\negl(\secp)$.

\paragraph{One-wayness to indistinguishability.}
It is natural to define a one-wayness notion for PKE-SKL, which we call OW-KLA security, by modifying the above definition so that the adversary is required to recover entire bits of a randomly chosen message from its ciphertext.
Similarly to standard PKE, we can transform a OW-KLA secure PKE-SKL scheme into an IND-KLA secure one by using (quantum) Goldreich-Levin lemma \cite{AC02,C:CLLZ21}.
Hence, though our goal is to construct an IND-KLA secure scheme, it suffices to construct an OW-KLA secure one.

\paragraph{Basic idea for OW-KLA secure scheme.}
Towards realizing a OW-KLA secure PKE-SKL scheme, we construct an intermediate scheme $\Basic=(\Basic.\qKG,\Basic.\Enc,\allowbreak \Basic.\qDec,\Basic.\qVrfy)$ using two instances of a standard PKE scheme, with parallel repetition.
Let $\PKE=(\PKE.\KG,\PKE.\Enc,\PKE.\Dec)$ be a standard PKE scheme.
$\Basic.\qKG$ generates two key pairs $(\ek_0,\dk_0)$ and $(\ek_1,\dk_1)$ using $\PKE.\KG$ and outputs $\ek:=(\ek_0,\ek_1)$, $\qdk:=1/\sqrt{2} (\ket{0}\ket{\dk_0}+\ket{1}\ket{\dk_1})$, and $\vk:=(\dk_0,\dk_1)$.
Given $\msg$ and $\ek$, $\Basic.\Enc$ generates $\ct_0\la\PKE.\Enc(\ek_0,\msg)$ and $\ct_1\la\PKE.\Enc(\ek_1,\msg)$ and outputs $\ct:=(\ct_0,\ct_1)$. 
$\Basic.\qDec$ can decrypt this ciphertext using the decryption keys $\dk_0$ and $\dk_1$, respectively, in superposition. Since both decryptions result in the same message $\msg$, we can decrypt ciphertexts without collapsing $\qdk$. Finally, $\Basic.\qVrfy$ checks if the input decryption key is an equal-weight superposition of $\dk_0$ and $\dk_1$. Concretely, it applies a binary outcome measurement w.r.t. a projection 
$\Pivrfy\seteq \frac{1}{2}\left(\ket{0}\ket{\dk_{0}}+\ket{1}\ket{\dk_{1}}\right)\left(\bra{0}\bra{\dk_{0}}+\bra{1}\bra{\dk_{1}}\right)$,
and returns $\top$ if and only if the state is projected onto $\Pivrfy$.

Intuitively, if the adversary has returned the correct decryption key, then it no longer has the capability to decrypt since the decryption key cannot be cloned. However, this scheme does not satisfy OW-KLA because an adversary can pass the verification with probability $1/2$ simply by measuring the decryption key and returning the collapsed decryption key.
Such an adversary can keep the decryption capability even after passing verification because the decryption key collapses to a classical string, which can be easily copied.
Nonetheless, it is reasonable to expect that this attack strategy is optimal because there appears to be no obvious way to attack with a better advantage. That said, it is unclear how to turn this intuition into a formal proof assuming only IND-CPA security of the underlying PKE. To address this gap, we introduce a new security notion for PKE, that we call \emph{consistent or inconsistent security against key leasing attacks (CoIC-KLA security)}. Using this, we can prove that the aforementioned adversarial strategy is optimal and $\Basic$ satisfies $1/2$-OW-KLA security.

By being $1/2$-OW-KLA secure, we mean that the probability that an adversary can correctly return a decryption key and recover the challenge plaintext simultaneously is at most $1/2+\negl(\secp)$.
Below, we introduce the definition of CoIC-KLA security and how to prove $1/2$-OW-KLA security of $\Basic$ using CoIC-KLA security.
Then, we explain how to achieve a full OW-KLA secure scheme by applying parallel amplification to $\Basic$.

\paragraph{Definition of CoIC-KLA security.}
CoIC-KLA security is defined by using the following game.
\begin{enumerate}
\item The challenger generates $(\ek_0,\dk_0)$ and $(\ek_1,\dk_1)$ using $\PKE.\KG$, and generates $\qdk:=1/\sqrt{2} (\ket{0}\ket{\dk_0}+\ket{1}\ket{\dk_1})$. The challenger sends $\ek_0$, $\ek_1$, and $\qdk$ to an adversary $\qA$.
In this game, $\qA$ can access the verification oracle only once, where the oracle is given a quantum state and 
returns the outcome of the projective measurement $(\Pivrfy,I-\Pivrfy)$.
\item $\qA$ sends two plaintexts $(\msg^*_0,\msg^*_1)$ to the challenger.
The challenger picks random bits $a,b$ and generates $\ct_0=\Enc(\ek_0,\msg_a)$ and $\ct_1=\Enc(\ek_1,\msg_{a\oplus b})$. Then, the challenger sends $\ct_0$ and $\ct_1$ to $\qA$.
\item $\qA$ outputs a bit $b'$.
\end{enumerate}
Then, CoIC-KLA security requires that any QPT $\qA$ cannot guess $b$ significantly better than random guessing.
In the above game, if $b=0$, $\ct_0$ and $\ct_1$ are ciphertexts of the same plaintext $\msg_a^*$. On the other hand, if $b=1$, $\ct_0$ and $\ct_1$ are ciphertexts of the different plaintexts $\msg_a^*$ and $\msg_{1\oplus a}^*$. Thus, we call this security notion consistent or inconsistent security.

\paragraph{$1/2$-OW-KLA security of $\Basic$.}
We explain how to prove $1/2$-OW-KLA security of $\Basic$ based on CoIC-KLA security of $\PKE$.
The OW-KLA security game for $\Basic$ is as follows.
\begin{enumerate}
\item The challenger generates $(\ek_0,\dk_0)$ and $(\ek_1,\dk_1)$ using $\PKE.\KG$, sets $\ek:=(\ek_0,\ek_1)$ and $\qdk:=1/\sqrt{2} (\ket{0}\ket{\dk_0}+\ket{1}\ket{\dk_1})$, and sends $\ek$ and $\qdk$ to an adversary $\qA$.
\item The adversary returns a quantum state $\widetilde{\qdk}$ that is supposed to be a correct decryption key. The challenger checks if the result of applying $\Pivrfy$ defined above to $\widetilde{\qdk}$ is $1$. If not, $\qA$ is regarded as invalid and the game ends here. Otherwise, the game goes to the next step.
\item The challenger generates random plaintext $\msg^*$ and two ciphertexts $\ct_0\la\PKE.\Enc(\ek_0,\msg^*)$ and $\ct_1\la\PKE.\Enc(\ek_1,\msg^*)$, and sends $\ct:=(\ct_0,\ct_1)$ to $\qA$.
\item $\qA$ outputs $m'$.
\end{enumerate}
In this game, we say that $\qA$ wins if $(a)$ $\widetilde{\qdk}$ passes the verification, that is, the result of applying $\Pivrfy$ to $\widetilde{\qdk}$ is $1$, and $(b)$ $m'=m^*$ holds.
$\qA$ can win this game with probability at least $1/2$ by just measuring $1/\sqrt{2} (\ket{0}\ket{\dk_0}+\ket{1}\ket{\dk_1})$, returns collapsed key, and decrypt the challenge ciphertext with the key. As stated above, we can prove that this is the optimal strategy for $\qA$, that is, we can bound the advantage of $\qA$ by $1/2+\negl(\secp)$.
The proof can be done by using game sequences. We denote the probability that $\qA$ wins in Game $i$ as $\Pr[S_i]$.
\begin{description}
\item[Game $0$:] This is exactly the above game.
\item[Game $1$:] We defer the verification of the returned key $\widetilde{\qdk}$ after $\qA$ outputs $\msg'$.
\end{description}
From the deferred measurement principle, we have $\Pr[S_0]=\Pr[S_1]$.
\begin{description}
\item[Game $2$:] We change $\qA$'s winning condition $(b)$. Concretely, we replace $(b)$ with $(b')$ $\msg' \in \{\msg^*,\tilde{\msg}\}$ holds, where $\tilde{m}$ is a random plaintext.
\end{description}
Since we relaxed $\qA$'s winning condition, we have $\Pr[S_1] \le \Pr[S_2]$.
\begin{description}
\item[Game $3$:] We generate $\ct_1$ as $\ct_1\la\PKE.\Enc(\ek_1,\tilde{\msg})$ instead of $\ct_1\la\PKE.\Enc(\ek_1,\msg^*)$.
\end{description}
The only difference between Game $2$ and $3$ is that $\ct_0$ and $\ct_1$ are ciphertexts of the same plaintext in Game $2$, but they are ciphertexts of different plaintexts in Game $3$.
Thus, we obtain $\abs{\Pr[S_2]-\Pr[S_3]}=\negl(\secp)$ using CoIC security of $\PKE$.

We complete the proof by showing that $\Pr[S_3] \le 1/2 + \negl(\secp)$ holds if $\PKE$ satisfies one-wayness (that is implied by CoIC-KLA security).
To show it, we use the following Fact $1$.
\begin{description}
\item[Fact $1$:] Assume $\PKE$ satisfies one-wayness. Then, given $1/\sqrt{2} (\ket{0}\ket{\dk_0}+\ket{1}\ket{\dk_1})$, $\PKE.\Enc(\ek_0,\msg^*)$, and $\PKE.\Enc(\ek_1,\tilde{\msg})$, no adversary can obtain $(\dk_0,\tilde{\msg})$ or $(\dk_1,\msg^*)$ with non-negligible probability. 
\end{description}
This can be proved by using the fact that even if we measure $1/\sqrt{2} (\ket{0}\ket{\dk_0}+\ket{1}\ket{\dk_1})$ in the computational basis before giving it to the adversary, the adversary still has success probability at least $\epsilon/2$, where $\epsilon$ is the success probability of the original experiment
\ifnum\llncs=0
 \cite[Lemma 2.1]{C:BonZha13} (which is stated as \cref{lem:BZ}).
 \else 
 \cite[Lemma 2.1]{C:BonZha13}.
 \fi

Suppose $\Pr[S_3]=1/2 + 1/\poly(\secp)$ for some polynomial $\poly$.
This means that conditioned that $\msg' \in \{m^*,\tilde{m}\}$, $\widetilde{\qdk}$ returned by $\qA$ passes the verification with probability significantly greater than $1/2$.
Thus, if we measure $\widetilde{\qdk}$ in the computational basis, we obtain $\dk_0$ with some inverse polynomial probability and also $\dk_1$ with some inverse polynomial probability. (If either one is obtained with overwhelming probability, $\widetilde{\qdk}$ cannot pass the verification with probability significantly greater than $1/2$.) This means that using $\qA$, we can obtain either one pair of $(\dk_0,\tilde{\msg})$ or $(\dk_1,\msg^*)$ with inverse polynomial probability, which contradicts Fact $1$.
Thus, we obtain $\Pr[S_3]\le 1/2+\negl(\secp)$.

From the above discussions, we can conclude that if $\PKE$ satisfies CoIC-KLA security, $\Basic$ satisfies $1/2$-OW-KLA security.

\paragraph{Full OW-KLA security by parallel repetition.}
To achieve a fully OW-KLA secure scheme, we apply parallel amplification to $\Basic$ in the following way.
When generating a key tuple, we generate $\secp$ key tuples $(\ek_i,\qdk_i,\vk_i)$ of $\Basic$ and set $\ek':=(\ek_i)_{i\in[\secp}$, $\qdk':=(\qdk_i)_{i\in[\secp]}$, and $\vk':=(\vk_i)_{i\in[\secp]}$.
When encrypting a plaintext $\msg$, we divide it into $\secp$ pieces $\msg_1,\cdots,\msg_\secp$, and encrypt each $\msg_i$ using $\ek_i$.
Then decryption and verification are performed naturally by running the underlying procedures in $\Basic$ for every $i\in[\secp]$. We can prove the full OW-KLA security of this construction using a strategy analogous to that used to achieve $1/2$-OW-KLA security of $\Basic$. We remark that it is unclear whether we can amplify $1/2$-OW-KLA security to full OW-KLA security in a black box way and our security proof relies on the specific structure of our scheme.

\paragraph{Constructing CoIC-KLA secure PKE scheme.}
In the rest of this overview, we mainly explain how to construct CoIC-KLA secure PKE scheme.
We construct it using 1-key Ciphertext-Policy Functional Encryption (CPFE) that in turn can be based on any IND-CPA secure PKE scheme.

We first review the definition of 1-key CPFE scheme.
A 1-key CPFE scheme $\CPFE$ consists of four algorithms $(\FE.\Setup,\FE.\KG,\FE.\Enc,\FE.\Dec)$.
Given a security parameter, $\FE.\Setup$ outputs a master public key $\mpk$ and a master secret key $\msk$.
$\FE.\KG$ takes as input $\msk$ and a string $x$ and outputs a decryption key $\sk_x$ tied to the string $x$.
$\FE.\Enc$ takes as input $\mpk$ and a description of a circuit $C$ and outputs a ciphertext $\ct$.
If we decrypt this ciphertext $\ct$ with $\sk_x$ using $\FE.\Dec$, we can obtain $C(x)$.
The security of it states that ciphertexts of two circuits $C_0$ and $C_1$ are computationally indistinguishable for an adversary who has decryption key $\sk_x$ for $x$ of its choice, as long as $C_0(x)=C_1(x)$ holds.

Letting $\CPFE=(\FE.\Setup,\FE.\KG,\FE.\Enc,\FE.\Dec)$ be a 1-key CPFE scheme, we construct a CoIC secure PKE scheme $\PKE=(\PKE.\KG,\PKE.\Enc,\PKE.\Dec)$ as follows.
$\PKE.\KG$ generates $(\mpk,\msk)\la\CPFE.\Setup(1^\secp)$ and a decryption key $\sk_x\la\CPFE.\KG(\msk,x)$ for random string $x$, and outputs an encryption key $\ek:=\mpk$ and the corresponding decryption key $\dk:=\sk_x$.
Given $\ek=\mpk$ and $\msg$, $\PKE.\Enc$ outputs $\FE.\Enc(\mpk,C[\msg])$, where $C[\msg]$ is the constant circuit that outputs $\msg$ on any input.
Given $\dk=\sk_x$ and $\ct$, $\PKE.\Dec$ simply outputs $\CPFE.\Dec(\sk_x,\ct)$.
We see that $\PKE$ satisfies decryption correctness from that of $\CPFE$.

Before proving CoIC-KLA security of $\PKE$, we explain a nice tracing property of $\PKE$ that plays an important role in the proof.
It says that if there exists a decoder that can distinguish $\PKE.\Enc(\ek,\msg_0^*)$ and $\PKE.\Enc(\ek,\msg_1^*)$ with probability $1/2 + 1/\poly(\secp)$ for some plaintexts $\msg_0^*,\msg_1^*$ and polynomial $\poly$, we can extract the string $x$ tied to the decryption key from the decoder. Concretely, the following fact holds.
\begin{description}
\item[Fact $2$:]Consider the following experiment. The challenger generates $(\ek:=\mpk,\dk:=\sk_x)$ using $\PKE.\KG$ and sends them to an adversary $\qA$. $\qA$ outputs a decoder $D$ together with $\msg_0^*,\msg_1^*$ that can predict random bit $b$ from $\PKE.\Enc(\ek,\msg_b^*)$ with probability $1/2 + 1/\poly(\secp)$ for some polynomial $\poly$.
Then, we can extract $x$ from $D$ with inverse polynomial probability.
\end{description}

In fact, if the decoder $D$ is a classical decoder, we can extract $x$ from $D$ with a probability close to $1$ as follows.
Let $\tilde{C}[b,\msg_0,\msg_1,i]$ be the circuit that is given $x$ as an input and outputs $\msg_{b\oplus x[i]}$, where $x[i]$ is the $i$-th bit of $x$. Then, suppose we generate many random $(b, \FE.\Enc(\mpk,\tilde{C}[b,\msg_0^*,\msg_1^*,i]))$ and estimate the probability that the decoder $D$ outputs $b$ given $\FE.\Enc(\mpk,\tilde{C}[b,\msg_0^*,\msg_1^*,i])$ as an input. By the $\CPFE$'s security, $\FE.\Enc(\mpk,\tilde{C}[b,\msg_0^*,\msg_1^*,i])$ is indistinguishable from a correctly generated ciphertext of $\msg_{b\oplus x_i}^*$, that is, $\PKE.\Enc(\ek,\msg_{b\oplus x_i}^*)=\FE.\Enc(\mpk, C[\msg_{b\oplus x_i}^*])$ from the view of $\qA$ and $D$ who has $\sk_x$, since $\tilde{C}[b,\msg_0^*,\msg_1^*,i](x)=C[\msg_{b\oplus x_i}^*](x)=\msg_{b\oplus x_i}^*$. 
Then, the result of the estimation should be as follows.
\begin{itemize}
\item In the case of $x[i]=0$, each sample used for the estimation looks $(b, \PKE.\Enc(\ek,\msg_b))$ from the view of $D$. Thus, the result of the estimation should be greater than $1/2$ from the fact that $D$ correctly predicts random bit $b$ from $\PKE.\Enc(\ek,\msg_b)$ with probability $1/2 + 1/\poly(\secp)$.
\item In the case of $x[i]=1$, each sample used for the estimation looks $(b, \PKE.\Enc(\ek,\msg_{1\oplus b}))$ from the view of $D$. Thus, the result of the estimation should be smaller than $1/2$ since $D$ outputs $1\oplus b$ given $\PKE.\Enc(\ek,\msg_{1\oplus b})$ with probability $1/2 + 1/\poly(\secp)$.
\end{itemize}
Therefore, by checking if the result of the estimation is greater than $1/2$ or not, we can extract $x[i]$. By doing this for every $i$, we can extract entire bits of $x$.

The above extraction technique is a direct application of that used by Kitagawa and Nishimaki~\cite{EC:KitNis22} to realize watermarking scheme secure against quantum adversaries.
By using their technique, even if the decoder is a quantum decoder $\qD$ that consists of a unitary and an initial quantum state, we can extract $x$ from $\qD$ with inverse polynomial probability, as long as $\qD$ has a high distinguishing advantage.
Roughly speaking, this is done by performing the above estimation using (approximate) projective implementation proposed by Zhandry~\cite{TCC:Zhandry20} that is based on the technique by Marriott and Watrous~\cite{CC:MarWat05}.
By extending the above extraction technique, we can obtain the following fact.
\begin{description}
\item[Fact $3$:]Consider the following experiment. The challenger generates $(\ek_0:=\mpk_0,\dk_0:=\sk_{x_0})$ and $(\ek_1:=\mpk_1,\dk_1:=\sk_{x_1})$ using $\PKE.\KG$, and sends $\ek_0$, $\ek_1$, and $1/\sqrt{2} (\ket{0}\ket{\dk_{0}}+\ket{1}\ket{\dk_{1}})=1/\sqrt{2} (\ket{0}\ket{\sk_{x_0}}+\ket{1}\ket{\sk_{x_1}})$ to an adversary $\qA$.
$\qA$ outputs a quantum decoder $\qD$ together with $(\msg_0^*,\msg_1^*)$ that can predict $b$ from $\PKE.\Enc(\ek_0,\msg_a)$ and $\PKE.\Enc(\ek_1,\msg_{a\oplus b})$ with probability $1/2 + 1/\poly(\secp)$ for some polynomial $\poly$.
Then, we can extract both $x_0$ and $x_1$ from $\qD$ with inverse polynomial probability.
\end{description}

We now explain how we can prove CoIC-KLA security of $\PKE$ using Fact $3$.
To this end, we introduce one more fact.
\begin{description}
\item[Fact $4$:]Given $\mpk_0$, $\mpk_1$, and $1/\sqrt{2} (\ket{0}\ket{\sk_{x_0}}+\ket{1}\ket{\sk_{x_1}})$, where $(\mpk_0,\sk_{x_0})$ and $(\mpk_1,\sk_{x_1})$ are generated as in $\PKE.\KG$, no adversary can compute both $x_0$ and $x_1$ with non-negligible probability.
\end{description}
Similarly to Fact $1$, we can prove this from the fact that even if we measure $1/\sqrt{2} (\ket{0}\ket{\sk_{x_0}}+\ket{1}\ket{\sk_{x_1}})$ in the computational basis before giving it to the adversary, the adversary still has success probability at least $\epsilon/2$, where $\epsilon$ is the success probability of the original experiment \cite[Lemma 2.1]{C:BonZha13}.

Suppose there exists a QPT adversary $\qA$ that breaks CoIC-KLA security of $\PKE$.
We consider the following adversary $\qB$ using $\qA$.
Given $\mpk_0$, $\mpk_1$, and $1/\sqrt{2} (\ket{0}\ket{\sk_{x_0}}+\ket{1}\ket{\sk_{x_1}})$, $\qB$ simulates CoIC-KLA security game for $\qA$ by setting $\ek_0:=\mpk_0$, $\ek_1:=\mpk_1$, and $\qdk:=1/\sqrt{2} (\ket{0}\ket{\sk_{x_0}}+\ket{1}\ket{\sk_{x_1}})$ until $\qA$ outputs two plaintexts $(\msg_0^*,\msg_1^*)$. When $\qA$ makes a verification query, $\qB$ just returns a random bit. Let $\mat{U}$ be the unitary that performs the rest of $\qA$'s actions given the challenge ciphertexts. Also, let $\qstateq$ be the internal state of $\qA$ at this point. Then, from the averaging argument and the fact that $\qB$ correctly answers to $\qA$'s verification query with probability $1/2$, with some inverse polynomial probability, the quantum decoder $\qD=(\mat{U},\qstateq)$ is a decoder that can predict $b$ from $\PKE.\Enc(\ek_0,\msg_a^*)$ and $\PKE.\Enc(\ek_1,\msg_{a\oplus b}^*)$ with probability $1/2 + 1/\poly(\secp)$ for some polynomial $\poly$.
Thus, by using the extractor that is guaranteed to exist by Fact $3$, $\qB$ can obtain both $x_0$ and $x_1$ with some inverse polynomial probability, which contradicts Fact $4$.
This means that $\PKE$ satisfies CoIC-KLA security.

\paragraph{Extension to Advanced Encryption Systems with Secure Key Leasing.}
We also provide constructions of advanced encryption schemes such as ABE and FE with secure key leasing. We do not focus on IBE in this paper since IBE is a special case of ABE and our transformation preserves the underlying function class.\footnote{Although ABE is a special case of FE, we need stronger assumptions for (collusion-resistant) FE to instantiate them. In addition, the security level of FE-SKL that we can achieve is different from that of ABE-SKL. Hence, we consider both ABE and FE.}
We construct these schemes by carefully combining standard ABE (resp. FE) with PKE-SKL in the way that each decryption key of the resulting ABE-SKL (resp. FE-SKL) scheme includes a decryption key of the underlying PKE-SKL scheme and a ciphertext of the ABE-SKL (resp. FE-SKL) scheme cannot be decrypted without the decryption key of the underlying PKE-SKL scheme.
By doing so, our ABE-SKL and FE-SKL take over the secure key leasing security from the underlying PKE-SKL.
Moreover, since PKE-SKL can be based on any PKE, our ABE-SKL and FE-SKL can be based on any standard ABE and FE, respectively.

\paragraph{ABE-SKL.}
Here, we provide an overview of ABE with secure key leasing.
Let us start with the definition of plain ABE (without key leasing). 
An ABE scheme $\ABE$ consists of four algorithms $(\ABE.\Setup,\ABE.\KG,\ABE.\Enc,\allowbreak \ABE.\Dec)$
and is associated with a relation $R$.
Given a security parameter, $\ABE.\Setup$ outputs a master public key $\mpk$ and a master secret key $\msk$.
$\ABE.\KG$ takes as input $\msk$ and a key attribute $y$ and outputs a user secret key $\sk_y$ tied to the attribute $y$.
$\ABE.\Enc$ takes as input $\mpk$, a ciphertext attribute $x$, and a message $\msg$ and outputs a ciphertext $\ct$.
The decryption of the ciphertext is possible only when $R(x,y)=1$.
For this reason, we call a user secret key for attribute $y$ satisfying $R(x,y)=1$ a decrypting key (for a ciphertext associated with $x$).
As for the security, we require that $\ABE.\Enc(x^*,\msg^*_0)$ should be computationally indistinguishable from $\ABE.\Enc(x^*,\msg^*_1)$ as long as an adversary is only given non-decrypting keys for the ciphertext (i.e., user secret keys for $y$ satisfying $R(x^*,y)=0$).

We now define the notion of ABE with secure key leasing (ABE-SKL) by extending the syntax of ABE. 
The difference from the above is that the key generation algorithm is now quantum and it outputs user secret key $\qusk_y$ along with verification key $\vk$. 
We also additionally introduce a verification algorithm that takes $\vk$ and a quantum state $\qusk'$ and outputs $\top$ if it judges that the user secret key corresponding to $\vk$ is correctly returned and $\bot$ otherwise.
As for the security, we require that $\ABE.\Enc(x^*,\msg_0)$ should be computationally indistinguishable from $\ABE.\Enc(x^*,\msg_1)$ if the adversary returns all decrypting keys before it is given the challenge ciphertext.
Here, we say the adversary returns the key if the adversary provides the challenger with a quantum state that makes the verification algorithm output $\top$. 

For the construction, the basic idea is to use ABE for access control and PKE-SKL for obtaining security against key leasing attacks.
To enable this idea, we encrypt a message $\msg$ for an attribute $x$ so that the decryptor recovers PKE-SKL ciphertext $\skl.\ct=\PKESKL.\Enc(\skl.\ek,\msg)$ if it has decrypting key and nothing otherwise, where $\skl.\ek$ is an individual encryption key corresponding to the user.
The user is given the corresponding decryption key $\skl.\dk$ and can recover the message by decrypting $\skl.\ct$.
Roughly speaking, the security follows since (1) a user with a non-decrypting key cannot obtain any information and (2) even a user with a decrypting key cannot recover the message from $\skl.\ct$ once it returns $\skl.\dk$ due to the security of $\PKESKL$.

The generation of user individual $\PKESKL$ ciphertext is somewhat non-trivial since $\ABE$ can only encrypt a single message. 
In order to achieve this, we use an idea similar to \cite{CCS:SahSey10,TCC:GoyKopWat16} that combines encryption with the garbled circuits.
In particular, we garble the encryption circuit of $\PKESKL$ that hardwires a message and
encrypt the labels by ABE. We then provide a secret key of ABE for a user only for the positions corresponding to $\skl.\ek$. 
This allows a user with decrypting key to recover the labels corresponding to $\skl.\ek$ and then run the garbled circuit on input the labels to recover $\skl.\ct$.

Unfortunately, the introduction of the garbled circuits in the construction poses some limitations on the security of the scheme. In particular, once the adversary obtains two decrypting user secret keys, the message can be revealed from the garbled circuit in the ciphertext since the security of garbled circuits is compromised when labels for two different inputs are revealed. 
Therefore, we are only able to prove $1$-bounded distinguishing key security,\footnote{
When we consider the security game for ABE-SKL, a decrypting key can be used for distinguishing the challenge bit by decrypting the challenge ciphertext (if it is not returned). Therefore, we use the term ``decrypting key" and ``distinguishing key" interchangeably.
} where the adversary can make a single decrypting key query and should return the key before the challenge ciphertext is given. We note that the adversary can make an arbitrary number of non-decrypting key queries throughout the game, unlike bounded collusion ABE \cite{C:GorVaiWee12,PKC:ISVWY17}
and only the number of decrypting keys is bounded.

Ideally, we would like to have a scheme without restriction on the number of decrypting keys. However, we do not know how to achieve it without strong assumptions like functional encryption or indistinguishability obfuscation. Instead, we achieve intermediate security notion that we call $q$-bounded distinguishing key security without introducing additional assumption, where the number of decrypting keys is bounded by some pre-determined polynomial.  
To do so, we use the same idea as \cite{PKC:ISVWY17}, which converts single bounded collusion ABE into $q$-bounded collusion ABE. 
The construction is based on the balls and bins idea, where we prepare multiple ``bins", each of which consists of multiple instances of $1$-bounded distinguishing key secure ABE-SKL $\oneABE$. The key generation algorithm chooses a single instance from each bin randomly and generates a user secret key for each of them. The encryption algorithm secret shares the message and encrypts them using the instances of the $\oneABE$ so that the same share is encrypted by the instances in the same bin. 
By careful choices of the parameters and analysis, in the security proof, we can argue that there exists a bin such that $\oneABE$ instances used for generating decrypting keys in that bin are all distinct. This means that for every $\oneABE$ instance in that bin, only a single decrypting key is generated and thus, we can use $1$-bounded distinguishing key security for each of them.
While this overall proof strategy is the same as \cite{PKC:ISVWY17}, our proof is a little bit more complex than theirs because the adversary is allowed to make an unbounded number of (non-decrypting) key queries. \ifnum\llncs=0 We refer to \cref{sec:ibe_abe_skl} for further details.\fi

\paragraph{PKFE-SKL.}
We move to the overview of PKFE-SKL. In this work, we focus on Key-Policy FE (KPFE) with secure key leasing.
We start with the definition of plain FE (without key leasing).
An FE scheme $\FE$ consists of four algorithms $(\FE.\Setup,\FE.\KG,\FE.\Enc,\FE.\Dec)$ and is associated with a function class $\cF$.
Given a security parameter, $\FE.\Setup$ outputs a public key $\pk$ and a master secret key $\msk$.
$\FE.\KG$ takes as input $\msk$ and a function $f\in\cF$ and outputs a functional decryption key $\sk_f$ tied to the function $f$.
$\FE.\Enc$ takes as input $\pk$ and a plaintext $x$ and outputs a ciphertext $\ct$.
The decryption result is $f(x)$. For security, we require that $\FE.\Enc(\pk,x_0)$ should be computationally indistinguishable from $\FE.\Enc(\pk,x_1)$ as long as an adversary is only given functional decryption keys for $\setbk{f_i}_i$ such that $f_i(x_0)=f_i(x_1)$ for all $i$.

We define the notion of FE with secure key leasing (FE-SKL) by extending the syntax of FE like ABE-SKL.
The key generation algorithm is now quantum and it outputs functional decryption key $\qsk_f$ along with verification key $\vk$. 
We also introduce a verification algorithm that takes $\vk$ and a quantum state $\qsk^\prime$ and outputs $\top$ if it judges that the functional decryption key corresponding to $\vk$ is correctly returned and $\bot$ otherwise.

In the security game of PKFE-SKL, the adversary can send a \emph{distinguishing} key query $f$ such that $f(x_0^\ast) \ne f(x_1^\ast)$ where $(x_0^\ast,x_1^\ast)$ are the challenge plaintexts \emph{as long as it returns a valid functional decryption key for $f$}. We consider a security game where the adversary can send unbounded polynomially many distinguishing and non-distinguishing (that is, $f(x_0^\ast)=f(x_1^\ast)$) key queries and tries to distinguish $\FE.\Enc(\pk,x_0)$ from $\FE.\Enc(\pk,x_1)$.

We transform a (classical) PKFE scheme into a PKFE scheme with secure key leasing by using the power of PKE-SKL. The basic idea is as follows. When we generate a functional decryption key for function $f$, we generate a key triple of PKE-SKL and a functional decryption key of the classical PKFE for a function $W$ that computes a PKE-SKL ciphertext of $f(x)$. That is, we wrap $f(x)$ by PKE-SKL encryption. A decryption key of PKE-SKL is appended to $\fe.\sk_W$, which is the functional decryption key for $W$. Hence, we can decrypt the PKE-SKL ciphertext and obtain $f(x)$. The PKE-SKL decryption key for $f$ is useless for another function $g$ since we use different key triples of PKE-SKL for each function.

More specifically, we generate PKE-SKL keys $(\skl.\ek,\skl.\qsk,\skl.\vk)$ and a PKFE functional decryption key $\fe.\sk_W \gets \FE.\KG(\fe.\msk,W[f,\skl.\ek])$, where function $W[f,\skl.\ek]$ takes as input $x$ and outputs a PKE-SKL ciphertext $\SKL.\Enc(\skl.\ek,f(x))$.\footnote{We ignore the issue of encryption randomness here. In our construction, we use (puncturable) PRFs to generate encryption randomness.} A functional decryption key for $f$ consists of $(\fe.\sk_W,\skl.\qsk)$. A ciphertext of $x$ is a (classical) PKFE ciphertext $\FE.\Enc(\fe.\pk,x)$. If we return $\skl.\qsk$ for $f$ (verified by $\skl.\vk$) before we obtain $\FE.\Enc(\fe.\pk,x)$, we cannot obtain $f(x)$ from $\SKL.\Enc(\skl.\ek,f(x))$ by the security of PKE-SKL.

We need to prove security against an adversary that obtains a functional decryption key for $f$ such that $f(x_0^\ast)\ne f(x_1^\ast)$ where $(x_0^\ast,x_1^\ast)$ is a pair of challenge plaintexts if the adversary returns the functional decryption key. To handle this issue, we rely on IND-KLA security and need to embed a challenge ciphertext of PKE-SKL into a PKFE ciphertext. We use the trapdoor method of FE (a.k.a. Trojan method)~\cite{C:ABSV15,JC:BraSeg18} for this purpose.
We embed an SKFE functional decryption key and ciphertext in a PKFE functional decryption key and ciphertext, respectively. We use these SKFE functional decryption key and ciphertext for the trapdoor mode of PKFE. We gradually change SKFE ciphertexts and keys so that we can embed a PKE-SKL challenge ciphertext by using the adaptively single-ciphertext function privacy of SKFE. Once we succeed in embedding a PKE-SKL challenge ciphertext, we can change a ciphertext of $x_0^\ast$ into a ciphertext of $x_1^\ast$ such that $f(x_0^\ast)\ne f(x_1^\ast)$ as long as the functional decryption key $\qsk_f = (\fe.\sk_W,\skl.\qsk)$ for $f$ is returned.
This is because $\skl.\qsk$ is returned and we can use IND-KLA security under $\skl.\ek$. 
\ifnum\llncs=0 See~\cref{sec:PKFE_SKL} for more details. \fi

%\ifnum\llncs=0
\subsection{Other Related Work} 
\label{sec:other-rel}
\indent \emph{Quantum Copy Protection.} Aaronson \cite{aaronson2009quantum} introduced the notion of quantum copy protection and constructed a quantum copy protection scheme for arbitrary unlearnable Boolean functions relative to a quantum oracle. He also provided two heuristic copy-protection schemes for point functions in the standard model. Coladangelo et al.~\cite{ARXIV:ColMajPor20} provided a quantum copy-protection scheme for a class of evasive functions in the QROM. Subsequently, Aaronson et al.~\cite{C:ALLZZ21} constructed a quantum copy protection scheme for unlearnable functions relative to classical oracles. By instantiating the oracle with post-quantum candidate obfuscation schemes, they obtained a heuristic construction of copy protection. Coladangelo et al.~\cite{C:CLLZ21} provided a copy-protection scheme for pseudorandom functions in the plain model assuming iO, OWF and extractable witness encryption, or assuming subexponential iO, subexponential OWF, LWE and a strong ``monogamy property'' (which was was proven to be true in a follow-up work \cite{Quantum:CulVid22}). Ananth et al. \cite{TCC:AnaKal21,C:AKLLZ22} also constructed copy protection for point functions, which in turn can be transformed into copy protection for compute-and-compare programs. %}% \shweta{More explanation needed?}\ryo{The current description seems to be sufficient.} 
Sattath and Wyborski~\cite{ARXIV:SatWyb22} studied unclonable decryptors, which are an extension of SDE. Their unclonable decryptors scheme is \emph{secret key} encryption and can be instantiated with iO and OWF, or quantum oracles.

%\fuyuki{Should we also cite \cite{TCC:AnaKal21} and \cite{C:AKLLZ22}? Their main focus was unclonable encryption, but they also constructed copy protection for point functions under the definition by \cite{ARXIV:ColMajPor20}, which can be transformed into copy protection for compute-and-compare programs. If I remember correctly, \cite{TCC:AnaKal21} constructed copy protection for multi-bit output point functions and \cite{C:AKLLZ22} constructed copy protection for single-bit output point functions.}
%\ryo{I agree with Fuyuki.}

\medskip

\noindent\emph{Secure software leasing.} Secure software leasing (SSL) was introduced by Ananth and La Placa \cite{EC:AnaLaP21}, where they also provided the first SSL scheme supporting a subclass of ``evasive'' functions by relying on the existence of public key quantum money and the learning with errors assumption. Evasive functions is a class of functions for which it is hard to find an accepting input given only black-box access to the function. Their construction achieves a strong security notion called {\it infinite term security}. They also demonstrate that there exists an unlearnable function class such that it is impossible to achieve an SSL scheme for that function class, even in the CRS model. Later, Coladangelo et al.~\cite{ARXIV:ColMajPor20} improved the security notion achieved by \cite{EC:AnaLaP21} by relying on the QROM, for the same class of evasive functions. Additionally, Kitagawa, Nishimaki and Yamakawa \cite{TCC:KitNisYam21} provided a finite term secure SSL scheme for pseudorandom functions (PRFs) in the CRS model by assuming the hardness of the LWE problem against polynomial time quantum adversaries. Additionally, this work achieves classical communication. Further, Broadbent et al.~\cite{TCC:BJLPS21} showed that SSL is achievable for the aforementioned evasive circuits without any setup or computational assumptions that were required by previous work, but with finite term security, quantum communication and correctness based on a distribution. The notion of secure leasing for the powerful primitive of functional encryption was studied by Kitagawa and Nishimaki \cite{AC:KitNis22}, who introduced the notion of \emph{secret key} functional encryption (SKFE) with secure key leasing and provided a transformation from standard SKFE into SKFE with secure key leasing without relying on any additional assumptions. %However, we note that (in contrast to our work) they only consider the single-key setting where an adversary is given a single decryption key that can be used to detect the challenge bit.
%Notably, their construction achieves the standard notion of correctness.\ryo{I think this sentence is redundant since all constructions except \cite{TCC:BJLPS21} satisfy the standard correctness.}\takashi{I agree.} 

%\shweta{Incorporated Ryo's comments, please check.}\ryo{I removed ``the caveat of'' since we say \cite{TCC:KitNisYam21} satisfies finite term security, but do not say it is caveat.}

%\ryo{\cite{EC:AnaLaP21} uses public key quantum money but it is infinite-term secure. \cite{TCC:KitNisYam21} uses QLWE but achieves classical communication. \cite{ARXIV:ColMajPor20} uses QROM but the correctness is standard one. \cite{TCC:BJLPS21} is finite-term secure, quantum communication, and the correctness is based on distribution. The difference between~\cite{ARXIV:ColMajPor20} and~\cite{TCC:BJLPS21} is subtle.}

%\takashi{I believe \cite{TCC:BJLPS21} is not a strict improvement of any of the previous works because of some caveats (though I forgot the details).}\shweta{Does anyone else recall?}\ryo{Takashi is right. \cite{EC:AnaLaP21} uses public key quantum money but it is infinite-term secure. \cite{TCC:KitNisYam21} uses QLWE but achieves classical communication. \cite{ARXIV:ColMajPor20} uses QROM but the correctness is standard one. \cite{TCC:BJLPS21} is finite-term secure, quantum communication, and the correctness is based on distribution. The difference between~\cite{ARXIV:ColMajPor20} and~\cite{TCC:BJLPS21} is subtle.} 
\medskip
\noindent\emph{Certified deletion.}
Broadbent and Islam~\cite{TCC:BroIsl20} introduced the notion of quantum encryption with certified deletion, where we can generate a (classical) certificate to ensure that \emph{a ciphertext} is deleted. They constructed a one-time SKE scheme with certified deletion without computational assumptions. After that, many works presented various quantum encryption primitives (PKE, ABE, FE and so on) with certified deletion~\cite{AC:HMNY21,ITCS:Poremba23,EPRINT:BarKhu22,EPRINT:HMNY22}.
The root of quantum encryption with certified deletion is revocable quantum time-released encryption by Unruh~\cite{JACM:Unruh15}. It is an extension of time-released encryption where a sender can revoke quantum encrypted data before a pre-determined time. If the revocation succeeds, the receiver cannot obtain the plaintext information.

\medskip
\noindent\emph{Related technique.}
The basic idea of our PKE-SKL is to prepare a superposition of two decryption keys and coherently run the decryption algorithm in each branch. 
Previous works by Zhang~\cite{STOC:Zhang21,FOCS:Zhang22} use a similar idea of running some algorithm (which is an evaluation of ``lookup tables'' in their case) on two branches in superposition though their motivation is to construct efficient blind quantum computation and classical verification of quantum computation, which are completely irrelevant to PKE-SKL.

\subsection{Concurrent Work}
A concurrent and independent work by Ananth, Poremba, and Vaikuntanathan~\cite{APV23} introduces key-revocable PKE, which is similar to PKE-SKL. They construct key-revocable PKE based on the LWE assumption while our construction of PKE-SKL only assumes the existence of IND-CPA secure PKE. In addition, they only prove somewhat weaker security notion called $1$-bit unpredictability. Roughly, it ensures that the probability that the adversary passes the verification for the returned key \emph{and} wins the IND game is at most $1/2+\negl(\secp)$. For example, even if an adversary passes the verification with probability $1/3$ and has a distinguishing advantage $1$ conditioned on the acceptance, it is not considered to break the security while such an adversary breaks IND-KLA security. Thus, we believe that IND-KLA security is more desirable security notion than $1$-bit unpredictability.\footnote{Strictly speaking, IND-KLA security and $1$-bit unpredictability are incomparable because the 
former requires the indistinguishability between ciphertexts of two different messages whereas
the latter requires the indistinguishability between a ciphertext of some message and a uniformly random string. }
On the other hand, the advantages of their work are that their construction of key-revocable PKE is based on dual-Regev encryption, which is likely to be more efficient than our PKE-SKL, and that they also show a fully homomorphic encryption variant.   

\ifnum\llncs=0
\subsection{Organization of the paper}
%\shweta{To do.}
In Section \ref{sec:prelim} we define the notation and preliminaries that we require in this work.  In Section \ref{sec:pke-skl-defs}, we define the notion of public key encryption with secure key leasing (PKE-SKL) and its various security notions. We also show several general relationships among those security notions. In Section \ref{sec:PKE-CoIC}, we define and construct Public Key Encryption with $\cKLA$ security. In Section \ref{sec:PKE-SKL}, we provide our construction of PKE with secure key leasing. In Section \ref{sec:ibe_abe_skl} and Section \ref{sec:PKFE_SKL} we provide our construction of Attribute Based Encryption with secure key leasing and public key Functional Encryption with secure key leasing respectively. 
\fi
%\else
%\fi

% !TEX root = main.tex

\section{Preliminaries}\label{sec:prelim}
\paragraph{Notations and conventions.}
In this paper, standard math or sans serif font stands for classical algorithms (e.g., $C$ or $\algo{Gen}$) and classical variables (e.g., $x$ or $\keys{pk}$).
Calligraphic font stands for quantum algorithms (e.g., $\qalgo{Gen}$) and calligraphic font and/or the bracket notation for (mixed) quantum states (e.g., $\qstate{q}$ or $\ket{\psi}$).

Let $[\ell]$ denote the set of integers $\{1, \cdots, \ell \}$, $\secp$ denote a security parameter, and $y \seteq z$ denote that $y$ is set, defined, or substituted by $z$.
For a finite set $X$ and a distribution $D$, $x \chosen X$ denotes selecting an element from $X$ uniformly at random, $x \chosen D$ denotes sampling an element $x$ according to $D$. Let $y \gets \algo{A}(x)$ and $y \gets \qalgo{A}(\qstate{x})$ denote assigning to $y$ the output of a probabilistic or deterministic algorithm $\algo{A}$ and a quantum algorithm $\qalgo{A}$ on an input $x$ and $\qstate{x}$, respectively. When we explicitly show that $\algo{A}$ uses randomness $r$, we write $y \gets \algo{A}(x;r)$.
PPT and QPT algorithms stand for probabilistic polynomial-time algorithms and polynomial-time quantum algorithms, respectively.
Let $\negl$ denote a negligible function.
%\takashi{We may need to make a quantum notation paragraph.}\ryo{I added.} 
%Let $\mathbb{F}_p$ denote the finite field of order $p$.
For strings $x,y\in \bit^n$, $x\cdot y$ denotes $\bigoplus_{i\in[n]} x_i y_i$ where $x_i$ and $y_i$ denote the $i$th bit of $x$ and $y$, respectively. 
%Similarly, for vectors $x,y\in \mathbb{F}_p^n$, $x\cdot y$ denotes $\sum_{i\in[n]} x_i y_i$ where $x_i$ and $y_i$ denote the $i$th component of $x$ and $y$, respectively. 
%For a projection $\Pi$, a binary-outcome measurement w.r.t. $\Pi$ means a projective measurement $\{I-\Pi,\Pi\}$ that returns $1$ if the state is projected onto $\Pi$ and otherwise $0$.  
%\takashi{I commented out notations which I believe are not used in the current version.}

\subsection{Standard Cryptographic Tools}\label{sec:standard_crypto}

\paragraph{Secret-key encryption.}

\begin{definition}[Secret Key Encryption]\label{def:ske}
An SKE scheme $\SKE$ is a two tuple $(\E, \D)$ of PPT algorithms. 
\begin{itemize}
\item The encryption algorithm $\E$, given a key $K \in \bin^\lambda$ and a plaintext $m \in \cM$, outputs a ciphertext $\ct$,
where $\cM$ is the plaintext space of $\SKE$.

\item The decryption algorithm $\D$, given a key $K$ and a ciphertext $\ct$, outputs a plaintext $\tilde{m} \in \{ \bot \} \cup \cM$.
This algorithm is deterministic.
\end{itemize}
We require $\SKE$ to satisfy correctness.
\begin{description}
\item[Correctness:] We require $\D(K, \E(K, m)) = m$ for every $m \in \cM$ and key $K \in \bin^\lambda$.
% \item[CPA Security:]
% We define the following experiment $\expa{\qA,\SKE}{cpa}(1^\secp,\coin)$ between a challenger and an adversary $\qA$.
% \begin{enumerate}
% \item The challenger generates $K \chosen \zo{\secp}$.
% Then, the challenger sends $1^\lambda$ to $\qA$.
% \item $\qA$ may make polynomially many encryption queries adaptively.
% $\qA$ sends $(m_0, m_1) \in \cM \times \cM$ to the challenger.
% Then, the challenger returns $\ct \la \E(K, m_\coin)$.
% \item $\qA$ outputs $\coin' \in \bin$.
% \end{enumerate}
% We say that $\SKE$ is CPA secure if for any QPT adversary $\qA$, we have 
% \[
% \adva{\SKE, \qA}{cpa}(\lambda) = \abs{\Pr[\expa{\qA,\SKE}{cpa}(1^\secp,0)=1]-\Pr[\expa{\qA,\SKE}{cpa}(1^\secp,1)=1]} \le \negl(\secp).
% \]
\end{description}
\end{definition}

\begin{definition}[Ciphertext Pseudorandomness for SKE]\label{def:ske_pseudorandomct}
Let $\zo{\ell}$ be the ciphertext space of $\SKE$.
We define the following experiment $\expb{\qA,\SKE}{pr}{ct}(1^\secp,\coin)$ between a challenger and an adversary $\qA$.
\begin{enumerate}
\item The challenger generates $K \chosen \bin^\lambda$.
Then, the challenger sends $1^\lambda$ to $\qA$.
\item $\qA$ may make polynomially many encryption queries adaptively.
$\qA$ sends $m \in \cM $ to the challenger.
Then, the challenger returns $\ct \la \E(K, m)$ if $\coin=0$, otherwise $\ct \chosen \zo{\ell}$.
\item $\qA$ outputs $\coin' \in \bin$.
%\end{description}
\end{enumerate}
We say that $\SKE$ is pseudorandom-secure if for any QPT adversary $\qA$, we have 
\[
\advb{\SKE, \qA}{pr}{ct}(\lambda) = \abs{\Pr[\expb{\qA,\SKE}{pr}{ct}(1^\secp,0)=1]-\Pr[\expb{\qA,\SKE}{pr}{ct}(1^\secp,1)=1]} \le \negl(\secp).
\]
\end{definition}

\begin{theorem}\label{thm:pseudorandom_ske}
If OWFs exist, there exists a pseudorandom-secure SKE scheme.
\end{theorem}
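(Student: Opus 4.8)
The plan is to instantiate the scheme from a post-quantum pseudorandom function, which the assumed one-way function provides through standard black-box machinery. Since the distinguisher $\qA$ in $\expb{\qA,\SKE}{pr}{ct}$ is QPT, by ``OWFs'' I mean one-way functions secure against QPT inverters; such an OWF yields a pseudorandom generator via the standard construction, and a pseudorandom generator yields a pseudorandom function via the GGM tree construction. Both reductions are black-box and go through against quantum distinguishers making classical queries, so I obtain a keyed function $\prf_K\colon \zo{\secp}\to\cM$ whose classical-query security holds against all QPT adversaries.

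Given $\prf$, I would define $\E(K,m)\seteq(r,\prf_K(r)\xor m)$ with a fresh $r\chosen\zo{\secp}$ at each invocation, and $\D(K,(r,c))\seteq c\xor\prf_K(r)$; the ciphertext space is then $\zo{\secp}\times\cM$. Correctness holds because $c\xor\prf_K(r)=\prf_K(r)\xor m\xor\prf_K(r)=m$ for honestly generated ciphertexts, so the decryption is deterministic and always recovers $m$.

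For ciphertext pseudorandomness I would argue through one intermediate hybrid. First, I replace every use of $\prf_K$ across the polynomially many encryption queries by a truly random function $R$; a QPT reduction that forwards each query to its own function oracle turns any distinguishing gap between this hybrid and $\expb{\qA,\SKE}{pr}{ct}(1^\secp,0)$ into PRF advantage, so the gap is $\negl(\secp)$. Second, in this idealized hybrid each answer has the form $(r,R(r)\xor m)$, and \emph{conditioned on all sampled $r$-values being pairwise distinct} the masks $R(r)$ are independent and uniform; hence every answer is distributed exactly as a uniform element of $\zo{\secp}\times\cM$, which is precisely $\expb{\qA,\SKE}{pr}{ct}(1^\secp,1)$.

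The only quantitative point, and the closest thing to an obstacle, is controlling the conditioning on distinct randomness: with $q=\poly(\secp)$ queries, the birthday bound gives collision probability at most $q^2/2^{\secp+1}=\negl(\secp)$, so dropping the conditioning costs only a negligible statistical term. Chaining the PRF step and the statistical step by the triangle inequality yields $\advb{\SKE,\qA}{pr}{ct}(\secp)=\negl(\secp)$. All of the cryptographic weight sits in the cited OWF-to-PRF constructions; the self-contained argument reduces to a single invocation of PRF security plus this collision bound.
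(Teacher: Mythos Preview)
Your proof is correct and is the standard construction. The paper does not actually prove this theorem; it states it as a known result with no proof and no citation, treating the implication OWF $\Rightarrow$ pseudorandom-ciphertext SKE as folklore. Your write-up supplies exactly the expected argument (OWF $\to$ PRG $\to$ GGM PRF, then $\E(K,m)=(r,\prf_K(r)\oplus m)$, PRF hybrid, birthday bound on $r$-collisions), so there is nothing to compare against and nothing missing.
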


\paragraph{Public-key encryption.}
\begin{definition}[PKE]
A PKE scheme $\PKE$ is a tuple of three algorithms $(\KG, \Enc, \Dec)$. 
Below, let $\cX$  be the message space of $\PKE$.
\begin{description}
%\item[$\Setup(1^\secp,1^{\numkey})\ra\msk$:] The setup algorithm takes a security parameter $1^\lambda$ and a collusion bound $1^{\numkey}$, and outputs a master secret key $\msk$.
\item[$\KG(1^\secp)\ra(\ek,\dk)$:] The key generation algorithm takes a security parameter $1^\lambda$, and outputs an encryption key $\ek$ and a decryption key $\dk$. 

\item[$\Enc(\ek,\msg)\ra\ct$:] The encryption algorithm takes an encryption key $\ek$ and a message $\msg \in \cX$, and outputs a ciphertext $\ct$.

\item[$\Dec(\dk,\ct)\ra\tilde{\msg}$:] The decryption algorithm is a deterministic algorithm that takes a decryption key $\dk$ and a ciphertext $\ct$, and outputs a value $\tilde{\msg}$.

\item[Correctness:]For every $\msg \in \cX$, we have
\begin{align}
\Pr\left[
\Dec(\dk, \ct) \allowbreak = \msg
\ \middle |
\begin{array}{ll}
(\ek,\dk)\gets\KG(1^\secp)\\
\ct\gets\Enc(\ek,\msg)
\end{array}
\right] 
=1-\negl(\secp).
\end{align}
\end{description}
\end{definition}

\begin{definition}[IND-CPA Security]\label{def:IND-CPA_PKE}
We say that a PKE scheme $\PKE$ with the message space $\Xs$ is IND-CPA secure if it satisfies the following requirement, formalized from the experiment $\expb{\PKE,\qA}{ind}{cpa}(1^\secp,\coin)$ between an adversary $\qA$ and a challenger:
        \begin{enumerate}
            \item  The challenger runs $(\ek,\dk)\gets\KG(1^\secp)$ and sends $\ek$ to $\qA$. 
            \item $\qA$ sends $(\msg_0^*,\msg_1^*)\in \Xs^2$ 
            to the challenger. 
            \item The challenger generates $\ct^*\la\Enc(\ek,\msg_\coin^*)$ and sends $\ct^*$ to $\qA$.
            \item $\qA$ outputs a guess $\coin^\prime$ for $\coin$. The challenger outputs $\coin'$ as the final output of the experiment.
        \end{enumerate}
        For any QPT $\qA$, it holds that
\begin{align}
\advb{\PKE,\qA}{ind}{cpa}(\secp) \seteq \abs{\Pr[\expb{\PKE,\qA}{ind}{cpa} (1^\secp,0) \ra 1] - \Pr[\expb{\PKE,\qA}{ind}{cpa} (1^\secp,1) \ra 1] }\leq \negl(\secp).
\end{align}
\end{definition}

\begin{definition}[OW-CPA Security]\label{def:OW-CPA_PKE}
We say that a PKE scheme $\PKE$ with the message space $\Xs$ is OW-CPA secure if it satisfies the following requirement, formalized from the experiment $\expb{\PKE,\qA}{ow}{cpa}(1^\secp)$ between an adversary $\qA$ and a challenger:
        \begin{enumerate}
            \item  The challenger runs $(\ek,\dk)\gets\KG(1^\secp)$, chooses $\msg^*\gets \Xs$, 
            runs $\ct^*\gets \Enc(\ek,\msg^*)$, 
            and sends $(\ek,\ct^*)$ to $\qA$.  
            \item $\qA$ sends $\msg'\in \Xs$ 
            to the challenger. 
            \item The challenger outputs $1$ if $\msg'=\msg^*$ and otherwise $0$ as the final output of the experiment. 
        \end{enumerate}
        For any QPT $\qA$, it holds that
\begin{align}
\advb{\PKE,\qA}{ow}{cpa}(\secp) \seteq \Pr[\expb{\PKE,\qA}{ow}{cpa} (1^\secp) \ra 1]\leq \negl(\secp).
\end{align}
\end{definition}

It is well-known that IND-CPA security implies OW-CPA security if $|\Xs|$ is super-polynomial. 

\paragraph{Pseudorandom functions.}

\begin{definition}[Puncturable PRF]\label{def:pprf}
A puncturable PRF (PPRF) is a tuple of algorithms $\PuncPRF = (\prfgen, \prf,\Puncture)$ where $\{\prf_{\prfkey}: \bin^{\ell_1} \ra \zo{\ell_2} \mid \prfkey \in \zo{\secp}\}$ is a PRF family and satisfies the following two conditions. Note that $\ell_1$ and $\ell_2$ are polynomials of $\secp$.
   \begin{description}
       \item[Punctured correctness:] For any polynomial-size set $S \subseteq \zo{\ell_1}$ and any $x\in \zo{\ell_1} \setminus S$, it holds that
       \begin{align}
       \Pr[\prf_{\prfkey}(x) = \prf_{\prfkey_{\notin S}}(x)  \mid \prfkey \gets \prfgen(1^{\secp}),
       \prfkey_{\notin S} \gets \Puncture(\prfkey,S)]=1.
       \end{align}
       \item[Pseudorandom at punctured point:] For any polynomial-size set $S \subseteq\zo{\ell_1}$
       and any QPT distinguisher $\qA$, it holds that
       \begin{align}
       \vert
       \Pr[\qA(\prf_{\prfkey_{\notin S}},\{\prf_{\prfkey}(x_i)\}_{x_i\in S}) \ra 1] -
       \Pr[\qA(\prf_{\prfkey_{\notin S}}, (\cU_{\ell_2})^{\abs{S}}) \ra 1]
       \vert \leq \negl(\secp),
       \end{align}
       where $\prfkey\gets \prfgen(1^{\secp})$,
       $\prfkey_{\notin S} \gets \Puncture(\prfkey,S)$ and $\cU_{\ell_2}$ denotes the uniform distribution over $\zo{\ell_2}$.
   \end{description}
   If $S = \setbk{x^\ast}$ (i.e., puncturing a single point), we simply write $\prf_{\ne x^\ast}(\cdot)$ instead of $\prf_{\prfkey_{\notin S}}(\cdot)$ and consider $\prf_{\ne x^\ast}$ as a keyed function.
\end{definition}

It is easy to see that the Goldwasser-Goldreich-Micali tree-based construction of PRFs (GGM PRF)~\cite{JACM:GolGolMic86} from OWF yield puncturable PRFs where the size of the punctured key grows polynomially with the size of the set $S$ being punctured~\cite{AC:BonWat13,PKC:BoyGolIva14,CCS:KPTZ13}. Thus, we have:
\begin{theorem}[\cite{JACM:GolGolMic86,AC:BonWat13,PKC:BoyGolIva14,CCS:KPTZ13}]\label{thm:pprf-owf} If OWFs exist, then for any polynomials $\ell_1(\secp)$ and $\ell_2(\secp)$, there exists a PPRF that maps $\ell_1$-bits to $\ell_2$-bits.
\end{theorem}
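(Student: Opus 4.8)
The plan is to assemble the claim from two classical ingredients, both implied by the existence of one-way functions: a length-doubling pseudorandom generator and the tree-based GGM construction, which I would equip with a puncturing procedure. First I would invoke the Håstad--Impagliazzo--Levin--Luby theorem to obtain from any OWF a pseudorandom generator, and by composition a length-doubling PRG $G:\zo{\secp}\to\zo{2\secp}$; writing $G(s)=(G_0(s),G_1(s))$ with $\abs{G_0(s)}=\abs{G_1(s)}=\secp$, together with a second PRG $G':\zo{\secp}\to\zo{\ell_2}$ (again from OWF, realizing the desired output length for any polynomial $\ell_2$). I would then define the GGM PRF on a complete binary tree of depth $\ell_1$: the key is a seed $\prfkey=s\in\zo{\secp}$ labelling the root, the label of each node is obtained from its parent by applying $G_0$ (left child) or $G_1$ (right child), and $\prf_{\prfkey}(x)\seteq G'(\text{leaf label of }x)$ for $x\in\zo{\ell_1}$. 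Both the tree depth $\ell_1$ and the leaf stretch $\ell_2$ are polynomial, so evaluation runs in polynomial time.

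Next I would specify puncturing. For the single-point case $S=\setbk{x^\ast}$, the punctured key consists of the $\ell_1$ labels of the \emph{siblings} of the nodes lying on the root-to-leaf path determined by $x^\ast$. For a general polynomial-size set $S$, $\Puncture(\prfkey,S)$ outputs the labels of the frontier of nodes hanging off the union of the paths to the points of $S$; since each path contributes at most $\ell_1$ sibling labels, the punctured key $\prfkey_{\notin S}$ has size at most $\ell_1\cdot\abs{S}\cdot\secp$, which is polynomial as required. Punctured correctness is immediate from the determinism of the tree: for any $x\notin S$, the path to $x$ diverges from the punctured set of paths at some node whose label is included in $\prfkey_{\notin S}$, and re-running $G_0/G_1$ from that label down to the leaf followed by $G'$ reproduces $\prf_{\prfkey}(x)$ exactly, while no label on the path to any point of $S$ is ever revealed.

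For pseudorandomness at the punctured points I would run the standard hybrid argument that walks down the tree. Consider first $S=\setbk{x^\ast}$ and imagine replacing, one level at a time along the path to $x^\ast$, the PRG output defining the on-path node label by a freshly uniform pair; each replacement is indistinguishable by the security of $G$ applied to the (hidden) parent label, and after $\ell_1$ hybrids the leaf label of $x^\ast$, and hence $\prf_{\prfkey}(x^\ast)$, is uniform and independent of $\prfkey_{\notin S}$, with a final hybrid invoking the security of $G'$. The set version follows by a further outer hybrid over the points of $S$. The main obstacle is making this bookkeeping fully rigorous in the set case: one must verify that each uniform substitution is applied at a node whose label genuinely does not appear in the punctured key, so that the reduction can embed its PRG challenge there, and one must confirm that the reduction is black-box and therefore preserves security against QPT distinguishers. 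The latter holds because the hybrids are efficiently sampleable and each indistinguishability step reduces to a single invocation of PRG security, so a QPT distinguisher against punctured-point pseudorandomness yields a QPT distinguisher against $G$ (or $G'$), contradicting the guarantee inherited from the OWF. This establishes both properties of \cref{def:pprf} and proves the theorem.
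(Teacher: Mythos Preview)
Your proposal is correct and follows exactly the approach the paper has in mind: the paper does not give a detailed proof of this theorem but simply observes (in the sentence preceding the statement) that the GGM tree-based PRF construction from OWFs yields puncturable PRFs with polynomial-size punctured keys, and then cites \cite{JACM:GolGolMic86,AC:BonWat13,PKC:BoyGolIva14,CCS:KPTZ13}. Your sketch fleshes out precisely this construction and its standard hybrid security argument.
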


\paragraph{Garbling schemes.}
\begin{definition}[Garbling schemes]\label{def:GC}
A garbling scheme $\GC$ is a tuple of PPT algorithms $\GC=(\Garble,\GCEval)$.
\begin{description}
\item[$\Garble(1^\secp, C) \to (\{\lab_{i,b}\}_{i\in [\ell],b\in\bit},\widetilde{C})$:] The garbling algorithm takes a security parameter $1^\secp$ and a circuit $C$ and outputs labels $\{\lab_{i,b}\}_{i\in [\ell],b\in\bit}$ and garbled version of the circuit $\widetilde{C}$,
where $\ell$ is the input length of $C$. 
\item[$\GCEval(\widetilde{C},\{ \lab_i \}_{i\in [\ell]} )\ra z $:] The evaluation algorithm $\GCEval$ takes the garbled circuit $\widetilde{C}$ and labels $\{ \lab_i \}_{i\in [\ell]}$ and outputs an evaluation result $z$.

\item[Correctness:] We require that
\[
\Pr\left[
\GCEval(\widetilde{C},\{ \lab_{i,x_i} \}_{i\in [\ell]} ) = C(x)
 \ \middle |
\begin{array}{rl}
 \Garble(1^\secp, C)) \to (\{\lab_{i,b}\}_{i\in [\ell],b\in\bit},\widetilde{C}) 
\end{array}
\right]=1 -\negl(\secp)
\]
holds for all $\ell\in \mathbb{N}$, $x\in \bit^\ell$ and $C$ with input length $\ell$,
where $x_i$ is the $i$-th bit of $x$.

\item[Security:] We require that there exists a PPT algorithm $\SimGC$ such the following distributions are computationally indistinguishable for all $\ell\in\mathbb{N}$, $x\in \bit^\ell$, and circuit $C$ with input length $\ell$:   
\[
(\{ \lab_{i,x_i}\}_{i\in [\ell]}, \widetilde{C} )
\approx_c 
\SimGC( 1^\secp, \mathsf{info}(C), C(x) )
\]
where $\Garble(1^\secp, C)) \to (\{\lab_{i,b}\}_{i\in [\ell],b\in\bit},\widetilde{C}) $
and $\mathsf{info}(C)$ refers to the size of $C$, input and output lengths of $C$.
\end{description}
We note that we will drop $\mathsf{info}(C)$ from the inputs to $\SimGC$ when it is clear from the context.
\end{definition}

\begin{theorem}\cite{FOCS:Yao86,JC:LinPin09}
If there exists a one-way function, there exists secure garbling scheme. 
\end{theorem}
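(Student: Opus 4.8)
The plan is to instantiate the standard Yao garbled-circuit construction, using the fact (via the GGM construction, \cref{thm:pprf-owf}) that one-way functions yield pseudorandom functions and hence a one-time symmetric-key encryption scheme $(\E,\D)$ suitable for garbling. First I would fix, for every wire $w$ of the circuit $C$, two uniformly random labels $\lab_{w,0},\lab_{w,1}\in\bit^\secp$, where the two labels on a wire encode the bit values $0$ and $1$ carried by that wire. For each binary gate computing $g$ with input wires $a,b$ and output wire $c$, the garbled gate consists of the four doubly-encrypted ciphertexts $\E(\lab_{a,\alpha},\E(\lab_{b,\beta},\lab_{c,g(\alpha,\beta)}))$ over all $\alpha,\beta\in\bit$, listed in a random order and tagged via the point-and-permute trick so the evaluator knows which single ciphertext to open. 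The garbled circuit $\widetilde{C}$ is the collection of all garbled gates together with an output decoding table mapping each output-wire label back to its bit, and the labels output by $\Garble$ are the input-wire labels $\{\lab_{i,b}\}_{i\in[\ell],b\in\bit}$. On input $\{\lab_{i,x_i}\}_{i\in[\ell]}$, the evaluator $\GCEval$ recovers exactly one (the ``active'') label per wire by decrypting gates in topological order, and finally decodes the output labels.

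Correctness follows directly from the correctness of the underlying encryption: by induction on the topological order, the evaluator always holds the label $\lab_{w,v_w}$ corresponding to the actual value $v_w$ carried by wire $w$ on input $x$, so the final decoding yields $C(x)$. I would state this as a short inductive claim.

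For security I would exhibit the simulator $\SimGC(1^\secp,\mathsf{info}(C),C(x))$ that, knowing only the topology and size of $C$ and the output $C(x)$, samples a single ``active'' label per wire and produces each garbled gate so that only the active-input ciphertext decrypts to the active output label while the remaining three ciphertexts encrypt a dummy label; the output decoding table is rigged so the active output labels decode to $C(x)$. The indistinguishability $(\{\lab_{i,x_i}\}_{i\in[\ell]},\widetilde{C})\cind\SimGC(1^\secp,\mathsf{info}(C),C(x))$ is then proved by a hybrid argument that replaces the real garbled gates by simulated ones one at a time in topological order. The key observation at each step is that the evaluator never sees the inactive label of either input wire of the gate being changed, so the three ``inactive'' ciphertexts are encrypted under a key hidden from the adversary; semantic security of $\E$ (which follows from the PRF, hence from the OWF) lets us switch those ciphertexts from encryptions of the real inactive output label to encryptions of the dummy label without detection.

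The main obstacle is this hybrid analysis: one must order the gates so that when a gate is re-randomized, the inactive labels of its input wires have not yet been revealed or used as visible keys, and one must invoke the security of $\E$ for precisely the right keys at each hybrid. This is the classical Lindell--Pinkas argument, and carrying it out rigorously --- in particular arguing that exactly one label per wire stays hidden throughout, and bounding the total advantage by a sum of $\negl(\secp)$ encryption-security terms over the polynomially many gates --- is where the real work lies. The remainder (sampling labels, the point-and-permute bookkeeping, and the inductive correctness proof) is routine.
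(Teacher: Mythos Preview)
The paper does not give a proof of this theorem at all; it simply cites it as a known result from \cite{FOCS:Yao86,JC:LinPin09} and moves on. Your sketch correctly outlines the classical Yao construction and the Lindell--Pinkas hybrid argument from those references, so there is nothing to compare against in the paper itself.
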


\paragraph{Attribute-based encryption.}

\begin{definition}[Attribute-Based Encryption]\label{def:ABE}
An ABE scheme $\ABE$ is a tuple of four PPT algorithms $(\Setup, \KG, \Enc, \Dec)$. 
Below, let $\cX=\{\cX_\secp\}_\secp$, $\cY=\{\cY_\secp\}_\secp$, and $R=\{R_\secp:\cX_\secp \times \cY_\secp \ra \bin \}_\secp$ be the ciphertext attribute space, key attribute space, and the relation associated with $\ABE$, respectively.
We note that we will abuse the notation and occasionally drop the subscript for these spaces for notational simplicity.
We also note that the message space is set to be $\bin^\ell$ below. 
\begin{description}
\item[$\Setup(1^\secp)\ra(\pk,\msk)$:] The setup algorithm takes a security parameter $1^\secp$ and outputs a public key $\pk$ and master secret key $\msk$.
\item[$\KG(\msk,y)\ra\sk_y$:] The key generation algorithm $\KG$ takes a master secret key $\msk$ and a key attribute $y \in \cY$, and outputs a decryption key $\sk_y$.

\item[$\Enc(\pk,x,m)\ra\ct$:] The encryption algorithm takes a public key $\pk$, a ciphertext attribute $x \in \cX$, and a message $x$, and outputs a ciphertext $\ct$.

\item[$\Dec(\sk_y,x,\ct)\ra z$:] The decryption algorithm takes a secret key $\sk_f$, a ciphertext attribute $x$, and the corresponding ciphertext $\ct$ and outputs $z \in \{ \bot \} \cup \bin^\ell$.

\item[Correctness:] We require that
\[
\Pr\left[
\Dec(\sk_y, x, \ct) = m
 \ \middle |
\begin{array}{rl}
 &(\pk,\msk) \la \Setup(1^\secp),\\
 & \sk_y \gets \KG(\msk,y), \\
 &\ct \gets \Enc(\pk,x,m)
\end{array}
\right]=1 -\negl(\secp).
\]
holds for all $x\in \cX$ and $y\in \cY$ such that $R(x,y)=1$ and $m\in \bin^\ell$. 
\end{description}
\end{definition}

\begin{definition}[Adaptive Security for ABE]\label{def:ad_ind_ABE}
We say that $\ABE$ is an \emph{adaptively secure} ABE scheme for relation $R:\cX\times \cY \to \bin$, if it satisfies the following requirement, formalized from the experiment $\expb{\qA}{ada}{ind}(1^\secp,\coin)$ between an adversary $\qA$ and a challenger:
        \begin{enumerate}
            \item The challenger runs $(\pk,\msk)\gets\Setup(1^\secp)$ and sends $\pk$ to $\qA$.
            \item $\qA$ sends arbitrary key queries. That is, $\qA$ sends a key attribute $y \in\cY$ to the challenger and the challenger responds with $\sk_{y}\gets \KG(\msk,y)$ for the query.
            \item At some point, $\qA$ sends $(x,m_0,m_1)$ to the challenger. If $R(x,y)=0$ for all queried $y$, the challenger generates a ciphertext $\ct^*\gets\Enc(\pk,x,m_\coin)$. The challenger sends $\ct^*$ to $\qA$.
            \item Again, $\qA$ can send key queries $y$ such that $R(x,y)=0$.
            \item $\qA$ outputs a guess $\coin^\prime$ for $\coin$.
            \item The experiment outputs $\coin^\prime$.
        \end{enumerate}
        We say that $\ABE$ is adaptively secure if, for any QPT $\qA$, it holds that
\begin{align}
\advb{\ABE,\qA}{ada}{ind}(\secp) \seteq \abs{\Pr[\expb{\ABE,\qA}{ada}{ind} (1^\secp,0) \ra 1] - \Pr[\expb{\ABE,\qA}{ada}{ind} (1^\secp,1) \ra 1] }\leq \negl(\secp).
\end{align}
\end{definition}

\begin{definition}[Selective Security for ABE]\label{def:sel_ind_ABE}
We also define selective security for $\ABE$. For doing so, we consider the same security game as that for adaptive security except that the adversary $\qA$ should declare its target $x$ at the beginning of the game (even before it is given $\pk$).
We then define the advantage $\advb{\ABE,\qA}{sel}{ind}(\secp)$ for the selective security similarly. We say $\ABE$ is selectively indistinguishably-secure if for any QPT adversary $\qA$, $\advb{\ABE,\qA}{sel}{ind}(\secp)$ is negligible.    
\end{definition}

By setting $\cX$, $\cY$, and $R$ appropriately, we can recover important classes of ABE.
In particular, if we set $\cX_\secp=\cY_\secp=\bin^*$ and define $R$ so that $R(x,y)=1$ if $x=y$ and $R(x,y)=0$ otherwise, we recover the definition of identity-based encryption (IBE).
If we set $\cX_\secp=\bin^{n(\secp)}$ and $\cY_\secp$ to be the set of all circuits with input space $\bin^{n(\secp)}$ and depth at most $d(\secp)$, where $n$ and $d$ are some polynomials, and define $R$ so that $R(x,y)=y(x)$, we recover the definition of ABE for circuits. 

\paragraph{Functional encryption.}

\begin{definition}[Secret-Key Functional Encryption]\label{def:SKFE}
An SKFE scheme $\SKFE$ is a tuple of four PPT algorithms $(\Setup, \KG, \Enc, \Dec)$. 
Below, let $\cX$, $\cY$, and $\cF$ be the plaintext, output, and function spaces $\SKFE$, respectively.
\begin{description}
\item[$\Setup(1^\secp)\ra\msk$:] The setup algorithm takes a security parameter $1^\lambda$, and outputs a master secret key $\msk$.
\item[$\KG(\msk,f)\ra\sk_f$:] The key generation algorithm takes a master secret key $\msk$ and a function $f \in \calF$, and outputs a functional decryption key $\sk_f$.

\item[$\Enc(\msk,x)\ra\ct$:] The encryption algorithm takes a master secret key $\msk$ and a plaintext $x \in \cX$, and outputs a ciphertext $\ct$.

\item[$\Dec(\sk_f,\ct)\ra y$:] The decryption algorithm takes a functional decryption key $\sk_f$ and a ciphertext $\ct$, and outputs  $y \in \{ \bot \} \cup \cY$.

\item[Correctness:] We require that for every $x \in \cX$, $f \in \calF$, $\numkey\in\bbN$, we have that
\[
\Pr\left[
\Dec(\sk_f, \ct) = f(x)
 \ \middle |
\begin{array}{rl}
 &\msk \la \Setup(1^\secp),\\
 & \sk_f \gets \KG(\msk,f), \\
 &\ct \gets \Enc(\msk,x)
\end{array}
\right]=1 - \negl(\secp).
\]
\end{description}
\end{definition}

\begin{definition}[Function Privacy]\label{def:full_func_privacy_SKFE}
We formalize the experiment $\expb{\qA,\SKFE}{full}{fp}(1^\secp,\coin)$ between an adversary $\qA$ and a challenger for SKFE scheme for $\Xs,\Ys$, and $\Fs$ as follows:
        \begin{enumerate}
            \item At the beginning, the challenger runs $\msk\gets\Setup(1^\secp)$.
            Throughout the experiment, $\qA$ can access the following oracles.
            \begin{description}
            \item[$\Oracle{\Enc}(x_0,x_1)$:] Given $(x_0,x_1)$, it returns $\Enc(\msk,x_\coin)$.
            \item[$\Oracle{\KG}(f_0,f_1)$:] Given $(f_0,f_1)$, it returns $\KG(\msk,f_\coin)$.
            \end{description}
            \item If the following happens during the oracle queries above, the experiment aborts: $f_0(x_0) \ne f_1(x_1)$ or $\abs{x_0} \ne \abs{x_1}$ or $\abs{f_0}\ne \abs{f_1}$.
            \item $\qA$ outputs a guess $\coin^\prime$ for $\coin$. The challenger outputs $\coin'$ as the final output of the experiment.
        \end{enumerate}
        We say that $\SKFE$ is fully function private if, for any QPT $\qA$, it holds that
\begin{align}
\advb{\SKFE,\qA}{full}{fp}(\secp) \seteq \abs{\Pr[\expb{\SKFE,\qA}{full}{fp} (1^\secp,0) \ra 1] - \Pr[\expb{\SKFE,\qA}{full}{fp} (1^\secp,1) \ra 1] }\le \negl(\secp).
\end{align}
If $\qA$ can access $\Oracle{\Enc}$ only once in $\expb{\SKFE,\qA}{full}{fp}$, we say that $\SKFE$ is adaptively single-ciphertext function private.
\end{definition}

\begin{theorem}[\cite{C:GorVaiWee12,JC:BraSeg18,C:ABSV15,TCC:AnaVai19}]\label{thm:1ct_adaptive_function_private_SKFE}
If there exist OWFs, there exists adaptively single-ciphertext function private SKFE for $\Ppoly$.
\end{theorem}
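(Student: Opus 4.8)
The plan is to derive the statement by composing several standard, OWF-instantiable transformations on functional encryption, so that no machinery beyond garbled circuits, PRFs, and pseudorandom-ciphertext SKE is required. First I would build a \emph{single-key}, many-ciphertext, \emph{selectively} message-private SKFE for $\Ppoly$ from OWFs, following the Sahai--Seyalioglu/garbled-circuit template of \cite{C:GorVaiWee12}. Concretely, $\Setup$ samples $2\ell$ SKE keys $\{K_{i,b}\}_{i\in[\ell],b\in\bit}$ as $\msk$; a functional key $\sk_f$ reveals the selecting subset $\{K_{i,f_i}\}_i$; and $\Enc(\msk,x)$ garbles the universal circuit $U_x(\cdot)=U(\cdot,x)$ that hardwires $x$ and evaluates an input function, shipping the garbled circuit together with SKE encryptions $\{\E(K_{i,b},\lab_{i,b})\}$ of its labels. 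Decryption uses $\sk_f$ to recover $\{\lab_{i,f_i}\}$ and runs $\GCEval$ to obtain $f(x)$. Selective message privacy reduces to the simulation security of the garbling scheme (via $\SimGC$, \cite{FOCS:Yao86,JC:LinPin09}) together with pseudorandom-ciphertext SKE (\cref{thm:pseudorandom_ske}), using that only the labels for the single queried $f$ are ever exposed; the single-key restriction is exactly what keeps garbling security intact. All ingredients follow from OWFs (\cref{thm:pprf-owf} and the cited garbling and SKE theorems).

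Second, I would upgrade this base scheme along three orthogonal axes, each via a known private-key transformation. (i) To move from selective to \emph{adaptive} security I would apply the Trojan/selective-to-adaptive method of \cite{C:ABSV15}, embedding a single-key single-ciphertext symmetric-key FE layer (itself from OWFs via PRFs) that is activated only inside the hybrids; this preserves the single-slot collusion profile while making the challenge query adaptive. (ii) To obtain \emph{function} privacy from message privacy I would invoke the generic private-key transformation of \cite{JC:BraSeg18}, which uses only symmetric primitives and hence stays within OWFs. (iii) To convert \emph{single-key} into \emph{single-ciphertext} I would use the ciphertext/key symmetry of private-key FE made explicit in \cite{JC:BraSeg18,TCC:AnaVai19}: once the scheme is function-private, functions and messages play symmetric roles through the universal circuit, so a scheme secure for one key and many ciphertexts yields one secure for one ciphertext and unboundedly many keys. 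Composing (i)--(iii) gives an adaptively single-ciphertext function-private SKFE for $\Ppoly$ from OWFs, matching the four cited works.

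I expect the adaptive-security step to be the main obstacle: the naive garbled-circuit scheme is only selective because the challenge ciphertext fixes $x$ (hence the garbled $U_x$) before key queries, and a straightforward guessing reduction loses an exponential factor. The Trojan method circumvents this by carrying a ``dormant'' symmetric-key FE whose key and ciphertext are swapped in during an indistinguishability hybrid, but verifying that this layer is compatible with \emph{both} the function-privacy transformation and the key/ciphertext symmetry---so that adaptivity, function privacy, and the single-ciphertext restriction all hold \emph{simultaneously}---is the delicate part, and would require ordering the hybrids so that each transformation's correctness and security invariants are maintained throughout. The remaining steps are routine reductions to the primitives already established from OWFs in the preceding preliminaries.
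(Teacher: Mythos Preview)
The paper does not give its own proof of this statement: \cref{thm:1ct_adaptive_function_private_SKFE} appears in the preliminaries as a known result, stated with the four citations and no argument. Your proposal is a faithful reconstruction of how those cited works compose to yield the claim, and the high-level pipeline you describe---single-key SKFE from garbled circuits \cite{C:GorVaiWee12}, the selective-to-adaptive Trojan upgrade \cite{C:ABSV15}, the message-to-function-privacy transformation \cite{JC:BraSeg18}, and the key/ciphertext duality that converts one-key/many-ciphertext into one-ciphertext/many-key \cite{JC:BraSeg18,TCC:AnaVai19}---is exactly the intended decomposition. There is nothing to compare against beyond the citations themselves, and your sketch matches them.
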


\begin{definition}[Public-Key Functional Encryption]\label{def:PKFE}
A PKFE scheme $\PKFE$ is a tuple of four PPT algorithms $(\Setup, \KG, \Enc, \Dec)$. 
Below, let $\cX$, $\cY$, and $\cF$ be the plaintext, output, and function spaces of $\PKFE$, respectively.
\begin{description}
\item[$\Setup(1^\secp)\ra(\pk,\msk)$:] The setup algorithm takes a security parameter $1^\secp$ and outputs a public key $\pk$ and master secret key $\msk$.
\item[$\KG(\msk,f)\ra\sk_f$:] The key generation algorithm $\KG$ takes a master secret key $\msk$ and a function $f \in \cF$, and outputs a functional decryption key $\sk_f$.

\item[$\Enc(\pk,x)\ra\ct$:] The encryption algorithm takes a public key $\pk$ and a message $x \in \cX$, and outputs a ciphertext $\ct$.

\item[$\Dec(\sk_f,\ct)\ra y$:] The decryption algorithm takes a functional decryption key $\sk_f$ and a ciphertext $\ct$, and outputs $y \in \{ \bot \} \cup \cY$.

\item[Correctness:] We require 
we have that
\[
\Pr\left[
\Dec(\sk_f, \ct) = f(x)
 \ \middle |
\begin{array}{rl}
 &(\pk,\msk) \la \Setup(1^\secp),\\
 & \sk_f \gets \KG(\msk,f), \\
 &\ct \gets \Enc(\pk,x)
\end{array}
\right]=1 -\negl(\secp).
\]
\end{description}
\end{definition}

\begin{definition}[Adaptive Security for PKFE]\label{def:ad_ind_PKFE}
We formalize the experiment $\expb{\qA}{ada}{ind}(1^\secp,\coin)$ between an adversary $\qA$ and a challenger for PKFE scheme for $\Xs,\Ys$, and $\Fs$ as follows:
        \begin{enumerate}
            \item The challenger runs $(\pk,\msk)\gets\Setup(1^\secp)$ and sends $\pk$ to $\qA$.
            \item $\qA$ sends arbitrary key queries. That is, $\qA$ sends function $f_{i}\in\Fs$ to the challenger and the challenger responds with $\sk_{f_i}\gets \KG(\msk,f_i)$ for the $i$-th query $f_{i}$.
            \item At some point, $\qA$ sends $(x_0,x_1)$ to the challenger. If $f_i(x_0)=f_i(x_1)$ for all $i$, the challenger generates a ciphertext $\ct^*\gets\Enc(\pk,x_\coin)$. The challenger sends $\ct^*$ to $\qA$.
            \item Again, $\qA$ can sends function queries $f_i$ such that $f_i(x_0)=f_i(x_1)$.
            \item $\qA$ outputs a guess $\coin^\prime$ for $\coin$.
            \item The experiment outputs $\coin^\prime$.
        \end{enumerate}
        We say that $\PKFE$ is adaptively secure if, for any QPT $\qA$, it holds that
\begin{align}
\advb{\PKFE,\qA}{ada}{ind}(\secp) \seteq \abs{\Pr[\expb{\PKFE,\qA}{ada}{ind} (1^\secp,0) \ra 1] - \Pr[\expb{\PKFE,\qA}{ada}{ind} (1^\secp,1) \ra 1] }\leq \negl(\secp).
\end{align}
If $\qA$ can send only $q$ key queries in $\expb{\PKFE,\qA}{ada}{ind}$ where $q$ is a bounded polynomial, we say that $\PKFE$ is $q$-bounded adaptively secure.
\end{definition}

\begin{theorem}[\cite{C:GorVaiWee12,TCC:AnaVai19}]\label{thm:bounded_PKFE_from_PKE}
If there exists IND-CPA secure PKE, there exists $q$-bounded adaptively secure PKFE for $\Ppoly$.
\end{theorem}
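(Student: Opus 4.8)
The plan is to follow the two-stage template of Gorbunov--Vaikuntanathan--Wee and Ananth--Vaikuntanathan: first build a \emph{single-key} (i.e., $1$-bounded) adaptively secure $\PKFE$ for $\Ppoly$ directly from IND-CPA secure $\PKE$, and then amplify single-key security to $q$-bounded security through a purely combinatorial wrapper. Throughout, the evaluation of an arbitrary polynomial-size circuit is reduced to a ``label-selection'' task using the garbling scheme $\GC=(\Garble,\GCEval)$ guaranteed by the existence of OWFs, which IND-CPA secure $\PKE$ implies.

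For the single-key scheme I would use the Sahai--Seyalioglu construction. $\Setup$ samples, for every position $i$ of the function description and every bit $b\in\bit$, an independent $\PKE$ key pair $(\ek_{i,b},\dk_{i,b})$; the public key collects the $\ek_{i,b}$ and the master secret key collects the $\dk_{i,b}$. To encrypt $x$, one runs $\Garble$ on the universal circuit $U_x$ defined by $U_x(f)=f(x)$, obtaining labels $\{\lab_{i,b}\}$ and a garbled circuit $\widetilde{U_x}$; the ciphertext is $\widetilde{U_x}$ together with $\Enc(\ek_{i,b},\lab_{i,b})$ for all $i,b$. The key $\sk_f$ is the selection $\{\dk_{i,f_i}\}_i$, which lets the decryptor recover exactly $\{\lab_{i,f_i}\}_i$ and then compute $\GCEval(\widetilde{U_x},\{\lab_{i,f_i}\}_i)=f(x)$. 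Correctness is immediate from garbling correctness. For single-key security I would argue by a hybrid: IND-CPA security of the instances $\{(\ek_{i,1-f_i},\cdot)\}$ lets me replace the encryptions of the \emph{unselected} labels $\lab_{i,1-f_i}$ by junk, after which only one label per wire survives in the view and the simulator $\SimGC$ replaces $(\widetilde{U_x},\{\lab_{i,f_i}\}_i)$ by a simulation depending only on $f(x_0)=f(x_1)$; the two challenge messages then become indistinguishable.

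The second stage amplifies to a fixed polynomial bound $q$. Here I would instantiate $N=\poly(q,\secp)$ copies of the single-key scheme and have $\KG$ assign to each queried $f$ a random subset of the copies, while $\Enc$ distributes the computation of $f(x)$ across the copies using a secret-sharing/MPC-style encoding so that any single copy reveals nothing about $x$ beyond a single output. The combinatorial heart is a balls-and-bins argument in the spirit of \cite{PKC:ISVWY17}: with the parameters chosen correctly, with overwhelming probability each copy actually used to answer a decrypting query is touched by at most one of the $q$ keys, so single-key security can be invoked copy-by-copy, and the remaining copies are hidden by the secret sharing.

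The step I expect to be the main obstacle is \emph{adaptive} security, since the adversary fixes the challenge $(x_0,x_1)$ only after seeing some keys, whereas the garbling simulator above is naturally selective, as it needs $f(x)$ at garbling time. I would resolve this as in \cite{TCC:AnaVai19}: either appeal to an adaptively secure garbling/randomized-encoding primitive, or adopt an online-offline ``trapdoor'' formulation so the reduction can defer committing to the garbled output until the challenge is declared. A secondary subtlety is that the balls-and-bins analysis must tolerate adaptively and adversarially chosen key queries interleaved with the challenge; this is handled by fixing the random subset assignment independently of the queries and taking a union bound over the polynomially many bad collision events.
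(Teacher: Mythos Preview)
The paper does not prove this theorem at all: it is stated purely as a citation of \cite{C:GorVaiWee12,TCC:AnaVai19} and used as a black box. So there is no ``paper's own proof'' to compare against; your task here was only to recall what those works do, and your sketch does that at the right level of detail.

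A couple of minor calibration points. First, the amplification step in \cite{C:GorVaiWee12} is not the balls-and-bins argument of \cite{PKC:ISVWY17} that you invoke; it is an MPC-based compiler (\`a la BGW) that secret-shares the input across many single-key instances and has each instance compute a local share of the output. The balls-and-bins technique you describe is what \emph{this} paper uses later, in \cref{sec:ABE_SKL_q_bounded}, to go from $1$-bounded to $q$-bounded \emph{distinguishing key} security for ABE-SKL, and it is not the mechanism behind the cited theorem. Second, your identification of adaptivity as the crux is right, and pointing to \cite{TCC:AnaVai19} is the correct resolution; that work supplies the adaptive single-key building block (via a one-time, adaptively simulatable encoding) that plugs into the collusion amplifier.
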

\begin{remark}
We defined FE as key-policy FE (KPFE) here. There is another type of FE called ciphertext-policy FE (CPFE). Since we use CPFE only as a building block of the CoIC-KLA secure PKE scheme in \cref{sec:PKE-CoIC}, we defer its definition to \cref{sec:tools}. 
\end{remark}

\subsection{Useful Lemmata}
The following lemma is taken verbatim from \cite[Lemma 2.1]{C:BonZha13}. 
\begin{lemma}[{\cite[Lemma 2.1]{C:BonZha13}}]\label{lem:BZ}
Let $\qA$ be a quantum algorithm, and let $\Pr[x]$ be the probability that $\qA$ outputs $x$. Let
$\qA'$ be another quantum algorithm obtained from $\qA$ by pausing $\qA$ at an arbitrary stage of execution, performing a partial measurement that obtains one of $k$ outcomes, and then resuming $\qA$. Let $\Pr'[x]$
be the probability $\qA'$ outputs $x$. Then $\Pr'
[x] \ge \Pr[x]/k$. 
\end{lemma}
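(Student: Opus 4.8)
The statement is a standard fact about interleaving a measurement into the middle of a quantum computation, and my plan is to reduce it to an elementary Cauchy--Schwarz inequality. First I would normalize the setup. By purifying $\qA$ (keeping all internal randomness and any discarded registers in a reference system that the rest of the computation never touches) I may assume that the state of $\qA$ at the pause is a \emph{pure} state $\ket{\psi}$, and that the partial measurement is a projective measurement $\{P_i\}_{i=1}^{k}$ with $\sum_{i=1}^{k}P_i=I$ acting on this state; a general $k$-outcome measurement reduces to this case by a Naimark dilation that adjoins a fixed ancilla, measures projectively, and has the continuation ignore the ancilla. Folding the remainder of $\qA$'s execution---the residual unitary together with the final output measurement for outcome $x$---into a single positive operator $N_x$ acting on the pause-time register, I can write $\Pr[x]=\bra{\psi}N_x\ket{\psi}$ with $0\le N_x\le I$.

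Next I would put both probabilities into a common form. Because in $\qA'$ the continuation does not depend on which outcome $i$ occurred, the effective outcome-averaged input to the continuation is $\sum_{i}P_i\ket{\psi}\bra{\psi}P_i$, and hence $\Pr'[x]=\sum_{i=1}^{k}\bra{\psi}P_iN_xP_i\ket{\psi}$. Setting $\ket{a_i}\seteq N_x^{1/2}P_i\ket{\psi}$ (well defined since $N_x\ge 0$), this gives $\Pr'[x]=\sum_{i=1}^{k}\norm{\ket{a_i}}^2$, while the resolution $\sum_i P_i=I$ yields $\sum_{i}\ket{a_i}=N_x^{1/2}\ket{\psi}$ and therefore $\Pr[x]=\bra{\psi}N_x\ket{\psi}=\norm{\sum_{i=1}^{k}\ket{a_i}}^2$. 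The inequality is then immediate from Cauchy--Schwarz applied to the $k$ vectors $\ket{a_i}$ (equivalently, to the numbers $\norm{\ket{a_i}}$ against the all-ones vector): $\norm{\sum_{i=1}^{k}\ket{a_i}}^2\le k\sum_{i=1}^{k}\norm{\ket{a_i}}^2$, i.e. $\Pr[x]\le k\cdot\Pr'[x]$, which is the claim.

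The only genuine obstacle is the bookkeeping in the first paragraph: justifying that one may take the pause state pure and the intermediate measurement projective, and that the entire tail of the computation collapses into a single positive operator $N_x$ on the pause-time register, so that inserting the projectors $P_i$ is literally the \emph{only} difference between $\qA$ and $\qA'$. Once this reduction is in place there is no real computation left---the bound is just Cauchy--Schwarz, and in particular the constant $k$ enters exactly as the number of vectors being summed.

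As an alternative to the last two paragraphs (and one that avoids purifying), I would invoke the pinching inequality $\sum_{i}P_i\rho P_i\ge \tfrac1k\rho$ directly for a possibly mixed pause-time state $\rho$. This is proved via the roots-of-unity unitaries $U_j=\sum_{i}\omega^{ij}P_i$ with $\omega=e^{2\pi\sqrt{-1}/k}$, for which $\tfrac1k\sum_{j=0}^{k-1}U_j\rho U_j^{\dagger}=\sum_{i}P_i\rho P_i$; isolating the $j=0$ term $U_0\rho U_0^{\dagger}=\rho$ and discarding the remaining positive terms gives the inequality. Combining it with $N_x\ge 0$ through $\mathrm{Tr}\!\big(N_x(\sum_i P_i\rho P_i-\tfrac1k\rho)\big)\ge 0$ again delivers $\Pr'[x]\ge\tfrac1k\Pr[x]$.
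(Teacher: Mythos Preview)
The paper does not give its own proof of this lemma; it merely states the result verbatim as a citation from \cite[Lemma~2.1]{C:BonZha13} and uses it as a black box. So there is nothing to compare against.

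Your proof is correct. Both arguments you give are standard and valid. The first (purify, Naimark-dilate to projectors, then Cauchy--Schwarz on the vectors $N_x^{1/2}P_i\ket{\psi}$) is essentially the argument one finds in the original Boneh--Zhandry paper. The second route via the pinching inequality $\sum_i P_i\rho P_i\ge \tfrac{1}{k}\rho$, derived from the roots-of-unity averaging identity, is a clean alternative that avoids the purification step and works directly for mixed states; the final trace inequality $\mathrm{Tr}\bigl(N_x(\sum_i P_i\rho P_i-\tfrac{1}{k}\rho)\bigr)\ge 0$ follows since the trace of a product of two positive operators is nonnegative. The bookkeeping you flag in your first paragraph (justifying the reduction to a pure pause state, projective measurement, and a single POVM element $N_x$ for the tail) is routine and poses no real difficulty.
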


%\subsection{Quantum Goldreich-Levin with Quantum Auxiliary Input}
We will also need the quantum Goldreich-Levin lemma established by \cite{C:CLLZ21} based on \cite{AC02}.
\begin{lemma}[Quantum Goldreich-Levin with Quantum Auxiliary Input~{\cite[Lemma~B.12]{C:CLLZ21}}]\label{lem:QGL}
There exists a QPT algorithm $\qExt$ that satisfies the following.
Let $n \in \mathbb{N}$, $x\in \bit^n$, $\epsilon\in [0,1/2]$,  and $\qA$ be a quantum algorithm with a quantum auxiliary input $\qaux$ such that    
\begin{align}
    \Pr\left[\qA(\qaux,r)\ra x\cdot r ~\middle |~ r\gets \bit^n\right]
    \geq \frac{1}{2}+\epsilon.
\end{align}
%\begin{align}
%    \left|
%    \Pr\left[\qA(\qaux,r,(x\cdot r \mod 2) )\ra 1 \right]
%    -
%    \Pr\left[\qA(\qaux,r,u)\ra 1 \right]
%    \right|
%    \geq \epsilon
%\end{align}
%where $r\gets \bit^n$ and $u\gets \bit$.
%Then there exists a quantum algorithm $\qExt$ (that does not depend on $x$) 
%such that
Then, we have
\begin{align}
    \Pr\left[\qExt([\qA],\qaux)\rightarrow x \right]\geq 4\epsilon^2.
\end{align}
where $[\qA]$ means the description of $\qA$. %\takashi{The description of $\qA$ is classical. Should we use a different notation?}\shweta{Yes, I would say so.}\takashi{How about the above notation $[\qA]$?} \shweta{Looks fine.}
%Moreover, there is an efficient procedure to construct $\qExt$ from the description of $\qA$.
%Moreover, the running-time of $\qExt$ is $2T_{\qA} + O(\ell)$. 
%\takashi{By running-time, I mean the number of gates.}
\end{lemma}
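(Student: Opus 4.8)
The plan is to recover $x$ by running $\qA$ coherently in superposition over all challenges $r$ and then extracting the hidden string with a single Hadamard transform, following the Adcock--Cleve strategy~\cite{AC02} as adapted to the quantum-auxiliary-input setting in~\cite{C:CLLZ21}. The reason the classical Goldreich--Levin argument does not transfer directly is that $\qaux$ is a quantum state that can be neither cloned nor reset, so we cannot feed $\qA$ many independent challenges $r,r'$ while reusing the same advice; instead we must consume the single copy of $\qaux$ in one coherent pass.

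First I would model $\qA$ as a unitary $\mathbf{U}$ acting on the auxiliary register holding $\qaux$, an input register holding $r$, some ancilla/work registers, and a single-qubit output register, followed by a standard-basis measurement of the output qubit whose outcome is the guess for $x\cdot r$. The extractor $\qExt$ then proceeds as follows. It prepares the uniform superposition $\frac{1}{\sqrt{2^n}}\sum_{r\in\bit^n}\ket{r}$ in the input register, initializes the output qubit to $\ket{-}$, and tensors in $\qaux$. It applies $\mathbf{U}$, so that by phase kickback the $\ket{-}$ output qubit imprints the phase $(-1)^{b(r)}$, where $b(r)$ is the branch-dependent guess of $\qA$ on challenge $r$. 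It then applies $\mathbf{U}^{\dagger}$ to uncompute the work and auxiliary registers, applies Hadamard to the input register, measures it in the computational basis, and outputs the result.

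The analysis relates the amplitude on $\ket{x}$ to the prediction bias. Writing $p_r=\Pr[\qA(\qaux,r)\ra x\cdot r]$ for each fixed $r$ and $\bar p=\frac{1}{2^n}\sum_r p_r\ge \frac12+\epsilon$, the coherent phase kickback followed by the uncomputation $\mathbf{U}^{\dagger}$ leaves, on the branch that restores $\ket{\qaux}\ket{0}_{\mathrm{work}}$, the coefficient $c_r=(-1)^{x\cdot r}(2p_r-1)$ attached to each $\ket{r}$. The final Hadamard then places amplitude $\frac{1}{2^n}\sum_r c_r(-1)^{x\cdot r}=2\bar p-1\ge 2\epsilon$ on $\ket{x}$. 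Since the residual ``garbage'' branch is orthogonal in the auxiliary and work registers and hence cannot interfere, measuring the input register yields $x$ with probability at least $(2\bar p-1)^2\ge 4\epsilon^2$, as claimed.

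I expect the main obstacle to be justifying the coherent bias $c_r=(-1)^{x\cdot r}(2p_r-1)$ rigorously: one must track that $\mathbf{U}^{\dagger}$ sends the phase-kicked state back onto $\ket{\qaux}\ket{0}_{\mathrm{work}}$ with exactly this coefficient while confining the residual entanglement to an orthogonal subspace. It is precisely this orthogonality---rather than any cancellation of cross terms---that preserves the $2\epsilon$ lower bound, and it is exactly what lets a single, unclonable copy of $\qaux$ suffice in place of the classical rewinding argument. Since the lemma is quoted verbatim from~\cite{C:CLLZ21} (building on~\cite{AC02}), I would ultimately defer to that proof for the operator-level details.
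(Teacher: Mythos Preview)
The paper does not prove this lemma; it is stated in the preliminaries as a tool imported verbatim from \cite{C:CLLZ21} (which in turn builds on \cite{AC02}), with no accompanying argument. Your proposal correctly identifies this and sketches the standard Adcock--Cleve coherent extraction (uniform superposition over $r$, phase kickback via $\ket{-}$, uncompute, Hadamard, measure), which is indeed the approach taken in the cited sources. So there is nothing to compare against in the present paper, and your deferral to \cite{C:CLLZ21,AC02} for the operator-level details is exactly what the authors do as well.
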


\if0
\ryo{I copy and paste from an unpublished manuscript by Fuyuki and I.}

\begin{definition}[SKFE with Secure Key Leasing]
An SKFE-SKL scheme $\SKFESKL$ is a tuple of six algorithms $(\Setup, \qKG, \Enc, \qDec,\qCert,\certvrfy)$. 
Below, let $\cX$, $\cY$, and $\cF$ be the plaintext, output, and function spaces of $\SKFESKL$, respectively.
\begin{description}
\item[$\Setup(1^\secp,1^{\numkey})\ra\msk$:] The setup algorithm takes a security parameter $1^\lambda$ and a collusion bound $1^{\numkey}$, and outputs a master secret key $\msk$.
\item[$\qKG(\msk,f,1^{\numct})\ra(\qfsk,\vk)$:] The key generation algorithm takes a master secret key $\msk$, a function $f \in \calF$, and an availability bound $1^{\numct}$, and outputs a functional decryption key $\qfsk$ and a verification key $\vk$.
%We can consider that $\KG$ is deterministic.
%Using a PRF, we can convert every probabilistic $\KG$ algorithm to a deterministic one.

\item[$\Enc(\msk,x)\ra\ct$:] The encryption algorithm takes a master secret key $\msk$ and a plaintext $x \in \cX$, and outputs a ciphertext $\ct$.

\item[$\qDec(\qfsk,\ct)\ra\tlx$:] The decryption algorithm takes a functional decryption key $\qfsk$ and a ciphertext $\ct$, and outputs a value $\tilde{x}$.

\item[$\qCert(\qfsk)\ra\cert$:] The certification algorithm takes a function decryption key $\qfsk$, and outputs a classical string $\cert$.

\item[$\Vrfy(\vk,\cert)\ra\top/\bot$:] The certification-verification algorithm takes a verification key $\vk$ and a string $\cert$, and outputs $\top$ or $\bot$.

\item[Decryption correctness:]For every $x \in \cX$, $f \in \calF$, and $\numkey,\numct\in\bbN$, we have
\begin{align}
\Pr\left[
\qDec(\qfsk, \ct) \allowbreak = f(x)
\ \middle |
\begin{array}{ll}
\msk \la \Setup(1^\lambda,1^{\numkey})\\
(\qfsk,\vk)\gets\qKG(\msk,f,1^{\numct})\\
\ct\gets\Enc(\msk,x)
\end{array}
\right] 
=1-\negl(\secp).
\end{align}

\item[Verification correctness:]For every $f \in \calF$ and $\numkey,\numct\in\bbN$, we have 
\begin{align}
\Pr\left[
\Vrfy(\vk,\cert)=\top
\ \middle |
\begin{array}{ll}
\msk \la \Setup(1^\lambda,1^{\numkey})\\
(\qfsk,\vk)\gets\qKG(\msk,f,1^{\numct})\\
\cert \lrun \qCert(\qfsk)
\end{array}
\right] 
=1-\negl(\secp).
\end{align}
\end{description}
\end{definition}

\begin{definition}[Selective Lessor Security]\label{def:sel_lessor_SKFESKL}
We say that $\SKFESKL$ is a selectively lessor secure SKFE-SKL scheme for $\Xs,\Ys$, and $\Fs$, if it satisfies the following requirement, formalized from the experiment $\expb{\qA,\SKFESKL}{sel}{lessor}(1^\secp,\coin)$ between an adversary $\qA$ and a challenger:
        \begin{enumerate}
            \item At the beginning, $\qA$ sends $(1^{\numkey},x_0^*,x_1^*)$ to the challenger. The challenger runs $\msk\gets\Setup(1^\secp,1^\numkey)$.
            Throughout the experiment, $\qA$ can access the following oracles.
            \begin{description}
            \item[$\Oracle{\Enc}(x)$:] Given $x$, it returns $\Enc(\msk,x)$.
            \item[$\Oracle{\qKG}(f,1^{\numct})$:] Given $(f,1^{\numct})$, if $f(x_0^*)\ne f(x_1^*)$, it returns $\bot$. Otherwise, it returns $\qKG(\msk,f,1^{\numct})$. $\qA$ can access this oracle at most $\numkey$ times.
            \end{description}
            \item $\qA$ sends $(f^*,1^{\numct^*})$ to the challenger. The challenger generates $(\fsk^*,\vk^*)\la\qKG(\msk,f^*,1^{\numct^*})$ and sends $\fsk^*$ to $\qA$.
            \item $\qA$ sends $\cert$ to the challenger. If $\bot=\Vrfy(\vk^*,\cert)$ or the number of queries to $\Oracle{\Enc}$ at this point exceeds $\numct^*$, the challenger output $0$ as the final output of this experiment. Otherwise, the challenger generates $\ct^*\la\Enc(\msk,x_\coin^*)$ and sends $\ct^*$ to $\qA$.
            \item $\qA$ outputs a guess $\coin^\prime$ for $\coin$. The challenger outputs $\coin'$ as the final output of the experiment.
        \end{enumerate}
        For any QPT $\qA$, it holds that
\begin{align}
\advb{\SKFESKL,\qA}{sel}{lessor}(\secp) \seteq \abs{\Pr[\expb{\SKFESKL,\qA}{sel}{lessor} (1^\secp,0) \ra 1] - \Pr[\expb{\SKFESKL,\qA}{sel}{lessor} (1^\secp,1) \ra 1] }\leq \negl(\secp).
\end{align}
\end{definition}

\begin{definition}[Adaptive Lessor Simulation-security]
Let $\SKFESKL$ be an SKFE scheme with secure key leasing.
For a stateful QPT adversary $\qA=(\qA_1,\qA_2,\qA_3,\qA_4)$ and a stateful QPT Simulator $\qS=(\qSEnc,\qSKG)$, we consider the two experiments described in~\cref{fig:FE_SKL_adsim_sec_experiments},
where
\begin{itemize}
%\item If $\qA$ outputs $f^*$ at Item $5$ or makes at least one query to $\qKG(\msk,\cdot)$ or $\Oracle{\qSKG}(\cdot)$, $\qA$ is required to output $\numct$ such that all algorithms of $\SKFESKL$ runs in polynomial time in $\secp$.
\item $\qA$ is allowed to make at most $\numct^*$ queries to $\Enc(\msk,\cdot)$ before it is given $\ct^*$,
\item $\qA$ is allowed to make total at most $\numkey$ queries to $\qKG(\msk,\cdot,\cdot)$ (resp.  $\qKG(\msk,\cdot,\cdot)$ and $\Oracle{\qSKG}(\cdot,\cdot)$) in $\realgame{ad}{sim}_{\SKFESKL,\qA}(\sep)$ (resp. $\simgame{ad}{sim}_{\SKFESKL,\qA,\qS}(\sep)$),
\item $\Oracle{\qSKG}(f,1^\numct)=\qSKG(\state,f,1^{\numct},f(x^*))$.
\end{itemize}

\medskip
\begin{figure}
\centering
 \begin{tabular}{r@{\ }p{0.42\textwidth}r@{\ }p{0.42\textwidth}}\toprule
 \Heading{$\realgame{ad}{sim}_{\SKFESKL,\qA}(\sep)$}{$\simgame{ad}{sim}_{\SKFESKL,\qA,\qS}(\sep)$}\midrule
   \setcounter{expitem}{0}
    \expitem{$1^{\numkey}\gets\qA_1(1^\secp)$}{$1^{\numkey}\gets\qA_1(1^\secp)$}
   \expitem{$\msk \gets \Setup(1^\secp,1^\numkey)$}{$\msk \gets \Setup(1^\secp,1^\numkey)$}
       \expitem{$(f^*,1^{\numct^*})\gets\qA_2(1^\secp)^{\Enc(\msk,\cdot),\qKG(\msk,\cdot,\cdot)}$}{$(f^*,1^{\numct^*})\gets\qA_2(1^\secp)^{\Enc(\msk,\cdot),\qKG(\msk,\cdot,\cdot)}$}
               \expitem{$(\qsk_{f^*},\vk^*)\gets \qKG(\msk,f^*,1^{\numct^*})$}{$(\qsk_{f^*},\vk^*)\gets \qKG(\msk,f^*,1^{\numct^*})$}
                   \expitem{$(\cert,x^*)\gets \qA_3(\qsk_{f^*})^{\Enc(\msk,\cdot),\qKG(\msk,\cdot,\cdot)}$}{$(\cert,x^*)\gets \qA_3(\qsk_{f^*})^{\Enc(\msk,\cdot),\qKG(\msk,\cdot,\cdot)}$}
                       \expitem{Output $0$ if $\bot=\Vrfy(\vk^*,\cert)$. Otherwise, go to the next step.}{Output $0$ if $\bot=\Vrfy(\vk^*,\cert)$. Otherwise, go to the next step.}
                       \expitem{}{Let $\calQ$ be the query/answer list for $\Enc(\msk,\cdot)$}
                       \expitem{}{Let $(f_i,1^{\numcti{i}})_{i\in[\numkey']}$ be the queries for $\KG(\msk,\cdot,\cdot)$}
                       \expitem{}{$y_i:=f_i(x^*)$ for every $i\in[\numkey]$}
                       \expitem{$\ct^*\gets\Enc(\msk,x^*)$}{$(\ct^*,\state)\gets\qSEnc(\msk,\calQ,(f_i,1^{\numcti{i}},y_i)_{i\in[\numkey]})$}
                       \expitem{Output $b\gets \qA_4(\ct^*)^{\Enc(\msk,\cdot),\qKG(\msk,\cdot,\cdot)}$}{Output $b\gets \qA_4(\ct^*)^{\Enc(\msk,\cdot),\Oracle{\qSKG}(\cdot,\cdot)}$}
            \bottomrule
 \end{tabular}
   \caption{Security experiments for adaptively lessor simulation-secure SKFE-SKL}\label{fig:FE_SKL_adsim_sec_experiments}
\end{figure}

\medskip

We say that $\SKFESKL$ is adaptively lessor simulation-secure if there exists a QPT simulator $\qS$ such that for any QPT adversary $\qA$, it satisfies that
\[
\abs{\Pr[\realgame{ad}{sim}_{\SKFESKL,\qA}(\sep)=1] - \Pr[\simgame{ad}{sim}_{\SKFESKL,\qA,\qS}(\sep)=1]} \le \negl(\sep).
\]

\end{definition}
\fi

\section{Public Key Encryption with Secure Key Leasing} 
\label{sec:pke-skl-defs}
In this section, we define the notion of public key encryption with secure key leasing (PKE-SKL) and its various security notions. Then we show several general relationships among those security notions.  
%We note that this definition is a natural adaptation of functional encryption with secure key leasing \cite{} but with the important difference that we do not require classical certificate of deletion.
%\shweta{Is this definition new to our work? If so, let us not put it into preliminaries (even if it is simple).}
%\takashi{Yes, this is new though FE with secure key leasing was already defined in Kitagawa-Nishimaki (under submission). The definition given here is  a natural adaptation of KN's definition, but a difference is that we do not require classical certificate of deletion. I agree to put this into a later section.}
\subsection{Definitions}
The syntax of PKE-SKL is defined as follows. 
\begin{definition}[PKE with Secure Key Leasing]
A PKE-SKL scheme $\PKESKL$ is a tuple of four algorithms $(\qKG, \Enc, \qDec,\qVrfy)$. 
Below, let $\cX$  be the message space of $\PKESKL$. 
\begin{description}
%\item[$\Setup(1^\secp,1^{\numkey})\ra\msk$:] The setup algorithm takes a security parameter $1^\lambda$ and a collusion bound $1^{\numkey}$, and outputs a master secret key $\msk$.
\item[$\qKG(1^\secp)\ra(\ek,\qdk,\vk)$:] The key generation algorithm takes a security parameter $1^\lambda$, and outputs an encryption key $\ek$, a decryption key $\qdk$, and a verification key $\vk$.

\item[$\Enc(\ek,\msg)\ra\ct$:] The encryption algorithm takes an encryption key $\ek$ and a message $\msg \in \cX$, and outputs a ciphertext $\ct$.

\item[$\qDec(\qdk,\ct)\ra\tilde{\msg}$:] The decryption algorithm takes a decryption key $\qdk$ and a ciphertext $\ct$, and outputs a value $\tilde{\msg}$.

%\item[$\qCert(\qfsk)\ra\cert$:] The certification algorithm takes a function decryption key $\qfsk$, and outputs a classical string $\cert$.

\item[$\qVrfy(\vk,\widetilde{\qdk})\ra\top/\bot$:] The verification algorithm takes a verification key $\vk$ and a (possibly malformed) decryption key $\widetilde{\qdk}$, and outputs $\top$ or $\bot$.

\item[Decryption correctness:]For every $\msg \in \cX$, we have
\begin{align}
\Pr\left[
\qDec(\qdk, \ct) \allowbreak = \msg
\ \middle |
\begin{array}{ll}
(\ek,\qdk,\vk)\gets\qKG(1^\secp)\\
\ct\gets\Enc(\ek,\msg)
\end{array}
\right] 
=1-\negl(\secp).
\end{align}

\item[Verification correctness:] We have 
\begin{align}
\Pr\left[
\qVrfy(\vk,\qdk)=\top
\ \middle |
\begin{array}{ll}
(\ek,\qdk,\vk)\gets\qKG(1^\secp)
\end{array}
\right] 
=1-\negl(\secp).
\end{align}
\end{description}
\end{definition}
\begin{remark}\label{rem:reusability}
We can assume without loss of generality that a decryption key of a PKE-SKL scheme is reusable, i.e., it can be reused to decrypt (polynomially) many ciphertexts. In particular, we can asusme that 
for honestly generated $\ct$ and $\qdk$, if we decrypt $\ct$ by using $\qdk$, the state of the decryption key after the decryption is negligibly close to that before the decryption in terms of trace distance. 
This is because the output of the decryption is almost deterministic by decryption correctness, and thus such an operation can be done without almost disturbing the input state by the gentle measurement lemma~\cite{Winter99}.    
A similar remark applies to all variants of PKE-SKL  (IBE, ABE, and FE with SKL) defined in this paper.    
\end{remark}
\begin{remark}
Though we are the first to define PKE with secure key leasing, SKFE with secure key leasing was already defined by Kitagawa and Nishimaki~\cite{AC:KitNis22}. 
The above definition is a natural adaptation of their definition with the important difference that we do not require classical certificate of deletion.
\end{remark}
%\takashi{Diffenrently from KN, we do not define the certifying algorithm $\qCert$. (We can think of $\qdk$ itself as a quantum certificate.)
%This is because we do not know how to achieve the classical certificate property.}

%\takashi{I mainly focus on the weaker version in the current manuscript. But I also leave the definition of the stronger version and comments on how to achieve the stronger version. When we want to switch to the stronger version, we will be able to easily do that.}

We define several security notions for PKE-SKL.
The first is a natural indistinguishability security definition, which is our primary taget. 
\begin{definition}[IND-KLA Security]\label{def:IND-CPA_PKESKL}
%\takashi{Renamed from lessor security to IND-CPA security. (Alternatively, should we call it something like IND-KLA (Key Leasing Attack)?)}
We say that a PKE-SKL scheme $\PKESKL$  with the message space $\Xs$ is IND-KLA secure, if it satisfies the following requirement, formalized from the experiment $\expb{\PKESKL,\qA}{ind}{kla}(1^\secp,\coin)$ between an adversary $\qA$ and a challenger $\qC$:
        \begin{enumerate}
            \item  $\qC$ runs $(\ek,\qdk,\vk)\gets\qKG(1^\secp)$ and sends $\ek$ and $\qdk$ to $\qA$. 
            \item Throughout the experiment, $\qA$ can access the following (stateful) verification oracle $\Oracle{\qVrfy}$ where $V$ is initialized to be $\unreturned$:
            \begin{description}
            %\item[\takashi{weaker version} $\Oracle{\qVrfy}(\widetilde{\qdk})$:]If $V=\unreturned$, it runs $d \gets \qVrfy(\vk,\widetilde{\qdk})$, returns $d$, and updates $V\seteq \returned$ if $d=\top$.        If $V=\returned$, it returns $\returned$. 
               \item[ $\Oracle{\qVrfy}(\widetilde{\qdk})$:] It runs $d \gets \Vrfy(\vk,\widetilde{\qdk})$ and returns $d$.  
               If $V=\bot$ and $d=\top$, it updates $V\seteq \top$. 
               %\takashi{changed the way of explaining update to better match that for ABE/FE.}
               %and updates $V\seteq \returned$ if $d=\top$. 
            \end{description}
            \item $\qA$ sends $(\msg_0^*,\msg_1^*)\in \Xs^2$
            to $\qC$. If $V=\unreturned$, $\qC$ output $0$ as the final output of this experiment. Otherwise, $\qC$ generates $\ct^*\la\Enc(\ek,\msg_\coin^*)$ and sends $\ct^*$ to $\qA$.
            \item $\qA$ outputs a guess $\coin^\prime$ for $\coin$. $\qC$ outputs $\coin'$ as the final output of the experiment.
        \end{enumerate}
        For any QPT $\qA$, it holds that
\begin{align}
\advb{\PKESKL,\qA}{ind}{kla}(\secp) \seteq \abs{\Pr[\expb{\PKESKL,\qA}{ind}{kla} (1^\secp,0) \ra 1] - \Pr[\expb{\PKESKL,\qA}{ind}{kla} (1^\secp,1) \ra 1] }\leq \negl(\secp).
\end{align} 
We say that $\PKESKL$ is 1-query IND-KLA secure if the above holds for any QPT $\qA$ that makes at most one query to $\Oracle{\qVrfy}$. %\takashi{This definition is not needed if we use the weaker version.}
\end{definition}
\begin{remark}\label{rem:no-post-challenge-query}
When we consider a 1-query adversary, we can assume that its query is made before receiving the challenge ciphertext $\ct^*$ without loss of generality. This is because otherwise the experiment always outputs $0$. 
\end{remark}
\begin{remark}
By a standard hybrid argument, one can show that IND-KLA security implies multi-challenge IND-KLA security where the adversary is allowed to request arbitrarily many challenge ciphertexts. Thus, if we have an IND-KLA secure PKE-SKL scheme for single-bit messages, we can extend the plaintext length to an arbitrary polynomial by bit-by-bit encryption.  \takashi{I added this remark.}
\end{remark}
%\begin{remark}
%The above security definition is a natural adaptation of that for SKFE with secure key leasing in \cite{AC:KitNis22} except for the difference that the verification oracle does not run the verification after the adversary passes the verification once. We believe that this is a natural definition considering the actual use cases and the definition of \cite{AC:KitNis22} was unnecessarily strong. \takashi{This remark is not needed if we use the weaker version.}
%\end{remark}

%\takashi{We may also define ``dishonestly generated key'' version where the lessor security holds even if the adversary generates $\ek$. (For defining this, we have to introduce an additional setup algorithm.) We can achive that if we use the LWE-based construction.}
%\takashi{I don't think the above idea works now. You can ignore it, but I keep it just for a record.}

We also define the one-way variant of the above security. 
\begin{definition}[OW-KLA Security]\label{def:OW-CPA_PKESKL}
We say that a PKE-SKL scheme $\PKESKL$  with the message space $\Xs$ is OW-KLA secure, if it satisfies the following requirement, formalized from the experiment $\expb{\PKESKL,\qA}{ow}{kla}(1^\secp)$ between an adversary $\qA$ and a challenger $\qC$:
        \begin{enumerate}
            \item  $\qC$ 
            runs $(\ek,\qdk,\vk)\gets\qKG(1^\secp)$ and sends $\ek$ and $\qdk$ to $\qA$. 
              \item Throughout the experiment, $\qA$ can access the following (stateful) verification oracle $\Oracle{\qVrfy}$ where $V$ is initialized to be $\unreturned$:
              \begin{description}
            %\item[ \takashi{weaker version} $\Oracle{\qVrfy}(\widetilde{\qdk})$:]If $V=\unreturned$, it runs $d \gets \qVrfy(\vk,\widetilde{\qdk})$, returns $d$, and updates $V\seteq \returned$ if $d=\top$.        If $V=\returned$, it returns $\returned$. 
               \item[   $\Oracle{\qVrfy}(\widetilde{\qdk})$:] It runs $d \gets \Vrfy(\vk,\widetilde{\qdk})$ and returns $d$. 
               If $V=\bot$ and $d=\top$, it updates $V\seteq \top$. 
               %returns $d$, and updates $V\seteq \returned$ if $d=\top$.
            \end{description}
            \item 
            $\qA$ sends $\requestchallenge$
            to $\qC$. If $V=\unreturned$, $\qC$ outputs $0$ as the final output of this experiment. Otherwise, $\qC$ 
            chooses $\msg^*\gets \Xs$, 
            generates $\ct^*\la\Enc(\ek,\msg^*)$ and sends $\ct^*$ to $\qA$.
            \item $\qA$ outputs $\msg$. $\qC$ outputs $1$ if $\msg=\msg^*$ and otherwise outputs $0$ as the final output of the experiment.
        \end{enumerate}
        For any QPT $\qA$, it holds that
\begin{align}
\advb{\PKESKL,\qA}{ow}{kla}(\secp) \seteq \Pr[\expb{\PKESKL,\qA}{ow}{kla} (1^\secp) \ra 1]\leq \negl(\secp).
\end{align}
We say that $\PKESKL$ is 1-query OW-KLA secure if the above holds for any QPT $\qA$ that makes at most one query to $\Oracle{\qVrfy}$. 
\end{definition}

Similar to normal PKE, IND-KLA security implies OW-KLA security if $|\Xs|$ is super-polynomial in $\secp$.
%Conversely, we can upgrade OW-KLA security into IND-KLA security as shown in \Cref{lem:ow-ind}. This is a direct application of Goldreich-Levin theorem with quantum auxiliary input~(\Cref{lem:QGL}).

Finally, we define a security notion which we call one-more unreturnability (OMUR), which requires that an adversary given a single copy of the decryption key cannot pass the verification more than once. Though this does not seem very meaningful by itself, this is a useful intermediate tool for our final goal of constructing IND-KLA secure scheme.  
\begin{definition}[One-More Unreturnability]\label{def:OMUR_PKESKL}
%\takashi{This is not needed if we consider the weaker one.}
We say that a PKE-SKL scheme $\PKESKL$  with the message space $\Xs$ satisfies One-More UnReturnability (OMUR), if it satisfies the following requirement, formalized from the experiment $\expa{\PKESKL,\qA}{omur}(1^\secp)$ between an adversary $\qA$ and a challenger $\qC$:
        \begin{enumerate}
            \item  $\qC$ 
            runs $(\ek,\qdk,\vk)\gets\qKG(1^\secp)$ and sends $\ek$ and $\qdk$ to $\qA$. 
         \item Throughout the experiment, $\qA$ can access the following (stateful) verification oracle $\Oracle{\qVrfy}$ where $\mathsf{count}$ is initialized to be $0$:
              \begin{description}
               \item[$\Oracle{\qVrfy}(\widetilde{\qdk})$:] 
            It runs $d \gets \Vrfy(\vk,\widetilde{\qdk})$ and
            returns $d$.  
            It updates $\mathsf{count}\seteq \mathsf{count}+1$ if $d=\top$. 
            \end{description}
            \item 
            $\qA$ sends $\mathsf{Finish}$
            to $\qC$. If $\mathsf{count}\ge 2$, $\qC$ outputs $1$ and $0$ otherwise
            as the final output of this experiment. 
        \end{enumerate}
        For any QPT $\qA$, it holds that
\begin{align}
\adva{\PKESKL,\qA}{omur}(\secp) \seteq \Pr[\expa{\PKESKL,\qA}{omur} (1^\secp) \ra 1]\leq \negl(\secp).
\end{align} 
\end{definition}
%\ryo{I changed OMU to OMUR in this definition. If you find OMU in this paper, please correct it to OMUR.}

\subsection{Relationships among Security Notions}
We show several relationships among different security notions for PKE-SKL. In particular, 
we show the following theorem.
\begin{theorem}\label{thm:one-query-ow_to_full}
If there exists a 1-query OW-KLA secure PKE-SKL scheme, there exists an IND-KLA secure PKE-SKL scheme. % that satisfies OMUR. 
\end{theorem}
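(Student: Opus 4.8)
The plan is to decompose \cref{thm:one-query-ow_to_full} into two independent generic upgrades: (i) turning one-wayness into indistinguishability, and (ii) amplifying the number of allowed verification-oracle queries from one to an arbitrary polynomial. I would compose them as $\text{1-query OW-KLA} \Rightarrow \text{1-query IND-KLA} \Rightarrow \text{IND-KLA}$, or equivalently first amplify to full OW-KLA and then apply Goldreich-Levin; the order is immaterial since each step preserves the feature the other step manipulates.

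For step (i) I would use the hardcore-bit construction together with the quantum Goldreich-Levin lemma (\cref{lem:QGL}). Given a OW-KLA scheme $(\qKG,\Enc,\qDec,\qVrfy)$ with message space $\bit^n$, define a single-bit IND-KLA scheme whose key generation is unchanged and whose encryption of $\coin\in\bit$ samples $r,s\gets\bit^n$, computes $\ct\gets\Enc(\ek,r)$, and outputs $(\ct,s,(r\cdot s)\oplus\coin)$; decryption recovers $r$ via $\qDec$ and unmasks $\coin$, while $\qVrfy$ is inherited verbatim. An IND-KLA adversary distinguishing encryptions of $0$ and $1$ is exactly an algorithm predicting the hardcore bit $r\cdot s$ from $(\ct,s)$, so \cref{lem:QGL} extracts $r$ and breaks OW-KLA. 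Crucially, the reduction forwards every verification query of the IND-KLA adversary to its own verification oracle and leaves the key register otherwise untouched, so it makes exactly as many verification queries as the adversary; hence this step turns $k$-query OW-KLA into $k$-query IND-KLA for every $k$, and in particular reduces full IND-KLA to full OW-KLA.

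It therefore remains to amplify OW-KLA from one verification query to polynomially many, and here the tool is one-more unreturnability (\cref{def:OMUR_PKESKL}). The reduction from a $Q$-query OW-KLA adversary $\qA$ to a $1$-query adversary $\qB$ proceeds by guessing the index $i^*\gets[Q]$ of the first verification query that $\qA$ passes: $\qB$ answers every query $j\ne i^*$ with $\bot$ and forwards only the $i^*$-th submitted register to its single real verification oracle, then relays $\qA$'s challenge request and final message. If $V$ is ever set to $\top$ then, by OMUR, with overwhelming probability exactly one query passes, so with probability $1/Q$ the guess is correct, all other queries indeed fail, and $\qB$ reproduces the view of $\qA$ while issuing a single oracle call. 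This incurs only a $1/Q$ loss, yielding full OW-KLA from 1-query OW-KLA once OMUR is in hand.

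The main obstacle is this amplification step, and specifically two intertwined points. First one must establish OMUR from 1-query OW-KLA, i.e. that a single copy of $\qdk$ cannot be used to pass verification twice; the natural route is to exploit that the verification oracle \emph{consumes} the submitted register, so that producing two registers each passing verification is a cloning task, and to invoke \cref{lem:BZ} to relate a second successful return to retaining a decrypting key after one honest return. Second, and more delicately, $\qB$ answers failing queries by fiat ($\bot$) without performing the real verification measurement, and one must show this does not disturb $\qA$'s residual state (which may be entangled with the consumed register). This is where I expect the real work: conditioned on the correct guess, every non-$i^*$ query returns $\bot$ with overwhelming probability by OMUR, and the gentle-measurement / almost-as-good-as-new lemma (as used in \cref{rem:reusability}) then guarantees that omitting these near-deterministic measurements perturbs the global state only negligibly, so the simulated and real executions are statistically close.
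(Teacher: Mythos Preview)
Your overall decomposition matches the paper's route through \cref{lem:add_OMUR,lem:ow-ind,lem:one-to-poly_IND}: quantum Goldreich--Levin for OW$\to$IND, and OMUR plus index-guessing for query amplification. Two points, however, need correction.

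First, the claim that the GL step ``turns $k$-query OW-KLA into $k$-query IND-KLA for every $k$,'' and hence that the two orderings are interchangeable, is wrong. The extractor in \cref{lem:QGL} must rerun the post-challenge stage $\qA_1$ on many inputs $r$; if $\qA_1$ makes verification queries, those cannot be ``forwarded once'' as you suggest. The $1$-query case works only because, by \cref{rem:no-post-challenge-query}, the single query can be assumed pre-challenge, so $\qA_1$ is query-free. The paper explicitly flags this in the remark after \cref{lem:ow-ind}. So you must apply GL at the $1$-query level and amplify IND-KLA afterward; the route ``amplify OW first, then GL'' that your detailed exposition follows does not go through.

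Second, and this is the real gap, your proposed route to OMUR does not work. You want a second accepted return to imply ``retaining a decrypting key,'' but the syntax imposes no such link: $1$-query OW-KLA only forbids decrypting after one accepted return; it says nothing about passing $\qVrfy$ again, and neither no-cloning nor \cref{lem:BZ} bridges that gap, since a state accepted by $\qVrfy$ need not be convertible into a decryptor. The paper's fix (\cref{lem:add_OMUR}) is not to prove OMUR for the given scheme but to \emph{modify} it: redefine verification so that it first checks decryptability (encrypt a fresh random message, run $\qDec$ with the submitted state, reject unless correct) and only then applies the original $\qVrfy$ to the post-decryption state. With this change, a second accepted query must decrypt a fresh random challenge \emph{after} the first query already passed the original $\qVrfy$, which is precisely a $1$-query OW-KLA break. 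This construction is the missing idea; once you have it, your guessing reduction is essentially \cref{lem:one-to-poly_IND}, though the paper justifies answering non-$i^*$ queries by $\bot$ via an unused-register/deferred-measurement argument rather than gentle measurement.
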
 
This theorem simplifies our task: For constructing a ($\poly$-query) IND-KLA secure scheme, it suffices to construct a 1-query OW-KLA secure scheme. 
We construct a 1-query OW-KLA secure scheme in \cref{sec:PKE-SKL}. 

We prove \Cref{thm:one-query-ow_to_full} in the following three steps.
\begin{enumerate}
    \item Give a conversion to add OMUR to any 1-query OW-KLA secure scheme (\cref{lem:add_OMUR}).
    \item Convert a 1-query OW-KLA secure scheme that satisfies OMUR to a 1-query IND-KLA secure scheme that satisfies OMUR (\cref{lem:ow-ind}).
    \item Show that any 1-query IND-KLA secure scheme that satisfies OMUR is IND-KLA secure (\cref{lem:one-to-poly_IND}).
\end{enumerate}

It is clear that  \Cref{thm:one-query-ow_to_full} follows from  \cref{lem:add_OMUR,lem:ow-ind,lem:one-to-poly_IND}. 
We prove them in the following.
%\Cref{thm:one-query-ow_to_full} is proven by combining the following \cref{lem:add_OMUR,lem:ow-ind,lem:one-to-poly_IND}

\begin{lemma}
\label{lem:add_OMUR}
If there exists a 1-query OW-KLA secure PKE-SKL scheme, then there exists a 1-query OW-KLA secure PKE-SKL scheme that satisfies OMUR. 
\end{lemma}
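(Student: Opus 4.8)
The plan is to give a generic transformation that augments the decryption key and the verification procedure with a conjugate-coded (BB84) ``single-use token,'' while leaving the encryption functionality of the scheme untouched. Concretely, let $\PKESKL=(\qKG,\Enc,\qDec,\qVrfy)$ be the given 1-query OW-KLA secure scheme. I would build $\PKESKL'$ as follows: $\qKG'$ runs $(\ek,\qdk,\vk)\gets\qKG(1^\secp)$, samples $x,\theta\gets\zo{\secp}$, prepares the BB84 state $\ket{x}_\theta=\bigotimes_{i\in[\secp]} H^{\theta_i}\ket{x_i}$, and outputs $\ek'\seteq\ek$, the decryption key $\qdk'\seteq\qdk\tensor\ket{x}_\theta$, and $\vk'\seteq(\vk,x,\theta)$. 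Encryption and decryption are inherited from $\PKESKL$ (decryption simply ignores the token register). The new verification $\qVrfy'(\vk',\widetilde{\qdk}')$ parses $\widetilde{\qdk}'$ into an original register and a token register, runs the original $\qVrfy(\vk,\cdot)$ on the former, measures the latter in the bases $\theta$, and accepts iff the original check passes \emph{and} the measured outcomes equal $x$. Decryption correctness is immediate, and verification correctness holds because measuring the honest token $\ket{x}_\theta$ in the bases $\theta$ returns $x$ with certainty while the original $\qVrfy$ accepts by the verification correctness of $\PKESKL$.

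Next I would argue 1-query OW-KLA security of $\PKESKL'$ by reduction to that of $\PKESKL$. Given a 1-query OW-KLA adversary $\qA$ against $\PKESKL'$, the reduction $\qB$ receives $(\ek,\qdk)$ from its own challenger, samples $(x,\theta)$ itself, hands $\qA$ the key $\qdk\tensor\ket{x}_\theta$ together with $\ek$, and simulates $\qA$'s single verification query by measuring the token register in the bases $\theta$ (checking against $x$) while forwarding the original register to its own verification oracle; it returns the conjunction of the two outcomes to $\qA$. Whenever $\qA$'s query causes the token-augmented check to accept, in particular the original $\qVrfy$ accepted, so $\qB$'s own flag $V$ is set to $\top$ and $\qB$ may request its challenge ciphertext, which it relays to $\qA$ (the encryption algorithms coincide); finally $\qB$ outputs $\qA$'s message guess. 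Since $\qB$ makes exactly one oracle query and wins precisely when $\qA$ does, $\advb{\PKESKL',\qA}{ow}{kla}(\secp)\le\advb{\PKESKL,\qB}{ow}{kla}(\secp)=\negl(\secp)$. The same reduction works even when the submitted state is entangled across the two registers, since $\qB$ performs only a partial measurement on the token register.

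Finally I would establish OMUR of $\PKESKL'$. The crucial point is that passing $\qVrfy'$ forces passing the token test, i.e.\ producing a register whose measurement in the hidden bases $\theta$ yields $x$, and that the submitted state is consumed on each verification query. Thus an adversary that makes the oracle accept twice must, starting from the single copy $\ket{x}_\theta$ (the original registers $\ek,\qdk$ being independent of $\theta$ and hence useless here), prepare two registers that both pass the basis-$\theta$ measurement test for the same $x$, where $\theta$ is revealed to the adversary only through the pass/fail bits. By the standard no-cloning/monogamy property of conjugate-coded states, the probability that any fixed pair of submissions both pass this test is at most $(\cos^2(\pi/8))^{\secp}=\negl(\secp)$; revealing $\theta$ to the adversary could only increase this probability, so the no-reveal setting is bounded by the corresponding monogamy-game value. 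A union bound over the at most $\binom{q}{2}=\poly(\secp)$ pairs arising from the $q$ polynomially many queries then yields $\adva{\PKESKL',\qA}{omur}(\secp)\le\poly(\secp)\cdot(\cos^2(\pi/8))^{\secp}=\negl(\secp)$.

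The main obstacle is making this last step fully rigorous under \emph{adaptive} queries: the adversary learns each pass/fail bit before choosing its next submission, so the per-pair bound must hold against an adaptive, possibly entangled strategy rather than for independently prepared registers. I expect to handle this by purifying the adversary and invoking a monogamy-of-entanglement bound for conjugate coding (as used in the certified-deletion literature), observing that the verifier's joint accept condition on two registers is exactly a cloning/monogamy game whose value is exponentially small, and that conditioning on earlier outcomes does not escape this bound.
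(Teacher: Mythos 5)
Your construction is genuinely different from the paper's, and the difference is worth spelling out. The paper does not add any new quantum object to the key: it modifies only $\qVrfy$, prefixing the original check with a \emph{decryptability test} (encrypt a fresh random message, decrypt it with the submitted key, accept only if the result is correct, then run the original $\qVrfy$ on the post-decryption state). OMUR then follows by a direct reduction to the 1-query OW-KLA security of the base scheme: guess the first two accepted queries $i_1<i_2$ (losing $\binom{Q}{2}$), forward the $i_1$-th to the real verification oracle, and observe that the $i_2$-th submission, having passed the decryptability test, can be used to decrypt the OW-KLA challenge ciphertext \emph{after} a key has already been successfully returned --- contradicting 1-query OW-KLA security. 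This is entirely self-contained and needs no quantum information-theoretic machinery. Your route instead makes unreturnability hold information-theoretically via a conjugate-coded token, which is a perfectly reasonable design and would even decouple OMUR from the computational assumptions; the price is that you must import a monogamy-of-entanglement bound.

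The gap is exactly where you flag it, and as written the proof is incomplete at its crux: the claim that ``conditioning on earlier outcomes does not escape this bound'' is asserted, not proven, and the TFGR-style bound you cite is stated for registers that are both prepared \emph{before} any test is applied, whereas your second register is prepared after the adversary learns the first accept/reject bit. The fix is not hard but must be said: after the guessing step, the success probability is
$\mathbb{E}_{x,\theta}\bigl\|(\Pi^{R_2}_{x,\theta}\otimes I)\,(U_2\otimes I_{R_1})\,(\Pi^{R_1}_{x,\theta}\otimes I)\,U_1\,(\ket{x}_\theta\otimes\ket{0})\bigr\|^2$,
where $\Pi_{x,\theta}=\ket{x}_\theta\!\bra{x}_\theta$ is the token test and $U_2$ is the adversary's second-stage unitary on the branch where the first test accepted. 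Since $U_2$ acts only on the adversary's memory and the fresh register $R_2$, it commutes with $\Pi^{R_1}_{x,\theta}$, so the expression collapses to $\mathbb{E}_{x,\theta}\|(\Pi^{R_1}_{x,\theta}\otimes\Pi^{R_2}_{x,\theta})V\ket{x}_\theta\ket{0}\|^2$ for a single $(x,\theta)$-independent isometry $V$ --- i.e.\ the standard non-adaptive monogamy game (indeed an easier one, since $\theta$ is never revealed), which is at most $(\tfrac12+\tfrac{1}{2\sqrt2})^{\secp}$. You also need the paper's care in the guessing step: queries other than $i_1,i_2$ must be answered $\bot$ without measurement, which is harmless only because those registers are discarded, and the resulting hybrid only \emph{over}-counts the adversary's success. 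With those two points made explicit your argument goes through; without them, the central inequality is unjustified.
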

\begin{remark}
This lemma is actually not needed for the purpose of this paper since our construction of a 1-query OW-KLA secure PKE-SKL scheme in \cref{sec:PKE-SKL} already satisfies OMUR as mentioned in \cref{rem:OMUR}. We include this lemma in the paper because this general reduction may be useful in future works.  
\end{remark}
\begin{proof}[Proof of \cref{lem:add_OMUR}]
Let $\OWSKL=(\OWSKL.\qKG, \OWSKL.\Enc, \OWSKL.\qDec,\OWSKL.\qVrfy)$ be a 1-query OW-KLA secure PKE-SKL scheme with the message space $\Xs$. 
We assume that a decryption key of $\OWSKL$ is reusable in the sense of \cref{rem:reusability} 
%, i.e., if we decrypt an honestly generated ciphertext using an honestly generated decryption key, then the state of the decryption key only negligibly changes (in terms of the trace distance), 
and $\vk$ contains $\ek$ 
without loss of generality. 
Then we consider a modified PKE-SKL scheme $\OWSKL'=(\OWSKL'.\qKG, \OWSKL'.\Enc, \OWSKL'.\qDec,\OWSKL'.\qVrfy)$ with the same message space $\Xs$ defined as follows. 
The algorithms $\OWSKL'.\qKG$, $\OWSKL'.\Enc$, and $\OWSKL'.\qDec$ are identical to $\OWSKL.\qKG$, $\OWSKL.\Enc$, and $\OWSKL.\qDec$, respectively. The algorithm $\OWSKL'.\qVrfy$ works as follows:

\begin{description}
\item[$\OWSKL'.\qVrfy(\vk,\widetilde{\qdk})$:] On input a verification key $\vk$ and a (possibly malformed) decryption key $\widetilde{\qdk}$, do the following:
\begin{description}
\item[Decryptability verification:] Choose $\msg\gets \Xs$ and run $\ct\gets \Enc(\ek,\msg)$ and $\msg'\gets \Dec(\widetilde{\qdk},\ct)$. %\footnote{Note that we assume that $\vk$ contains $\ek$.} 
If $\msg'\ne \msg$, return $\bot$. 
\item[Original verification:]  Otherwise, let $\widetilde{\qdk}'$ be the state of the decryption key after running the decryption algorithm. 
Run $\OWSKL.\qVrfy(\vk,\widetilde{\qdk}')$ and return whatever $\OWSKL.\qVrfy$ returns. 
\end{description} 
\end{description}
\paragraph{Correctness.}
The decryption correctness of $\OWSKL'$ follows from that of $\OWSKL$ because the only difference between these schemes is the verification algorithm, which is irrelevant to the decryption correctness. 
The verification correctness of $\OWSKL'$ follows from that of $\OWSKL$ because we assume that $\OWSKL$ has reusable decryption keys and thus $\widetilde{\qdk}'$  in $\OWSKL'.\qVrfy$ has a negligible trace distance from $\widetilde{\qdk}$, which passes $\OWSKL.\qVrfy$ except for a negligible probability by the verification correctness of $\OWSKL$.  

\paragraph{1-query OW-SKL security.}
The 1-query OW-SKL security of $\OWSKL'$ follows from that of $\OWSKL$ by a straightforward reduction. Specifically, let $\qA$ be an QPT adversary that breaks the 1-query OW-SKL security of $\OWSKL'$. Then, we construct a QPT adversary $\qB$ that breaks the 1-query OW-SKL security of $\OWSKL$ as follows:
\begin{description}
\item[$\qB(\ek,\qdk)$:]
Run $\qA(\ek,\qdk)$ until $\qA$ makes a verification query $\widetilde{\qdk}$. For simulating the verification oracle to $\qA$, 
choose $\msg\gets \Xs$, run $\ct\gets \Enc(\ek,\msg)$ and $\msg'\gets \Dec(\widetilde{\qdk},\ct)$, and let $\widetilde{\qdk}'$ be the state of the decryption key after running the decryption algorithm. 
If $\msg'\ne \msg$, output $0$ and immediately halt. %return $\bot$ to $\qA$ as a response from the verification oracle. 
Otherwise, query $\widetilde{\qdk}'$ to its own verification oracle, and forward the response to $\qA$.
When $\qA$ sends $\requestchallenge$, forward it to the external challenger to receive $\ct^*$ and forward it to $\qA$. Run $\qA$ until it halts and output whatever $\qA$ outputs. 
\end{description}
%We can see that $\qB$ perfectly simulates the verification oracle for $\qA$. Moreover, 
We can see that the experiment which $\qB$ plays outputs $1$ if and only if the (simulated) experiment which $\qA$ plays outputs $1$. Therefore, $\qB$ breaks the 1-query OW-SKL security of $\OWSKL$. Thus, the 1-query OW-SKL security of $\OWSKL'$ follows from that of $\OWSKL$.

\paragraph{OMUR.}
In the following, we show that $\OWSKL'$ satisfies OMUR. Let $\qA$ be a QPT adversary against the OMUR of  $\OWSKL'$ that makes $Q=\poly(\secp)$ verification queries. Then we consider the following sequence of hybrids.
\begin{description}
\item[$\hybi{0}$:] This is identical to the experiment $\expa{\OWSKL',\qA}{omur}(1^\secp)$ as defined in \cref{def:OMUR_PKESKL}. 

Note that we have 
\begin{align}
    \Pr[\hybi{0}=1]=\adva{\OWSKL',\qA}{omur}(\secp).
\end{align}

\item[$\hybi{1}$:] This is identical to $\hybi{0}$ except that the challenger uniformly chooses integers $1\le i_1 < i_2\le Q$ at the beginning of the experiment and outputs $1$ if and only if $i_1$-th and $i_2$-th verification queries are the first two queries to which the verification oracle returned $\top$. 

Whenever $\hybi{0}$ returns $1$, there are at least $2$ verification queries accepted by the verification oracle.  
Therefore, when we uniformly choose $1\le i_1 < i_2\le Q$, the probability that $i_1$-th and $i_2$-th queries are the first two queries to be accepted is ${\binom{Q}{2}}^{-1}=\frac{2}{Q(Q-1)}$. Therefore we have 
\begin{align}
    \Pr[\hybi{1}= 1]=\frac{2}{Q(Q-1)}  \Pr[\hybi{0}= 1].
\end{align}

\item[$\hybi{2}$:] This is identical to $\hybi{1}$ except that the verification oracle just returns $\bot$ %regardless of the output of the verification algorithm
without running the verification algorithm 
to $i$-th query for all $i\in [i_2-1]\setminus \{i_1\}$ and the experiment halts right after running the verification oracle for the $i_2$-th query where it outputs $1$ if and only if the verification oracle returned $\top$ to both $i_1$-th and $i_2$-th queries. 

When $\hybi{1}$ returns $1$, the verification oracle returns $\bot$ to $i$-th query for all $i\in [i_2-1]\setminus \{i_1\}$ since otherwise $i_1$-th and $i_2$-th queries cannot be the first $2$ queries to be accepted. Therefore, these hybrids are identical until $\qA$ makes $i_2$-th query when $\hybi{1}$ returns $1$.\footnote{\label{footnote:measurements} Note that there is a superficial difference that the verification oracle of $\hybi{1}$ runs the verification algorithm to $i$-th query for all $i\in [i_2-1]\setminus \{i_1\}$ in $\hybi{1}$ but it does not in $\hybi{2}$. However since these query registers are not used at all for generating the output of $\hybi{2}$, the difference of if measurements are applied on them cannot affect the probability to output $1$. 
} Moreover, $\hybi{2}$ outputs $1$ whenever $\hybi{1}$ outputs $1$ if we run the rest of $\qA$ to complete $\hybi{1}$.  Therefore, we have 
\begin{align}
    \Pr[\hybi{2}= 1]\ge \Pr[\hybi{1}= 1].
\end{align}

\item[$\hybi{3}$:] This is identical to $\hybi{2}$ except that the experiment outputs $1$ if and only if $i_1$-th query is accepted and $i_2$-th query passes the ``Decryptability verification" part of $\OWSKL'.\qVrfy$, i.e., $\msg=\msg'$ in the notation of the description of $\OWSKL'.\qVrfy$. 

Since the condition to output $1$ is just relaxed, we have 
\begin{align}
    \Pr[\hybi{3}= 1]\ge \Pr[\hybi{2}= 1].
\end{align}
\end{description}
Below, we prove 
\begin{align}
\Pr[\hybi{3}= 1]=\negl(\secp).
\end{align}
To prove this, we consider the following QPT adversary $\qB$ against the 1-query OW-SKL security of $\OWSKL$ that works as follows:
\begin{description}
\item[$\qB(\ek,\qdk)$:]
Uniformly choose integers $1\le i_1 < i_2\le Q$ and run $\qA(\ek,\qdk)$ until it makes $i_2$-th query where the response by the verification oracle to $\qA$'s $i$-th query for $i\in[i_2-1]$ is simulated as follows: If $i\ne i_1$, return $\bot$ as the response from the verification oracle. If $i=i_1$, forward the query to its own verification oracle and forward the response to $\qA$. Let $\qdk_{i_2}$ be $\qA$'s $i_2$-th verification query.  
Send $\requestchallenge$ to the external challenger to receive $\ct^*$. Run $\msg'\gets \OWSKL.\Dec(\ct^*,\qdk)$ and output $\msg'$.
\end{description}
By the definitions of $\hybi{3}$ and $\qB$, we can see that 
\begin{align}
   \advb{\OWSKL,\qB}{ow}{kla}(\secp) =\Pr[\hybi{3}= 1]. 
\end{align}
Thus, we have $\Pr[\hybi{3}= 1]=\negl(\secp)$ by the 1-query OW-SKL security of $\OWSKL$.

Combining the above, we have $\adva{\OWSKL',\qA}{omur}(\secp)=\negl(\secp)$, which means that $\OWSKL$ satisfies OMUR. This completes the proof of \Cref{lem:add_OMUR}. 
\end{proof}

\begin{lemma}
\label{lem:ow-ind}
If there exists a 1-query OW-KLA secure PKE-SKL scheme, then there exists a 1-query IND-KLA secure PKE-SKL scheme. Moreover, if the base scheme satisfies OMUR, then then the resulting scheme satisfies OMUR.
\end{lemma}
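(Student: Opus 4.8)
The plan is to apply the (quantum) Goldreich--Levin lemma (\cref{lem:QGL}) to extract a hardcore bit from the one-way scheme, mirroring the classical OW-CPA-to-IND-CPA transformation. Let $\OWSKL=(\OWSKL.\qKG,\OWSKL.\Enc,\OWSKL.\qDec,\OWSKL.\qVrfy)$ be the given 1-query OW-KLA secure scheme with message space $\Xs=\bit^n$. I would build a single-bit scheme $\INDSKL$ whose key generation and verification are \emph{unchanged} ($\INDSKL.\qKG=\OWSKL.\qKG$ and $\INDSKL.\qVrfy=\OWSKL.\qVrfy$); to encrypt $b\in\bit$ under $\ek$, sample $x\gets\Xs$ and $r\gets\bit^n$, compute $\ct\gets\OWSKL.\Enc(\ek,x)$, and output $(\ct,r,b\oplus(x\cdot r))$; to decrypt $(\ct,r,e)$ with $\qdk$, recover $x\gets\OWSKL.\qDec(\qdk,\ct)$ and output $e\oplus(x\cdot r)$. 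Decryption and verification correctness are immediate from those of $\OWSKL$. The moreover-part (OMUR preservation) is then essentially free: since $\qKG$ and $\qVrfy$, and hence the entire verification oracle, are literally identical to those of $\OWSKL$, the experiment $\expa{\INDSKL,\qA}{omur}$ coincides with $\expa{\OWSKL,\qA}{omur}$, so OMUR is inherited verbatim. Note $\INDSKL$ is a valid 1-query IND-KLA scheme for the message space $\bit$, which suffices for the statement.

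For security, suppose a QPT adversary $\qA$ breaks 1-query IND-KLA security of $\INDSKL$ with advantage $2\eps$ (i.e., it guesses $\coin$ with probability $1/2+\eps$ on valid runs); by \cref{rem:no-post-challenge-query} its single verification query precedes the challenge. I would construct a 1-query OW-KLA adversary $\qB$ that, on input $(\ek,\qdk)$, runs $\qA(\ek,\qdk)$ and forwards $\qA$'s verification query to its own verification oracle. When $\qA$ requests the challenge on $(\msg_0^*,\msg_1^*)=(0,1)$, $\qB$ sends $\requestchallenge$ and receives $\ct^*=\OWSKL.\Enc(\ek,x^*)$ for a random $x^*\gets\Xs$. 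Let $\qaux$ be the internal state of $\qA$ together with $\ct^*$ at this point, conditioned on the verification query having been accepted. I then define a predictor $\qD$ that, on input $r\in\bit^n$, samples $c\gets\bit$, feeds $(\ct^*,r,c)$ to the remaining part of $\qA$ to obtain its guess $\coin'$, and outputs $\coin'\oplus c$. The key observation is that $(\ct^*,r,c)$ is distributed exactly as $\INDSKL.\Enc(\ek,\coin)$ for $\coin=c\oplus(x^*\cdot r)$, which is uniform because $c$ is; hence $\qD$ perfectly simulates the challenge and $\Pr_r[\qD(r)=x^*\cdot r]=\Pr[\coin'=\coin]=1/2+\eps$. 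Applying \cref{lem:QGL} to $\qD$ with auxiliary input $\qaux$, the extractor $\qExt([\qD],\qaux)$ recovers $x^*$ with probability at least $4\eps^2$ (for a fixed $x^*$), and $\qB$ outputs this value, thereby winning the OW-KLA game with non-negligible probability and yielding the contradiction.

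The step I expect to be the main obstacle is the correct handling of the quantum auxiliary input and the conditioning on acceptance. Two points need care. First, $\qaux$ is a single copy of a quantum state and in general cannot be reused across the many invocations an extractor makes; however, the quantum Goldreich--Levin lemma of \cite{C:CLLZ21} already absorbs this via Marriott--Watrous-style rewinding, so I only need to verify that $\qD$ has a well-defined fixed aux state, namely the post-measurement state after the (single) verification query returns $\top$, and that $\qB$ indeed reaches the challenge phase precisely when $\qA$ is valid. Second, \cref{lem:QGL} guarantees extraction for a \emph{fixed} target $x$, whereas $\eps$ is an average over $x^*$; writing $\eps(x^*)$ for the per-instance advantage, the overall extraction probability is $\Exp_{x^*}[4\eps(x^*)^2]\ge 4(\Exp_{x^*}[\eps(x^*)])^2=4\eps^2$ by Jensen's inequality (convexity of $t\mapsto t^2$), so no separate Markov argument is required. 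Putting these together gives $\advb{\OWSKL,\qB}{ow}{kla}(\secp)\ge 4\eps^2$, so $\eps=\negl(\secp)$, completing the proof of \cref{lem:ow-ind}.
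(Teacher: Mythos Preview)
Your construction and reduction match the paper's: the same hardcore-bit encryption, unchanged $\qKG$ and $\qVrfy$ (so OMUR carries over verbatim, as you say), and the same appeal to the quantum Goldreich--Levin lemma after splitting $\qA$ at the challenge. The one place you diverge is in passing from an \emph{average} advantage to the per-instance hypothesis of \cref{lem:QGL}: the paper uses a Markov-style averaging (a $\Theta(\epsilon)$-fraction of instances have advantage $\Theta(\epsilon)$, yielding an overall $\Omega(\epsilon^3)$ extraction probability), whereas you use Jensen's inequality, which is tighter and arguably cleaner. Two small patches are needed to make your route rigorous. First, the per-instance advantage depends on the full tuple $(\qaux,x^*)$, not just $x^*$, and \cref{lem:QGL} assumes $\epsilon\in[0,1/2]$; so replace $\epsilon(\cdot)$ by $\epsilon^+(\cdot)\seteq\max(\epsilon(\cdot),0)$ before applying Jensen, then close with $\Exp[\epsilon^+]\ge\Exp[\epsilon]$. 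Second, your parenthetical equating ``advantage $2\epsilon$'' with ``conditional guessing probability $1/2+\epsilon$'' silently drops a factor of $\Pr[V{=}\top]$, and $\qB$'s OW-KLA advantage carries the same factor (since $\qB$ must itself pass verification); the paper tracks this explicitly, and once you do too the bound survives as $\advb{\OWSKL,\qB}{ow}{kla}\ge\delta^2$ where $\delta$ is the IND-KLA advantage.
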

\begin{proof}
Let $\OWSKL=(\OWSKL.\qKG, \OWSKL.\Enc, \OWSKL.\qDec,\OWSKL.\qVrfy)$ be a 1-query OW-KLA secure PKE-SKL scheme with the message space $\bit^n$ that satisfies OMUR. Then, we construct an IND-KLA secure PKE-SKL scheme  $\INDSKL=(\INDSKL.\qKG, \INDSKL.\Enc, \INDSKL.\qDec,\INDSKL.\qVrfy)$ with the message space $\bit$ as follows. 
\begin{description}
\item[$\INDSKL.\qKG(1^\secp)\ra(\ek,\qdk,\vk)$:] On input the security parameter $1^\lambda$, 
run $(\ek,\qdk,\vk)\gets \OWSKL.\qKG(1^\secp)$ and output $(\ek,\qdk,\vk)$.

\item[$\INDSKL.\Enc(\ek,\msg)\ra\INDSKL.\ct$:]On input an encryption key $\ek$ and a message $\msg \in \bit$, 
choose $r,x\gets \bit^n$, 
generate $\OWSKL.\ct\gets \OWSKL.\Enc(\ek,x)$, 
set $b \seteq (x\cdot r)\oplus \msg$,
and output a ciphertext $\INDSKL.\ct\seteq (\OWSKL.\ct,r,b)$.

\item[$\INDSKL.\qDec(\qdk,\INDSKL.\ct)\ra\tilde{\msg}$:] On input a decryption key $\qdk$ and a ciphertext $\INDSKL.\ct= (\OWSKL.\ct,r,b)$, 
compute $\tilde{x}\gets \OWSKL.\qDec(\qdk,\OWSKL.\ct)$
and output $\tilde{\msg}\seteq (\tilde{x}\cdot r)\oplus b$.

\item[$\INDSKL.\qVrfy(\vk,\widetilde{\qdk})\ra\top/\bot$:] On input a verification key $\vk$ and a (possibly malformed) decryption key $\widetilde{\qdk}$, run $\OWSKL.\qVrfy(\vk,\widetilde{\qdk})$ and
output whatever $\OWSKL.\qVrfy$ outputs.  
\end{description}
The decryption correctness and verification correctness of $\INDSKL$ immediately follow from those of $\OWSKL$. 
The OMUR of $\INDSKL$ immediately follows from that of $\OWSKL$ since their key generation and verification algorithms are identical and the definition of OMUR only depends on these algorithms.   
In the following, we prove that $\INDSKL$ is IND-KLA secure assuming that $\OWSKL$ is OW-KLA secure. 
%Let $\A$ be a QPT adversary against IND-KLA security of $\INDSKL$ and let $\epsilon\seteq \advb{\PKESKL,\qA}{ind}{kla}(\secp)$. 
Toward contradiction, suppose that $\INDSKL$ is not IND-KLA secure. Then, there is a QPT adversary $\qA$ such that $\advb{\INDSKL,\qA}{ind}{kla}(\secp)$ is non-negligible. Without loss of generality, we assume that 
\begin{align} \label{eq:IND_A_wins}
\Pr_{\coin\gets \bit}[\expb{\INDSKL,\qA}{ind}{kla} (1^\secp,\coin) \ra \coin]\ge 1/2+\epsilon(\secp)
\end{align}
for a non-negligible $\epsilon(\secp)$.  %Let $S$ be the set of $\secp$ for which the above holds. 
Since $\INDSKL$ is a bit encryption, we assume that the challenge message pair $(\msg_0,\msg_1)$ is $(0,1)$ without loss of generality.  
We divide $\qA$ into the following two stages $\qA_0$ and $\qA_1$: 
\begin{description}
\item[$\qA_0^{\Oracle{\qVrfy}}(\ek,\qdk)\rightarrow \qst_\qA$:] Upon receiving $(\ek,\qdk)$ from $\qC$, makes a single query to $\Oracle{\qVrfy}$ and outputs a quantum state $\qst_\qA$. %\footnote{Remark that it does not need to output the challenge messages $(\msg_0,\msg_1)$ since it is fixed to be $(0,1)$ as explained above.}
\item[$\qA_1(\qst_\qA,\INDSKL.\ct)\rightarrow \coin'$:] Upon receiving the state $\qst_\A$ from $\qA_0$ and $\INDSKL.\ct=(\OWSKL.\ct,r,b)$ from $\qC$, output $\coin'$.  
\end{description} 
We remark that we can assume that $\qA_1$ does not make any query to $\Oracle{\qVrfy}$ without loss of generality by \cref{rem:no-post-challenge-query}. 
%Indeed, if $\qA_1$ makes a query to $\Oracle{\qVrfy}$, then we can consider a modified adversary $\qA'_1$ that runs $\qA_1$ while internally simulating the oracle responses to $\qA_1$ by simply returning $\returned$. This does not change the advantage of $\qA$ because 
%\begin{itemize}
%    \item If we have $V=\returned$ at the point of $\qA_0$ finishes, then $\Oracle{\qVrfy}$ returns $\returned$ to all the queries by $\qA_1$, and  
%    \item otherwise  $\expb{\INDSKL,\qA}{ind}{kla} (1^\secp,\coin)$ must return $0$ for $\coin \in \bit$.
%\end{itemize}
% \takashi{Does this explanation make sense?}

%\takashi{I found that the previous version contained a fatal error in the rest of this proof.The previous version said that $\Pr_{\coin\gets \bit}[\expb{\INDSKL,\qA}{ind}{kla} (1^\secp,\coin) \ra \coin]=\Pr_{\coin\gets \bit}[\expb{\INDSKL,\qA}{ind}{kla} (1^\secp,\coin) \ra \coin \wedge V=\returned]$ using the current notation. This was simply wrong.}

\if0
Then \cref{eq:IND_A_wins} can be written as follows:
\begin{align}   \label{eq:IND_A_wins_rewrite}
    \Pr\left[
    \begin{array}{c}
    V=\top\\
    \wedge\\
    \A_1(\qst_\qA,\INDSKL.\ct)\rightarrow \coin\\
    \end{array}
    \middle |
    \begin{array}{l}
      \coin\gets \bit,\\
    (\ek,\qdk,\vk)\gets\OWSKL.\qKG(1^\secp),\\ 
\qst_\qA\gets \A_0^{\Oracle{\qVrfy}}(\ek,\qdk),\\  \mathsf{v}\gets \qVrfy(\vk,\widetilde{\qdk}),\\
r,x\gets \bit^n,\\ 
\OWSKL.\ct\gets \OWSKL.\Enc(\ek,x),\\ 
    b\seteq (x\cdot r)\oplus \coin, \\
    \INDSKL.\ct= (\OWSKL.\ct,r,b)
    \end{array}
    \right]\ge \frac{1}{2}+\epsilon(\secp).
\end{align}
\fi

We have 
\begin{align}
&\Pr_{\coin\gets \bit}[\expb{\INDSKL,\qA}{ind}{kla} (1^\secp,\coin) \ra \coin]\\
&=\Pr_{\coin\gets \bit}[\expb{\INDSKL,\qA}{ind}{kla} (1^\secp,\coin) \ra \coin \wedge V=\returned]\\
&+\Pr_{\coin\gets \bit}[\expb{\INDSKL,\qA}{ind}{kla} (1^\secp,\coin) \ra \coin \wedge V=\unreturned]\\
&=
\Pr[V=\returned]
\cdot \Pr_{\coin\gets \bit}[\expb{\INDSKL,\qA}{ind}{kla} (1^\secp,\coin) \ra \coin \mid V=\returned]\\
&+\frac{1}{2}(1-\Pr[V=\returned]). \label{eq:advantage_A_decompose}
\end{align}
%By \cref{eq:IND_A_wins,eq:advantage_A_decompose} and $\Pr_{\coin\gets \bit}[\expb{\INDSKL,\qA}{ind}{kla} (1^\secp,\coin) \ra \coin \mid V=\returned]\le 1$, we have 
%\begin{align}\label{eq:returned_prob}
% \Pr[V=\returned]\ge 2\epsilon(\secp).     
%\end{align}
By \cref{eq:IND_A_wins,eq:advantage_A_decompose}, we have\footnote{We can assume  $\Pr[V=\returned]\ne 0$ since otherwise  \cref{eq:IND_A_wins} cannot be satisfied.}  
\begin{align}\label{eq:conditioned_on_returned_prob}
 \Pr_{\coin\gets \bit}[\expb{\INDSKL,\qA}{ind}{kla} (1^\secp,\coin) \ra \coin \mid V=\returned]\ge \frac{1}{2}+\frac{\epsilon(\secp)}{\Pr[V=\returned]}.
\end{align}

Then, we construct an adversary $\qB=(\qB_0,\qB_1)$ against OW-KLA security of $\OWSKL$ that works as follows. 
\begin{description}
\item[$\qB_0^{\Oracle{\qVrfy}}(\ek,\qdk)\rightarrow \qst_\qA$:]  
This is identical to $\qA_0$. Specifically, run $\qst_\qA\gets \qA_0^{\Oracle{\qVrfy}}(\ek,\qdk)$ and output  $\qst_\qA$.
%Upon receiving $(\ek,\qdk)$ from $\qC$, run $\qst_\qA\gets \A_0^{\Oracle{\qVrfy}}(\ek,\qdk)$  and send $\widetilde{\qdk}$ to $\qC$ while keeping $\qst_\qA\seteq(\ek,\qst_\qA)$ as its internal state. 
\item[$\qB_1(\qst_\qA,\OWSKL.\ct)\rightarrow x$:] Upon receiving 
$\qst_\qA$ from $\qB_0$, send $\requestchallenge$ to $\qC$ and receive 
$\OWSKL.\ct$ from $\qC$. 
Then set $\qaux\seteq (\qst_\qA,\OWSKL.\ct)$ and define an algorithm $\qA'$ as follows.
\begin{description}
\item[$\qA'(\qaux,r)$:]
On input $\qaux=(\qst_\qA,\OWSKL.\ct)$ and $r\in \bit^n$,  choose $b\gets \bit$, 
set $\INDSKL.\ct=(\OWSKL.\ct,r,b)$,  
run $\coin' \gets \qA_1(\qst_\qA,\INDSKL.\ct)$, and output $\coin'\oplus b$.  
\end{description}
Run $x\gets \qExt([\qA'],\qaux)$, and output $x$ where 
$\qExt$ is the algorithm as in \Cref{lem:QGL} and
$[\qA']$ is the description of $\qA'$.
\end{description}

In the following, we show that $\qB$ breaks OW-KLA security of $\OWSKL$. 
Let $\qG$ be an algorithm that works as follows.
\begin{description}
\item[$\qG(1^\secp)$:] 
Generate $(\ek,\qdk,\vk)\gets\OWSKL.\qKG(1^\secp)$, 
$\qst_\qA\gets \qA_0^{\Oracle{\qVrfy}}(\ek,\qdk)$,   
%and let $\qst'_\qA$ be $\A$'s internal state after the execution of $\qVrfy$., \footnote{Since $\widetilde{\qdk}$ and $\qst_\qA$ may be entangled, the stare may change by $\qVrfy$.}   
$x\gets \bit^n$, and 
$\OWSKL.\ct\gets \OWSKL.\Enc(\ek,x)$.
Let $V\seteq \returned$ if the response to $\qA_0$'s query (which is assumed to be made once) is $\top$ and $V\seteq \unreturned$ otherwise.    
Output $(V,\qst_\qA,\OWSKL.\ct,x)$. 
\end{description}
%By \Cref{eq:returned_prob}, we have 
%\begin{align}
%    \Pr_{(V,\qst_\qA,\OWSKL.\ct,x)\gets \qG(1^\secp)}[V=\returned]\ge 2\epsilon(\secp). 
%\end{align}

By \cref{eq:conditioned_on_returned_prob} and a standard averaging argument,  
for at least $\frac{\epsilon(\secp)}{2\Pr[V=\returned]}$-fraction of $(V,\qst_\qA,\OWSKL.\ct,x)$ generated by $\qG(1^\secp)$ conditioned on $V=\returned$, we have 
\begin{align} 
    \Pr\left[
    \qA_1(\qst_\qA,\INDSKL.\ct)\rightarrow \coin
    \right]\ge \frac{1}{2}+\frac{\epsilon(\secp)}{2\Pr[V=\returned]}\ge \frac{1}{2}+\frac{\epsilon(\secp)}{2}
\end{align}
where $\coin\gets \bit$, 
    $r\gets \bit^n$,  
    $b\seteq (x\cdot r)\oplus \coin$, and  
    $\INDSKL.\ct= (\OWSKL.\ct,r,b).$
    
Therefore, for at least $\frac{\epsilon(\secp)}{2}$-fraction of $(V,\qst_\qA,\OWSKL.\ct,x)$ generated by $\qG(1^\secp)$, we have 
\begin{align}\label{eq:IND_A_wins_rewrite_again} 
    \Pr\left[V=\returned \wedge
    \qA_1(\qst_\qA,\INDSKL.\ct)\rightarrow \coin
    \right]\ge \frac{1}{2}+\frac{\epsilon(\secp)}{2}
\end{align}
where $\coin\gets \bit$, 
    $r\gets \bit^n$,  
    $b\seteq (x\cdot r)\oplus \coin$, and  
    $\INDSKL.\ct= (\OWSKL.\ct,r,b).$

%for all $\secp \in S$.
For such $(V,\qst_\qA,\OWSKL.\ct,x)$, if we let $\qaux=(\qst_\qA,\OWSKL.\ct)$, \cref{eq:IND_A_wins_rewrite_again} directly implies  
\begin{align}
\Pr_{r\gets \bit^n}\left[\qA'(\qaux,r)\ra x\cdot r\right]\geq \frac{1}{2}+\frac{\epsilon(\secp)}{2}.
\end{align}
Therefore, by \Cref{lem:QGL}, we have 
\begin{align} \label{eq:ext_succeed}
    \Pr\left[\qExt([\qA'],\qaux)\rightarrow x \right]\geq \epsilon(\secp)^2.
\end{align}
Since \cref{eq:ext_succeed} \emph{and} $V=\returned$ hold at the same time for at least $\frac{\epsilon(\secp)}{2}$-fraction of $(V,\qst_\qA,\OWSKL.\ct,x)$, we have  
\begin{align}
    \Pr_{ (\mathsf{v},\qst_\qA,\OWSKL.\ct,x)\gets \qG(1^\secp) }\left[
    V=\returned
    \wedge
    \qB_1(\qst_\qA,\OWSKL.\ct)\rightarrow x 
    \right]\ge \frac{\epsilon(\secp)^3}{2}.
\end{align} 
By the definitions of $\qB=(\qB_0,\qB_1)$ and $\qG$ and the assumption that $\epsilon(\secp)$ is non-negligible, this implies that $\qB$ breaks OW-KLA security of $\OWSKL$. 
\end{proof}
\begin{remark}[On Multiple-Query Case]
In the above reduction, it is important that $\qA_1$ can be assumed to not make any verification query because otherwise we cannot apply the quantum Goldreich-Levin theorem~(\cref{lem:QGL}). In the 1-query setting, this can be assumed without loss of generality by \cref{rem:no-post-challenge-query}. In the multiple-query setting, we cannot assume it in general. If we assume that the base scheme satisfies OMUR, we can assume it without loss of generality because post-challenge verification queries are useless for such schemes. However, we do not know how to resolve the issue in the multiple-query setting without relying on OMUR.       
%It is worth mentioning that the above proof does not directly work when the number of verification queries is larger than $1$. The reason is that $\qA_1$ may make a verification query, in which case we cannot apply the quantum Goldreich-Levin theorem~(\cref{lem:QGL}). This problem can be resolved if we additionally assume that the base scheme satisfies OMUR because post-challenge verification queries are useless for such schemes. However, we do not know how to resolve the issue without relying on OMUR.     
\end{remark}

%In the following lemma, we show that $1$-query IND-KLA security implies (unbounded-query) IND-KLA security when the scheme satisfies OMUR.   

\begin{lemma}\label{lem:one-to-poly_IND}
If a PKE-SKL scheme is 1-query IND-KLA secure and satisfies OMUR,  then it is IND-KLA  secure.
\end{lemma}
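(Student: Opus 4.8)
The plan is to prove the contrapositive by a guess-and-forward reduction that upgrades a $Q$-query $\INDSKL$ adversary into a $1$-query one. Suppose $\qA$ is a QPT adversary making $Q=\poly(\secp)$ verification queries with non-negligible advantage $\advb{\PKESKL,\qA}{ind}{kla}(\secp)$. I would construct a $1$-query adversary $\qB$ that samples $i^*\gets[Q]$, runs $\qA$, forwards only $\qA$'s $i^*$-th query to its own oracle $\Oracle{\qVrfy}$ (relaying the answer), and answers every other verification query with $\bot$ without touching its oracle; recall $\qB$ has no $\vk$ and so cannot run $\qVrfy$ itself. Then $\qB$ relays the challenge messages and ciphertext and outputs $\qA$'s guess. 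The target inequality is $\advb{\PKESKL,\qB}{ind}{kla}(\secp)\ge \frac1Q\advb{\PKESKL,\qA}{ind}{kla}(\secp)-\negl(\secp)$, which contradicts $1$-query $\INDSKL$ security.

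First I would set up two normalizations using OMUR (\cref{def:OMUR_PKESKL}). (i) Building an OMUR adversary that runs $\qA$ and simulates the rest of the $\INDSKL$ game itself (it knows $\ek$, so it can sample $\coin$ and produce the challenge), one sees that except with negligible probability \emph{at most one} of $\qA$'s queries is accepted. (ii) Since validity is decided at the challenge step, and by OMUR no query accepts after the first accepting one, all post-challenge queries reject with overwhelming probability; by the gentle measurement lemma~\cite{Winter99} they neither disturb the state nor (via their $\bot$ answers) leak information, so I may assume all verification queries are made before the challenge, as in \cref{rem:no-post-challenge-query}. The payoff of (ii) is that the challenge bit $\coin$ is introduced only \emph{after} all verification queries, so the advantage factors through the sub-normalized pre-challenge state conditioned on the validity flag being $\top$, and comparing $\qA$ and $\qB$ reduces to comparing these two conditional states as density operators.

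The heart of the argument is to control the single difference between $\qB$'s simulation and reality: $\qB$ omits the projective measurements $\{\Pi^{(i)}_\top,\Pi^{(i)}_\bot\}$ for all $i\ne i^*$. Let $\sigma_j$ denote the true sub-normalized pre-challenge state on the branch ``the $j$-th query is the first accepted one'' and $\hat\sigma^{(j)}$ the analogous state produced by $\qB$ with $i^*=j$; these differ exactly by the insertion of the omitted $\bot$-projections for $i\ne j$. I would bound $\lnorm{\hat\sigma^{(j)}-\sigma_j}{1}$ by a hybrid that re-inserts these projections one at a time. The crucial point is to keep the retained accepting projection $\Pi^{(j)}_\top$ in place when estimating each hybrid gap: the error of re-inserting $\Pi^{(i)}_\bot=I-\Pi^{(i)}_\top$ is then the amplitude of a \emph{double acceptance} (the key passes verification at both step $i$ and step $j$), not merely the acceptance probability of a single query (which OMUR does \emph{not} make small). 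Each such double-acceptance amplitude is negligible by OMUR, via an OMUR adversary that measures precisely queries $i$ and $j$ through its oracle and answers the others exactly as $\qB$ does, so that a double acceptance forces its counter to reach $2$. As there are only $\binom{Q}{2}=\poly(\secp)$ pairs, all gaps are negligible and $\lnorm{\hat\sigma^{(j)}-\sigma_j}{1}=\negl(\secp)$ for every $j$.

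Finally, $\qB$'s valid pre-challenge state is $\frac1Q\sum_j\hat\sigma^{(j)}$, which lies within negligible trace distance of $\frac1Q\sum_j\sigma_j$, i.e.\ of $1/Q$ times $\qA$'s valid pre-challenge state. Because the post-challenge distinguishing quantity is linear in the state and bounded by its trace norm, this yields $\advb{\PKESKL,\qB}{ind}{kla}(\secp)\ge\frac1Q\advb{\PKESKL,\qA}{ind}{kla}(\secp)-\negl(\secp)$, the desired contradiction. I expect the main obstacle to be exactly the third step: since $\qB$ cannot verify the non-guessed queries, its state fails to collapse the way the real oracle's measurements do, and a crude trace-distance bound (losing roughly a factor per query) is far too weak. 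The resolution---bounding the discrepancy by two-acceptance events that OMUR controls, rather than by single-query acceptance probabilities---is what makes OMUR indispensable and is the only place where it is used in an essential, non-syntactic way.
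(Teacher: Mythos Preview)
Your guess-and-forward reduction and its accounting are correct, and in outline match the paper: guess the index $i^*$ of the (unique, by OMUR) accepted query, answer everything else with $\bot$, and show this simulates the real experiment up to a $1/Q$ loss. The paper carries this out via three hybrids: $\hybi{0}^{\coin}\to\hybi{1}^{\coin}$ uses OMUR exactly as in your step~(i) to force the oracle to return $\bot$ after the first $\top$; $\hybi{1}^{\coin}\to\hybi{2}^{\coin}$ guesses $i^*\gets[Q]$ and replaces all other answers by $\bot$ \emph{without running the verification}; and $\hybi{2}^{\coin}$ is then literally $\qB$'s game.

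Where you diverge is in how you justify the $\hybi{1}\to\hybi{2}$ step. You work in a model where the adversary keeps the queried register, so omitting the measurement on the non-forwarded queries genuinely perturbs $\qA$'s state, and you spend most of your effort bounding $\lnorm{\hat\sigma^{(j)}-\sigma_j}{1}$ by a hybrid over the re-inserted $\Pi^{(i)}_\bot$'s, controlling each step by a double-acceptance probability via OMUR (and Cauchy--Schwarz for the cross terms). That analysis is sound. The paper, however, sidesteps all of it: in its model the oracle \emph{consumes} the submitted register and returns only the classical bit $d$, so the measurement acts on a subsystem the adversary no longer holds. By no-signaling, whether that measurement is performed cannot affect the experiment's output, and the paper simply asserts (in a footnote) that ``since these query registers are not used at all for generating the output, the difference of if measurements are applied on them cannot affect the probability to output $1$.'' This gives an exact equality $\Pr[\hybi{2}^{\coin}=1]=\tfrac{1}{Q}\Pr[\hybi{1}^{\coin}=1]$ with no trace-distance bookkeeping, and likewise absorbs your step~(ii) about post-challenge queries for free. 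Your route is thus more general (it would survive a model where the collapsed register is returned to $\qA$), but in the paper's convention it is unnecessary and OMUR is used only once, at the $\hybi{0}\to\hybi{1}$ transition.
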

\begin{proof} 
Let $\PKESKL=(\qKG, \Enc, \qDec,\qVrfy)$ be an IND-KLA secure PKE-SKL scheme that satisfies OMUR. 
For a QPT adversary $\qA$ against IND-KLA security of $\PKESKL$ that makes $Q=\poly(\secp)$ verification queries and $\coin\in \bit$, we consider the following sequence of hybrids.

\begin{description}
\item[$\hybi{0}^{\coin}$:] This is identical to $\expb{\PKESKL,\qA}{ind}{kla}(1^\secp,\coin)$. 

Note that our goal is to prove 
\begin{align}
    \left|\Pr[\hybi{0}^{0}=1]-\Pr[\hybi{0}^{1}=1]\right|=\negl(\secp).
\end{align}
%\begin{align}
%    \left|\Pr[\expb{\PKESKL,\qA}{ind}{kla} (1^\secp,0)\rightarrow 1]-\Pr[\expb{\PKESKL,\qA}{ind}{kla} (1^\secp,1)\rightarrow 1]\right|=\negl(\secp).
%\end{align}

\item[$\hybi{1}^{\coin}$:] This is identical to $\hybi{0}^{\coin}$ except that the verification oracle returns $\bot$ to all queries made after it returns $\top$ once. 

By the OMUR of $\PKESKL$, we have 
\begin{align}
    \left|\Pr[\hybi{1}^{\coin}\rightarrow 1]-\Pr[\hybi{0}^{\coin}\rightarrow 1]\right|=\negl(\secp)
\end{align}
for $\coin\in \bit$. 

\item[$\hybi{2}^{\coin}$:] This is identical to $\hybi{1}^{\coin}$ except that the challenger chooses $i^*\gets [Q]$ at the beginning of the game, the verification oracle just returns $\bot$ without running the verification algorithm to $i$-th query for $i\ne i^*$, and the experiment returns $0$ if the verification oracle returns $\bot$ to $i^*$-th query. 

Note that there is exactly one verification query to be accepted in $\hybi{1}^{\coin}$ whenever it returns $1$. If $i^*$ is the correct guess for such query, which occurs with probability $\frac{1}{Q}$, then $\hybi{2}^{\coin}$ is identical to $\hybi{1}^{\coin}$.\footnote{A similar remark to \cref{footnote:measurements} applies here.}  
Moreover, $\hybi{2}^{\coin}$ outputs $0$ when the guess is incorrect. 
 Therefore, we have 
\begin{align}
    \Pr[\hybi{2}^{\coin}\rightarrow 1]= \frac{1}{Q}\Pr[\hybi{1}^{\coin}\rightarrow 1].
\end{align}
\end{description}

Below, we prove 
\begin{align}
   \left|\Pr[\hybi{2}^{0}= 1]-\Pr[\hybi{2}^{1}= 1]\right|=\negl(\secp).  
\end{align}
To prove this, we consider a QPT adversary $\qB$ against $1$-query IND-KLA security of $\PKESKL$ that works as follows.
\begin{description}
\item[$\qB(\ek,\qdk)$:] 
Choose $i^*\gets [Q]$ and 
run $\qA(\ek,\qdk)$ where the $i^*$-th query is forwarded to its own verification oracle and responded according to the response from the oracle  while all the other queries are responded by $\bot$. 
When $\qA$ sends $(\msg_0^*,\msg_1^*)$, forward it to the external challenger, receive $\ct^*$ from the challenger, and forward it to $\qA$.
Finally, output whatever $\qA$ outputs.   
\end{description} 

By the definitions of $\qB$ and $\hybi{2}^{\coin}$, one can see that  
\begin{align}
    \Pr[\expb{\PKESKL,\qB}{ind}{kla} (1^\secp,\coin) \ra 1]=\Pr[\hybi{2}^{\coin}= 1]
\end{align}
for $\coin\in \bit$.
Therefore, we have 
\begin{align}
 &\abs{\Pr[\hybi{2}^{0}= 1] - \Pr[\hybi{2}^{1}= 1]}\\
 &=\abs{\Pr[\expb{\PKESKL,\qB}{ind}{kla} (1^\secp,0) \ra 1] - \Pr[\expb{\PKESKL,\qB}{ind}{kla} (1^\secp,1) \ra 1]}\\
 &=\negl(\secp)
\end{align} 
by the $1$-query IND-SKL security of $\PKESKL$. 

Combining the above, we have 
\begin{align}
    \left|\Pr[\hybi{0}^{0}=1]-\Pr[\hybi{0}^{1}=1]\right|=\negl(\secp).
\end{align}
This completes the proof of \cref{lem:one-to-poly_IND}. 
\end{proof}

%\input{q-aux-secure-PKE}
% !TEX root = main.tex
% !TEX spellcheck = en-US

\newcommand{\Drev}{D^{\mathtt{rev}}}
\newcommand{\cPrev}{\cP^{\mathtt{rev}}}

\newcommand{\qExtract}{\qalgo{Extract}}
\newcommand{\API}{\qalgo{API}}
\newcommand{\TI}{\qalgo{TI}}
\newcommand{\ATI}{\qalgo{ATI}}
\newcommand{\recover}{\algo{Recover}}
\newcommand{\shiftdis}[1]{\Delta_{\mathsf{Shift}}^{#1}}
\newcommand{\qevaluator}{\ket{\cE}}
\newcommand{\projimp}{\algo{ProjImp}}
\newcommand{\cproj}{\algo{CProj}}
\newcommand{\Live}{\mathsf{Live}}
\newcommand{\BadExt}{\mathsf{BadExt}}
\newcommand{\GoodExt}{\mathsf{GoodExt}}
\newcommand{\sflive}{\mathsf{live}}
\newcommand{\sfbad}{\mathsf{bad}}
\newcommand{\sfgood}{\mathsf{good}}
\newcommand{\ketisu}{\ket{\mathds{1}_\randspace}}
\newcommand{\braisu}{\bra{\mathds{1}_\randspace}}
\newcommand{\kbisu}{\ketisu\braisu}
\newcommand{\IsU}{\algo{IsU_{\randspace}}}
\newcommand{\MixM}{\algo{MixM}}

\section{Public Key Encryption with CoIC-KLA Security}
\label{sec:PKE-CoIC}
In this section, we introduce a new security notion called CoIC-KLA security for PKE, and construct a PKE scheme that satisfies it based on any IND-CPA secure PKE scheme. Looking ahead, it is used as a building block of our construction of PKE-SKL in~\cref{sec:PKE-SKL}.

\subsection{Tools}\label{sec:tools}

We first introduce some tools used in this section.

\paragraph{Measurement Implementation.}
We review some notions related to measurement implementations used in the definition and the security proof of CoIC-KLA security.

\begin{definition}[Projective Implementation]\label{def:projective_implementation}
Let:
\begin{itemize}
 \item $\cD$ be a finite set of distributions over an index set $\cI$.
 \item $\cP=\setbk{\mat{P}_i}_{i\in \cI}$ be a positive operator valued measure (POVM).
 \item $\cE = \setbk{\mat{E}_D}_{D\in\cD}$ be a projective measurement with index set $\cD$.
 \end{itemize}
 % We consider a binary outcome POVM $\cP=\setbk{\mat{P}_i}_{i\in \cI}$ where $\mat{P}_i=\sum_{D\in\cD}\mat{E}_D \Pr[D=i]$.
 We consider the following measurement procedure.
 \begin{enumerate}
 \item Measure under the projective measurement $\cE$ and obtain a distribution $D$.
 \item Output a random sample from the distribution $D$.
 \end{enumerate}
 We say $\cE$ is the projective implementation of $\cP$, denoted by $\projimp(\cP)$, if the measurement process above is equivalent to $\cP$.
\end{definition}

\begin{theorem}[{\cite[Lemma 1]{TCC:Zhandry20}}]\label{lem:commutative_projective_implementation}
Any binary outcome POVM $\cP=(\mat{P},\mat{I}-\mat{P})$ has a unique projective implementation $\projimp(\cP)$.
\end{theorem}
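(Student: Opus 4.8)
The plan is to construct the projective implementation explicitly from the spectral decomposition of $\mat{P}$ and then deduce uniqueness from uniqueness of that decomposition. Since $\mat{P}$ is Hermitian with $\mat{0}\preceq \mat{P}\preceq \mat{I}$, I would write its spectral decomposition $\mat{P}=\sum_{j} p_j\,\mat{\Pi}_j$, where the $p_j\in[0,1]$ are the \emph{distinct} eigenvalues and the $\mat{\Pi}_j$ are the orthogonal projectors onto the corresponding eigenspaces, so that $\sum_j \mat{\Pi}_j=\mat{I}$ and $\mat{\Pi}_j\mat{\Pi}_k=\mat{0}$ for $j\ne k$. Here the index set is $\cI=\{0,1\}$, and a distribution $D\in\cD$ over $\cI$ is a Bernoulli distribution, which I identify with its single parameter $p_D\seteq \Pr_{b\gets D}[b=0]\in[0,1]$.

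For existence, I would define $\cE=\setbk{\mat{E}_D}_{D\in\cD}$ by setting $\mat{E}_{D_{p_j}}=\mat{\Pi}_j$, where $D_{p}$ denotes the Bernoulli distribution with parameter $p$, and $\mat{E}_D=\mat{0}$ for every other $D$. This is a genuine projective measurement because the $\mat{\Pi}_j$ are orthogonal projectors summing to $\mat{I}$. Then I verify equivalence: for any state $\ket{\psi}$, the two-step procedure of \Cref{def:projective_implementation} first obtains index $D_{p_j}$ with probability $\bra{\psi}\mat{\Pi}_j\ket{\psi}$ and then outputs $0$ with probability $p_j$, so the overall probability of outputting $0$ is $\sum_j p_j\,\bra{\psi}\mat{\Pi}_j\ket{\psi}=\bra{\psi}\mat{P}\ket{\psi}$, which matches the outcome statistics of $\cP$. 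This is the routine part.

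For uniqueness, suppose $\cE'=\setbk{\mat{E}'_D}_{D\in\cD}$ is any projective implementation of $\cP$. Reading each index as its Bernoulli parameter $p_D$, equivalence for all $\ket{\psi}$ forces $\sum_{D} p_D\,\mat{E}'_D=\mat{P}$. The key observation is that distinct indices $D$ necessarily correspond to distinct parameters $p_D$ (a Bernoulli distribution is determined by its parameter), so $\sum_D p_D\mat{E}'_D$ is a sum of orthogonal projectors weighted by pairwise-distinct scalars, i.e., precisely a spectral decomposition of $\mat{P}$. By uniqueness of the spectral decomposition, the nonzero $\mat{E}'_D$ must coincide with the spectral projectors $\mat{\Pi}_j$ and their parameters with the eigenvalues $p_j$; hence $\cE'=\cE$.

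The main obstacle will be pinning down exactly what ``the measurement process is equivalent to $\cP$'' means and confirming that matching the outcome statistics, equivalently the single operator identity $\sum_D p_D\mat{E}_D=\mat{P}$, is the correct notion rather than a stronger requirement on post-measurement states. Once that definitional point is fixed, everything reduces to the spectral theorem; the only subtlety worth double-checking is that projectors attached to distinct distributions cannot share a weight, since this is exactly what licenses the appeal to uniqueness of the spectral decomposition.
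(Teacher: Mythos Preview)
Your approach is correct and is precisely the standard argument; the paper itself does not supply a proof but simply cites \cite[Lemma~1]{TCC:Zhandry20}, where Zhandry proves exactly this statement via the spectral decomposition of $\mat{P}$, identifying the outcome distributions with the eigenvalues and the projectors with the eigenspace projectors. Your handling of both existence and uniqueness matches that argument, and your identification of the definitional point (that ``equivalent to $\cP$'' means matching outcome statistics, hence the single operator identity $\sum_D p_D\mat{E}'_D=\mat{P}$) is exactly the right reading.
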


%\begin{definition}[Threshold Implementation~\cite{EPRINT:ALLZZ20}]\label{def:threshold_implementation}
%A threshold implementation with parameter $\gamma$ of a binary POVM $\cP=(P,Q)$, denoted by $(\TI_{\gamma}(\cP),\mat{I}-\TI_{\gamma}(\cP))$, is a variant of projective implementation $\projimp(\cP)$. This executes the following measurement.
%\begin{enumerate}
%\item $\TI_{\gamma}(\cP)$ measures whether the corresponding distribution $D=(d_0,d_1)$ has $d_0 \ge \gamma$.
%\item Output $0$ with probability $\Tr[\TI_{\gamma}(\cP)\qstateq]$ and $1$ with probability $1-\Tr[\TI_{\gamma}(\cP)\qstateq]$ for any quantum state $\qstateq$.
%\end{enumerate}
%\end{definition}

\begin{definition}[Shift Distance]\label{def:shift_distance}
For two distributions $D_0,D_1$, the shift distance with parameter $\epsilon$, denoted by $\shiftdis{\epsilon}(D_0,D_1)$, is the smallest quantity $\delta$ such that for all $x \in \R$:
\begin{align}
\Pr[D_0\le x] & \le \Pr[D_1\le x + \epsilon] + \delta,&& \Pr[D_0\ge x]  \le \Pr[D_1\ge x - \epsilon] + \delta,\\
\Pr[D_1\le x] & \le \Pr[D_0\le x + \epsilon] + \delta,&& \Pr[D_1\ge x]  \le \Pr[D_0\ge x - \epsilon] + \delta.
\end{align}
For two real-valued measurements $\cM$ and $\cN$ over the same quantum system, the shift distance between $\cM$ and $\cN$ with parameter $\epsilon$ is
\[
\shiftdis{\epsilon}(\cM,\cN)\seteq \sup_{\ket{\psi}}\shiftdis{\epsilon}(\cM(\ket{\psi}),\cN(\ket{\psi})).
\]
\end{definition}

\begin{definition}[Mixture of Projetive Measurement~\cite{TCC:Zhandry20}]\label{def:mixture_projective_measurement}
Let $D: \cR \ra \cI$ where $\cR$ and $\cI$ are some sets.
Let $\setbk{(\mat{P}_i,\mat{Q}_i)}_{\in \cI}$ be a collection of binary projective measurement.
The mixture of projective measurements associated to $\cR$, $\cI$, $D$, and $\setbk{(\mat{P}_i,\mat{Q}_i)}_{\in \cI}$ is the binary POVM $\cP_D =(\mat{P}_D,\mat{Q}_D)$ defined as follows.
\begin{align}
& \mat{P}_D = \sum_{i\in\cI}\Pr[i \chosen D(R)]\mat{P}_i && \mat{Q}_D = \sum_{i\in\cI}\Pr[i \chosen D(R)]\mat{Q}_i,
\end{align}
where $R$ is uniformly distributed in $\cR$.
\end{definition}

\begin{theorem}[\cite{TCC:Zhandry20,EC:KitNis22}]\label{thm:api_property}
Let $D$ be any probability distribution and $\cP=\setbk{(\Pi_i,\mat{I} -\Pi_i)}_i$ be a collection of binary outcome projective measurements. For any $0<\epsilon,\delta<1$, there exists an algorithm of measurement $\API_{\cP,\cD}^{\epsilon,\delta}$ that satisfies the following.
\begin{itemize}
\item $\shiftdis{\epsilon}(\API_{\cP,D}^{\epsilon,\delta},\projimp(\cP_D))\le \delta$.  
\item $\API_{\cP,D}^{\epsilon,\delta}$ is $(\epsilon,\delta)$-almost projective in the following sense. For any quantum state $\ket{\psi}$, we apply $ \API_{\cP,D}^{\epsilon ,\delta}$ twice in a row to $\ket{\psi}$ and obtain measurement outcomes $x$ and $y$, respectively. Then, $\Pr[\abs{x-y}\le \epsilon]\ge 1-\delta$.
\item $\API_{\cP,D}^{\epsilon,\delta}$ is $(\epsilon,\delta)$-reverse almost projective in the following sense. For any quantum state $\ket{\psi}$, we apply $ \API_{\cP,D}^{\epsilon ,\delta}$ and $\API_{\cPrev,D}^{\epsilon,\delta}$ in a row to $\ket{\psi}$ and obtain measurement outcomes $x$ and $y$, respectively, where $\cPrev=\setbk{(\mat{I} -\Pi_i,\Pi_i)}_i$. Then, $\Pr[\abs{(1-x)-y}\le \epsilon]\ge 1-\delta$.
\item The expected running time of $\API_{\cP,D}^{\epsilon,\delta}$ is $T_{\cP,D}\cdot \poly(1/\epsilon,\log(1/\delta))$ where $T_{\cP,D}$ is the combined running time of $D$, the procedure mapping $i \ra (\mat{P}_i,\mat{I}- \mat{P}_i)$, and the running time of measurement $(\mat{P}_i,\mat{I}-\mat{P}_i)$.
\end{itemize}
\end{theorem}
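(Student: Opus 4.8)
The plan is to realize $\API_{\cP,D}^{\epsilon,\delta}$ through the alternating-projection framework of Marriott--Watrous~\cite{CC:MarWat05} combined with approximate phase estimation, following Zhandry~\cite{TCC:Zhandry20}. First I would pass to an extended Hilbert space on which the mixture $\cP_D$ is recovered as a product of two honest projective measurements. Concretely, adjoin a control register holding the index $i$, prepare $\ket{D}=\sum_{i}\sqrt{\Pr[i\gets D(R)]}\ket{i}$, and define
\begin{align}
\mat{P}_0 = \ket{D}\bra{D}\otimes \mat{I}, \qquad \mat{P}_1 = \sum_{i}\ket{i}\bra{i}\otimes \Pi_i .
\end{align}
A direct computation gives $\bra{D}\bra{\psi}\,\mat{P}_1\,\ket{D}\ket{\psi}=\sum_i \Pr[i\gets D(R)]\bra{\psi}\Pi_i\ket{\psi}=\bra{\psi}\mat{P}_D\ket{\psi}$, so the number we want to estimate is exactly the transition amplitude between the two projectors evaluated at $\ket{D}\ket{\psi}$.

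By Jordan's lemma the pair $(\mat{P}_0,\mat{P}_1)$ block-diagonalizes into one- and two-dimensional invariant subspaces, and on each two-dimensional block the reflection $\mat{R}=(2\mat{P}_1-\mat{I})(2\mat{P}_0-\mat{I})$ acts as a rotation by $2\theta_j$, where $\cos^2\theta_j$ is the corresponding eigenvalue of $\mat{P}_D$. Thus the spectrum of $\mat{P}_D$ is encoded in the eigenphases of $\mat{R}$, and I would run approximate phase estimation on $\mat{R}$ to read off $\theta_j$ and output $\cos^2$ of the estimate. Taking $O(\epsilon^{-1}\log(1/\delta))$ iterations, with a median-of-repetitions trick to push the failure probability below $\delta$, yields additive error $\epsilon$ except with probability $\delta$; the running-time bound follows since each iteration applies $\mat{R}$ once at the stated cost of implementing $D$ and the $(\Pi_i,\mat{I}-\Pi_i)$.

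Two of the three accuracy properties would then be verified blockwise. The shift-distance bound $\shiftdis{\epsilon}(\API_{\cP,D}^{\epsilon,\delta},\projimp(\cP_D))\le \delta$ is immediate from the correctness of phase estimation, because $\projimp(\cP_D)$ is precisely the measurement that projects onto the Jordan blocks and reports $\cos^2\theta_j$. For the reverse-almost-projective property, note that passing from $\cP$ to $\cPrev$ replaces $\Pi_i$ by $\mat{I}-\Pi_i$, hence $\mat{P}_1$ by $\mat{I}-\mat{P}_1$, which in the \emph{same} Jordan decomposition replaces $\theta_j$ by $\tfrac{\pi}{2}-\theta_j$; the reverse estimate is therefore $\cos^2(\tfrac{\pi}{2}-\theta_j)=\sin^2\theta_j=1-\cos^2\theta_j$ up to estimation error, giving $\abs{(1-x)-y}\le\epsilon$ with probability $1-\delta$.

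The main obstacle I anticipate is the almost-projective property, namely that applying $\API_{\cP,D}^{\epsilon,\delta}$ twice returns outcomes within $\epsilon$ with probability $1-\delta$. Approximate phase estimation is \emph{not} exactly projective, so after the first run the post-measurement state is only close to an eigenstate of $\mat{R}$ (a superposition of eigenphases concentrated near the reported value), and I would have to bound how far a second run can drift. I would handle this by the standard argument that the first measurement collapses the state into the corresponding eigenphase window up to trace-distance error controlled by $\delta$, so a fresh estimation on the collapsed state reproduces a nearby value. Making this quantitative---controlling the interplay of the $\epsilon$-resolution, the $\delta$-failure probability, and the residual non-projectivity---is exactly the delicate analysis carried out in \cite{TCC:Zhandry20,EC:KitNis22}, which I would invoke to conclude.
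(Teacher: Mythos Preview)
The paper does not prove this theorem: it is stated in the preliminaries (Section~\ref{sec:tools}) as a result imported from \cite{TCC:Zhandry20,EC:KitNis22}, with no proof given. So there is nothing in the paper to compare your argument against.

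That said, your sketch is a faithful outline of the approach in the cited works: lifting the mixture to a pair of projectors on an extended space, Jordan decomposition, and Marriott--Watrous style estimation of the Jordan angles, with the almost-projective and reverse-almost-projective properties following from the block structure. Your identification of the almost-projective property as the delicate step is accurate, and deferring its quantitative analysis to \cite{TCC:Zhandry20,EC:KitNis22} is exactly what the paper itself does by citing the theorem wholesale.
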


%\begin{theorem}[{\cite[Theorem 3]{TCC:Zhandry20}}]\label{thm:cind_sample_projective_implementation}
%Let $\qstateq$ be an efficiently constructible mixed state, and $D_0,D_1$ efficiently sampleable distributions.
%If $D_0$ and $D_1$ are computationally indistinguishable, for any inverse polynomial $\epsilon$, there exists a negligible $\delta$ such that $\shiftdis{\epsilon}(\cM_0(\qstateq),\cM_1(\qstateq)) \le \delta$.
%\end{theorem}

\begin{theorem}[{\cite[Corollary 1]{TCC:Zhandry20}}]\label{cor:cind_sample_api}
Let $\qstateq$ be an efficiently constructible, potentially mixed state, and $D_0,D_1$ efficiently sampleable distributions.
If $D_0$ and $D_1$ are computationally indistinguishable, for any inverse polynomial $\epsilon$ and any function $\delta$, we have $\shiftdis{3\epsilon}(\API_{\cP,D_0}^{\epsilon,\delta}(\qstateq),\API_{\cP,D_1}^{\epsilon,\delta}(\qstateq)) \le 2\delta + \negl(\secp)$.
\end{theorem}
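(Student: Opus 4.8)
The plan is to prove the bound by a triangle inequality for shift distance that separates the two sources of error. First I would record the (routine) fact that shift distance is subadditive, $\shiftdis{\epsilon_1+\epsilon_2}(\cM,\cO)\le \shiftdis{\epsilon_1}(\cM,\cN)+\shiftdis{\epsilon_2}(\cN,\cO)$, which follows by chaining the four defining inequalities of \cref{def:shift_distance}; I would also note monotonicity, namely that $\shiftdis{\epsilon}$ decreases as $\epsilon$ grows. Inserting the two projective implementations $\projimp(\cP_{D_0})$ and $\projimp(\cP_{D_1})$ between the two approximate implementations gives
\begin{align*}
\shiftdis{3\epsilon}(\API_{\cP,D_0}^{\epsilon,\delta}(\qstateq),\API_{\cP,D_1}^{\epsilon,\delta}(\qstateq))
&\le \shiftdis{\epsilon}(\API_{\cP,D_0}^{\epsilon,\delta}(\qstateq),\projimp(\cP_{D_0})(\qstateq))\\
&\quad + \shiftdis{\epsilon}(\projimp(\cP_{D_0})(\qstateq),\projimp(\cP_{D_1})(\qstateq))\\
&\quad + \shiftdis{\epsilon}(\projimp(\cP_{D_1})(\qstateq),\API_{\cP,D_1}^{\epsilon,\delta}(\qstateq)).
\end{align*}
By the first bullet of \cref{thm:api_property}, the first and third terms are each at most $\delta$. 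Hence it suffices to prove the single bound $\shiftdis{\epsilon}(\projimp(\cP_{D_0})(\qstateq),\projimp(\cP_{D_1})(\qstateq))\le\negl(\secp)$, after which the claimed $2\delta+\negl(\secp)$ follows.

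The obstacle here is that $\projimp$ is not efficiently computable, so I cannot feed it directly into a reduction to the computational indistinguishability of $D_0$ and $D_1$. To get around this I would re-approximate each projective implementation by an $\API$ run with much finer parameters: fix an inverse polynomial $\epsilon'$ with $3\epsilon'\le\epsilon$ and set $\delta'\seteq 2^{-\secp}$. Using subadditivity again together with monotonicity in the shift parameter,
\begin{align*}
\shiftdis{\epsilon}(\projimp(\cP_{D_0})(\qstateq),\projimp(\cP_{D_1})(\qstateq))
&\le 2\delta'+\shiftdis{\epsilon'}(\API_{\cP,D_0}^{\epsilon',\delta'}(\qstateq),\API_{\cP,D_1}^{\epsilon',\delta'}(\qstateq))\\
&\le 2\delta'+\shiftdis{0}(\API_{\cP,D_0}^{\epsilon',\delta'}(\qstateq),\API_{\cP,D_1}^{\epsilon',\delta'}(\qstateq)),
\end{align*}
where the two $\delta'$ terms again come from \cref{thm:api_property}. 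Since $\delta'=\negl(\secp)$, the whole problem now reduces to the purely computational statement that the real-valued outcomes of the \emph{efficient} measurement $\API^{\epsilon',\delta'}$, run with $D_0$-samples versus $D_1$-samples, agree up to a negligible amount at every threshold, i.e. $\shiftdis{0}\le\negl(\secp)$.

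I would prove this last statement by a hybrid over the samples that $\API$ draws, reducing to single-sample indistinguishability of $D_0,D_1$. Operationally, $\API_{\cP,D}^{\epsilon',\delta'}$ (the Marriott--Watrous/Zhandry estimator of the mixture $\cP_D$ of \cref{def:mixture_projective_measurement}) accesses $D$ only by drawing independent samples $i\gets D$ and applying the corresponding projective measurement $(\mat{P}_i,\mat{I}-\mat{P}_i)$; truncating its expected polynomial running time to a strict polynomial bound via Markov's inequality costs only $\delta'=\negl(\secp)$ in statistical distance and caps the number of samples at some $m=\poly(\secp)$. Suppose $\shiftdis{0}>\nu$ for a non-negligible $\nu$; then there is a fixed threshold $x^\ast$ at which the two outcome distributions differ by more than $\nu$. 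Define hybrids $H_0,\dots,H_m$ where $H_j$ supplies the first $j$ samples from $D_1$ and the rest from $D_0$; these are efficiently implementable because $D_0,D_1$ are efficiently sampleable, $H_0$ and $H_m$ are the two $\API$ runs, and some adjacent pair differs by more than $\nu/m$ at $x^\ast$. A distinguisher embeds its challenge sample in the $j$-th slot, generates the remaining samples itself, completes the (efficiently constructible) state $\qstateq$ and the $\API$ run, and outputs whether the outcome is $\le x^\ast$; this contradicts the computational indistinguishability of $D_0$ and $D_1$, with the index $j$ and threshold $x^\ast$ fixed as non-uniform advice (or guessed and searched at polynomial loss).

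I expect the main obstacle to be precisely this last reduction: justifying that $\API$ makes only polynomially many black-box calls to $D$ (carefully handling the expected- versus strict-polynomial running time and the sample count from \cref{thm:api_property}) and threading the single-sample hybrid through the coherent $\API$ procedure without disturbing its approximation guarantees. The two triangle-inequality steps are essentially bookkeeping; the conceptual content is entirely in replacing the inefficient $\projimp$ by a fine-parameter $\API$ proxy so that computational indistinguishability can finally be invoked.
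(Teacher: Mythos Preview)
The paper does not give its own proof of this statement; it is quoted verbatim as \cite[Corollary~1]{TCC:Zhandry20} and used as a black box. So there is nothing in the paper to compare against, and the question is just whether your argument is sound.

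Your argument is correct and is essentially the proof one extracts from Zhandry's framework. The two-step triangle inequality through $\projimp(\cP_{D_b})$ isolates the $2\delta$ contribution exactly as stated, and the detour through a second $\API$ with parameters $(\epsilon',\delta')=(\epsilon/3,2^{-\secp})$ is the right way to turn the inefficient $\projimp$ comparison into an efficient one on which the hybrid over samples can be run. The only place to be a little more careful is the truncation step: \cref{thm:api_property} gives an \emph{expected} running time $T_{\cP,D}\cdot\poly(1/\epsilon',\log(1/\delta'))$, and Markov alone does not let you truncate to strict polynomial time at negligible cost. In Zhandry's actual construction the Marriott--Watrous style estimator runs for a fixed, deterministic number of rounds determined by $\epsilon'$ and $\delta'$, so the number of $D$-samples is strictly polynomial; you should invoke that (or cite it) rather than Markov. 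With that adjustment, the single-sample hybrid and the threshold distinguisher $[\text{output}\le x^\ast]$ (with $x^\ast$ and the hybrid index as non-uniform advice, which is standard here) go through, and the bound $2\delta+\negl(\secp)$ follows. Note also that your detour is genuinely needed to cover the ``any function $\delta$'' clause: if $\log(1/\delta)$ is super-polynomial the original $\API^{\epsilon,\delta}$ is not efficient, so one cannot run the hybrid on it directly.
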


\begin{definition}[Quantum Program with Classical Inputs and Outputs~\cite{C:ALLZZ21}]\label{def:Q_program_C_IO}
A quantum program with classical inputs is a pair of quantum state $\qstateq$ and unitaries $\setbk{\mat{U}_x}_{x\in[N]}$ where $[N]$ is the domain, such that the state of the program evaluated on input $x$ is equal to $\mat{U}_x \qstateq \mat{U}_x^\dagger$. We measure the first register of $\mat{U}_x \qstateq \mat{U}_x^\dagger$ to obtain an output. We say that $\setbk{\mat{U}_x}_{x\in[N]}$ has a compact classical description $\mat{U}$ when applying $\mat{U}_x$ can be efficiently computed given $\mat{U}$ and $x$.
\end{definition}

\paragraph{Ciphertext-Policy Functional Encryption.} We review the definition of ciphertext-policy functional encryption (CPFE) that we use as the building block of our CoIC-KLA secure PKE scheme.
\begin{definition}[Ciphertext-Policy Functional Encryption]\label{def:cpfe}
A CPFE scheme for the circuit space $\cC$ and the input space $\cX$ is a tuple of algorithms $(\Setup, \KG, \Enc, \Dec)$.
\begin{itemize}
\item The setup algorithm $\Setup$ takes as input a security parameter $1^\lambda$, and outputs a master public key $\MPK$ and master secret key $\MSK$.
\item The key generation algorithm $\KG$ takes as input the master secret key $\MSK$ and $x\in\cX$, and outputs a decryption key $\sk_x$.
%We can consider that $\KG$ is deterministic.
%Using a PRF, we can convert every probabilistic $\KG$ algorithm to a deterministic one.

\item The encryption algorithm $\Enc$ takes as input the master public key $\MPK$ and $C\in\cC$, and outputs a ciphertext $\ct$.

\item The decryption algorithm $\Dec$ takes as input a functional decryption key $\sk_x$ and a ciphertext $\ct$, and outputs $y$.
\end{itemize}

\begin{description}
\item[Decryption Correctness:] We require $\Dec(\KG(\MSK, x), \Enc(\MPK, C)) \allowbreak = C(x)$ for every $C\in\cC$, $x\in\cX$, and $\left(\MPK,\MSK\right) \la \Setup(1^\lambda)$.
\end{description}

\end{definition}

Next, we introduce $1$-bounded security for CPFE schemes.

\begin{definition}[$1$-Bounded Security]\label{def:CPFE_security}
Let $\CPFE$ be a CPFE scheme.  %whose message space and function space are $\calM$ and $\calF$, respectively.
We define the game $\expt{\qA,\CPFE}{1\textrm{-}bounded}(\secp,\coin)$ as follows.

%\begin{description}
%\item[Initialization:] $ $
\begin{enumerate}
\item The challenger generates $(\MPK,\MSK) \la \Setup(1^\lambda)$ and sends $\MPK$ to $\qA$.
$\qA$ sends $x\in\cX$ to the challenger. The challenger generates $\sk_x\la\KG(\MSK,x)$ and sends $\sk_x$ to $\qA$.

\item $\qA$ outputs $(C_0,C_1)$ such that $C_0(x)=C_1(x)$ and $C_0$ and $C_1$ have the same size. The challenger picks $\coin\la\bit$, generates $\ct\la\Enc(\MPK,C_\coin)$, and sends $\ct$ to $\qA$.

\item $\qA$ outputs $\coin' \in \bin$.
%\end{description}
\end{enumerate}
We say that $\CPFE$ is $1$-bounded secure if for every QPT $\qA$, we have
\begin{align}
\adva{\qA,\CPFE}{1\textrm{-}bounded}(\secp)=
2\abs{\Pr[
\expt{\qA,\CPFE}{1\textrm{-}bounded}(\secp)=1
] -\frac{1}{2}} =\negl(\secp).
\end{align}

\end{definition}

\begin{theorem}[\cite{C:GorVaiWee12}]\label{thm:CPFE_from_PKE}
If there exists IND-CPA secure PKE, there exists $1$-bounded secure CPFE for $\Ppoly$.\footnote{Though \cite{C:GorVaiWee12} present their construction as KPFE instead of CPFE, it is easy to see that they implicitly give CPFE.}
\end{theorem}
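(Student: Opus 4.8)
The plan is to recall the single-key construction of Sahai--Seyalioglu, as adapted by Gorbunov--Vaikuntanathan--Wee, instantiated so that the \emph{ciphertext} carries the policy circuit and the key carries the input. Concretely, for inputs of length $n$, $\Setup$ samples $2n$ independent PKE key pairs $(\pk_{i,b},\sk_{i,b})\gets \PKE.\KG(1^\secp)$ for $i\in[n]$ and $b\in\bit$, and sets $\MPK\seteq\{\pk_{i,b}\}_{i,b}$ and $\MSK\seteq\{\sk_{i,b}\}_{i,b}$. The key for an input $x=x_1\cdots x_n$ is the selection $\sk_x\seteq\{\sk_{i,x_i}\}_{i\in[n]}$. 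To encrypt a circuit $C$, one garbles it, $(\{\lab_{i,b}\}_{i,b},\widetilde{C})\gets\Garble(1^\secp,C)$, encrypts each label under the matching public key, $\ct_{i,b}\gets\PKE.\Enc(\pk_{i,b},\lab_{i,b})$, and outputs $\ct\seteq(\widetilde{C},\{\ct_{i,b}\}_{i,b})$. Decryption uses each $\sk_{i,x_i}$ to recover the active labels $\lab_{i,x_i}$ and runs $\GCEval(\widetilde{C},\{\lab_{i,x_i}\}_{i})=C(x)$. Decryption correctness is immediate from PKE correctness (each active label decrypts correctly) and the correctness of $\GC$.

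For $1$-bounded security I would argue via a hybrid chain over the experiment $\expt{\qA,\CPFE}{1\textrm{-}bounded}$. Fix the queried input $x$ and the challenge pair $(C_0,C_1)$ with $C_0(x)=C_1(x)$ and equal size. First, for every $i\in[n]$ I replace the \emph{inactive} label ciphertext $\ct_{i,1-x_i}=\PKE.\Enc(\pk_{i,1-x_i},\lab_{i,1-x_i})$ by an encryption of a fixed dummy string; since $\qA$ holds only $\sk_{i,x_i}$ and never $\sk_{i,1-x_i}$, each replacement is justified by IND-CPA security of the PKE instance under $\pk_{i,1-x_i}$, and the $n$ replacements are handled through a standard sub-hybrid chain. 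After this step $\qA$'s view depends on the challenge only through $(\widetilde{C_\coin},\{\lab_{i,x_i}\}_i)$. Next I invoke garbling security to replace $(\widetilde{C_\coin},\{\lab_{i,x_i}\}_i)$ by $\SimGC(1^\secp,\mathsf{info}(C_\coin),C_\coin(x))$. Because $\mathsf{info}(C_0)=\mathsf{info}(C_1)$ and $C_0(x)=C_1(x)$, the two simulated distributions coincide, so the games for $\coin=0$ and $\coin=1$ become identical; undoing the transformations on the $\coin=1$ branch then yields indistinguishability, i.e.\ $\adva{\qA,\CPFE}{1\textrm{-}bounded}(\secp)=\negl(\secp)$.

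The work here is bookkeeping rather than a new idea, and the main thing to get right is the \emph{order} of the hybrids. The garbling simulator $\SimGC$ outputs exactly the active labels and a simulated $\widetilde{C}$ from only $\mathsf{info}(C)$ and $C(x)$, so the reduction must first arrange (via the IND-CPA step) that the inactive labels have vanished from $\qA$'s view before the garbling reduction can be applied. I would also note that although $\qA$ commits to $x$ and receives $\sk_x$ \emph{before} announcing $(C_0,C_1)$, this causes no difficulty: both the IND-CPA and the garbling reductions may fix $x$ in advance and embed their challenge only once $(C_0,C_1)$ are known, since the partition into active and inactive keys is determined by $x$ alone. Finally, this scheme is precisely the \cite{C:GorVaiWee12} construction with the roles of the input and the function description exchanged, which is exactly the sense in which their key-policy scheme ``implicitly gives'' a ciphertext-policy scheme, as remarked in the footnote; invoking \cref{thm:pprf-owf} is unnecessary here, and the whole argument rests only on IND-CPA PKE and the garbling scheme, which itself follows from one-way functions.
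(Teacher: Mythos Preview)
The paper does not give its own proof of this statement: it is quoted as a known result from \cite{C:GorVaiWee12}, with only the footnote remarking that the KPFE construction there yields CPFE by swapping the roles of input and circuit. Your proposal is a correct and complete reconstruction of exactly that standard argument (Sahai--Seyalioglu garbled-circuit-plus-$2n$-PKE-keys construction, IND-CPA hybrids to remove the inactive labels, then garbling simulation using $C_0(x)=C_1(x)$ and equal size), so there is nothing to compare against beyond noting that you have supplied the details the paper simply cites.
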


%\begin{definition}[Public Key Encryption]
%A PKE scheme $\PKE$ is a tuple of three algorithms $(\KG, \Enc, \Dec)$. 
%Below, let $\cX$  be the message space of $\PKE$. 
%\begin{description}
%%\item[$\Setup(1^\secp,1^{\numkey})\ra\msk$:] The setup algorithm takes a security parameter $1^\lambda$ and a collusion bound $1^{\numkey}$, and outputs a master secret key $\msk$.
%\item[$\KG(1^\secp)\ra(\ek,\dk)$:] The key generation algorithm takes a security parameter $1^\lambda$, and outputs an encryption key $\ek$, a decryption key $\dk$.
%
%\item[$\Enc(\ek,\msg)\ra\ct$:] The encryption algorithm takes an encryption key $\ek$ and a message $\msg \in \cX$, and outputs a ciphertext $\ct$.
%
%\item[$\Dec(\dk,\ct)\ra\tilde{\msg}$:] The decryption algorithm takes a decryption key $\qdk$ and a ciphertext $\ct$, and outputs a value $\tilde{\msg}$.
%
%
%\item[Decryption correctness:]For every $\msg \in \cX$, we have
%\begin{align}
%\Pr\left[
%\Dec(\dk, \ct) \allowbreak = \msg
%\ \middle |
%\begin{array}{ll}
%(\ek,\dk,\vk)\gets\KG(1^\secp)\\
%\ct\gets\Enc(\ek,\msg)
%\end{array}
%\right] 
%=1-\negl(\secp).
%\end{align}
%
%\end{description}
%\end{definition}

\subsection{Definitions of CoIC-KLA Security}

We introduce definitions of CoIC-KLA security.
In addition to normal CoIC-KLA security needed to realize our PKE-SKL, we also define what we call strong CoIC-KLA security.
We can prove that strong CoIC-KLA security implies CoIC-KLA security.
The reason we introduce strong CoIC-KLA is that it is more compatible to our construction strategy in \cref{sec:scKLA_PKE} that uses watermarking technique by Kitagawa and Nishimaki~\cite{EC:KitNis22}.

\begin{definition}[CoIC-KLA Security]\label{def:cKLA_PKE}
%\takashi{Renamed from lessor security to IND-CPA security. (Alternatively, should we call it something like IND-KLA (Key Leasing Attack)?)}
We say that a PKE scheme $\PKE$ with the message space $\Xs$ is CoIC-KLA secure, if it satisfies the following requirement, formalized from the experiment $\expb{\PKE,\qA}{coic}{kla}(1^\secp)$ between an adversary $\qA$ and a challenger $\qC$:
        \begin{enumerate}
            \item  $\qC$ runs $(\ek_0,\dk_0)\gets\KG(1^\secp)$ and $(\ek_1,\dk_1)\gets\KG(1^\secp)$, and generates $\qdk:=\frac{1}{\sqrt{2}}(\ket{0}\ket{\dk_0}+\ket{1}\ket{\dk_1})$. $\qC$ sends $\ek_0$, $\ek_1$, and $\qdk$ to $\qA$. $\qA$ can get access to the following oracle only once.
            \begin{description}
            \item[$\cO(\widetilde{\qdk})$:]On input a possibly malformed decryption key $\widetilde{\qdk}$, it applies a binary-outcome measurement $(\mat{I}-\Pi_{\mathrm{vrfy}},\Pi_{\mathrm{vrfy}})$, where $\Pi_{\mathrm{vrfy}}$ is the projection to the right decryption key, i.e., 
            \begin{align}
 \Pi_{\mathrm{vrfy}}\seteq 
 \left(\frac{1}{\sqrt{2}}\left(\ket{0}\ket{\dk_{0}}+\ket{1}\ket{\dk_{1}}\right)\right)  
 \left(\frac{1}{\sqrt{2}}\left(\bra{0}\bra{\dk_{0}}+\bra{1}\bra{\dk_{1}}\right)\right).  
\end{align}
It returns the measurement outcome (indicating whether the state was projected onto $\Pi_{\mathrm{vrfy}}$ or not).  
            \end{description}

            \item $\qA$ sends $(\msg_0^*,\msg_1^*)\in \Xs^2$ to $\qC$. $\qC$ generates $a,b\la\bit$ and generates $\ct_0^*\la\Enc(\ek_0,\msg_a^*)$ and $\ct_1^*\la\Enc(\ek_1,\msg_{a\oplus b}^*)$. $\qC$ sends $\ct_0^*$ and $\ct_1^*$ to $\qA$.

            \item $\qA$ outputs a guess $b^\prime$ for $b$. $\qC$ outputs $1$ if $b=b^\prime$ and $0$ otherwise as the final output of the experiment.
        \end{enumerate}
        For any QPT $\qA$, it holds that
\begin{align}
\advb{\PKE,\qA}{coic}{kla}(\secp) \seteq 2\cdot\abs{\Pr[\expb{\PKE,\qA}{coic}{kla} (1^\secp) \ra 1] -  \frac{1}{2} }\leq \negl(\secp).
\end{align}
\end{definition}

\begin{definition}[Strong CoIC-KLA Security]\label{def:scKLA_PKE}\label{def_scKLA_PKE}
%\takashi{Renamed from lessor security to IND-CPA security. (Alternatively, should we call it something like IND-KLA (Key Leasing Attack)?)}
We say that a PKE scheme $\PKE$ with the message space $\Xs$ is $\epsilon$-strong CoIC-KLA secure, if it satisfies the following requirement, formalized from the experiment $\expc{\PKE,\qA}{s}{coic}{kla}(1^\secp,\epsilon)$ between an adversary $\qA$ and a challenger $\qC$:
        \begin{enumerate}
            \item  $\qC$ runs $(\ek_0,\dk_0)\gets\KG(1^\secp)$ and $(\ek_1,\dk_1)\gets\KG(1^\secp)$, and generates $\qdk:=\frac{1}{\sqrt{2}}(\ket{0}\ket{\dk_0}+\ket{1}\ket{\dk_1})$. $\qC$ sends $\ek_0$, $\ek_1$, and $\qdk$ to $\qA$. 
%            $\qA$ can get access to the following oracle only once.
%            \begin{description}
%            \item[$\cO(\widetilde{\qdk})$:]On input a possibly malformed decryption key, apply a binary-outcome measurement specified by a projection $\Pi_{\mathrm{vrfy}}$, where $\Pi_{\mathrm{vrfy}}$ is the projection to the right decryption key, i.e., 
%            \begin{align}
% \Pi_{\mathrm{vrfy}}\seteq 
% \left(\frac{1}{\sqrt{2}}\left(\ket{\dk_{0}}+\ket{\dk_{1}}\right)\right)  
% \left(\frac{1}{\sqrt{2}}\left(\bra{\dk_{0}}+\bra{\dk_{1}}\right)\right).  
%\end{align}
%If the measurement outcome is $1$ (indicating that the state was projected onto $\Pi_{\mathrm{vrfy}}$), output $\top$ and otherwise output $\bot$.  
%            \end{description}

            \item $\qA$ sends $(\msg_0^*,\msg_1^*)\in \Xs^2$ and a quantum circuit $\qD =(\qstateq,\mat{U})$, where $\qD$ is a quantum program with classical inputs and one-bit outputs and $\mat{U}$ is a compact classical description of $\{\mat{U}_{\ct_0,\ct_1}\}_{\ct_0,\ct_1}$ to $\qC$. 

\item
Let $D$ be the following distribution.
\begin{description}
\item[$D$:]Generate $a,b\la\bit$ and $\ct_0\la\Enc(\ek_0,\msg_a)$ and $\ct_1\la\Enc(\ek_1,\msg_{a\oplus b})$. Output $(b,\ct_0,\ct_1)$.
\end{description}
We also let $\cP=(\mat{P}_{b,\ct_0,\ct_1},\mat{Q}_{b,\ct_0,\ct_1})_{b,\ct_0,\ct_1}$ be a collection of binary outcome projective measurements, where
\begin{align}
\mat{P}_{b,\ct_0,\ct_1}=\mat{U}_{\ct_0,\ct_1}^\dagger(\ket{b}\bra{b}\tensor \mat{I}) \mat{U}_{\ct_0,\ct_1}\textrm{~~~~and~~~~}\mat{Q}_{b,\ct_0,\ct_1}=\mat{I}-\mat{P}_{b,\ct_0,\ct_1}.
\end{align} 
Moreover, we let $\cM_D=(\mat{P}_D,\mat{Q}_D)$ be binary outcome POVMs, where
\begin{align}
\mat{P}_D=\sum_{r\in\calR}\frac{1}{\abs{\calR}}\mat{P}_{D(r)}\textrm{~~~~and~~~~}\mat{Q}_D =\mat{I}-\mat{P}_D.
\end{align}
Note that $\calR$ is the random coin space of $D$ and $\mat{P}_{D(r)}=\mat{P}_{b,\ct_0,\ct_1}$, where $(b,\ct_0,\ct_1)\la D(r)$.\footnote{The random coin $r$ for $D$ consists of random bits $a,b$ and encryption coins of two ciphertexts.} 
$\qC$ applies the measurement $\projimp(\cM_D)$ to $\qstateq$, and obtain a value $p$.
$\qC$ outputs $1$ if $p\geq\frac{1}{2}+\epsilon$ and $0$ otherwise.

        \end{enumerate}
        For any QPT $\qA$, it holds that
\begin{align}
\advc{\PKE,\qA}{s}{coic}{kla}(\secp) \seteq \Pr[\expc{\PKESKL,\qA}{s}{coic}{kla} (1^\secp,\epsilon) \ra 1] \leq \negl(\secp).
\end{align}
\end{definition}

\begin{theorem}
If $\PKE$ is $\epsilon$-strong CoIC-KLA secure for any inverse polynomial $\epsilon$, then $\PKE$ is CoIC-KLA secure.
\end{theorem}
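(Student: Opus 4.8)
The plan is to prove the contrapositive: starting from a (normal) CoIC-KLA adversary $\qA$ whose advantage $\advb{\PKE,\qA}{coic}{kla}(\secp)$ is non-negligible, I will construct a strong CoIC-KLA adversary $\qB$ that wins for a suitable inverse-polynomial $\epsilon$. After possibly relabeling its output bit, $\qA$ predicts $b$ with probability $1/2+\gamma$ for some $\gamma(\secp)$ with $\gamma\ge 1/\poly(\secp)$ infinitely often. The adversary $\qB$ receives $\ek_0,\ek_1,\qdk$ from its own challenger, runs $\qA$ on exactly these inputs up to the point where $\qA$ sends $(\msg_0^*,\msg_1^*)$, and then outputs $(\msg_0^*,\msg_1^*)$ together with the decoder $\qD=(\qstateq,\mat{U})$, where $\qstateq$ is $\qA$'s current internal state and $\mat{U}_{\ct_0,\ct_1}$ is the compactly describable unitary that runs the rest of $\qA$ on the challenge ciphertexts and writes its guess into the output register. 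Since $\qB$ and $\qA$ see identical key material, the observable $\cM_D$ measured by the strong challenger is precisely the success observable of $\qA$'s residual computation.

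The crux, and the step I expect to be the main obstacle, is that $\qB$ must answer $\qA$'s single query to the verification oracle $\cO$, which applies the projective measurement $(\mat{I}-\Pi_{\mathrm{vrfy}},\Pi_{\mathrm{vrfy}})$, even though $\qB$ does not possess the classical keys $\dk_0,\dk_1$ needed to implement $\Pi_{\mathrm{vrfy}}$. My plan is to have $\qB$ answer this query (which we may assume is made before the challenge ciphertexts are produced) with a uniformly random bit, leaving the returned register untouched, and to absorb the resulting loss into the analysis. Concretely, I will invoke the principle of deferred measurement together with \cref{lem:BZ}: postponing the two-outcome oracle measurement changes the probability of any fixed output by at most a factor of $2$, and feeding a random bit matches the true oracle response with probability $1/2$. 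To cope with the fact that feeding the ``wrong'' bit may flip the sign of the residual bias, $\qB$ additionally XORs the decoder's output with an independent fair coin, so that with constant probability the decoder is aligned in the direction of positive bias. The honest worry is that a measurement-steering $\qA$ could concentrate its advantage on the correlation between the collapsed returned register and the fed bit; making the random-bit answer provably lose only a polynomial factor is exactly where the deferred-measurement and \cref{lem:BZ} bookkeeping must be carried out carefully, and this is the heart of the argument.

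Granting that $\qD$ carries bias at least $\gamma'=\Omega(\gamma)$ with at least inverse-polynomial probability over $\qB$'s coins, it remains to translate this into the event measured by the strong game. Here I will use that the expected outcome of $\projimp(\cM_D)$ applied to $\qstateq$ equals $\mathrm{Tr}[\mat{P}_D\,\qstateq]$, i.e. the true average success probability of the decoder; combined with the almost-projective guarantee of \cref{thm:api_property}, this lets me pass from ``expected success $\ge 1/2+\gamma'$'' to ``$p\ge 1/2+\epsilon$ with non-negligible probability'' by a Markov/averaging argument, taking $\epsilon\seteq\gamma'/2$, which is an inverse polynomial. Putting the pieces together, $\qB$ satisfies $\advc{\PKE,\qB}{s}{coic}{kla}(\secp)\ge 1/\poly(\secp)$ for this $\epsilon$; since strong security is assumed for \emph{every} inverse-polynomial $\epsilon$, this contradicts $\epsilon$-strong CoIC-KLA security and establishes that $\PKE$ is CoIC-KLA secure.
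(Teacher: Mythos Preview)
Your overall architecture matches the paper's proof exactly: argue by contrapositive, have $\qB$ simulate $\qA$ up to the challenge messages, answer the single $\cO$-query with a uniform bit, output the residual computation as $\qD$, and use averaging on $E[p]=\tfrac12+\gamma$ to get $\Pr[p\ge\tfrac12+\gamma/2]\ge\gamma/2$. The final bound in the paper is $\advc{\PKE,\qB}{s}{coic}{kla}\ge\gamma/4$ with $\epsilon=\gamma/2$.

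Where you diverge is in the analysis of the oracle-simulation step, which you flag as the hard part but then over-engineer. Neither deferred measurement nor \cref{lem:BZ} is needed, and your application of them is not quite right (``postponing'' via deferred measurement is an exact equivalence, not a factor-$2$ loss, and \cref{lem:BZ} bounds the effect of \emph{adding} a measurement, not removing one). The clean argument, which the paper uses, is a one-liner: since the oracle discards the queried register, tracing it out without measuring is identical (on $\qA$'s side) to measuring and then tracing out while ignoring the outcome; hence $\qB$'s simulation is exactly the experiment ``apply $\Pi_{\mathrm{vrfy}}$, ignore the result $d$, feed an independent uniform $r$''. Conditioned on the event $r=d$, which has probability $\tfrac12$ independently of everything, this \emph{is} the real CoIC-KLA game, so the averaging bound applies directly on that event. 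No steering worry survives this observation.

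For the same reason, the extra XOR with a fair coin is unnecessary: you only need a lower bound on $\Pr[p\ge\tfrac12+\epsilon]$, and the $r=d$ branch alone already contributes $\tfrac12\cdot\tfrac{\gamma}{2}$, regardless of the sign of the bias on the $r\neq d$ branch. Your XOR just costs you another factor of two without buying anything.
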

\begin{proof}
Assume there exists $\qA$ that breaks CoIC-KLA security of $\PKE$. Without loss of generality, we assume that $\qA$ correctly guesses the bit $b$ with probability $\frac{1}{2}+\gamma$ for some inverse polynomial $\gamma$.
Then, consider the following experiment using $\qA$.
\begin{enumerate}
\item Execute $\expb{\PKE,\qA}{coic}{kla}(1^\secp)$ until the point $\qA$ outputs $(\msg_0^*,\msg_1^*)$.
\item Construct a quantum program with classical inputs and outputs $\qD=(\qstateq,\mat{U})$, where $\qstateq$ is the inner quantum state of $\qA$ and $\mat{U}$ is a compact description of $\{\mat{U}_{\ct_0,\ct_1}\}_{\ct_0,\ct_1}$ and $\mat{U}_{\ct_0,\ct_1}$ is a unitary that performs the rest of $\qA$'s computations on input $(\ct_0,\ct_1)$.
\item Obtain $p$ by applying $\projimp(\cM_D)$ to $\qstateq$, where the measurement $\cM_D$ and the distribution $D$ are defined in \cref{def:scKLA_PKE}.
\end{enumerate}
Then, from the definition of $\projimp$ and the fact that $\qA$'s advantage is $\frac{1}{2}+\gamma$, we have $E[p]=\frac{1}{2}+\gamma$.
By the averaging argument, we obtain $\Pr[p\ge\frac{1}{2}+\frac{\gamma}{2}]\ge \frac{\gamma}{2}$.
Consider the following adversary $\qB$ that attacks $\frac{\gamma}{2}$-strong CoIC-KLA security of $\PKE$.
\begin{enumerate}
\item Given, $\ek_0$, $\ek_1$, and $\qdk$, $\qB$ executes $\expb{\PKE,\qA}{coic}{kla}(1^\secp)$ until the point $\qA$ outputs $(\msg_0^*,\msg_1^*)$. When $\qA$ makes a query to $\cO$, $\qB$ returns a random bit.
\item $\qB$ constructs a quantum program with classical inputs and outputs $\qD=(\qstateq,\mat{U})$, where $\qstateq$ is the inner quantum state of $\qA$, $\mat{U}$ is a compact description of $\{\mat{U}_{\ct_0,\ct_1}\}_{\ct_0,\ct_1}$, and $\mat{U}_{\ct_0,\ct_1}$ is a unitary that performs the rest of $\qA$'s computations on input $(\ct_0,\ct_1)$.
$\qB$ outputs $(\msg_0^*,\msg_1^*)$ and $\qD$.
\end{enumerate}

$\qB$ correctly answers to $\qA$'s query to $\cO$ and correctly simulates $\expb{\PKE,\qA}{coic}{kla}$ for $\qA$ with probability $\frac{1}{2}$.\footnote{$\qB$ does not apply the verification procedure to the queried state differently from $\expb{\PKE,\qA}{coic}{kla}$. This is not a problem since from the view of $\qA$, the experiment simulated by $\qB$ is the same as the experiment where the verification process is applied to the queried state, but the result is ignored and a random bit is returned.}
Moreover, from the above discussion, under the  condition that $\qB$ correctly answers to $\qA$'s query to $\cO$, $\qB$ wins with probability $\frac{\gamma}{2}$.
Overall, $\advc{\PKE,\qB}{s}{coic}{kla}(\secp)\ge\frac{\gamma}{4}$, which contradicts $\frac{\gamma}{2}$-strong CoIC-KLA security of $\PKE$.
%which contradicts the fact that $\PKE$ is $\epsilon$-strong CoIC-KLA\ryo{$\gamma/4$?} secure for any inverse polynomial $\epsilon$\ryo{$\gamma$?}.
%\takashi{I think using $\epsilon$ makes sense here because $\epsilon$ is "any inverse polynomial" that may be irrelevant to $\gamma$.But for make it more clearer, we may say like: which contradicts $\frac{\gamma}{4}$-strong CoIC-KLA security of $\PKE$. }
%\synote{I agree.}
This completes the proof.
\end{proof}

% !TEX root = main.tex
% !TEX spellcheck = en-US

\newcommand{\msglen}{\ell}
\newcommand{\tlp}{\tilde{p}}
\newcommand{\BadDec}{\mathsf{BadDec}}

\subsection{Strong $\cKLA$ Secure PKE from CPFE}\label{sec:scKLA_PKE}

We construct a strong $\cKLA$ secure PKE $\PKE=(\Gen,\Enc,\Dec)$ using a CPFE scheme $\CPFE=(\CPFE.\Setup,\allowbreak\CPFE.\KG,\CPFE.\Enc,\CPFE.\Dec)$ as a building block.

\begin{description}

 \item[$\Gen(1^\secp)$:] $ $
 \begin{itemize}
 \item Generate $(\MPK,\MSK)\lrun \CPFE.\Setup(1^\secp)$.
 \item Generate $x\la\bit^\secp$ and $\sk_{x}\la\CPFE.\KG(\MSK,x)$.
 \item Output $\ek\seteq\MPK$ and $\dk:=\sk_x$.
 \end{itemize}
 \item[$\Enc(\ek,\msg)$:] $ $
 \begin{itemize}
\item Parse $\ek= \MPK$.
\item Let $C[\msg]$ be a constant circuit that outputs $\msg$ on any input. $C$ is padded so that it has the same size as the circuit $C^*$ appeared in the security proof.
\item Output $\ct\la\CPFE.\Enc(\MPK,C[\msg])$.
 \end{itemize}
  \item[$\Dec(\dk,\ct)$:] $ $
 \begin{itemize}
\item Parse $\dk= \sk_x$.
\item Output $\msg^\prime\la\CPFE.\Dec(\sk_x,\ct)$.
 \end{itemize}
\end{description}

The decryption correctness of $\PKE$ follows from that of $\CPFE$.
We also have the following theorems.

\begin{theorem}\label{thm:scKLA_PKE}
If $\CPFE$ is a $1$-bounded secure CPFE scheme, then $\PKE$ is a $\epsilon$-strong $\cKLA$ secure PKE scheme for any inverse polynomial $\epsilon$.
\end{theorem}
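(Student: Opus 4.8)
The plan is to prove the contrapositive by a tracing/extraction argument. Suppose toward contradiction that some QPT adversary $\qA$ wins $\expc{\PKE,\qA}{s}{coic}{kla}(1^\secp,\epsilon)$ with non-negligible probability $\nu$, i.e., with probability $\nu$ it outputs $(\msg_0^*,\msg_1^*)$ and a quantum program $\qD=(\qstateq,\mat{U})$ whose $\projimp(\cM_D)$ value satisfies $p\ge\frac12+\epsilon$. I would build a QPT algorithm $\qB$ that, given $\MPK_0,\MPK_1$ and the superposition key $\qdk=\frac{1}{\sqrt{2}}(\ket{0}\ket{\sk_{x_0}}+\ket{1}\ket{\sk_{x_1}})$ (but none of the master secrets or $x_0,x_1$), recovers \emph{both} $x_0$ and $x_1$ with inverse-polynomial probability. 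This contradicts the following fact, which I would establish first (call it Fact~$\star$): no QPT algorithm given $\MPK_0,\MPK_1,\qdk$ outputs both $x_0$ and $x_1$ with non-negligible probability. Fact~$\star$ follows from \cref{lem:BZ}: if such an algorithm succeeded with probability $\eta$, then the variant that first measures the control register of $\qdk$ still succeeds with probability at least $\eta/2$; but after that measurement the state collapses to $\ket{0}\ket{\sk_{x_0}}$ or $\ket{1}\ket{\sk_{x_1}}$, so one of the two strings is independent of the algorithm's entire view and is guessed with probability at most $2^{-\secp}$, forcing $\eta\le 2\cdot 2^{-\secp}=\negl(\secp)$.

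The core of the proof is the extractor inside $\qB$. After running $\qA$ to obtain $(\msg_0^*,\msg_1^*,\qD)$, define for each instance $\beta\in\bit$, each index $i\in[\secp]$, and each $c\in\bit$ the test circuit $\widetilde{C}[c,\msg_0,\msg_1,i]$ that on input $x$ outputs $\msg_{c\oplus x[i]}$, where $x[i]$ is the $i$-th bit of $x$. Let $D_i^\beta$ be the distribution obtained from $D$ (as in \cref{def:scKLA_PKE}) by replacing the ciphertext under $\MPK_\beta$ with an encryption of $\widetilde{C}[\cdot,\msg_0^*,\msg_1^*,i]$, using index $c=a$ when $\beta=0$ and $c=a\oplus b$ when $\beta=1$, and leaving the other component as in $D$. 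The extractor applies the approximate projective implementations $\API_{\cP,D_i^\beta}^{\epsilon',\delta}$ of \cref{thm:api_property} to $\qstateq$ in sequence, obtaining values $p_i^\beta$, and sets the $i$-th bit of the recovered $x_\beta$ to $0$ if $p_i^\beta>\frac12$ and to $1$ otherwise. Since each $D_i^\beta$ is sampleable from $\MPK_0,\MPK_1,\msg_0^*,\msg_1^*$ alone, $\qB$ is QPT and needs no secret information.

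Correctness of the extraction is where $1$-bounded security of $\CPFE$ enters, through \cref{cor:cind_sample_api}. The circuit $\widetilde{C}[c,\msg_0^*,\msg_1^*,i]$ and the constant circuit $C[\msg^*_{c\oplus x_\beta[i]}]$ agree on the input $x_\beta$ tied to the single key $\sk_{x_\beta}$ of instance $\beta$; hence, treating $\qstateq$ (which embeds that single key) as the auxiliary state, $1$-bounded CPFE security gives that $D_i^\beta$ is computationally indistinguishable from the effective distribution $\widetilde{D}_i^\beta$ in which the $\beta$-component encrypts this constant circuit, and \cref{cor:cind_sample_api} yields $\shiftdis{3\epsilon'}(\API_{\cP,D_i^\beta}^{\epsilon',\delta}(\qstateq),\API_{\cP,\widetilde{D}_i^\beta}^{\epsilon',\delta}(\qstateq))\le 2\delta+\negl(\secp)$. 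A short case analysis shows that $\widetilde{D}_i^\beta$ coincides with the original $D$ when $x_\beta[i]=0$, and with the bit-flipped distribution (where the distinguished bit $b$ is negated, so the decoder predicts $b$ with the complementary probability) when $x_\beta[i]=1$. Consequently, conditioned on the non-negligible event that $\qA$ won, the estimate $p_i^\beta$ lies above $\frac12+\epsilon$ when $x_\beta[i]=0$ and below $\frac12-\epsilon$ when $x_\beta[i]=1$, up to the shift-distance error; choosing, say, $\epsilon'=\epsilon/4$ and $\delta$ a sufficiently small inverse polynomial separates the two regimes across the threshold $\frac12$, so thresholding recovers $x_\beta[i]$.

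Finally, I would assemble the pieces. Because $\API_{\cP,\cdot}^{\epsilon',\delta}$ is $(\epsilon',\delta)$-almost projective (\cref{thm:api_property}), applying the $2\secp$ measurements one after another shifts each recorded value by at most $\epsilon'$ per step except with probability $\delta$, so the decoder's distinguishing power—and hence each threshold test—survives the whole sequence on a single copy of $\qstateq$; a union bound over the $2\secp$ positions and the accumulated error terms shows that $\qB$ outputs the correct pair $(x_0,x_1)$ with inverse-polynomial probability conditioned on $\qA$'s success event, contradicting Fact~$\star$ and proving the theorem. The main obstacle I anticipate is precisely this last bookkeeping for the \emph{quantum} decoder: unlike the classical sketch, one cannot rerun $\qD$ afresh for each $(\beta,i)$, and one must argue—via the almost-projective and reverse-almost-projective guarantees of \cref{thm:api_property} combined with the shift-distance bound of \cref{cor:cind_sample_api}—that the $2\secp$ sequential estimates remain simultaneously accurate on one copy of $\qstateq$, while controlling how the $\epsilon'$ and $\delta$ errors accumulate so that the inverse-polynomial gap $\epsilon$ is not swamped.
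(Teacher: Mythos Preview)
Your proposal is correct and follows essentially the same approach as the paper: establish the two-string unextractability fact via \cref{lem:BZ}, then build an extractor that applies the approximate projective implementations of \cref{thm:api_property} sequentially against the test-circuit distributions, using $1$-bounded CPFE security through \cref{cor:cind_sample_api} together with the almost-projective and reverse-almost-projective guarantees to keep the accumulated drift bounded. Two small points where the paper is more careful: it begins the extraction with a calibrating measurement $\API_{\cP,D}^{\epsilon',\delta'}$ to anchor the inductive threshold before switching to the test distributions, and it takes $\epsilon'=\epsilon/(8\secp)$ and $\delta'=2^{-\secp}$ rather than $\epsilon'=\epsilon/4$; your placeholder $\epsilon'=\epsilon/4$ would indeed be swamped after $2\secp$ steps, as you anticipated in your final paragraph.
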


\begin{proof}
We show that if there exists a QPT adversary $\qA$ that breaks $\epsilon$-strong $\cKLA$ security for some inverse polynomial $\epsilon$, then we can construct a QPT adversary $\qB$ that contradicts the following lemma.

\begin{lemma}\label{lem:BZ_CPFE}
Consider the following experiment $\expa{\CPFE,\qB}{BZ}(1^\secp)$ between an adversary $\qB$ and a challenger $\qC$.
\begin{enumerate}
\item $\qC$ generates $(\MPK_0,\MSK_0)\la\CPFE.\Setup(1^\secp)$, $(\MPK_1,\MSK_1)\la\CPFE.\Setup(1^\secp)$, $x_0,x_1\la\bit^\secp$, $\sk_{x_0}\la\CPFE.\KG(\MSK_0,x_0)$, and $\sk_{x_1}\la\CPFE.\KG(\MSK_1,x_1)$. $\qC$ gives $\MPK_0$, $\MPK_1$, and $\frac{1}{\sqrt{2}}(\ket{0}\ket{\sk_{x_0}}+\ket{1}\ket{\sk_{x_1}})$ to $\qB$.
\item $\qB$ outputs $x_0^\prime$ and $x_1^\prime$. $\qC$ outputs $1$ if $x_0^\prime=x_0$ and $x_1^\prime=x_1$ and $0$ otherwise.
\end{enumerate}
Then, for any QPT adversary $\qB$, we have $\adva{\CPFE,\qB}{BZ}(1^\secp)=\Pr[\expa{\CPFE,\qB}{BZ}(1^\secp)=1]=\negl(\secp)$.
\end{lemma}
\begin{proof}
This lemma directly follows from \cref{lem:BZ}.
\end{proof}

Let $\epsilon$ be some inverse polynomial. Assume there exits a QPT $\qA$ such that $\advc{\PKE,\qA}{s}{coic}{kla}(\secp,\epsilon)=\gamma$ for some inverse polynomial $\gamma$.
We construct the following adversary $\qB$.
\begin{enumerate}
\item Given $\MPK_0$, $\MPK_1$, and $\qdk$, $\qB$ sets $\ek_0:=\MPK_0$ and $\ek_1:=\MPK_1$. $\qB$ sends $\ek_0$, $\ek_1$, and $\qdk$ to $\qA$.
\item When $\qA$ outputs $(\msg_0^*,\msg_1^*)$ and $\qD=(\qstateq,\mat{U})$, $\qB$ outputs $(x_0^\prime,x_1^\prime)\la\qExtract(\MPK_0,\MPK_1,\msg_0^*,\msg_1^*,\qD,\epsilon)$, where $\qExtract$ is described below.
\end{enumerate}
\begin{description}
\item[$\qExtract(\MPK_0,\MPK_1,\msg_0^*,\msg_1^*,\qD,\epsilon)$:]$ $
\begin{itemize}
\item Let $\epsilon'=\epsilon/8\secp$ and $\delta'=2^{-\lambda}$.
\item Parse $(\qstateq,\mat{U})\gets\qD$.
\item Let $\cP$ be defined in the same way as that in \cref{def_scKLA_PKE} and $D_{0,i}$ and $D_{1,i}$ be the following distributions for every $i\in[\secp]$.
\begin{description}
\item[$D_{0,i}$:] Generate $a,b\la\bit$. Generate $\ct_0\la\CPFE.\Enc(\MPK_0,C^*[a,b,\msg_0,\msg_1,i])$, where $C^*[a,b,\msg_0,\msg_1,i]$ is a circuit that takes $x$ as input and outputs $\msg_{a\oplus b\oplus x[i]}$. Generate $\ct_1\la\CPFE.\Enc(\MPK_1,C[\msg_a])$. Output $(b,\ct_0,\ct_1)$.
\item[$D_{1,i}$:] Generate $a,b\la\bit$. Generate $\ct_1\la\CPFE.\Enc(\MPK_0,C^*[a,b,\msg_0,\msg_1,i])$, where $C^*[a,b,\msg_0,\msg_1,i]$ is a circuit that takes $x$ as input and outputs $\msg_{a\oplus b\oplus x[i]}$. Generate $\ct_0\la\CPFE.\Enc(\MPK_1,C[\msg_a])$. Output $(b,\ct_0,\ct_1)$.
\end{description}
\item Let $\D$ be the distribution defined in the same way as that in \cref{def:scKLA_PKE}. Compute $\tlp_{0} \gets \API_{\cP,D}^{\epsilon' ,\delta'}(\qstateq)$. If $\tlp_0<\frac{1}{2}+\epsilon-4\epsilon'$, return $\bot$. Otherwise, let $\qstateq_{0,0}$ be the post-measurement state, go to the next step.
\item For all $i \in [\secp]$, do the following.
\begin{enumerate}
\item Compute $\tlp_{0,i} \gets \API_{\cP,D_{0,i}}^{\epsilon' ,\delta'}(\qstateq_{0,i-1})$. Let $\qstateq_{0,i}$ be the post-measurement state.
\item If $\tlp_{0,i}>\frac{1}{2}+\epsilon-4(i+1)\epsilon'$, set $x_0^\prime[i]=0$. If $\tlp_{0,i}<\frac{1}{2}-\epsilon+4(i+1)\epsilon'$, set $x_0^\prime[i]=1$. Otherwise, exit the loop and output $\bot$.
\end{enumerate}
\item Let $\qstateq_{1,0}$ be $\qstateq_{0,\secp}$. For all $i \in [\secp]$, do the following.
\begin{enumerate}
\item Compute $\tlp_{1,i} \gets \API_{\cP,D_{1,i}}^{\epsilon' ,\delta'}(\qstateq_{1,i-1})$. Let $\qstateq_{1,i}$ be the post-measurement state.
\item If $\tlp_{1,i}>\frac{1}{2}+\epsilon-4(\secp+i+1)\epsilon'$, set $x_1^\prime[i]=0$. If $\tlp_{1,i}<\frac{1}{2}-\epsilon+4(\secp+i+1)\epsilon'$, set $x_1^\prime[i]=1$. Otherwise, exit the loop and output $\bot$.
\end{enumerate}
\item Output $x_0'=x_0^\prime[1] \concat \cdots \concat x_0^\prime[\secp]$ and $x_1'=x_1^\prime[1] \concat \cdots \concat x_1^\prime[\secp]$.
\end{itemize}
\end{description}

We will estimate $\adva{\CPFE,\qB}{BZ}(1^\secp)$.
We define the events $\BadDec$, and $\BadExt_{0,i}$ and $\BadExt_{1,i}$ for every $i\in[\secp]$.
\begin{description}
\item[$\BadDec$:]When $\qB$ runs $\qExtract(\MPK_0,\MPK_1,\msg_0^*,\msg_1^*,\qD,\epsilon)$, $\tlp_0<\frac{1}{2}+\epsilon-4\epsilon'$ holds.
\item[$\BadExt_{0,i}$:]When $\qB$ runs $\qExtract(\MPK_0,\MPK_1,\msg_0^*,\msg_1^*,\qD,\epsilon)$, the following conditions hold.
\begin{itemize}
\item $\tlp_0\geq\frac{1}{2}+\epsilon-4\epsilon'$ holds.
\item $x_0^\prime[j]=x_0[j]$ holds for every $j\in[i-1]$.
\item $x_0^\prime[i]\neq x_0[i]$ holds.
\end{itemize}
\item[$\BadExt_{1,i}$:]When $\qB$ runs $\qExtract(\MPK_0,\MPK_1,\msg_0^*,\msg_1^*,\qD,\epsilon)$, the following conditions hold.
\begin{itemize}
\item $\tlp_0\geq\frac{1}{2}+\epsilon-4\epsilon'$ holds.
%\shota{$\frac{1}{2}+\epsilon-4(\lambda+1)\epsilon'$?}\fuyuki{I think $\tlp_0\geq\frac{1}{2}+\epsilon-4\epsilon'$ is correct. This is the condition that $\qExtract$ does not abort.}\shota{OK, I got it. Thanks for the clarification.}
\item $x_0^\prime[j]=x_0[j]$ holds for every $j\in[\secp]$.
\item $x_1^\prime[j]=x_1[j]$ holds for every $j\in[i-1]$.
\item $x_1^\prime[i]\neq x_1[i]$ holds.
\end{itemize}
\end{description}

From the assumption that $\advc{\PKE,\qA}{s}{coic}{kla}(\secp)=\gamma$, for $\tlp_0$ computed in $\qExtract$, $\tlp_0\ge\frac{1}{2}+\epsilon-\epsilon^\prime$ holds with probability $\gamma-\negl(\secp)$ due to the first item of \cref{thm:api_property}.
This means that $\Pr[\BadDec]\le1-\gamma+\negl(\secp)$. %\takashi{$\le$ instead of $=$ because of the gap between $-\epsilon'$ and $-4\epsilon'$?}\fuyuki{Yes, fixed.}
Then, we have
\begin{align}
\adva{\CPFE,\qB}{BZ}(1^\secp)&\ge1-\left(\Pr[\BadDec]+\sum_{i\in[\secp]}\Pr[\BadExt_{0,i}]+\sum_{i\in[\secp]}\Pr[\BadExt_{1,i}]\right)\\
&\ge\gamma - \negl(\secp) -\left(\sum_{i\in[\secp]}\Pr[\BadExt_{0,i}]+\sum_{i\in[\secp]}\Pr[\BadExt_{1,i}]\right).
\end{align}

\paragraph{Estimation of $\Pr[\BadExt_{0,i}]$ for every $i\in[\secp]$.}
We first estimate $\Pr[\BadExt_{0,1}]$.
We first consider the case of $x_0[1]=0$.
From the first item of the event, we have $\tlp_{0}>\frac{1}{2}+\epsilon-4\epsilon'$.
Let $\tlp'_{0} \gets \API_{\cP,D}^{\epsilon' ,\delta'}(\qstateq_{0,0})$.
From the almost-projective property of $\API$, we have
\begin{align}
\Pr[\tlp'_{0}>\frac{1}{2}+\epsilon-4\epsilon'-\epsilon']\geq1-\delta'.
\end{align}
\begin{lemma}\label{lem:ind_cpfe_0}
When $x_0[1]=0$, $D_{0,1}$ is computationally indistinguishable from $D$.
\end{lemma}
\begin{proof}
The difference between $D_{0,1}$ and $D$ is that $\ct_0$ is generated as  $\ct_0\la\CPFE.\Enc(\MPK_0,C^*[a,b,\msg_0,\msg_1,1])$ in $D_{0,1}$ and it is generated as  $\ct_0\la\CPFE.\Enc(\MPK_0,C[\msg_{a\oplus b}])$ in $D$.
From the condition that $x_0[1]=0$, we have $C^*[a,b,\msg_0,\msg_1,1](x_0)=C[\msg_{a\oplus b}](x_0)=\msg_{a\oplus b}$.
Thus, from the $1$-bounded security of $\CPFE$, $D_{0,1}$ and $D$ are computationally indistinguishable when $x_0[1]=0$.
\end{proof}
Thus, from \cref{cor:cind_sample_api} and \cref{lem:ind_cpfe_0}, we have
\begin{align}
1-\delta'\leq
\Pr[\tlp'_{0}>\frac{1}{2}+\epsilon-5\epsilon']
\leq\Pr[\tlp_{0,1}>\frac{1}{2}+\epsilon-8\epsilon']+\negl(\secp).
\end{align}
This means that $\Pr[\BadExt_{0,1}]=\negl(\secp)$ when $x_0[1]=0$.
We next consider the case of $x_0[1]=1$.
We define the following distribution $\Drev$.
\begin{description}
\item[$\Drev$:] Generate $(b,\ct_0,\ct_1)\gets D$. Output $(1\oplus b,\ct_0,\ct_1)$.
\end{description}
That is, the first bit of the output is flipped from $D$.
Then, for any random coin $r$, we have $(\mat{P}_{\Drev(r)},\mat{Q}_{\Drev(r)})=(\mat{Q}_{D(r)},\mat{P}_{D(r)})$.
This is because we have $\mat{Q}_{b,\ct_0,\ct_1}=\mat{I}-\mat{P}_{b,\ct_0,\ct_1}=\mat{P}_{1\oplus b,\ct_0,\ct_1}$ for any tuple $(b,\ct_0,\ct_1)$.
Therefore, $ \API_{\cP,\Drev}^{\epsilon' ,\delta'}$ is exactly the same process as  $\API_{\cPrev,D}^{\epsilon' ,\delta'}$, where $\cPrev=(\mat{Q}_{b,\ct_0,\ct_1},\mat{P}_{b,\ct_0,\ct_1})_{b,\ct_0,\ct_1}$.
Let $\tlp'_{0} \gets \API_{\cP,\Drev}^{\epsilon' ,\delta'}(\qstateq_{0,0})$.
From, the reverse-almost-projective property of $\API$, we have
\begin{align}
\Pr[\tlp'_{0}<\frac{1}{2}-\epsilon+4\epsilon'+\epsilon']\geq1-\delta'.
\end{align}
\begin{lemma}\label{lem:ind_cpfe_1}
When $x_0[1]=1$, $D_{0,1}$ is computationally indistinguishable from $\Drev$.
\end{lemma}
\begin{proof}
We see that $\Drev$ is identical to the following distribution.
\begin{itemize}
\item Generate $a,b\la\bit$ and $\ct_0\la\Enc(\ek_0,\msg_a)$ and $\ct_1\la \Enc(\ek,\msg_{a\oplus 1\oplus b})$. Output $(b,\ct_0,\ct_1)$.
\end{itemize}
Then, the difference between $D_{0,1}$ and $\Drev$ is that $\ct_0$ is generated as  $\ct_0\la\CPFE.\Enc(\MPK_0,C^*[a,b,\msg_0,\msg_1,1])$ in $D_{0,1}$ and it is generated as  $\ct_0\la\CPFE.\Enc(\MPK_0,C[\msg_{a\oplus 1\oplus b}])$ in $\Drev$.
From the condition that $x_0[1]=1$, we have $C^*[a,b,\msg_0,\msg_1,1](x_0)=C[\msg_{a\oplus 1\oplus b}](x_0)=\msg_{a\oplus 1\oplus b}$.
Thus, from the $1$-bounded security of $\CPFE$, $D_{0,1}$ and $\Drev$ are computationally indistinguishable when $x_0[1]=1$.
\end{proof}
Thus, from \cref{cor:cind_sample_api} and \cref{lem:ind_cpfe_1}, we have
\begin{align}
1-\delta'\leq
\Pr[\tlp'_{0}<\frac{1}{2}-\epsilon+5\epsilon']
\leq\Pr[\tlp_{0,1}<\frac{1}{2}-\epsilon+8\epsilon']+\negl(\secp).
\end{align}
This means that $\Pr[\BadExt_{0,1}]=\negl(\secp)$ when $x_0[1]=1$.

Overall, $\Pr[\BadExt_{0,1}]=\negl(\secp)$ regardless of the value of $x_0$.
We can similarly show that $\Pr[\BadExt_{0,i}]=\negl(\secp)$ for $i\in\{2,\cdots,\secp\}$ using the fact that $D_{0,i}$ is computationally indistinguishable from $D$ if $x_0[i]=0$ and it is computationally indistinguishable from $\Drev$ if $x_0[i]=1$.
We omit the details.

\paragraph{Estimation of $\Pr[\BadExt_{1,i}]$ for every $i\in[\secp]$.}
We estimate $\Pr[\BadExt_{1,1}]$.
We first consider the case of $x_0[\secp]=0$ and $x_1[1]=0$.
From the second item of the event, we have $\tlp_{0,\secp}>\frac{1}{2}+\epsilon-4(\secp+1)\epsilon'$.
Let $\tlp'_{0,\secp} \gets \API_{\cP,D_{0,\secp}}^{\epsilon' ,\delta'}(\qstateq_{0,\secp})$.
From, the almost-projective property of $\API$, we have
\begin{align}
\Pr[\tlp'_{0,\secp}>\frac{1}{2}+\epsilon-4(\secp+1)\epsilon'-\epsilon']\geq1-\delta'.
\end{align}
\begin{lemma}\label{lem:ind_cpfe_00}
When $x_0[\secp]=x_1[1]=0$, $D_{0,\secp}$ and $D_{1,1}$ are computationally indistinguishable.
\end{lemma}
\begin{proof}
We can show that $D_{0,\secp}$ is computationally indistinguishable from $D$ when $x_0[\secp]=0$ similarly to \cref{lem:ind_cpfe_0}.
We see that $D$ is identical to the following distribution.
\begin{itemize}
\item Generate $a,b\la\bit$ and $\ct_0\la\Enc(\ek_0,\msg_{a\oplus b})$ and $\ct_1\la\Enc(\ek,\msg_{a})$. Output $(b,\ct_0,\ct_1)$.
\end{itemize}
Then, the difference between $D_{1,1}$ and $D$ is that $\ct_1$ is generated as  $\ct_1\la\CPFE.\Enc(\MPK_1,C^*[a,b,\msg_0,\msg_1,1])$ in $D_{1,1}$ and it is generated as  $\ct_1\la\CPFE.\Enc(\MPK_1,C[\msg_{a\oplus b}])$ in $D$.
From the condition that $x_1[1]=0$, we have $C^*[a,b,\msg_0,\msg_1,1](x_1)=C[\msg_{a\oplus b}](x_1)=\msg_{a\oplus b}$.
Thus, from the $1$-bounded security of $\CPFE$, $D_{1,1}$ and $D$ are computationally indistinguishable when $x_0[1]=0$.
This means that $D_{0,\secp}$ and $D_{1,1}$ are computationally indistinguishable when $x_0[\secp]=x_1[1]=0$.
\end{proof}
Thus, from \cref{cor:cind_sample_api} and \cref{lem:ind_cpfe_00}, we have
\begin{align}
1-\delta'\leq
\Pr[\tlp'_{0,\secp}>\frac{1}{2}+\epsilon-(4\secp+5)\epsilon']
\leq\Pr[\tlp_{1,1}>\frac{1}{2}+\epsilon-4(\secp+2)\epsilon']+\negl(\secp).
\end{align}
This means that $\Pr[\BadExt_{1,1}]=\negl(\secp)$ when $x_0[\secp]=0$ and $x_1[1]=0$.
We next consider the case of $x_0[\secp]=0$ and $x_1[1]=1$.
We define the following distribution $\Drev_{0,\secp}$.
\begin{description}
\item[$\Drev_{0,\secp}$:] Generate $(b,\ct_0,\ct_1)\gets D_{0,\secp}$. Output $(1\oplus b,\ct_0,\ct_1)$.
\end{description}
That is, the first bit of the output is flipped from $D_{0,\secp}$.
Then, for any random coin $r$, we have $(\mat{P}_{\Drev_{0,\secp}(r)},\mat{Q}_{\Drev_{0,\secp}(r)})=(\mat{Q}_{D_{0,\secp}(r)},\mat{P}_{D_{0,\secp}(r)})$.
(Again, this is because we have $\mat{Q}_{b,\ct_0,\ct_1}=\mat{I}-\mat{P}_{b,\ct_0,\ct_1}=\mat{P}_{1\oplus b,\ct_0,\ct_1}$ for any tuple $(b,\ct_0,\ct_1)$.)
Therefore, $ \API_{\cP,\Drev_{0,\secp}}^{\epsilon' ,\delta'}$ is exactly the same process as  $\API_{\cPrev,D_{0,\secp}}^{\epsilon' ,\delta'}$, where $\cPrev=(\mat{Q}_{b,\ct_0,\ct_1},\mat{P}_{b,\ct_0,\ct_1})_{b,\ct_0,\ct_1}$.
Let $\tlp'_{0,\secp} \gets \API_{\cP,\Drev_{0,\secp}}^{\epsilon' ,\delta'}(\qstateq_{0,\secp})$.
From, the reverse-almost-projective property of $\API$, we have
\begin{align}
\Pr[\tlp'_{0,\secp}<\frac{1}{2}-\epsilon+4(\secp+1)\epsilon'+\epsilon']\geq1-\delta'.
\end{align}
\begin{lemma}\label{lem:ind_cpfe_01}
When $x_0[\secp]=0$ and $x_1[1]=1$, $\Drev_{0,\secp}$ and $D_{1,1}$ are computationally indistinguishable.
\end{lemma}
\begin{proof}
We can show that both $\Drev_{0,\secp}$ and $D_{1,1}$ are computationally indistinguishable from $\Drev$ when $x_0[\secp]=0$ and $x_1[1]=1$.
The proof is similarly to those for \cref{lem:ind_cpfe_0,lem:ind_cpfe_1,lem:ind_cpfe_00}, thus we omit the details.
\end{proof}

Thus, from \cref{cor:cind_sample_api} and \cref{lem:ind_cpfe_01}, we have
\begin{align}
1-\delta'\leq
\Pr[\tlp'_{0,\secp}<\frac{1}{2}-\epsilon+(4\secp+5)\epsilon']
\leq\Pr[\tlp_{1,1}<\frac{1}{2}-\epsilon+4(\secp+2)\epsilon']+\negl(\secp).
\end{align}
This means that $\Pr[\BadExt_{1,1}]=\negl(\secp)$ when $x_0[\secp]=0$ and $x_1[1]=1$.

Similarly, we can show that $\Pr[\BadExt_{1,1}]=\negl(\secp)$ holds when $(x_0[\secp],x_1[1])=(1,0)$ and $(x_0[\secp],x_1[1])=(1,1)$.
Moreover, we can show that $\Pr[\BadExt_{1,i}]=\negl(\secp)$ holds for $i\in\{2,\cdots,\secp\}$.

\medskip

From the above discussion, we have $\adva{\CPFE,\qB}{BZ}(1^\secp)\ge\gamma-\negl(\secp)$ for some inverse polynomial $\gamma$, which contradicts \cref{lem:BZ_CPFE}.
This completes the proof of \cref{thm:scKLA_PKE}.
\end{proof}

% !TEX root = main.tex

\section{Construction of PKE with Secure Key Leasing}\label{sec:PKE-SKL}
In this section, we prove the following theorem: 
\begin{theorem}\label{thm:PKE-SKL_from_PKE}
If there is an IND-CPA secure PKE scheme, then there is an IND-KLA secure PKE-SKL scheme.
\end{theorem}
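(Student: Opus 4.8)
The plan is to assemble the theorem from the machinery developed in the preceding sections, with the only genuinely new work being the construction and analysis of an intermediate one-wayness scheme. By \Cref{thm:one-query-ow_to_full} it suffices to exhibit a 1-query OW-KLA secure PKE-SKL scheme starting from IND-CPA secure PKE. First I would chain the earlier implications to obtain a CoIC-KLA secure PKE scheme: from IND-CPA PKE, \Cref{thm:CPFE_from_PKE} yields a $1$-bounded secure CPFE for $\Ppoly$; \Cref{thm:scKLA_PKE} upgrades this to an $\epsilon$-strong CoIC-KLA secure PKE for every inverse polynomial $\epsilon$; and the theorem relating strong and ordinary CoIC-KLA security then gives a plain CoIC-KLA secure PKE scheme $\PKE$.

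Next I would construct the intermediate scheme $\Basic$ from $\PKE$ as sketched in the overview: $\Basic.\qKG$ runs $\PKE.\KG$ twice to obtain $(\ek_0,\dk_0),(\ek_1,\dk_1)$, sets $\ek\seteq(\ek_0,\ek_1)$, $\qdk\seteq\frac{1}{\sqrt 2}(\ket{0}\ket{\dk_0}+\ket{1}\ket{\dk_1})$, and $\vk\seteq(\dk_0,\dk_1)$; encryption encrypts the message under both encryption keys; decryption runs $\PKE.\Dec$ coherently in both branches, leaving $\qdk$ essentially undisturbed via the gentle-measurement argument of \Cref{rem:reusability}; and $\qVrfy$ performs the projective measurement onto $\Pivrfy$. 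I would then prove that $\Basic$ is $1/2$-OW-KLA secure, meaning no QPT adversary simultaneously returns a key accepted by $\Pivrfy$ and recovers the challenge message with probability exceeding $1/2+\negl(\secp)$. The argument follows the game sequence of the overview: defer the verification measurement past the adversary's guess (deferred-measurement principle), relax the winning condition on the message to membership in $\{\msg^\ast,\tilde\msg\}$ for a fresh $\tilde\msg$, switch $\ct_1$ from an encryption of $\msg^\ast$ to one of $\tilde\msg$ using CoIC-KLA security, and finally bound the resulting game by $1/2+\negl(\secp)$. This last bound rests on the fact that measuring $\qdk$ in the computational basis before handing it over costs only a factor $2$ in success probability by \Cref{lem:BZ}, so an advantage exceeding $1/2$ would let one extract $(\dk_0,\tilde\msg)$ or $(\dk_1,\msg^\ast)$ and contradict the one-wayness implied by CoIC-KLA security.

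Finally I would amplify $\Basic$ to a full 1-query OW-KLA secure scheme by $\secp$-fold parallel repetition: generate $\secp$ independent key tuples of $\Basic$, concatenate the encryption, decryption, and verification keys, split a message into $\secp$ blocks, and encrypt each block under the corresponding instance, with verification accepting iff every coordinate's $\Pivrfy$-measurement accepts. The security proof reuses the structure of the $1/2$-OW-KLA analysis but leverages all $\secp$ coordinates to drive the recovery probability down to $\negl(\secp)$. Applying \Cref{thm:one-query-ow_to_full} to the resulting scheme then produces an IND-KLA secure PKE-SKL scheme, completing the proof.

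I expect the parallel-repetition amplification to be the main obstacle. As the overview flags, this step does not appear to follow from $1/2$-OW-KLA security in a black-box way; the proof must open up the specific superposition structure of the keys, arguing that passing the product verification $\bigotimes_{i}\Pivrfy^{(i)}$ forces each coordinate's returned key to be genuinely non-classical, so that a simultaneous post-verification measurement of all $\secp$ branches yields both decryption keys in enough coordinates to contradict the multi-instance analogue of the extraction fact. Carefully managing the interaction between the single deferred verification query, the quantum auxiliary state, and the accumulation of error across the $\secp$ instances is where the delicate part of the analysis will lie.
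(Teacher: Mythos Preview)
Your proposal is correct and matches the paper's approach: the construction is exactly the $\secp$-fold parallel repetition of $\Basic$, and the security proof is the hybrid sequence you outline (relax winning condition, invoke CoIC-KLA, defer verification, switch to computational-basis measurement, then \cref{lem:BZ} plus OW-CPA). The one organizational difference is that the paper does not first establish $1/2$-OW-KLA for $\Basic$ and then amplify; it analyzes the parallel scheme directly in a single hybrid chain, with the crucial amplification step being a Cauchy--Schwarz computation showing that if the returned key is accepted by $\Pi_{\mathrm{vrfy}}^{\vk}$ with probability $\ge \epsilon/2$, then measuring it in the computational basis lands on some $\mathbf{b}'\ne\mathbf{b}$ with probability $\ge \epsilon/4-2^{-\secp}$ (this is exactly the ``opening up the superposition structure'' you anticipate).
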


By~\Cref{thm:one-query-ow_to_full}, it suffices to construct 1-query OW-KLA secure PKE-SKL scheme. In the rest of this section, we construct such a scheme.  To build our scheme, we rely on a PKE scheme satisfying $\cKLA$ security, which is constructed from any IND-CPA secure PKE scheme in Section \ref{sec:PKE-CoIC}.

%construct a PKE-SKL scheme satisfying $\owKLA$ security. Using Lemma \ref{lem:ow-ind} this can then be upgraded to a scheme satisfying full-fledged $\indKLA$ security. To build our scheme, we rely on a PKE scheme satisfying $\cKLA$ security, which is constructed in Section \ref{sec:PKE-CoIC}.

Let $\cPKE=(\cPKE.\KG,\cPKE.\Enc,\cPKE.\Dec)$ be a PKE scheme satisfying $\cKLA$ security with message space $\bit^\ell$ where $\ell=\omega(\log \secp)$. We note that $\cKLA$ security implies OW-CPA security when $\ell=\omega(\log \secp)$. (See \cref{sec:OW_from_CoIC} for the proof.) %\takashi{Do we need to prove this?}\shweta{Maybe just a line in a footnote would suffice. But it is in general better to justify such claims, even if briefly.}\takashi{I have no idea on how to explain it briefly, so I simply wrote the full proof in Appendix.}
Then, we construct a PKE-SKL scheme $(\PKESKL.\qKG, \PKESKL.\Enc, \PKESKL.\qDec,\PKESKL.\qVrfy)$ with message space $\bit^{\secp\ell}$ as follows.

\begin{description}
\item[$\PKESKL.\qKG(1^\secp)$:] $ $
\begin{itemize}
    \item Generate $(\cPKE.\ek_{i,b},\cPKE.\dk_{i,b})\gets \cPKE.\KG(1^\secp)$ for $i\in [\secp]$ and $b\in \bit$. 
    \item Output an encryption key 
    \[\ek\seteq \{\cPKE.\ek_{i,b}\}_{i\in[\secp],b\in\bit},\] 
    a decryption key 
    \[\qdk\seteq \bigotimes_{i\in[\secp]}\frac{1}{\sqrt{2}}\left(\ket{0}\ket{\cPKE.\dk_{i,0}}+\ket{1}\ket{\cPKE.\dk_{i,1}}\right),\]
    and a verification key 
    \[\vk\seteq \{\cPKE.\dk_{i,b}\}_{i\in[\secp],b\in\bit}.\]
    For convenience, we write $\qregidx{DK}{i}$ to mean the registers of $\qdk$ that contains $\frac{1}{\sqrt{2}}\left(\ket{0}\ket{\cPKE.\dk_{i,0}}+\ket{1}\ket{\cPKE.\dk_{i,1}}\right)$ for $i\in[\secp]$.
\end{itemize}

\item[$\PKESKL.\Enc(\ek,\msg)$:] $ $
\begin{itemize}
    \item Parse $\ek=\{\cPKE.\ek_{i,b}\}_{i\in[\secp],b\in\bit}$ and  $\msg = \msg_1\concat \ldots \concat\msg_\secp$   where $\msg_i\in \bit^\ell$ for each $i\in [\secp]$. 
%    \item Set $\msg_{i,b}\seteq \msg_i$ for $i\in [\secp]$ and $b\in \bit$.
    \item Generate
$\cPKE.\ct_{i,b}\gets \cPKE.\Enc(\cPKE.\ek_{i,b},\msg_{i})$ for $i\in [\secp]$ and $b\in \bit$.
\item Output $\ct\seteq \{\cPKE.\ct_{i,b}\}_{i\in [\secp],b\in \bit}$. 
\end{itemize}
 
\item[$\PKESKL.\qDec(\qdk,\ct)$:] $ $
\begin{itemize}
   \item Parse  $\qdk= \bigotimes_{i\in[\secp]}\qdk_i$ and $\ct= \{\cPKE.\ct_{i,b}\}_{i\in [\secp],b\in \bit}$.
    \item Let $U_{\mathrm{dec}}$ be a unitary such that for all $\cPKE.\dk'$, $\cPKE.\ct'_0$, and $\cPKE.\ct'_1$:
\begin{align}
  &\ket{b}\ket{\cPKE.\dk'}\ket{\cPKE.\ct'_0,\cPKE.\ct'_1}\ket{0}\\  \xrightarrow{U_{\mathrm{dec}}}
   &\ket{b}\ket{\cPKE.\dk'}\ket{\cPKE.\ct'_0,\cPKE.\ct'_1}\ket{\cPKE.\Dec(\cPKE.\dk',\cPKE.\ct'_b)} 
\end{align}

Note that such a unitary can be computed in quantum polynomial-time since we assume that $\cPKE.\Dec$ is a  deterministic classical polynomial-time algorithm. 
\item For all $i\in[\secp]$, 
generate
\[
U_{\mathrm{dec}}\left(\qdk_i\otimes \ket{\cPKE.\ct_{i,0},\cPKE.\ct_{i,1}}\bra{\cPKE.\ct_{i,0},\cPKE.\ct_{i,1}}\otimes \ket{0}\bra{0}   \right)U_{\mathrm{dec}}^\dagger,
\]
measure the rightmost register, and let $\msg'_i$ be the measurement outcome. 
\item Output $\msg'\seteq \msg'_1\concat\ldots\concat \msg'_\secp$. 
\end{itemize}

\item[$\PKESKL.\qVrfy(\vk,\widetilde{\qdk})$:] $ $
\begin{itemize}
    \item Parse $\vk=\{\cPKE.\dk_{i,b}\}_{i\in[\secp],b\in\bit}$.
    \item Apply a binary-outcome measurement $( \mat{I}-\Pi_{\mathrm{vrfy}}^{\vk},\Pi_{\mathrm{vrfy}}^{\vk})$ on $\widetilde{\qdk}$   %$\left\{\Pi_{\mathrm{vrfy}}^{\vk},I-\Pi_{\mathrm{vrfy}}^{\vk}\right\}$ on $\widetilde{\qdk}$ 
where $\Pi_{\mathrm{vrfy}}^{\vk}$ is the projection onto the right decryption key, i.e., 
\begin{align}
 \Pi_{\mathrm{vrfy}}^{\vk}\seteq 
 \bigotimes_{i\in[\secp]}\left(\frac{1}{\sqrt{2}}\left(\ket{0}\ket{\cPKE.\dk_{i,0}}+\ket{1}\ket{\cPKE.\dk_{i,1}}\right)\right)  
 \left(\frac{1}{\sqrt{2}}\left(\bra{0}\bra{\cPKE.\dk_{i,0}}+\bra{1}\bra{\cPKE.\dk_{i,1}}\right)\right).  
\end{align}
If the measurement outcome is $1$ (indicating that the state was projected onto $\Pi_{\mathrm{vrfy}}^{\vk}$), output $\top$ and otherwise output $\bot$.  
\end{itemize}
\end{description}

The correctness of $\SKL$ easily follows from that of 
$\cPKE$.
Below, we show that $\SKL$ is 1-query  OW-KLA secure. 

\begin{theorem}\label{thm:PKE_OW-KLA}
If $\cPKE$ is $\cKLA$ secure, then $\SKL$ is 1-query OW-KLA secure. 
\end{theorem}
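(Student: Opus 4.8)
The plan is to establish 1-query OW-KLA security of $\SKL$ directly, via a sequence of games that lifts the single-instance argument sketched for $\Basic$ in the technical overview to the $\secp$-block scheme, ultimately reducing to the $\cKLA$ security (\cref{def:cKLA_PKE}) of $\cPKE$ together with \cref{lem:BZ}. Fix a QPT adversary $\qA$ making a single query to $\Oracle{\qVrfy}$. Since the challenger releases the challenge ciphertext only when the recorded flag $V$ equals $\returned$, we may assume without loss of generality that $\qA$'s one verification query is made before requesting the challenge, and that $\qA$ always proceeds to request the challenge regardless of the bit $d$ returned by the oracle (if $d=\bot$ it loses in either case). Hence $\qA$'s subsequent behaviour is independent of $d$, and by the deferred-measurement principle the challenger may postpone the projective verification measurement $(\mat{I}-\Pi_{\mathrm{vrfy}}^{\vk},\Pi_{\mathrm{vrfy}}^{\vk})$ on the returned register to the very end of the experiment without changing the winning probability. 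Call this Game~$0$; its winning event is ``the returned key is projected onto $\Pi_{\mathrm{vrfy}}^{\vk}=\bigotimes_i \Pi_i$ and $\msg'=\msg^*$''.

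Next I would relax the target and decouple the challenge. In Game~$1$ I sample a fresh uniform $\tilde\msg_i\gets\bit^\ell$ per block and weaken the winning condition to ``$\msg'_i\in\{\msg^*_i,\tilde\msg_i\}$ for every $i$'', which can only increase the probability. In Game~$2$ I replace each $\cPKE.\ct_{i,1}=\cPKE.\Enc(\cPKE.\ek_{i,1},\msg^*_i)$ by $\cPKE.\Enc(\cPKE.\ek_{i,1},\tilde\msg_i)$. The transition from Game~$1$ to Game~$2$ is handled by a hybrid over the $\secp$ blocks: each block of $\SKL$ is exactly a $\cKLA$ instance, consisting of the two keys $\cPKE.\ek_{i,0},\cPKE.\ek_{i,1}$ and the superposed decryption key $\tfrac1{\sqrt2}(\ket0\ket{\cPKE.\dk_{i,0}}+\ket1\ket{\cPKE.\dk_{i,1}})$. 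A reduction embeds the $\cKLA$ challenge into block $i$, generates every other block itself, answers $\qA$'s single (deferred) verification query by applying the projections $\Pi_j$ it knows for $j\ne i$ and routing block $i$ to the single $\cKLA$ verification oracle, and finally tests the symmetric winning predicate. The symmetry of the relaxed condition is precisely what tames the internal bit $a$ of the $\cKLA$ game: the winning check depends only on the unordered pair $\{\msg^*_i,\tilde\msg_i\}$ and is invariant under the branch relabelling $0\leftrightarrow1$, so it is insensitive to how the $\cKLA$ challenger places the two messages, while the consistent/inconsistent bit $b$ exactly toggles block $i$ between its Game~$1$ and Game~$2$ form. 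Each hybrid step thus costs at most $\advb{\cPKE,\qB_i}{coic}{kla}(\secp)=\negl(\secp)$.

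It then remains to bound the Game~$2$ winning probability by $\negl(\secp)$. There the state handed to $\qA$ is $\bigotimes_i\tfrac1{\sqrt2}(\ket0\ket{\cPKE.\dk_{i,0}}+\ket1\ket{\cPKE.\dk_{i,1}})$, and in each block the two ciphertexts encrypt the \emph{distinct} messages $\msg^*_i$ (under $\cPKE.\ek_{i,0}$) and $\tilde\msg_i$ (under $\cPKE.\ek_{i,1}$), which differ except with probability $2^{-\ell}=\negl(\secp)$ since $\ell=\omega(\log\secp)$. Conditioned on the final verification succeeding, the returned register is projected onto the rank-one $\Pi_{\mathrm{vrfy}}^{\vk}$, so measuring it in the computational basis yields, in each block $i$, a genuine branch/key pair $(e_i,\cPKE.\dk_{i,e_i})$. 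Pairing this with $\qA$'s output gives, per block, $(\cPKE.\dk_{i,e_i},\msg'_i)$, which I call \emph{forbidden} when $\msg'_i$ is the message decryptable only by the opposite key $\cPKE.\dk_{i,1-e_i}$. The needed generalization of Fact~$1$ asserts that, given the superposed keys and the inconsistent ciphertexts, no QPT algorithm can produce a forbidden pair in \emph{every} block except with negligible probability; this is what contradicts OW-CPA security of $\cPKE$ (which $\cKLA$ implies for $\ell=\omega(\log\secp)$), proved by collapsing the superposed keys through \cref{lem:BZ} so that each block retains a single $\cPKE.\dk_{i,\cdot}$, whereupon recovering the opposite-key message amounts to an OW-CPA break.

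The hard part is exactly this last step, and it is where the proof must exploit the concrete structure of $\SKL$ rather than treat $\Basic$ as a black box. The naive route — multiplying the per-block $1/2$ bound, or extracting a forbidden pair in all blocks at once from the computational-basis measurement of the returned key — loses a factor $2^{\secp}$, because the measured branch mismatches $\qA$'s output in each block independently with probability about $1/2$, and \cref{lem:BZ} applied to all $\secp$ collapsed branches simultaneously likewise pays $2^{\secp}$; this is useless against a merely inverse-polynomial advantage. Indeed the trivial strategy ``measure every branch, return the collapsed key, decrypt the one readable ciphertext per block'' already wins Game~$2$ with probability exactly $2^{-\secp}$, so the target bound is genuinely $\negl(\secp)$ and not $1/2$. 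I would therefore prove the forbidden-pair lemma directly on the tensored state, arguing that passing all $\secp$ verifications forces the returned key to stay in equal superposition in each block, so that recovering the opposite-key message in even one block cannot be reconciled with a single collapsed branch without breaking OW-CPA of some $\cPKE$ instance; organizing the conditioning so that only polynomially many factors are lost is the crux of the argument, consistent with the paper's own remark that this amplification is non-black-box.
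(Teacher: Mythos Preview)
Your Games~0--2 track the paper's $\hybi{0}$--$\hybi{4}$ correctly (you defer the verification measurement before rather than after the $\cKLA$ hybrid, which is harmless, and your symmetry argument for absorbing the hidden bit $a$ is right). The gap is in bounding Game~2. Your first attempt is to condition on verification success---i.e.\ project onto $\Pi_{\mathrm{vrfy}}^{\vk}$---and \emph{then} measure in the computational basis. But $\Pi_{\mathrm{vrfy}}^{\vk}$ is rank one: the post-projection state is exactly the honest key regardless of what $\qA$ did, so the measured branches $e_i$ are i.i.d.\ uniform and completely decorrelated from $\msg'$. You correctly diagnose that this makes the ``forbidden in every block'' target worthless, but your fallback (``prove the lemma directly on the tensored state\ldots organizing the conditioning so that only polynomially many factors are lost'') is a statement of intent, not a mechanism.

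The paper's move is to \emph{replace} the verification projection by a direct computational-basis measurement of the returned state $\widetilde{\qdk}$ itself, not of the post-projection state; the outcomes $\widetilde{b}_i$ then retain whatever correlation $\qA$ planted between $\widetilde{\qdk}$ and its internal registers. The missing lemma is purely information-theoretic: any state with $\Tr(\Pi_{\mathrm{vrfy}}^{\vk}\,\widetilde{\qdk})\ge\epsilon/2$ satisfies $\Tr(\Pi^{\vk}_{\ne\mathbf{b}}\,\widetilde{\qdk})\ge\epsilon/4-2^{-\secp}$ for every fixed branch string $\mathbf{b}$, because the honest cat-state key has amplitude only $2^{-\secp/2}$ on each computational branch (Cauchy--Schwarz). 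This converts ``verification would have passed'' into ``the computational-basis measurement of $\widetilde{\qdk}$ yields $\widetilde{b}_i\ne b_i$ for \emph{at least one} $i$'' with only a constant-factor loss. Only after this step does one guess the mismatching index (factor $\secp$), collapse \emph{that single block} of the input key via \cref{lem:BZ} (factor $2$), and finish with an OW-CPA reduction. You gesture at the guess-and-collapse portion in your last paragraph, but without the replacement measurement and the overlap-to-mismatch inequality there is nothing for those moves to act on.
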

\begin{proof}  
%Suppose that there is a QPT adversary $\qA$ that breaks the OW-KLA security of $\SKL$. 
Let $\qA$ be a QPT adversary against 1-query OW-KLA security of $\SKL$. 
By \cref{rem:no-post-challenge-query}, 
we assume that $\qA$ makes the verification query before receiving the challenge ciphertext without loss of generality. 
We consider the following sequence of hybrids. 
\begin{description}
\item[$\hybi{0}$:] This is the same as $\expb{\PKESKL,\qA}{ow}{kla}(1^\secp)$. More specifically, it works as follows.
\begin{enumerate}
            \item  The challenger generates $(\cPKE.\ek_{i,b},\cPKE.\dk_{i,b})\gets \cPKE.\KG(1^\secp)$ for $i\in [\secp]$ and $b\in \bit$,
            sets 
         $\ek\seteq \{\cPKE.\ek_{i,b}\}_{i\in[\secp],b\in\bit}$ and
         $\qdk\seteq \bigotimes_{i\in[\secp]}\frac{1}{\sqrt{2}}\left(\ket{0}\ket{\cPKE.\dk_{i,0}}+\ket{1}\ket{\cPKE.\dk_{i,1}}\right)$,   
         and sends $\ek$ and $\qdk$ to $\qA$.  
            \item \label{step:send_qdk}
            $\qA$ queries  $\widetilde{\qdk}$
            to the verification oracle. 
            The challenger applies a binary-outcome measurement $(\mat{I}-\Pi_{\mathrm{vrfy}}^{\vk},\Pi_{\mathrm{vrfy}}^{\vk})$ on $\widetilde{\qdk}$where $\Pi_{\mathrm{vrfy}}^{\vk}$ is the projection defined in the description of $\PKESKL.\qVrfy$. If the measurement outcome is $0$ (indicating that the state was projected onto $\mat{I}-\Pi_{\mathrm{vrfy}}^{\vk}$), the challenger outputs $0$ as the final outcome of this experiment.\footnote{In the description of the OW-KLA experiment in \cref{def:OW-CPA_PKESKL}, the oracle returns $\bot$ even if the decryption key does not pass the verification. However, in the 1-query setting, if the first (and only) query is rejected, the experiment finally outputs $0$. Thus, we terminate the experiment at this point when the query is rejected.} 
            Otherwise, the challenger returns $\top$ to $\qA$ as the response from the oracle.
          \item  The challenger chooses $\msg^*_i\gets \bit^{\ell}$ for $i\in [\secp]$,  
           generates
$\cPKE.\ct^*_{i,b}\gets \cPKE.\Enc(\cPKE.\ek_{i,b},\msg^*_{i})$ for $i\in [\secp]$ and $b\in \bit$, and sends $\ct^*\seteq \{\cPKE.\ct^*_{i,b}\}_{i\in [\secp],b\in \bit}$  to $\qA$.\footnote{Since $\qA$ makes only one verification query, we can assume that $\qA$ requests the challenge ciphertext immediately after finishing the first verification query without loss of generality.}
            \item 
            \label{step_finalize}
            $\qA$ outputs $\msg'=\msg'_1\concat \ldots\concat \msg'_\secp$. The challenger outputs $1$ if $\msg'_i=\msg^*_i$ for all $i\in [\secp]$ and otherwise $0$ as the final outcome of this experiment.  
        \end{enumerate}
Note that we have $\Pr[\hybi{0}=1]=\advb{\PKESKL,\qA}{ow}{kla}(1^\secp)$. Our goal is to prove $\Pr[\hybi{0}=1]=\negl(\secp)$.
\item[$\hybi{1}$:] This is identical to  $\hybi{0}$ except for the following modifications:
\begin{itemize}
    \item The challenger chooses $\msg^*_{i,b}\gets \bit^\ell$ for $i\in[\secp]$ and $b\in \bit$ (instead of choosing $\msg^*_{i}\gets \bit^\ell$ for $i\in [\secp]$) and $a_i\gets \bit$ for $i\in[\secp]$. 
    \item 
    $\cPKE.\ct^*_{i,b}$ is generated as 
$\cPKE.\ct^*_{i,b}\gets \cPKE.\Enc(\cPKE.\ek_{i,b},\msg^*_{i,a_i})$ for $i\in [\secp]$ and $b\in \bit$. We emphasize that $\msg^*_{i,\color{red}\underline{a_i}}$ is encrypted for both cases of $b=0$ and $b=1$ and $\msg^*_{i,a_i\oplus 1}$ is not used in this step. 
\item In Step~\ref{step_finalize}, the challenger outputs $1$ if $\msg'_i\in \{\msg^*_{i,0},\msg^*_{i,1}\}$ for all $i\in [\secp]$.
\end{itemize}
By considering $\msg^*_{i,a_i}$ in $\hybi{1}$ as $\msg^*_i$ in $\hybi{0}$, these hybrids are identical from the view of $\qA$ except that the winning condition (i.e., the condition that the challenger returns $1$) is just relaxed in $\hybi{1}$. Therefore, we trivially have $\Pr[\hybi{0}=1]\le \Pr[\hybi{1}=1]$. 
\item[$\hybi{2}$:] This is identical to  $\hybi{1}$ except that $\cPKE.\ct^*_{i,b}$ is generated as 
$\cPKE.\ct^*_{i,b}\gets \cPKE.\Enc(\cPKE.\ek_{i,b},\msg^*_{i,a_i \oplus b})$ for $i\in [\secp]$ and $b\in \bit$. We remark that the way of generating $\cPKE.\ct^*_{i,1}$ is changed but that of $\cPKE.\ct^*_{i,0}$ is unchanged (because $a_i\oplus 0 = a_i$).  

By the CoIC-KLA security of $\cPKE$ and a standard hybrid argument, we have 
$\abs{\Pr[\hybi{1}=1] - \Pr[\hybi{2}=1]} = \negl(\secp)$. See \cref{lem:PKE_SKL_1-to-2} for the detail.  
\item[$\hybi{3}$:] This is identical to  $\hybi{2}$ except that 
the challenger quits choosing $a_i\gets \bit$ for $i\in [\secp]$ and 
$\cPKE.\ct^*_{i,b}$ is generated as 
$\cPKE.\ct^*_{i,b}\gets \cPKE.\Enc(\cPKE.\ek_{i,b},\msg^*_{i,b})$ for $i\in [\secp]$ and $b\in \bit$.  

This modification is just conceptual and we have 
$\Pr[\hybi{2}=1] = \Pr[\hybi{3}=1]$. 
\item[$\hybi{4}$:] This is identical to $\hybi{3}$ except for a conceptual modification that the measurement of the returned key $\widetilde{\qdk}$ is deferred until the end of the experiment. For clarity, we give the full description of this experiment.
\begin{enumerate}
            \item  The challenger generates $(\cPKE.\ek_{i,b},\cPKE.\dk_{i,b})\gets \cPKE.\KG(1^\secp)$ for $i\in [\secp]$ and $b\in \bit$,
            sets 
         $\ek\seteq \{\cPKE.\ek_{i,b}\}_{i\in[\secp],b\in\bit}$ and
         $\qdk\seteq \bigotimes_{i\in[\secp]}\frac{1}{\sqrt{2}}\left(\ket{0}\ket{\cPKE.\dk_{i,0}}+\ket{1}\ket{\cPKE.\dk_{i,1}}\right)$,   
         and sends $\ek$ and $\qdk$ to $\qA$. 
            \item \label{step:return_key}
            $\qA$ queries  $\widetilde{\qdk}$
            to the verification oracle. The challenger returns $\top$ to $\qA$ as the response from the oracle.
            \item The challenger chooses $\msg^*_{i,b}\gets \bit^{\ell}$ for $i\in [\secp]$ and $b\in \bit$ %and $a_i\gets \bit$ for $i\in[\secp]$,   
           generates
$\cPKE.\ct^*_{i,b}\gets \cPKE.\Enc(\cPKE.\ek_{i,b},\msg^*_{i,b})$ for $i\in [\secp]$ and $b\in \bit$, and sends $\ct^*\seteq \{\cPKE.\ct^*_{i,b}\}_{i\in [\secp],b\in \bit}$  to $\qA$.
            \item \label{step:check_msg_prime}
            $\qA$ outputs $\msg'=\msg'_1\concat \ldots\concat \msg'_\secp$. 
          The challenger outputs $0$ as the final outcome of this experiment if $\msg'_i\notin \{\msg^*_{i,0},\msg^*_{i,1}\}$ for some $i\in [\secp]$. 
           \item \label{step:finalize_two}
           Otherwise, 
          the challenger applies a binary-outcome measurement $(\mat{I}-\Pi_{\mathrm{vrfy}}^{\vk},\Pi_{\mathrm{vrfy}}^{\vk})$ on $\widetilde{\qdk}$where $\Pi_{\mathrm{vrfy}}^{\vk}$ is the projection defined in the description of $\PKESKL.\qVrfy$. 
          the challenger outputs the outcome of the measurement as the final outcome of this experiment. 
        \end{enumerate} 
        
By the deferred measurement principle, we have $\Pr[\hybi{3}=1] = \Pr[\hybi{4}=1]$. 
\item[$\hybi{5}$:] 
This is identical to $\hybi{4}$ except that
the challenger measures the returned key $\widetilde{\qdk}$ in the \emph{computational basis} instead of applying the projective measurement $(\mat{I}-\Pi_{\mathrm{vrfy}}^{\vk},\Pi_{\mathrm{vrfy}}^{\vk})$ in Step \ref{step:finalize_two}, and the condition to output $1$ is modified as follows: %This hybrid is described as follows. Note that it is identical to $\hybi{3}$ until the end of Step \ref{step:check_msg_prime}.   
%For clarity, we give the full description of this experiment.
\begin{itemize}
            \item 
            Let $\{\widetilde{b}_i,\cPKE.\widetilde{\dk}_i\}_{i\in [\secp]}$  be the outcome of the measurement of $\widetilde{\qdk}$ in the computational basis.  
            If there is $i\in [\secp]$ such that $\cPKE.\widetilde{\dk}_i\ne \cPKE.\dk_{i,\widetilde{b}_i}$, the challenger outputs $0$ as the final outcome of this experiment. 
            Otherwise, define $\mathbf{b}=b_1\concat\ldots\concat b_\secp \in \bit^\secp$ in  such a way that $\msg'_i=\msg^*_{i,b_i}$ for $i\in [\secp]$.  Note that such $\mathbf{b}$ must exist since this step is invoked only when the challenger does not output $0$ in Step~\ref{step:check_msg_prime}.\footnote{If $\msg^*_{i,0}=\msg^*_{i,1}$ (which happens with a negligible probability), then we
            set $b_i\seteq 0$.}  
           %Let $\widetilde{\qdk}'$ be the returned decryption key at this point conditioned on that the challenger does not return $0$ in Step \ref{step:check_msg_prime}. Note that $\widetilde{\qdk}'$  may be a different state from $\widetilde{\qdk}$ due to an entanglement between the returned key and $\qA$'s internal state.  
           %The challenger measures $\widetilde{\qdk}$ in the computational basis, and let $\{\widetilde{b}_i,\cPKE.\widetilde{\dk}_i\}_{i\in [\secp]}$  be the outcome.  
           If there is $i\in [\secp]$ such that $\widetilde{b}_{i}\ne b_{i}$,  %and $\cPKE.\widetilde{\dk}_{i}=\cPKE.\dk_{i,\widetilde{b}_{i}}$, 
           the challenger outputs $1$ and otherwise $0$ as the final output of the experiment. 
           %The challenger applies a binary-outcome measurement specified by $\Pi_{\mathrm{vrfy}}^{\vk}$ on $\widetilde{\qdk}$ where $\Pi_{\mathrm{vrfy}}^{\vk}$ is the projection defined in the description of $\PKESKL.\qVrfy$. The challenger outputs the outcome of the measurement as the final output of this game. 
        \end{itemize}
        
We prove that 
if $\Pr[\hybi{5}=1]=\negl(\secp)$, then it holds that $\Pr[\hybi{4}=1]=\negl(\secp)$.  
The intuition is as follows: If we have $\widetilde{b}_i=b_i$ with overwhelming probability, then 
$\widetilde{\qdk}$ has a negligible amplitude on $\bigotimes_{i\in[\secp]}\ket{b'_i}\ket{\cPKE.\widetilde{\dk}_{b'_i}}$ for all $\mathbf{b}'\ne \mathbf{b}$. 
In this case, the probability that $\widetilde{\qdk}$ is projected onto $\Pi_{\mathrm{vrfy}}^{\vk}$ is negligible since the right key $\qdk$ has an exponentially small amplitude on  $\bigotimes_{i\in[\secp]}\ket{b_i}\ket{\cPKE.\widetilde{\dk}_{b_i}}$.
See \cref{lem:PKE_SKL_4-to-5} for the detail.
\item[$\hybi{6}$:] This is identical to $\hybi{5}$ except that the challenger chooses $i^*\gets [\secp]$  at the beginning of the experiment and the condition to output $1$ is modified to that 
$\widetilde{b}_{i^*}\ne b_{i^*}$ %and %$\cPKE.\widetilde{\dk}_{i^*}=\cPKE.\dk_{i^*,\widetilde{b}_{i^*}}$ 
holds for the a priori chosen $i^*$ instead of for some $i\in [\secp]$.  

Whenever there is $i$ such that $\widetilde{b}_{i}\ne b_{i}$, the probability that $i^*\gets [\secp]$ satisfies $\widetilde{b}_{i^*}\ne b_{i^*}$ is at least $\frac{1}{\secp}$. Thus, we have $\Pr[\hybi{6}=1]\ge \frac{1}{\secp}\Pr[\hybi{5}=1]$.

\item[$\hybi{7}$:] This is identical to $\hybi{6}$ except that challenger measures the register $\qregidx{DK}{i^*}$ of the decryption key $\qdk$ in the computational basis before giving $\qdk$ to $\qA$. (See the description of $\PKESKL.\qKG$ for the definition of register $\qregidx{DK}{i^*}$.)

Note that the measurement of $\qregidx{DK}{i^*}$ in the computational basis yields either $(0,\cPKE.\dk_{i^*,0})$ or $(1,\cPKE.\dk_{i^*,1})$.  In particular, there are only two possible outcomes. 
Thus, by \Cref{lem:BZ}, 
we have $\Pr[\hybi{7}=1]\ge \frac{1}{2}\Pr[\hybi{6}=1]$.

\item[$\hybi{8}$:] This is identical to $\hybi{7}$ except that the collapsing caused by measuring $\qregidx{DK}{i^*}$ is simulated by classical randomness. That is, the challenger chooses $b^*\gets \bit$ at the beginning and sets   
 \[\qdk\seteq \bigotimes_{i\in[\secp]\setminus \{i^*\}}\frac{1}{\sqrt{2}}\left(\ket{0}\ket{\cPKE.\dk_{i,0}}+\ket{1}\ket{\cPKE.\dk_{i,1}}\right)_{\qregidx{DK}{i}}\otimes (\ket{b^*}\ket{\cPKE.\dk_{i^*,b^*}})_{\qregidx{DK}{i^*}}.
\]
%The rest of the hybrid is identical to $\hybi{7}$. 

It is easy to see that $\hybi{7}$ and $\hybi{8}$ are identical from the view of $\qA$, and thus we have $\Pr[\hybi{7}=1]=\Pr[\hybi{8}=1]$. In \cref{lem:PKE_SKL_8}, we prove that $\Pr[\hybi{8}=1]=\negl(\secp)$ by using the OW-CPA security (which is implied by CoIC-KLA security) of $\PKESKL$. 
\end{description}

By combining the above, we have $\Pr[\hybi{0}=1]=\negl(\secp)$. This means that $\PKESKL$ is $\owKLA$ secure. 
We are left to prove \cref{lem:PKE_SKL_1-to-2,lem:PKE_SKL_4-to-5,lem:PKE_SKL_8}
\if0
\begin{lemma}\label{lem:PKE_SKL_0-to-1}
It holds that $\Pr[\hybi{0}=1]\le \Pr[\hybi{1}=1]$.
\end{lemma}
\begin{proof}
By considering $\msg^*_{i,a_i}$ in $\hybi{1}$ as $\msg^*_i$ in $\hybi{0}$, these hybrids are identical from the view of $\qA$ except that the winning condition (i.e., the condition that the challenger returns $1$) is just relaxed in $\hybi{1}$. Therefore, we trivially have $\Pr[\hybi{0}=1]\le \Pr[\hybi{1}=1]$. 
\end{proof}
\fi
\begin{lemma}\label{lem:PKE_SKL_1-to-2}
It holds that $\abs{\Pr[\hybi{1}=1] - \Pr[\hybi{2}=1]} = \negl(\secp)$ if $\cPKE$ is $\cKLA$ secure.
\end{lemma}
\begin{proof}
%This can be reduced to the $\cKLA$ security of $\cPKE$ by a standard hybrid argument where we modify the way of generating $\cPKE.\ct^*_{i,1}$ for each $i\in[\secp]$ one by one. 
%\takashi{Should I write down the reduction? I think this is very straightforward.} 
%\footnote{\shota{I do not think the proof here is completely trivial, though it is not difficult. First, in the reduction, the reduction algorithm chooses $(m^*_{i,0},m^*_{i,1})$ as its message. We have to argue that even if $m^*_{i,1}$ is encrypted under $\ek_0$ and $\ek_1$, the resulting distribution will be the same as $\Hyb_1$. This could be a source for a confusion for reviewers. One option is to change the description of $\Hyb_1$ so that the game encrypts message $m^*_{i,a_i}$ under $\ek_{i,0}$ and $\ek_{i,1}$ for random $a_i$ and then omit the proof. By this change, the reduction may be clearer (at least for me). Another less important point is how to implement the verification. Maybe we could also briefly explain this. }
%\takashi{To deal with the first issue, I modified the hybrid to use $a_i$ and introduced an additional hybrid ($\hybi{3}$ in the current version) to quit using $a_i$. The reason of doing so is that keeping $a_i$ until the final hybrid is a little bit annoying and may make the main point unclearer. For the latter issue, I simply wrote the full description of the reduction. I guess this is clearer than simply giving a brief explanation.}
%\shota{Thanks, now it looks good.}
%}
%More precisely, the reduction works as follows.

We define additional hybrids $\hybi{1.j}$ for $j\in [\secp+1]$ as follows.

\begin{description}
\item[$\hybi{1.j}$:] This is identical to $\hybi{1}$ except that $\cPKE.\ct^*_{i,b}$ is generated as 
\[\cPKE.\ct^*_{i,b}\gets
\begin{cases}
\cPKE.\Enc(\cPKE.\ek_{i,b},\msg^*_{i,a_i\oplus b}) & i<j
\\
\cPKE.\Enc(\cPKE.\ek_{i,b},\msg^*_{i,a_i}) & i\ge j
\end{cases}
\]
for $i\in [\secp]$. 
\end{description}
Clearly, we have $\hybi{1}=\hybi{1.1}$ and $\hybi{2}=\hybi{1.\secp+1}$. Thus, it suffices to prove that 
$
\left|\Pr[\hybi{1.j+1}=1]-\Pr[\hybi{1.j}=1]\right|=\negl(\secp). 
$
%\takashi{Very strictly speaking, this is inaccurate since the sum of polynomially many negligible functions is not necessarily negligible. But I think this kind of argument is standard in cryptography, and we may not need to care about it (because there's no essential issue anyway).}\ryo{FYI: See~\cite{EPRINT:FisMit21} for the discussion on hybrid arguments.}\takashi{Do you mean that the current writing is rigorous by considering the hybrid argument described in \cite{EPRINT:FisMit21}? That may be right though I'm not completely sure.}\ryo{The current writing is not rigourous by~\cite{EPRINT:FisMit21}. However, as far as I understand, if we use this kind of standard hybrid arguments (though I do not define ``standard''), the proof could be easily rigorous by Theorem 3.8 in~\cite{EPRINT:FisMit21}.}\takashi{I now think that we haven't stated anything mathematically wrong (though I thought that this was wrong when I wrote my first comment.). Though there is a logical gap on why "it suffices to prove that $\left|\Pr[\hybi{1.j+1}=1]-\Pr[\hybi{1.j}=1]\right|=\negl(\secp). $", I think this gap can be filled by using the argument in \cite{EPRINT:FisMit21} (roughly because we have a uniform machine that takes $j$ as input and runs $\hybi{1.j}$.) If so, I think the current writing is fine because the argument is standard. But if I'm missing something and the above statement ("it suffices to...") itself is wrong, I feel like rewriting it.} 
Remark that the only difference between $\hybi{1.j+1}$ and $\hybi{1.j}$ is the way of generating $\cPKE.\ct^*_{j,1}$.
%Suppose that $\left|\Pr[\hybi{1.j+1}=1]-\Pr[\hybi{1.j}=1]\right|$ is non-negligible. Then 
To show that $\left|\Pr[\hybi{1.j+1}=1]-\Pr[\hybi{1.j}=1]\right|=\negl(\secp)$, 
we construct $\qB$ against $\cKLA$ security of $\cPKE$ as follows. 
\begin{description}
\item[$\qB(\cPKE.\ek^*_0,\cPKE.\ek^*_1,\qdk^*)$:] It works as follows.
\begin{enumerate}
            \item  Generate $(\cPKE.\ek_{i,b},\cPKE.\dk_{i,b})\gets \cPKE.\KG(1^\secp)$ for $i\in [\secp]\setminus \{j\}$ and $b\in \bit$ and
            set
            $
            \cPKE.\ek_{j,b}\seteq  \cPKE.\ek^*_b
            $
            for $b\in \bit$. 
        Set  
         $\ek\seteq \{\cPKE.\ek_{i,b}\}_{i\in[\secp],b\in\bit}$ and
         \[
         \qdk\seteq \bigotimes_{i\in[\secp]\setminus \{j\}}\frac{1}{\sqrt{2}}\left(\ket{0}\ket{\cPKE.\dk_{i,0}}+\ket{1}\ket{\cPKE.\dk_{i,1}}\right)_{\qregidx{DK}{i}}\otimes \qdk^*_{\qregidx{DK}{j}}.\] 
         This implicitly defines $\vk\seteq \{\cPKE.\dk_{i,b}\}_{i\in[\secp],b\in\bit}$ where $\cPKE.\dk_{j,b}$ is the decryption key corresponding to $\cPKE.\dk_{j,b}$ chosen by the external challenger for $b\in \bit$ (but $\qB$ cannot know $\vk$). \item   Send $\ek$ and $\qdk$ to $\qA$ and receives the verification query $\widetilde{\qdk}$ from $\qA$. 
         \item Apply a binary-outcome measurement $(\mat{I}-\Pi_{\mathrm{vrfy}}^{\vk},\Pi_{\mathrm{vrfy}}^{\vk})$ on $\widetilde{\qdk}$. 
         This is possible by simulating the projection on $\{\qregidx{DK}{i}\}_{i\neq j}$ by itself while forwarding $\qregidx{DK}{j}$ to its own verification oracle. 
         If the outcome is $0$,   output $0$. Otherwise, return $\top$ to $\qA$ as the response from the oracle. 
            \item Choose $\msg^*_{i,b}\gets \bit^{\ell}$ for $i\in [\secp]$ and $b\in \bit$ and $a_i\gets \bit$ for $i\in[\secp]\setminus \{j\}$, send $(\msg^*_{j,0},\msg^*_{j,1})$ to the external challenger, and receive $(\cPKE.\ct^*_{0},\cPKE.\ct^*_{1})$ from the challenger.  
            This implicitly defines $a_j\gets \bit$ and $\beta \gets \bit$ where the challenger generates 
            $\cPKE.\ct^*_{0}\seteq \cPKE.\Enc(\cPKE.\ek_j,m^*_{j,a_j})$
            and 
            $\cPKE.\ct^*_{1}\seteq \cPKE.\Enc(\cPKE.\ek_j,m^*_{j,a_j\oplus \beta})$ (but $\qB$ cannot know $a_j$ or $\beta$).\footnote{Here, $\beta$ plays the role of $b$ in the experiment $\expb{\cPKE,\qB}{coic}{kla}(1^\secp)$ in \cref{def:cKLA_PKE}. This is because $b$ is used in another meaning in this section. \takashi{If this is confusing, we may change the notation (for example, replace $b$ with $\beta$ in \cref{def:cKLA_PKE}).}}
    \item 
           Generate
$\cPKE.\ct^*_{i,b}\gets, \cPKE.\Enc(\cPKE.\ek_{i,b},\msg^*_{i,a_i\oplus b})$ for $i\in [\secp]\setminus \{j\}$ and $b\in \bit$, set
$\cPKE.\ct^*_{j,b}\seteq \cPKE.\ct^*_{b}$ for $b\in \bit$, 
send $\ct^*\seteq \{\cPKE.\ct^*_{i,b}\}_{i\in [\secp],b\in \bit}$  to $\qA$, and receive $\msg'=\msg'_1\concat \ldots\concat \msg'_\secp$ from $\qA$.
            \item 
           Output $1$ if $\msg'_i\in \{\msg^*_{i,0},\msg^*_{i,1}\}$ for all $i\in [\secp]$ and otherwise output $0$. 
        \end{enumerate} 
\end{description}
We have 
\begin{align}
&\advb{\cPKE,\qB}{coic}{kla}(\secp)\\
&=2\left|\Pr[\qB(\cPKE.\ek^*_0,\cPKE.\ek^*_1,\qdk^*)=\beta]-\frac{1}{2}\right|\\
&=\left|\Pr[\qB(\cPKE.\ek^*_0,\cPKE.\ek^*_1,\qdk^*)=1|\beta=0]-\Pr[\qB(\cPKE.\ek^*_0,\cPKE.\ek^*_1,\qdk^*)=1|\beta=1]\right|\\
&=\left|\Pr[\hybi{1.j+1}=1]-\Pr[\hybi{1.j}=1]\right|
\end{align}
where $(\cPKE.\ek^*_0,\cPKE.\dk^*_0)\gets \cPKE.\KG(1^\secp)$, 
     $(\cPKE.\ek^*_1,\cPKE.\dk^*_1)\gets \cPKE.\KG(1^\secp)$, and 
     $\qdk^*\seteq \frac{1}{\sqrt{2}}(\ket{\cPKE.\dk^*_0}+\ket{\cPKE.\dk^*_1})$. 
     Thus, $\left|\Pr[\hybi{1.j+1}=1]-\Pr[\hybi{1.j}=1]\right|=\negl(\secp)$ by the $\cKLA$ security of $\cPKE$.
     This completes the proof of \Cref{lem:PKE_SKL_1-to-2}. 
\end{proof}
\if0
\begin{lemma}\label{lem:PKE_SKL_2-to-3}
It holds that $\Pr[\hybi{2}=1] = \Pr[\hybi{3}=1]$.
\end{lemma}
\begin{proof}
This can be easily seen by observing that 
$a_i$ is used only for generating $(\ct^*_{i,0},\ct^*_{i,1})$ and 
the distributions
$\{(m^*_{i,a_i},m^*_{i,a_i\oplus 1}):m^*_{i,0},m^*_{i,1}\gets \bit^\ell, a_i\gets \bit\}$ and $\{(m^*_{i,0},m^*_{i,1}):m^*_{i,0},m^*_{i,1}\gets \bit^\ell\}$ are identical because both are just the uniform distribution over $\bit^\ell \times \bit^\ell$.  
\end{proof}
\begin{lemma}\label{lem:PKE_SKL_3-to-4}
It holds that $\Pr[\hybi{3}=1] = \Pr[\hybi{4}=1]$.
\end{lemma}
\begin{proof}
This immediately follows from the deferred measurement principle.
\end{proof}
\fi
\begin{lemma}\label{lem:PKE_SKL_4-to-5}
If $\Pr[\hybi{5}=1]=\negl(\secp)$, then it holds that $\Pr[\hybi{4}=1]=\negl(\secp)$. 
\end{lemma}
\begin{proof}
Let $\epsilon \seteq \Pr[\hybi{4}=1]$.
%Let $\event$ be the event that  $\msg'_i\in \{\msg^*_{i,0},\msg^*_{i,1}\}$ for all $i\in [\secp]$.  
%Then, it holds that 
%\[
%\Pr[\event]\cdot\Tr(\Pi_{\mathrm{vrfy}}^{\vk}\widetilde{\qdk}')=\epsilon.
%\]
%In particular, we have $\Pr[\event]\ge \epsilon$ and $\Tr(\Pi_{\mathrm{vrfy}}^{\vk}\widetilde{\qdk}')\ge \epsilon$.
For 
$\vk= \{\cPKE.\dk_{i,b}\}_{i\in[\secp],b\in\bit}$ and 
$\mathbf{b}=b_1\concat\ldots \concat b_\secp\in \bit^\secp$, let $\event_{\mathbf{b}}^{\vk}$ be the event that 
$\vk$ is chosen as a verification key and 
$\msg'_i=\msg^*_{i,b_i}$ for all $i\in [\secp]$. Let $\widetilde{\qdk}_{\mathbf{b}}^{\vk}$ be the state of the returned key conditioned on $\event_{\mathbf{b}}^{\vk}$. 
Clearly, we have 
\[
\sum_{\vk,\mathbf{b}}\Pr[\event_{\mathbf{b}}^{\vk}]\cdot\Tr(\Pi_{\mathrm{vrfy}}^{\vk}\widetilde{\qdk}_{\mathbf{b}}^{\vk})=\epsilon.
\]
Let $\Good$ be a subset defined as 
\[
\Good\seteq \left\{(\vk,\mathbf{b}): \Tr(\Pi_{\mathrm{vrfy}}^{\vk}\widetilde{\qdk}_{\mathbf{b}}^{\vk})\geq \frac{\epsilon}{2}\right\}.
\]
Then, by a standard averaging argument, it holds that  
\begin{align} \label{eq:good_probability}
\sum_{(\vk,\mathbf{b})\in \Good}\Pr[\event_{\mathbf{b}}^{\vk}]\ge \frac{\epsilon}{2}.  
\end{align}

For $\vk= \{\cPKE.\dk_{i,b}\}_{i\in[\secp],b\in\bit}$ and $\mathbf{b}=b_1\concat\ldots\concat b_\secp$, let $\Pi_{\ne \mathbf{b}}^{\vk}$ be a projection defined as follows:
\begin{align}
\Pi_{\ne \mathbf{b}}^{\vk}\seteq \sum_{\mathbf{b'}\in \bit^\secp \setminus\{\mathbf{b}\}}\bigotimes_{i\in [\secp]} \ket{b'_i}\ket{\cPKE.\dk_{i,b'_i}}\bra{b'_i}\bra{\cPKE.\dk_{i,b'_i}}. 
\end{align}
Then, by the definition of $\hybi{5}$, one can see that 
\[\Pr[\hybi{5}=1]= \sum_{\vk,\mathbf{b}}\Pr[\event_{\mathbf{b}}^{\vk}]\cdot\Tr\left(\Pi_{\ne \mathbf{b}}^{\vk}\widetilde{\qdk}_{\mathbf{b}}^{\vk}\right).\]  

Then, we show the following proposition.
\begin{proposition}\label{prop:entropy}
For any $(\vk,\mathbf{b})\in \Good$, it holds that
\begin{align}
   \Tr\left(\Pi_{\ne \mathbf{b}}^{\vk}\widetilde{\qdk}_{\mathbf{b}}^{\vk}\right)\ge \frac{\epsilon}{4}-2^{-\secp}.
\end{align}
\end{proposition}
\begin{proof}[Proof of \Cref{prop:entropy}]
By diagonalization, we can write 
\[
\widetilde{\qdk}_{\mathbf{b}}^{\vk}=\sum_{j=1}^{N}p_j \ket{\psi_j}\bra{\psi_j}
\]
where $0<p_j\le 1$, $\sum_{j=1}^{N}p_j=1$, and $\left|\bra{\psi_j}\ket{\psi_j}\right|=1$.
%For each $j\in [N]$, we can write
%\[
%\ket{\psi_j}=\sum_{\mathbf{b}'\in \bit^\secp} \alpha_{j,\mathbf{b}'}\bigotimes_{i\in[\secp]} \ket{b'_i,,\cPKE.\dk_{i,b'_i}} + \beta_j\ket{\psi^\perp_j}
%\]
%where 
%$\alpha_{j,\mathsf{b}'},\beta_j\in \mathbb{C}$ such that $\sum_{\mathbf{b}'\in \bit^\secp}|\alpha_{j,\mathsf{b}'}|^2+|\beta_j|^2=1$ 
%and 
%$\left(\bigotimes_{i\in[\secp]}\bra{b'_i,,\cPKE.\dk_{i,b'_i}}\right)\ket{\psi^\perp_j}=0$ for all $\mathbf{b}' \in \bit^\secp$. 
For each $j\in [N]$, it holds that 
\begin{align}
    &\Tr(\Pi_{\mathrm{vrfy}}^{\vk}\ket{\psi_j}\bra{\psi_j})\\
    &=2^{-\secp}\left\|\left(\sum_{\mathbf{b'}\in \bit^\secp}\bigotimes_{i\in [\secp]} \bra{b'_i}\bra{\cPKE.\dk_{i,b'_i}}\right)\ket{\psi_j}\right\|^2\\ 
     &\le 2^{-\secp+1}\left(\left\|\left(\sum_{\mathbf{b'}\in \bit^\secp\setminus \{\mathbf{b}\}}\bigotimes_{i\in [\secp]} \bra{b'_i}\bra{\cPKE.\dk_{i,b'_i}}\right)\ket{\psi_j}\right\|^2+\left\|\left(\bigotimes_{i\in [\secp]} \bra{b_i}\bra{\cPKE.\dk_{i,b_i}}\right)\ket{\psi_j}\right\|^2\right)\\
      &\le 2^{-\secp+1}\left(\left(2^{\secp}-1\right)\sum_{\mathbf{b'}\in \bit^\secp\setminus \{\mathbf{b}\}}\left\|\left(\bigotimes_{i\in [\secp]} \bra{b'_i}\bra{\cPKE.\dk_{i,b'_i}}\right)\ket{\psi_j}\right\|^2+ 1\right)\\
     &\le 2\left\|\Pi_{\ne \mathbf{b}}^{\vk}\ket{\psi_j}\right\|^2+2^{-\secp+1}
\end{align} 
where the inequalities in the third and fourth lines follow from Cauchy–Schwarz inequality. 

Then, it holds that 
\begin{align}
    &\Tr(\Pi_{\mathrm{vrfy}}^{\vk}\widetilde{\qdk}_{\mathbf{b}}^{\vk})\\
    &=\sum_{j=1}^{N}p_j\Tr(\Pi_{\mathrm{vrfy}}^{\vk}\ket{\psi_j}\bra{\psi_j})\\
  &\le \sum_{j=1}^{N}p_j \left(2\left\|\Pi_{\ne \mathbf{b}}^{\vk}\ket{\psi_j}\right\|^2+2^{-\secp+1}\right)\\
  &=2\Tr\left(\Pi_{\ne \mathbf{b}}^{\vk}\widetilde{\qdk}_{\mathbf{b}}^{\vk}\right)+2^{-\secp+1}.
\end{align}
Since we assume  $(\vk,\mathbf{b})\in \Good$, it holds that $\Tr(\Pi_{\mathrm{vrfy}}^{\vk}\widetilde{\qdk}_{\mathbf{b}}^{\vk})\ge \frac{\epsilon}{2}$. 
By combining the above, \Cref{prop:entropy} is proven.
\end{proof}

Then, we have 
\begin{align}
\Pr[\hybi{5}=1]&= \sum_{\vk,\mathbf{b}}\Pr[\event_{\mathbf{b}}^{\vk}]\cdot\Tr\left(\Pi_{\ne \mathbf{b}}^{\vk}\widetilde{\qdk}_{\mathbf{b}}^{\vk}\right)\\
&\ge \sum_{(\vk,\mathbf{b})\in \Good}\Pr[\event_{\mathbf{b}}^{\vk}]\cdot\Tr\left(\Pi_{\ne \mathbf{b}}^{\vk}\widetilde{\qdk}_{\mathbf{b}}^{\vk}\right)\\
&\ge \sum_{(\vk,\mathbf{b})\in \Good}\Pr[\event_{\mathbf{b}}^{\vk}]\cdot \left(\frac{\epsilon}{4}-2^{-\secp}\right)\\
&\ge \frac{\epsilon}{2}\cdot \left(\frac{\epsilon}{4}-2^{-\secp}\right) 
\end{align}
where the second inequality follows from \Cref{prop:entropy} and the third inequality follows from Eq. \ref{eq:good_probability}. 
Recalling that $\epsilon=\Pr[\hybi{4}=1]$, the above inequality implies \Cref{lem:PKE_SKL_4-to-5}.  
\end{proof}
\if0
\begin{lemma}\label{lem:PKE_SKL_5-to-6}
It holds that $\Pr[\hybi{6}=1]\ge \frac{1}{\secp}\Pr[\hybi{5}=1]$.
\end{lemma}
\begin{proof}
Whenever there is $i$ such that $\widetilde{b}_{i}\ne b_{i}$, the probability that $i^*\gets [\secp]$ satisfies $\widetilde{b}_{i^*}\ne b_{i^*}$ is at least $\frac{1}{\secp}$. Thus, \Cref{lem:PKE_SKL_5-to-6} immediately follows. 
\end{proof}
\begin{lemma}\label{lem:PKE_SKL_6-to-7}
It holds that $\Pr[\hybi{7}=1]\ge \frac{1}{2}\Pr[\hybi{6}=1]$.
\end{lemma}
\begin{proof}
Note that the measurement of $\qregidx{DK}{i^*}$ in the computational basis yields either $(0,\cPKE.\dk_{i^*,0})$ or $(1,\cPKE.\dk_{i^*,1})$.  In particular, there are only two possible outcomes. 
Thus, \Cref{lem:BZ} immediately gives \Cref{lem:PKE_SKL_6-to-7}.
\end{proof}
\begin{lemma}\label{lem:PKE_SKL_7-to-8}
It holds that $\Pr[\hybi{7}=1]=\Pr[\hybi{8}=1]$.
\end{lemma}
\begin{proof}
If we measure $\qregidx{DK}{i^*}$ of 
 \[\bigotimes_{i\in[\secp]}\frac{1}{\sqrt{2}}\left(\ket{0}\ket{\cPKE.\dk_{i,0}}+\ket{1}\ket{\cPKE.\dk_{i,1}}\right)_{\qregidx{DK}{i}}
\]
in the computational basis, then that results in 
 \[\bigotimes_{i\in[\secp]\setminus \{i^*\}}\frac{1}{\sqrt{2}}\left(\ket{0}\ket{\cPKE.\dk_{i,0}}+\ket{1}\ket{\cPKE.\dk_{i,1}}\right)_{\qregidx{DK}{i}}\otimes (\ket{b^*}\ket{\cPKE.\dk_{i^*,b^*}})_{\qregidx{DK}{i^*}}
\]
for $b^*=0$ or $b^*=1$ with probability $1/2$ for each.
Thus, $\hybi{7}$ and $\hybi{8}$ are identical from the view of $\qA$, and thus we have $\Pr[\hybi{7}=1]=\Pr[\hybi{8}=1]$.
%It is clear that $\hybi{7}$ and $\hybi{8}$ are identical from the view of $\qA$, and thus we have $\Pr[\hybi{7}=1]=\Pr[\hybi{8}=1]$.
\end{proof}
\fi
\begin{lemma}\label{lem:PKE_SKL_8}
It holds that $\Pr[\hybi{8}=1]=\negl(\secp)$ if $\cPKE$ is OW-CPA secure. 
\end{lemma}
\begin{proof}
For clarity, we give the full description of $\hybi{8}$ below. 
\begin{description}
\item[$\hybi{8}$:] It works as follows:
\begin{enumerate}
            \item  The challenger chooses $i^*\gets [\secp]$ and $b^*\in \bit$, 
            generates $(\cPKE.\ek_{i,b},\cPKE.\dk_{i,b})\gets \cPKE.\KG(1^\secp)$ for $i\in [\secp]$ and $b\in \bit$,
            sets 
         $\ek\seteq \{\cPKE.\ek_{i,b}\}_{i\in[\secp],b\in\bit}$ and
          \[\qdk\seteq \bigotimes_{i\in[\secp]\setminus \{i^*\}}\frac{1}{\sqrt{2}}\left(\ket{0}\ket{\cPKE.\dk_{i,0}}+\ket{1}\ket{\cPKE.\dk_{i,1}}\right)_{\qregidx{DK}{i}}\otimes (\ket{b^*}\ket{\cPKE.\dk_{i^*,b^*}})_{\qregidx{DK}{i^*}}
          \] 
          %where the registers for the $i^*$-th instance of $\cPKE$ is written rightmost,\takashi{Does this explanation makes sense?}   
         and sends $\ek$ and $\qdk$ to $\qA$. 
            \item
            $\qA$ queries  $\widetilde{\qdk}$
            to the verification oracle. The challenger returns $\top$ to $\qA$ as the response from the oracle.
            \item The challenger chooses $\msg^*_{i,b}\gets \bit^{\ell}$ for $i\in [\secp]$ and $b\in \bit$, %and $a_i\gets \bit$ for $i\in[\secp]$,  
           generates
$\cPKE.\ct^*_{i,b}\gets \cPKE.\Enc(\cPKE.\ek_{i,b},\msg^*_{i,b})$ for $i\in [\secp]$ and $b\in \bit$, and sends $\ct^*\seteq \{\cPKE.\ct^*_{i,b}\}_{i\in [\secp],b\in \bit}$  to $\qA$.
            \item \label{step:check_msg_prime_two}
            $\qA$ outputs $\msg'=\msg'_1\concat \ldots\concat \msg'_\secp$. 
           The challenger outputs $0$ as the final output of the experiment if $\msg'_i\notin \{\msg^*_{i,0},\msg^*_{i,1}\}$ for some $i\in [\secp]$. 
           \item 
           Otherwise, the challenger measures $\widetilde{\qdk}$ in the computational basis, and let $\{\widetilde{b}_i,\cPKE.\widetilde{\dk}_i\}_{i\in [\secp]}$  be the outcome.  
            If there is $i\in [\secp]$ such that $\cPKE.\widetilde{\dk}_i\ne \cPKE.\dk_{i,\widetilde{b}_i}$, the challenger outputs $0$ as the final outcome of this experiment. 
            Otherwise, define $\mathbf{b}=b_1\concat\ldots\concat b_\secp \in \bit^\secp$ in  such a way that $\msg'_i=\msg^*_{i,b_i}$ for $i\in [\secp]$. Note that such $\mathbf{b}$ must exist since this step is invoked only when the challenger does not output $0$ in Step~\ref{step:check_msg_prime_two}.\footnote{If $\msg^*_{i,0}=\msg^*_{i,1}$ (which happens with a negligible probability), then we
            set $b_i\seteq 0$.}  
           %Let $\widetilde{\qdk}'$ be the returned decryption key at this point conditioned on that the challenger does not return $0$ in Step \ref{step:check_msg_prime}. Note that $\widetilde{\qdk}'$  may be a different state from $\widetilde{\qdk}$ due to an entanglement between the returned key and $\qA$'s internal state.  
           %The challenger measures $\widetilde{\qdk}$ in the computational basis, and let $\{\widetilde{b}_i,\cPKE.\widetilde{\dk}_i\}_{i\in [\secp]}$  be the outcome.  
           If  $\widetilde{b}_{i^*}\ne b_{i^*}$,  %and $\cPKE.\widetilde{\dk}_{i}=\cPKE.\dk_{i,\widetilde{b}_{i}}$, 
           the challenger outputs $1$ and otherwise $0$ as the final output of the experiment. 
        \end{enumerate}
\end{description} 

Suppose that we simulate $\hybi{8}$ for $\qA$ while embedding a problem instance of the OW-CPA security of $\cPKE$ into $\cPKE.\ek_{i^*,b^*\oplus 1}$ and  $\cPKE.\ct^*_{i^*,b^*\oplus 1}$. Remark that this is possible without knowing $\cPKE.\dk_{i^*,b^*\oplus 1}$. 
Suppose that $\hybi{8}=1$ occurs in the simulated execution. Then, we in particular have $\msg'_{i^*}=\msg^*_{i^*,b_{i^*}}$,  $\cPKE.\widetilde{\dk}_{i^*}=\cPKE.\dk_{i^*,\widetilde{b}_{i^*}}$, and  $\widetilde{b}_{i^*}\ne b_{i^*}$. 
We consider the following two sub-cases.
\begin{enumerate}
    \item If $b_{i^*}=b^*$, then we have 
    $\widetilde{b}_{i^*}=b^*\oplus 1$.  This implies $\cPKE.\widetilde{\dk}_{i^*}=\cPKE.\dk_{i^*,b^*\oplus 1}$. %This means that we can recover the decryption key $\cPKE.\dk_{i^*,b^*\oplus 1}$ of $\cPKE$ corresponding to the encryption key $\cPKE.\ek_{i^*,b^*\oplus 1}$. 
    Then we can decrypt $\cPKE.\ct_{i^*,b^*\oplus 1}$ by honestly running the decryption algorithm with $\cPKE.\dk_{i^*,b^*\oplus 1}$. This contradicts the OW-CPA security of $\cPKE$. 
    \item If $b_{i^*}\ne b^*$, then we have    $\msg'_{i^*}=\msg^*_{i^*,b^*\oplus 1}$, which is the message encrypted in $\cPKE.\ek_{i^*,b^*\oplus 1}$. This means that we can break the OW-CPA security of $\cPKE$. 
\end{enumerate}
Neither of them occurs with a non-negligible probability assuming the OW-CPA security of $\cPKE$. 
Thus, $\Pr[\hybi{8}=1]=\negl(\secp)$. This completes the proof of \Cref{lem:PKE_SKL_8}. %\takashi{Should I write down the reduction? I could do that, but that may be more difficult for the readers to follow.}\ryo{I think the current description is ok. }
\if0
Below, we show that both terms are negligible. 

For proving $\Pr[\hybi{6}'=1\wedge b_{i^*}=b^*]=\negl(\secp)$, we consider the following QPT adversary $\qB_1$ against the OW-CPA security of $\PKESKL$. 
\begin{description}
\item[$\qB_1(\cPKE.\ek^*,\cPKE.\ct^*)$:] $ $
\begin{itemize}
    \item Choose $i^* \gets [\secp]$ and $b^*\gets \bit$.
    \item Generate $(\cPKE.\ek_{i,b},\cPKE.\dk_{i,b})\gets \cPKE.\KG(1^\secp)$ for $(i,b)\in ([\secp]\times \bit)\setminus \{(i^*,b^*\oplus 1)\}$.
    \item Set 
    $\cPKE.\ek_{i^*,b^*\oplus 1}\seteq \cPKE.\ek^*$.
        \item Choose $\msg^*_{i,b}\gets \bit^{\ell}$ for $(i,b)\in ([\secp]\times \bit)\setminus \{(i^*,b^*\oplus 1)\}$.
        \item Set 
         $\ek\seteq \{\cPKE.\ek_{i,b}\}_{i\in[\secp],b\in\bit}$ and
         $\qdk\seteq \bigotimes_{i\in[\secp]\setminus \{i^*\}}\frac{1}{\sqrt{2}}\left(\ket{0}\ket{\cPKE.\dk_{i,0}}+\ket{1}\ket{\cPKE.\dk_{i,1}}\right)\otimes \ket{b^*}\ket{\cPKE.\dk_{i^*,b^*}}$  where the registers for the $i^*$-th instance of $\cPKE$ is written rightmost. 
        \item  Send $\ek$ and $\qdk$ to $\qA$, and receive  $\widetilde{\qdk}$ from $\qA$. 
        \item Choose $\msg^*_{i,b}\gets \bit^\ell$ for $(i,b)\in ([\secp]\times \bit)\setminus \{(i^*,b^*)\}$. 
        \item Generate 
$\cPKE.\ct^*_{i,b}\gets \cPKE.\Enc(\cPKE.\ek_{i,b},\msg^*_{i,b})$ for $(i,b)\in ([\secp]\times \bit)\setminus \{(i^*,b^*)\}$.
\item Set $\cPKE.\ct_{i^*,b^*}\seteq \cPKE.\ct^*$.  
\item Send $\ct^*\seteq \{\cPKE.\ct^*_{i,b}\}_{i\in [\secp],b\in \bit}$ to $\qA$ and receives $\msg'=\msg'_1\concat\ldots\concat \msg'_\secp$ from $\qA$.  
\item Output $\msg'_{i^*}$.             
\end{itemize}
\end{description}
In the following, we show that $\qB$ indeed breaks the OW-CPA security of $\PKESKL$. 
One can see that 
\fi
\end{proof}
%By combining \Cref{lem:PKE_SKL_0-to-1,lem:PKE_SKL_1-to-2,lem:PKE_SKL_2-to-3,lem:PKE_SKL_3-to-4,lem:PKE_SKL_4-to-5,lem:PKE_SKL_5-to-6,lem:PKE_SKL_6-to-7,lem:PKE_SKL_7-to-8,lem:PKE_SKL_8}, we have  $\Pr[\hybi{0}=1]=\negl(\secp)$. This means that $\PKESKL$ is $\owKLA$ secure. 
This completes the proof of \Cref{thm:PKE_OW-KLA}.
\end{proof}

\begin{remark}[On OMUR]\label{rem:OMUR}
We can show that $\PKESKL$ constructed above also satisfies OMUR. Since there is a generic conversion to add OMUR as shown in \cref{lem:add_OMUR} anyway, we only give a proof sketch.

We reduce OMUR to 1-key OW-KLA security. Suppose that there is an adversary that breaks OMUR, i.e., passes the verification twice. Then roughly speaking, we can use it to break 1-key OW-KLA security by sending one of them to the verification oracle and using the other one to decrypt the challenge message. There is an issue that the reduction algorithm may make only one verification query while the adversary against OMUR may make arbitrarily many verification queries. To resolve this issue, we can use a similar idea to that used in the proof of \cref{lem:add_OMUR}. The reduction algorithm guesses the first two queries to be accepted. Conditioned on that the guess is correct, the reduction algorithm can simulate the verification oracle by simply returning $\bot$ to all queries except for the two queries that are guessed to be accepted until the adversary make the second guessed query. 
The guess is correct with probability $\binom{Q}{2}^{-1}$ where $Q$ is the number of queries. Thus, the reduction works with a polynomial security loss. 
Since we already proved that $\PKESKL$ is 1-query OW-KLA secure (\cref{thm:PKE_OW-KLA}), the above reduction shows that it satisfies OMUR. 

% Suppose that there is a QPT adversary $\qA$ that breaks OMUR of $\PKESKL$. Let $Q$ be the number of verification queries by $\qA$.    Then we construct a QPT adversary $\qB$ that breaks $1$-query OW-SKL security as follows.
%\begin{description}
%\item[$\qB(\ek,\qdk)$:]
%Uniformly choose integers $1\le i_1<i_2 \le Q$. 
%Run $\qA(\ek,\qdk)$ until it makes $i_2$-th query where it responds to $i$-th query for $i\in [i_2-1]$ as follows. If $i\ne i_1$, just return $\bot$. If $i=i_1$, forward the query to the external verification oracle and forward the response from the oracle to $\qA$. 
%Let $\qdk_{i_2}$ be the $i_2$-th query.
%\end{description}
%\takashi{to be completed...}
\end{remark}

\if0
\begin{theorem}\label{thm:PKE_OMU}
If $\cPKE$ is $\cKLA$ secure, then $\SKL$ satisfies OMU. 
\end{theorem}
\begin{proof}
\takashi{under construction}
We reduce OMU to 1-query OW-KLA security. Note that 1-query OW-KLA security of  $\SKL$ is already proven in \Cref{thm:PKE_OW-KLA} assuming the $\cKLA$ security of $\cPKE$. 
Suppose that there is a QPT adversary $\qA$ that breaks OMU of $\SKL$.
Let $Q$ be the number of $\qA$'s query. 
Then we construct $\qB$ against 1-query OW-KLA security of $\SKL$ as follows.
\begin{description}
\item[$\qB(\ek,\qdk)$:]
It works as follows:
\begin{enumerate}
    \item 
Uniformly choose $0\leq i_1<i_2 \leq Q$. 
\item Run $\qA$ until it makes $i_2$-th verification query where $i_1$-th query is forwarded to the external verification oracle and responded according to the response from the external oracle 
while all the other verification queries (before $i_2$-th one) are responded by $\bot$. 
\item Send $\requestchallenge$ to the external challenger and receives the challenge ciphertext $\ct^*=\{\cPKE.\ct^*_{i,b}\}_{i\in [\secp],b\in \bit}$. 
\item Measure $\qA$'s $i_2$-th query in the computational basis to get $\{\widetilde{b}_i,\cPKE.\widetilde{\dk}_i\}_{i\in [\secp]}$. 
\item Run $\msg'_i\gets \cPKE.\Dec(\cPKE.\widetilde{\dk}_i,\cPKE.\ct^*_{i,\widetilde{b}_i})$ and output $\msg'\seteq \msg'_1\concat\ldots\concat \msg'_\secp$.
\end{enumerate}
\end{description}
\end{proof}
\fi

% !TEX root = main.tex

\section{Attribute-Based Encryption with Secure Key Leasing}\label{sec:ibe_abe_skl}

\subsection{Definitions}\label{sec:ABE_SKL_def}
\begin{definition}[ABE with Secure Key Leasing]
An ABE-SKL scheme $\ABESKL$ is a tuple of six algorithms $(\Setup, \qKG, \Enc, \qDec,\qCert,\certvrfy)$. 
Below, let $\cX = \{ \cX_\secp \}_\secp$, $\cY= \{ \cY_\secp \}_\secp$, and $R= \{ R_\secp: \cX_\secp \times \cY_\secp \to \bit \}_\secp$ be the ciphertext space, the key attribute space, and the associated relation of $\ABESKL$, respectively.
\begin{description}
\item[$\Setup(1^\secp)\ra(\pk,\msk)$:] The setup algorithm takes a security parameter $1^\lambda$, and outputs a public key $\pk$ and master secret key $\msk$.
\item[$\qKG(\msk,y)\ra(\qusk,\vk)$:] The key generation algorithm takes a master secret key $\msk$ and a key attribute $y \in \calY$, and outputs a user secret key $\qusk$ and a verification key $\vk$.

%\footnote{\shota{I don't know whether user secret key is the right name for it.}\ryo{I think it's ok. Another option is simply ``decryption key''.} 
%\shota{OK, I continue to use "user secret key".}}

\item[$\Enc(\pk,x,m)\ra\ct$:] The encryption algorithm takes a public key $\pk$, a ciphertext attribute $x\in \cX$, and a plaintext $m$, and outputs a ciphertext $\ct$.

\item[$\qDec(\qusk,x,\ct)\ra z$:] The decryption algorithm takes a user secret key $\qusk$, a ciphertext attribute $x$, and a ciphertext $\ct$ and outputs a value $z\in \{\bot\}\cup \bin^\ell$.

% \item[$\qCert(\qfsk)\ra\cert$:] The certification algorithm takes a function decryption key $\qfsk$, and outputs a classical string $\cert$.

\item[$\qVrfy(\vk,\qusk^\prime)\ra\top/\bot$:] The verification algorithm takes a verification key $\vk$ and a quantum state $\qusk^\prime$, and outputs $\top$ or $\bot$.

\item[Decryption correctness:]For every $x \in \cX$ and $y \in \cY$ satisfying $R(x,y)=1$, we have
\begin{align}
\Pr\left[
\qDec(\qusk, x, \ct) \allowbreak = m
\ \middle |
\begin{array}{ll}
(\pk,\msk) \la \Setup(1^\secp)\\
(\qusk,\vk)\gets\qKG(\msk,y)\\
\ct\gets\Enc(\pk,x,m)
\end{array}
\right] 
=1-\negl(\secp).
\end{align}

\item[Verification correctness:] For every $y \in \cY$, we have 
\begin{align}
\Pr\left[
\qVrfy(\vk,\qusk)=\top
\ \middle |
\begin{array}{ll}
(\pk,\msk) \la \Setup(1^\secp)\\
(\qusk,\vk)\gets\qKG(\msk,y)\\
% \cert \lrun \qCert(\qfsk)
\end{array}
\right] 
=1-\negl(\secp).
\end{align}

\end{description}
\end{definition}

\begin{definition}[Adaptive Indistinguishability against Key Leasing Attacks]\label{def:ada_lessor_ABESKL}
We say that an ABE-SKL scheme $\ABESKL$ for relation $R:\cX\times \cY \to \bin$ is secure against adaptive indistinguishability against key leasing attacks ($\ADAINDKLA$), if it satisfies the following requirement, formalized from the experiment $\expc{\qA,\ABESKL}{ada}{ind}{kla}(1^\secp,\coin)$ between an adversary $\qA$ and a challenger:
        \begin{enumerate}
            \item At the beginning, the challenger runs $(\pk,\msk)\gets\Setup(1^\secp)$
            and initialize the list $\List{\qKG}$ to be an empty set. 
            Throughout the experiment, $\qA$ can access the following oracles.
            \begin{description}
            \item[$\Oracle{\qKG}(y)$:] Given $y$, it finds an entry of the form $(y,\vk,V)$ from $\List{\qKG}$. If there is such an entry, it returns $\bot$.
            Otherwise, it generates $(\qusk,\vk)\la\qKG(\msk,y)$, sends $\qusk$ to $\qA$, and adds $(y,\vk,\bot)$ to $\List{\qKG}$.
            %\shota{Maybe we do not have to differentiate the difference between $\bot$ and $\unreturned$.}\ryo{I agree. The definition of PKFE-SKL does not differentiate them.}\shota{I now use the simpler version that does not differentiate them.}
            
            \item[$\Oracle{\qVrfy}(y,\qusk^\prime)$:] Given $(y,\qusk^\prime)$, it finds an entry $(y,\vk,V)$ from $\List{\qKG}$. (If there is no such entry, it returns $\bot$.) 
            It then runs $\decision \seteq \qVrfy(\vk,\qusk^\prime)$ and returns $\decision$ to $\qA$.
            If $V=\bot$, it updates the entry into $(y,\vk,\decision)$.
            % \item[$\Oracle{\qKG}(f)$:] Given $f$, the challenger checks if there is an entry $(f,\vk,M)$ in $\List{\qKG}$. If so, the challenger outputs $\bot$. Otherwise, it generates $(\qfsk,\vk)\la\qKG(\msk,f)$, sends $\qfsk$ to $\qA$, and adds $(f,\vk,\notyet)$ to $\List{\qKG}$.
            % \item[$\Oracle{\qVrfy}(f,\qfsk^\prime)$:] Given $(f,\qfsk^\prime)$, it finds an entry $(f,\vk,M)$ from $\List{\qKG}$. (If there is no such entry, it returns $\bot$.) If $M \ne \notyet$, it returns $\mathsf{Verified}$. If $M = \notyet$, it returns $M^\prime \seteq \qVrfy(\vk,\qfsk^\prime)$ and updates the entry into $(f,\vk,M^\prime)$.
            \end{description}
            %\ryo{This is still ambiguity since there are possibly two or more entries such that $(f,\vk_1,M_1) \in \List{\qKG}$ and $(f,\vk_2,M_2)\in \List{\qKG}$. Should we store indices for the same function $f$?}
            \item \label{ada_lessor_abe_challenge}
            When $\qA$ sends $(x^*,m_0,m_1)$ to the challenger, the challenger checks if for any entry $(y,\vk,V)$ in $\List{\qKG}$ such that $R(x^*,y)=1$, it holds that $V=\top$. If so, the challenger generates $\ct^*\la\Enc(\pk,x^*,m_\coin)$ and sends $\ct^*$ to $\qA$. Otherwise, the challenger outputs $0$. 
            \item 
            $\qA$ continues to make queries to $\Oracle{\qKG}(\cdot)$ and  $\Oracle{\qVrfy}(\cdot,\cdot )$.
            However, $\qA$ is not allowed to send a key attribute $y$ such that $R(x^*,y)=1$ to $\Oracle{\qKG}$.
            \item $\qA$ outputs a guess $\coin^\prime$ for $\coin$. The challenger outputs $\coin'$ as the final output of the experiment.
        \end{enumerate}
        For any QPT $\qA$, it holds that
\ifnum\llncs=0        
\begin{align}
\advc{\ABESKL,\qA}{ada}{ind}{kla}(\secp) \seteq \abs{\Pr[\expc{\ABESKL,\qA}{ada}{ind}{kla} (1^\secp,0) \ra 1] - \Pr[\expc{\ABESKL,\qA}{ada}{ind}{kla} (1^\secp,1) \ra 1] }\leq \negl(\secp).
\end{align}
\else
\begin{align}
\advb{\PKFESKL,\qA}{ada}{lessor}(\secp) 
&\seteq \abs{\Pr[\expb{\ABESKL,\qA}{ada}{lessor} (1^\secp,0) \ra 1] - \Pr[\expb{\ABESKL,\qA}{ada}{lessor} (1^\secp,1) \ra 1] }\\
&\leq \negl(\secp).
\end{align}
\fi
\end{definition}

\begin{remark}\label{remark:same_query_remark}
%\Cref{def:ada_lessor_ABESKL} assumes that the adversary does not send a key query $y$ such that $y$ had been already queried for simplification. 
%To handle queries for the same attribute $y$, we need to manage indices for $y$ such as $(y,1,vk_1,V_1), (y,2,vk_2,V_2)$.
In \Cref{def:ada_lessor_ABESKL}, the key generation oracle returns $\bot$ if the same $y$ is queried more than once. To handle the situation where multiple keys for the same attribute $y$ are generated,  we need to manage indices for $y$ such as $(y,1,vk_1,V_1), (y,2,vk_2,V_2)$. %\takashi{I slightly modified the way of explanation.}\shota{good}
Although we can reflect the index management in the definition, it complicates the definition and prevents readers from understanding the essential idea.
Thus, we use the simplified definition above.
\end{remark}

We also consider relaxed versions of the above security notion. 

\begin{definition}[Selective indistinguishability against key leasing attacks]\label{def:sel_ind_ABE_SKL}
We consider selective indistinguishability against key leasing attacks ($\SELINDKLA$). For doing so, we consider the same security game as that for $\ADAINDKLA$ except that the adversary $\qA$ should declare its target $x^*$ at the beginning of the game (even before it is given $\pk$).
We then define the advantage $\advc{\ABESKL,\qA}{sel}{ind}{kla}(\secp)$ for the selective case similarly. We say $\ABESKL$ is secure against selective indistinguishability against key leasing attack if for any QPT adversary $\qA$, $\advc{\ABESKL,\qA}{sel}{ind}{kla}(\secp)$ is negligible.    
\end{definition}
We also consider the following security notion where we introduce additional restriction that the number of distinguishing keys that are issued (and eventually returned) before $\ct^*$ is generated is bounded by some predetermined parameter $q$.
Here, distinguishing key refers to a key that can decrypt the challenge ciphertext if it is not returned. %\footnote{In the introduction, we called it ``distinguishing key", since it allows to distinguish the coin used by the challenger in the security game. We will use these two terms interchangeably.}  
\begin{definition}[Bounded Distinguishing Key $\ADAINDKLA$/$\SELINDKLA$ for ABE]\label{def:bounded_ABE_SKL}
For defining bounded distinguishing key $\ADAINDKLA$ security, we consider the same security game as that for $\ADAINDKLA$ (i.e., $\expc{\qA,\ABESKL}{ada}{ind}{kla}(1^\secp,\coin)$) except that we change the step \ref{ada_lessor_abe_challenge} in \cref{def:ada_lessor_ABESKL} with the following:
\begin{itemize}
\item[2'] When $\qA$ sends $(x^*,m_0,m_1)$ to the challenger, the challenger checks if there are at most $q$ entries $(y,\vk,V)$ in $\List{\qKG}$ such that $R(x^*,y)=1$ and for all these entries, $V=\top$. If so, the challenger generates $\ct^*\la\Enc(\pk,x^*,m_\coin)$ and sends $\ct^*$ to $\qA$. Otherwise, the challenger outputs $0$. 
\end{itemize}
We then define the advantage $\advc{\ABESKL,\qA,{\color{red}{q}}}{ada}{ind}{kla}(\secp)$ similarly to $\advc{\ABESKL,\qA}{ada}{ind}{kla}(\secp)$. 
We say $\ABESKL$ is $q$-bounded distinguishing key $\ADAINDKLA$ secure if for any QPT adversary $\qA$, $\advc{\ABESKL,\qA, q}{ada}{ind}{kla}(\secp)$ is negligible.
We also define $q$-bounded distinguishing key $\SELINDKLA$ security analogously by enforcing the adversary to output its target $x^*$ at the beginning of the game. 
\end{definition}
We emphasize that while the number of distinguishing keys that the adversary can obtain in the game is bounded by a fixed polynomial, the number of non-distinguishing keys (i.e., keys for $y$ with $R(x^*,y)=0$) can be unbounded.

\subsection{1-Bounded Distinguishing Key Construction}
\label{sec:ABE_SKL_one_bounded}

We construct an ABE-SKL scheme $\oneABESKL =(\Setup,\qKG,\Enc,\qDec,\qVrfy)$ for relation $R:\cX\times \cY \to \bit$ with $1$-bounded distinguishing key $\ADAINDKLA$/$\SELINDKLA$ security whose message space is $\bit^\ell$
by using the following building blocks.
\begin{itemize}
\item IND-KLA secure %\takashi{secure} 
PKE-SKL %\takashi{-SKL?} 
$\SKL.(\qKG,\Enc,\qDec,\qVrfy)$.
Without loss of generality, 
we assume that $\skl.\ek \in \bit^{\sklpklen}$ and the randomness space used by $\SKL.\Enc$ is $\bit^{\sklrandlen}$ for some $\sklpklen(\secp)$ and $\sklrandlen(\secp)$. 
We also assume that the message space of $\SKL$ is $\bit^\ell$. 

\item Adaptively/Selectively secure ABE $\ABE.(\Setup,\KG,\Enc,\Dec)$ for relation $R$ with message space $\bit^\secp$.
\item A garbling scheme $\GC=(\Garble,\GCEval )$.
Without loss of generality, we assume that the labels of $\GC$ are in $\bit^\secp$.
\end{itemize}

\begin{description}
\item[$\Setup(1^\secp)$:] $ $
\begin{itemize}
    \item 
    For $i\in [\sklpklen]$ and $b\in \bin$, run $(\sfabe.\pk_{i,b},\sfabe.\msk_{i,b}) \gets \ABE.\Setup(1^\secp)$.
    \item Output $(\pk,\msk)\seteq (\{ \sfabe.\pk_{i,b}\}_{i\in [\sklpklen], b\in \bin },\{ \sfabe.\msk_{i,b}\}_{i\in [\sklpklen], b\in \bin })$.
\end{itemize}
\item[$\qKG(\msk,y)$:] $ $
\begin{itemize}
	\item Generate $(\skl.\ek,\skl.\qdk,\skl.\vk)\gets \SKL.\qKG(1^\secp)$.
	\item Run $\sfabe.\sk_{i } \gets \ABE.\KG(\ABE.\msk_{i,\skl.\ek[i]},y)$
	for $i\in [\sklpklen]$, where $\skl.\ek[i]$ denotes the $i$-th bit of the binary string $\skl.\ek$.
	\item Output $\qusk \seteq (\{ \sfabe.\sk_i \}_{i\in [\sklpklen] },\skl.\ek, \skl.\qdk)$ and $\vk \seteq \skl.\vk$.
\end{itemize}
\item[$\Enc(\pk,x,m)$:] $ $
\begin{itemize}
\item Choose $\sfR \chosen \bin^{\sklrandlen}$.
\item Construct circuit $E[m,\sfR]$, which is a circuit that takes as input an encryption key $\skl.\ek$ of $\SKL$ and outputs $\SKL.\Enc(\skl.\ek, m;\sfR)$.
\item Compute $(\{ \lab_{i,b} \}_{i\in [\sklpklen], b\in \bit }, \tildeE ) \gets \Garble(1^\secp, E[m,\sfR])$.
\item Run $\sfabe.\ct_{i,b} \la \ABE.\Enc(\sfabe.\pk_{i,b}, x, \lab_{i,b}  )$
for $i\in [\sklpklen]$ and $b\in \bit$.
\item Output $\ct \seteq (\{ \sfabe.\ct_{i,b} \}_{i\in [\sklpklen], b\in \bit }, \tildeE )$.
\end{itemize}
\item[$\qDec(\qusk,x,\ct)$:] $ $
\begin{itemize}
\item Parse $\qusk = (\{ \sfabe.\sk_i \}_{i\in [\sklpklen] },\skl.\ek, \skl.\qdk)$ and $\ct = (\{ \sfabe.\ct_{i,b} \}_{i\in [\sklpklen], b\in \bit }, \tildeE )$.
\item Compute $\lab_{i} \gets \ABE.\Dec(\ABE.\sk_{i},x,\sfabe.\ct_{i,\skl.\ek[i]})$ for $i\in [\sklpklen]$.
\item Compute $\skl.\ct=\GCEval(\tildeE, \{ \lab_i\}_{i\in [\sklpklen]} )$.
\item Compute and output $m' \gets \SKL.\qDec(\skl.\qdk,\skl.\ct)$.
\end{itemize}
\item[$\qVrfy(\vk,\qusk^\prime 
    %\takashi{\qusk^\prime?}
)$:] $ $
\begin{itemize}
\item Parse $\vk = \skl.\vk$ and $\qusk^\prime=(\{ \sfabe.\sk_i \}_{i\in [\sklpklen] },\skl.\ek', \skl.\qdk')$.
\item Compute and output $\SKL.\qVrfy(\skl.\vk,\skl.\qdk^\prime)$.
\end{itemize}
\end{description}

We show that the scheme satisfies decryption correctness. To see this, we first observe that the decryption algorithm correctly recovers labels of $\tildeE$ corresponding to the input $\skl.\ek$ by the correctness of $\ABE$. Therefore, $\skl.\ct$ recovered by the garbled circuit evaluation equals to $\SKL.\enc(\skl.\ek, m;\sfR)$ by the correctness of $\GC$. Then, the message $m$ is recovered in the last step by the correctness of $\SKL$.
We can also see that the verification correctness follows from that of $\SKL$. 
%\takashi{We should also mention verification correctness.} 

\begin{theorem}\label{thm:ABESKL_from_ABE}
If $\ABE$ is adaptively (resp., selectively) secure, $\GC$ is secure, and $\SKL$ is IND-KLA secure, then $\oneABE$ above is $1$-bounded distinguishing key $\ADAINDKLA$ (resp., $\SELINDKLA$) secure.
\end{theorem}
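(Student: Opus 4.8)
The plan is to strip the challenge ciphertext, through a sequence of hybrids, down to a single PKE-SKL encryption of $m_\coin$ and then apply IND-KLA security of $\SKL$. Write $\skl.\ek^*$ for the PKE-SKL encryption key embedded in the (unique) distinguishing key, i.e.\ the key for the attribute $y^*$ with $R(x^*,y^*)=1$; since the challenge is produced only after this key has been returned and verified, $\skl.\ek^*$ is well defined and known to the challenger at challenge time. The structural fact I would lean on is that the distinguishing key opens, via $\ABE.\Dec$, exactly the labels $\{\lab_{i,\skl.\ek^*[i]}\}_{i\in[\sklpklen]}$ and nothing else: its subkey $\sfabe.\sk_i$ lives under $\ABE.\msk_{i,\skl.\ek^*[i]}$ and hence only decrypts $\sfabe.\ct_{i,\skl.\ek^*[i]}$, while every non-distinguishing key has $R(x^*,y)=0$ and so opens no ABE ciphertext under $x^*$. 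First I would move to a hybrid where, for each $i$, the ciphertext $\sfabe.\ct_{i,1-\skl.\ek^*[i]}$ on the non-recoverable side encrypts a dummy label $0^\secp$. This follows from (adaptive, resp.\ selective) security of $\ABE$ by a hybrid over the $\sklpklen$ positions: to change position $i$ the reduction embeds its ABE challenge key in instance $(i,1-\skl.\ek^*[i])$ and generates all other instances itself, and the only keys it must request under $\ABE.\msk_{i,1-\skl.\ek^*[i]}$ come from non-distinguishing queries (the distinguishing key uses the opposite instance $\ABE.\msk_{i,\skl.\ek^*[i]}$), so all forwarded ABE key queries are legal. Since the challenge instance is fixed when $\pk$ is issued but $\skl.\ek^*[i]$ is learned later, the reduction guesses the bit $1-\skl.\ek^*[i]$, losing a factor $2$ per step. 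Here $1$-boundedness is essential: two distinguishing keys whose encryption keys differ at position $i$ would make both labels recoverable and break this step.

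Next I would use security of the garbling scheme $\GC$. After the previous hybrid the challenge ciphertext no longer depends on the non-recoverable labels, so the pair $(\{\lab_{i,\skl.\ek^*[i]}\}_i,\tildeE)$ may be replaced by $\SimGC\!\left(1^\secp,\SKL.\Enc(\skl.\ek^*,m_\coin;\sfR)\right)$, using $E[m_\coin,\sfR](\skl.\ek^*)=\SKL.\Enc(\skl.\ek^*,m_\coin;\sfR)$. The reduction receives the real-or-simulated garbled object from the $\GC$ challenger, plants the recoverable labels inside $\sfabe.\ct_{i,\skl.\ek^*[i]}$, and produces the dummy-side ABE ciphertexts itself; it knows $\skl.\ek^*$, $\sfR$, and $m_\coin$ because it generates all PKE-SKL keys and randomness in this game. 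After this step the whole challenge ciphertext is a function of the single PKE-SKL ciphertext $\SKL.\Enc(\skl.\ek^*,m_\coin;\sfR)$ together with public and dummy material.

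Finally I would invoke IND-KLA security of $\SKL$ to replace $m_0$ by $m_1$ inside that ciphertext. The reduction obtains $(\skl.\ek^*,\skl.\qdk^*)$ from the IND-KLA challenger and installs them in the distinguishing key $\qusk^*=(\{\sfabe.\sk_i\}_i,\skl.\ek^*,\skl.\qdk^*)$ (all ABE subkeys are generated honestly, as no ABE instance is reduced to here); when the adversary returns this key the reduction relays it to $\Oracle{\qVrfy}$, so that ``returned before the challenge'' matches the IND-KLA condition $V=\returned$ exactly. Submitting $(m_0,m_1)$ returns the SKL challenge ciphertext, out of which the reduction assembles the hybrid ciphertext via $\SimGC$. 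In the adaptive case the reduction must guess, among the $Q$ key queries, which one is the distinguishing query before it can plant $(\skl.\ek^*,\skl.\qdk^*)$, losing a factor $1/Q$; in the selective case $x^*$ is fixed at the outset, which removes this timing difficulty. Chaining the three arguments shows the $\coin=0$ and $\coin=1$ experiments are indistinguishable.

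The main obstacle, I expect, is Step~1 together with the adaptive bookkeeping. One has to argue rigorously that the unique distinguishing key touches only one of the two ABE master keys at each position, that every oracle-answered key query stays legal (that is, corresponds to $R(x^*,y)=0$) even though both $x^*$ and the identity of the distinguishing key surface only at challenge time, and that the guessing used to fix the ABE challenge instance does not skew the simulation conditioned on a correct guess. Equally important is ordering the hybrids so that $\GC$ is invoked only after the non-recoverable labels are gone — otherwise the simulator would need labels for two inputs — and threading the verification-oracle interface consistently through all three reductions so that the ``key returned before the challenge'' event lines up with the PKE-SKL leasing condition.
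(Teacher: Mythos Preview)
Your plan is correct and matches the paper's three-step hybrid structure (ABE security to eliminate the off-path labels, then garbled-circuit simulation, then IND-KLA security of $\SKL$ to swap $m_0$ for $m_1$). The only notable difference is bookkeeping: the paper introduces a single guessing hybrid at the very start that guesses the index $\tilde j\in[Q]$ of the (at most one) distinguishing query and pre-generates all $\SKL$ keys, so that the ABE reduction already knows every bit of $\skl.\ek^{(\tilde j)}$ and needs no per-position bit guess; you instead guess one bit in each ABE sub-hybrid and defer the index guess to the IND-KLA step---both strategies give polynomial loss. One small omission: you implicitly assume a distinguishing key exists, but the adversary may issue none; the paper handles this by conventionally setting $j^*\coloneqq 1$ in that case.
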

\begin{proof}[Proof of~\cref{thm:ABESKL_from_ABE}]
%\shota{under the construction}
Here, we first focus on the proof for the case of $\ADAINDKLA$ and later mention the necessary modifications for the case of $\SELINDKLA$. %\takashi{$\SELINDKLA$?}
Let $Q$ be the upper bound on the number of key queries to $\Oracle{\qKG}$ before the challenge phase.
We define a sequence of hybrid games.
%\shota{Use $d$ instead of $V'$ in the following.}
\begin{description}
\item[$\hybi{0}$:] This is the same as $\expc{\oneABE,\qA, 1}{ada}{ind}{kla}(1^\secp,0)$. More specifically, it is as follows.
        \begin{enumerate}
            \item The challenger generates $(\sfabe.\pk_{i,b},\sfabe.\msk_{i,b}) \gets \ABE.\Setup(1^\secp)$
            for $i\in [\sklpklen]$ and $b\in \bin$ and sends $\pk\seteq \{\sfabe.\pk_{i,b}\}_{i,b}$ to the adversary $\qA$.
            The challenger then initializes the list $\List{\qKG}$ to be an empty set. 
            $\qA$ can access the following oracles.
            \begin{description}
            \item[$\Oracle{\qKG}(y^{(j)})$:] Given the $j$-th query $y^{(j)}$ with $j\in [Q]$, if there is an entry of the form $(y^{(j)},\vk,V)$, it outputs $\bot$. Otherwise, it generates $(\skl.\ek^{(j)},\skl.\qdk^{(j)},\skl.\vk^{(j)})\la\SKL.\qKG(1^\secp)$ and $\sfabe.\sk^{(j)}_{i} \la \ABE.\KG(\sfabe.\msk_{i,\skl.\ek^{(j)}[i]}, y^{(j)})$ for $i\in [\sklpklen]$,
            where $\skl.\ek^{(j)}[i]$ is the $i$-th bit of the binary string $\skl.\ek^{(j)}$. 
            It then sends $\qusk^{(j)} \seteq (\{ \sfabe.\sk^{(j)}_{i} \}_i ,\skl.\qdk^{(j)})$ to $\qA$ and adds $(y^{(j)},\skl.\vk^{(j)},\bot)$ to $\List{\qKG}$.
            \item[$\Oracle{\qVrfy}(y,\qusk^\prime)$:] Given $(y,\qusk^\prime)$, it finds an entry $(y,\vk,V)$ from $\List{\qKG}$ and parse $\qusk^\prime = (\{\sfabe.\sk^\prime_i\}_i,\skl.\qusk^\prime)$. (If there is no such entry, it returns $\bot$.) 
            It then parses $\vk=\skl.\vk$
            and returns $\decision \seteq \SKL.\qVrfy(\skl.\vk,\skl.\qdk^\prime)$ to $\qA$. It finally updates the entry into $(y,\vk,\decision )$ if $V=\bot$.
            \end{description}
            \item When $\qA$ sends $(x^*,m_0, m_1)$ to the challenger, the challenger checks whether there is at most one entry $(y,\vk,V)$ in $\List{\qKG}$ such that $R(x^*,y)=1$ and for that entry $V=\top$ holds.
            If so, the challenger generates $(\{ \lab_{i,b} \}_{i\in [\sklpklen], b\in \bit }, \tildeE ) \gets \Garble(1^\secp, E[m_0,\sfR])$
            and computes $\sfabe.\ct_{i,b} \la \ABE.\Enc(\sfabe.\pk_{i,b}, x^*, \lab_{i,b}  )$ for $i\in [\sklpklen]$ and $b\in \bit$.
            It then sends $\ct^* \seteq (\{ \sfabe.\ct_{i,b} \}_{i,b}, \tildeE )$ to $\qA$. 
            Otherwise (i.e., if there are multiple entries with $R(x^*,y)=1$ or if there is an entry with $R(x^*,y)=1$ and $V=\bot$), it aborts the game and outputs $0$. 
            
            \item $\qA$ continues to make queries to $\Oracle{\qKG}(\cdot)$ and  $\Oracle{\qVrfy}(\cdot,\cdot )$.
            However, $\qA$ is not allowed to send a key attribute $y$ such that $R(x^*,y)=1$ to $\Oracle{\qKG}$.
            \item $\qA$ outputs a guess $\coin^\prime$ for $\coin$. The challenger outputs $\coin'$ as the final output of the experiment.
        \end{enumerate}
\item[$\hybi{1}$:] 
            This game is the same as $\hybi{0}$ except that the challenger chooses random $\tilde{j} \gets [Q]$ at the beginning of the game. 
            Then, right before it computes the challenge ciphertext,  
            the challenger finds an index $j^*\in [Q]$ such that $R(x^*,y^{(j^*)})=1$. 
            If there is no such a query, we define $j^*\seteq 1$.%
            \footnote{Note that if there are multiple indices $j^*$ satisfying the above, the challenger aborts and outputs $0$ as specified in the previous game.
            Therefore, there is at most one such $j^*$.
            }
            The challenger then checks whether $\tilde{j}= j^*$.
            If so, the challenger continues the game until $\qA$ outputs its guess.
            Otherwise, it aborts the game and outputs $0$ as the outcome of the game. 
            
            Since the choice of $\tilde{j}$ is independent from the view of $\qA$ and the outcome of the game is $1$ only when $\tilde{j}=j^*$,
            we can easily see that 
            $\Pr[\hybi{1}=1] = \Pr[\hybi{0} =1]/Q $.
\item[$\hybi{2}$:]
            This game is the same as $\hybi{1}$ except for the way $\{ \sfabe.\ct_{i,b} \}_{i,b}$ is generated. 
            Namely, we generate $\sfabe.\ct_{i,b}$ as 
            $\sfabe.\ct_{i,b} \la \ABE.\Enc(\sfabe.\pk_{i,b}, \lab_{i, \skl.\ek^{(j^*)}[i]})$
            for $i\in [\sklpklen]$ and $b\in \bit$.
            
            We observe that the labels being encrypted are changed only for positions of the form $(i, 1\oplus \lab_{i, \skl.\ek^{(j^*)}[i]} )$.
            The adversary $\qA$ cannot notice the change since it is not given any secret key that can decrypt the ABE ciphertexts for these positions. 
            To check this, recall that there is at most one index $j^*$ such that $R(x^*,y^{(j^*)})=1$ and for the corresponding key query, the adversary is given ABE secret keys for positions of the form $(i, \lab_{i, \skl.\ek^{(j^*)}[i]} )$, but not for $(i, \lab_{i, 1\oplus\skl.\ek^{(j^*)}[i]} )$. 
            Hence, we obtain $\abs{\Pr[\hybi{1}=1] - \Pr[\hybi{2} =1]} = \negl(\secp)$ by the adaptive security of $\ABE$.
            See \cref{lem:abeskl_abe_step} for the detail.
\item[$\hybi{3}$:] 
            This game is the same as $\hybi{2}$ except for the way $\ct^*$ is generated. 
            In particular, to generate $\ct^*$, we first run 
            $(\{ \lab_{i} \}_{i\in [\sklpklen]}, \tildeE ) \gets \SimGC(1^\secp, \SKL.\Enc(\skl.\ek^{(j^*)}, m_0; \sfR) )$
            and then compute
            $\sfabe.\ct_{i,b} \la \ABE.\Enc(\sfabe.\pk_{i,b}, \lab_{i})$
            for $i\in [\sklpklen]$ and $b\in \bit$.
            
            We claim that this game is indistinguishable from the previous one.
            To see this, it suffices to show that 
            $(\{ \lab_{i, \skl.\ek^{(j^*)}[i]} \}_i, \tildeE)$ 
            computed by $(\{ \lab_{i,b} \}_{i,b}, \tildeE ) \gets \Garble(1^\secp, E[m_0,\sfR])$
            and 
            $(\{ \lab_{i} \}_{i\in [\sklpklen]}, \tildeE )$
            computed by
            $(\{ \lab_{i} \}_{i\in [\sklpklen]}, \tildeE ) \gets \SimGC(1^\secp, \SKL.\Enc(\skl.\ek^{(j^*)}, m_0; \sfR) )$
            are computationally indistinguishable.
            This immediately follows from the security of the garbled circuit, since we have 
            \[
            E[m_0,\sfR](\skl.\ek^{(j^*)}) 
            = \SKL.\Enc(\skl.\ek^{(j^*)}, m_0; \sfR)
            \]
            by the definition of $E$.
            Hence, we obtain $\abs{\Pr[\hybi{2}=1] - \Pr[\hybi{3} =1]} = \negl(\secp)$.

\item[$\hybi{4}$:] 
            This game is the same as $\hybi{3}$ except that the challenger
            chooses $(\{\lab_i\}_i,\tildeE)$ by $(\{ \lab_{i} \}_{i\in [\sklpklen]}, \tildeE ) \gets \SimGC(1^\secp, \SKL.\Enc(\skl.\ek^{(j^*)}, m_1; \sfR) )$
            instead of 
            $(\{ \lab_{i} \}_{i\in [\sklpklen]}, \tildeE ) \gets \SimGC(1^\secp, \SKL.\Enc(\skl.\ek^{(j^*)}, m_0; \sfR) )$.
            
           To show that $\abs{\Pr[\hybi{3}=1]-\Pr[\hybi{4} =1]}=\negl(\secp)$, it suffices to show that $\SKL.\Enc(\skl.\ek^{(j^*)}, m_0;\sfR) $ is indistinguishable from $\SKL.\Enc(\skl.\ek^{(j^*)}, m_1; \sfR)$ for $\qA$, if it makes $\Oracle{\qVrfy}$ output $\top$ on input $(y^{(j^*)},\qusk')$ for some $\qusk'$
           before the challenge ciphertext is given to $\qA$. 
           The indistinguishability follows from the security of $\SKL$, 
           since the fact that $\qA$ passes the verification $\Oracle{\qVrfy}$ implies that $\qA$ submitted $\skl.\qdk^\prime$ such that $\SKL.\qVrfy(\skl.\vk^{(j^*)},\skl.\qdk^\prime)=\top$
           before it is given the challenge ciphertext and therefore it has no longer the ability to decrypt the ciphertext.
           To turn this intuition into a formal reduction, we have to embed the public key of $\SKL$ into the answer to the $j^*$-th key generation query. 
           Since the reduction algorithm does not know $j^*$ until $\qA$ submits $(x^*,m_0,m_1)$, it can only guess it.
           The change in $\hybi{1}$ is introduced in order to incorporate the guess into the game so that the reduction is possible. 
           We refer to \cref{lem:abeskl_skl_pke_step} for the formal proof for $\abs{\Pr[\hybi{3}=1] - \Pr[\hybi{4} =1]} = \negl(\secp)$.
\item[$\hybi{5}$:] 
This is the same as $\expc{\oneABE,\qA,1}{ada}{ind}{kla}(1^\secp,1)$.
\end{description}
From the above discussion, we have
\begin{eqnarray}\label{Eq:abe_skl_semifinal_step}
   \abs{ Q \Pr[\hybi{3}=1] -  \Pr[\hybi{0}=1]}
    &=& Q \abs{ \Pr[\hybi{3}=1] - \Pr[\hybi{1}=1]} \nonumber
    \\
    &\leq & Q \sum_{i\in [0,2] }\abs{\Pr[\hybi{i+1}=1] - \Pr[\hybi{i}=1]} 
    \leq \negl(\secp).
\end{eqnarray}
We then observe that $\hybi{5}$ (resp., $\hybi{4}$) is the same as $\hybi{0}$ (resp., $\hybi{3}$) except that $m_1$ is used for the encryption instead of $m_0$. Therefore, we obtain 
%\begin{equation}\label{Eq:abe_skl_final_step}
$
     \abs{ Q \Pr[\hybi{4}=1] -  \Pr[\hybi{5}=1]}
    \leq \negl(\secp)
$
%\end{equation}
analogously to Eq.~\eqref{Eq:abe_skl_semifinal_step} by considering similar sequence of the games with $m_0$ being replaced by $m_1$ in reverse order.
We therefore have
\begin{eqnarray*}
   &&\abs{\expc{\oneABE,\qA,1}{ada}{ind}{kla}(1^\secp,0)-\expc{\oneABE,\qA,1}{ada}{ind}{kla}(1^\secp,1)}
   \\
   &&= \abs{ \Pr[\hybi{0}=1] -  \Pr[\hybi{5}=1]}
   \\
   && \leq \abs{ \Pr[\hybi{0}=1] -  Q \Pr[\hybi{3}=1]} +
   Q \cdot \abs{ \Pr[\hybi{3}=1] - \Pr[\hybi{4}=1]} +
   \abs{ \Pr[\hybi{5}=1] -  Q \Pr[\hybi{4}=1]} 
   \\
   && \leq \negl(\secp)
\end{eqnarray*}
as desired. 
It remains to prove \cref{lem:abeskl_abe_step,lem:abeskl_skl_pke_step}. 

\begin{lemma}\label{lem:abeskl_abe_step}
$\abs{\Pr[\hybi{1}=1] - \Pr[\hybi{2} =1]} = \negl(\secp)$ if $\ABE$ is adaptively secure.
\end{lemma}
\begin{proof}
This can be reduced to the adaptive security of $\ABE$ by a standard hybrid argument where we modify the way of generating $\ABE.\ct_{i,1\oplus \skl.\ek^{(j^*)}[i]}$ for each $i\in[\sklpklen]$ one by one. 
More precisely, the reduction works as follows.

We define additional hybrids $\hybi{1.k}$ for $k\in [\sklpklen]$ as follows.

\begin{description}
\item[$\hybi{1.k}$:] This is identical to $\hybi{1}$ except that $\sfabe.\ct_{i,1\oplus \skl.\ek^{(j^*)}[i]}$ is generated as 
\begin{equation}
\label{eq:ABECT}
\sfabe.\ct_{i,1\oplus \skl.\ek^{(j^*)}[i]}\gets
\begin{cases}
\ABE.\Enc(\sfabe.\pk_{i,1\oplus \skl.\ek^{(j^*)}[i]},\lab_{i,\skl.\ek^{(j^*)}[i]}) & i<k
\\
\ABE.\Enc(\sfabe.\pk_{i,1\oplus \skl.\ek^{(j^*)}[i]},\lab_{i,1\oplus \skl.\ek^{(j^*)}[i]}) & i\ge k
\end{cases}
\end{equation}
for $i\in [\secp]$. 
\end{description}
Clearly, we have $\hybi{1}=\hybi{1.1}$ and $\hybi{2}=\hybi{1.\sklpklen+1}$. Thus, it suffices to prove that 
$
\left|\Pr[\hybi{1.k+1}=1]-\Pr[\hybi{1.k}=1]\right|=\negl(\secp) 
$
for all $k\in [\sklpklen]$.
Remark that the only difference between $\hybi{1.k+1}$ and $\hybi{1.k}$ is the way of generating $\sfabe.\ct_{k,1\oplus \skl.\ek^{(j^*)}[i]}$.
%Suppose that $\left|\Pr[\hybi{1.j+1}=1]-\Pr[\hybi{1.j}=1]\right|$ is non-negligible. Then 
To show that $\left|\Pr[\hybi{1.k+1}=1]-\Pr[\hybi{1.k}=1]\right|=\negl(\secp)$, 
we construct $\qB$ against adaptive security of $\ABE$ as follows. 

\begin{description}
\item[$\qB(\sfabe.\pk)$:] It works as follows.
\begin{enumerate}
        \item 
        It chooses $\tilde{j} \gets [Q]$ and $(\skl.\ek^{(j)},\skl.\qdk^{(j)},\skl.\vk^{(j)})\la\SKL.\qKG(1^\secp)$
        for $j\in [Q]$.
        
        \item  Generate $(\sfabe.\pk_{i,b},\sfabe.\msk_{i,b})\gets \ABE.\KG(1^\secp)$ for $(i,b)\in [\sklpklen]\times \bit \backslash \{(k, 1\oplus \skl.\ek^{(\tilde{j})}[k] )\}$. 
        Set  
        $\sfabe.\pk_{k, 1\oplus \skl.\ek^{(\tilde{j})}[k] } \seteq \sfabe.\pk$
        and send $\pk\seteq \{ \sfabe.\pk_{i,b}\}_{i,b}$ to $\qA$.
        
        \item 
        $\qB$ initializes the list $\List{\qKG}$ to be an empty set and 
        simulates the following oracles for $\qA$.
            \begin{description}
            \item[$\Oracle{\qKG}(y^{(j)})$:] Given the $j$-th query $y^{(j)}$ with $j\in [Q]$, if there is an entry of the form $(y^{(j)},\vk,V)$, it outputs $\bot$. Otherwise, it generates $\sfabe.\sk^{(j)}_{i} \la \ABE.\KG(\sfabe.\msk_{i,\skl.\ek^{(j)}[i]}, y^{(j)})$ for $i\in [\sklpklen]\backslash\{k\}$. 
            To simulate $\sfabe.\sk^{(j)}_{k}$, $\qB$ preceeds as follows.
            If $\skl.\ek^{(j)}[k] = 1\oplus \skl.\ek^{(\tilde{j})}[k]$, 
            it queries $y^{(j)}$ to its challenger.
            The challenger runs 
            \[
            \sfabe.\sk \gets \ABE.\KG(\sfabe.\msk,y^{(j)})
            \]
            and returns it to $\qB$.
            $\qB$ then sets $\sfabe.\sk^{(j)}_k \seteq \sfabe.\sk$.
            Otherwise (i.e., if $\skl.\ek^{(j)}[k] =  \skl.\ek^{(\tilde{j})}[k]$), it runs $\sfabe.\sk^{(j)}_{k} \la \ABE.\KG(\sfabe.\msk_{k,\skl.\ek^{(j)}[k]}, y^{(j)})$.
            It then sends $\qusk^{(j)} \seteq (\{ \sfabe.\sk^{(j)}_{i} \}_i ,\skl.\qdk^{(j)})$ to $\qA$ and adds $(y^{(j)},\skl.\vk^{(j)},\bot)$ to $\List{\qKG}$.
            
            \item[$\Oracle{\qVrfy}(y,\qusk^\prime)$:] Given $(y,\qusk^\prime)$, it finds an entry $(y,\vk,V)$ from $\List{\qKG}$ and parse $\qusk^\prime = (\{\sfabe.\sk^\prime_i\}_i,\skl.\qusk^\prime)$. (If there is no such entry, it returns $\bot$.) 
            It then parses $\vk=\skl.\vk$
            and returns $\decision \seteq \SKL.\qVrfy(\skl.\vk,\skl.\qdk^\prime)$ to $\qA$. It finally updates the entry into $(y,\vk,\decision)$ if $V=\bot$.
            \end{description}
        
         \item 
         When $\qA$ sends $(x^*,m_0, m_1)$ to the challenger, $\qB$ checks whether there are multiple entries $(y,\vk,V)$ in $\List{\qKG}$ such that $R(x^*,y)=1$ or there is an entry $(y,\vk,V)$ in $\List{\qKG}$ with $R(x^*,y)=1$ and $V=\top$.
         If so, $\qB$ aborts the game and outputs $0$ as its guess.
         Otherwise, $\qB$ defines $j^*\in [Q]$ as in $\hybi{1}$.
         It then aborts and outputs $0$ if $j^*\neq \tilde{j}$.
         Otherwise, $\qB$ computes $\ct^*$ as follows.
         It first chooses $\sfR\gets \bit^{\sklrandlen}$ and 
         computes $(\{ \lab_{i,b} \}_{i\in [\sklpklen], b\in \bit }, \tildeE ) \gets \Garble(1^\secp,E[m_0,\sfR])$.
         It then computes 
         $\sfabe.\ct_{i,b}$
         for $(i,b)\in [\sklpklen]\times \bit \backslash \{(k, 1\oplus \skl.\ek^{(j^*)}[k] )\}$
         as in \cref{eq:ABECT}.
         $\qB$ then submits $( \lab_{k,\skl.\ek^{(j^*)}[k]}, \lab_{k, 1\oplus \skl.\ek^{(j^*)}[k]} )$ to its challenger.
         Then, the challenger runs
         \[
         \sfabe.\ct \gets \ABE.\Enc(\sfabe.\pk, \lab_{\overline{\coin}\oplus \skl.\ek^{(j^*)}[k] } )
         \]
         and gives $\sfabe.\ct$ to $\qB$,
         where $\overline{\coin} \in \bit$ is the coin chosen by the challenger.
         Then, $\qB$ sets $\sfabe.\ct_{k, 1\oplus \skl.\ek^{(j^*)}[k]} \seteq \sfabe.\ct$ and gives 
         $\ct^*\seteq ( \{\sfabe.\ct_{i,b}\}_{i,b}, \tildeE )$ to $\qA$.
         
         \item $\qA$ then continues to make queries to $\Oracle{\qKG}(\cdot)$ and $\Oracle{\qVrfy}(\cdot,\cdot)$.
         $\qB$ answers the queries in the same manner as before the challenge query.
         
         \item $\qA$ finally outputs its guess.
         $\qB$ outputs the same bit as its guess.
        \end{enumerate} 
\end{description}
We first argue that $\qB$ does not make any prohibited key query. 
To see this, we first observe that for every key query $y$ that $\qB$ makes, 
there exists $j$ such that $y=y^{(j)}$.
We then observe that $R(x^*,y^{(j)})=0$ for $j\neq j^*$ and $\qB$ does not make a key query for $y^{(j^*)}$ in the above simulation. 

We have 
%
\begin{comment}
\begin{align}
\advb{\ABE,\qB}{ada}{ind}(\secp) 
&=2\left|\Pr[\qB(\sfabe.\pk)=\coin^\prime]-\frac{1}{2}\right|\\
&=\left|\Pr[\qB(\sfabe.\pk)=1|\coin^\prime=0]-\Pr[\qB(\sfabe.\pk)=1|\coin'=1]
\right|\\
&=\left|\Pr[\hybi{1.k+1}=1]-\Pr[\hybi{1.k}=1]\right|
\end{align}
\end{comment}
%
\begin{align}
\advb{\ABE,\qB}{ada}{ind}(\secp) 
&=2\left|\Pr[\mbox{ $\qB$ outputs $\overline{\coin}$ }]-\frac{1}{2}\right|\\
&=\left|\Pr[\mbox{ $\qB$ outputs $1$ }|\overline{\coin}=0]-\Pr[\mbox{ $\qB$ outputs $1$ }|\overline{\coin}=1]
\right|\\
&=\left|\Pr[\hybi{1.k+1}=1]-\Pr[\hybi{1.k}=1]\right|
\end{align}
where the probabilities are taken over the randomness used in the respective games. Thus, $\left|\Pr[\hybi{1.k+1}=1]-\Pr[\hybi{1.k}=1]\right|=\negl(\secp)$ by the adaptive security of $\ABE$.
This completes the proof of \Cref{lem:abeskl_abe_step}. 

\end{proof}

\begin{lemma}\label{lem:abeskl_skl_pke_step}
$\abs{\Pr[\hybi{3}=1] - \Pr[\hybi{4} =1]} = \negl(\secp)$ if $\SKL$ is IND-KLA secure.
\end{lemma}
\begin{proof}
This can be reduced to the IND-KLA security security of $\SKL$.  
To do so, we construct an adversary $\qB$ against IND-KLA security of the scheme with advantage $\abs{\Pr[\hybi{3}=1] - \Pr[\hybi{4} =1]}$ as follows.
\begin{description}
\item[$\qB(\skl.\ek,\skl.\qdk)$:] It works as follows.
\begin{enumerate}
        \item 
        It chooses $\tilde{j} \gets [Q]$ and $(\skl.\ek^{(j)},\skl.\qdk^{(j)},\skl.\vk^{(j)})\la\SKL.\qKG(1^\secp)$
        for $j\in [Q]\backslash\{\tilde{j}\}$.
        It then sets $(\skl.\ek^{(\tilde{j})},\skl.\qdk^{(\tilde{j})})\seteq (\skl.\ek,\skl.\qdk)$.
        It then generates $(\sfabe.\pk_{i,b},\sfabe.\msk_{i,b}) \gets \ABE.\Setup(1^\secp)$
            for $i\in [\sklpklen]$ and $b\in \bin$ and sends $\pk\seteq \{\sfabe.\pk_{i,b}\}_{i,b}$ to the adversary $\qA$.
        
        \item 
        $\qB$ initializes the list $\List{\qKG}$ to be an empty set and 
        simulates the following oracles for $\qA$.
            \begin{description}
            \item[$\Oracle{\qKG}(y^{(j)})$:] Given the $j$-th query $y^{(j)}$ with $j\in [Q]$, if there is an entry of the form $(y^{(j)},\vk,V )$, it outputs $\bot$. Otherwise, it generates $\sfabe.\sk^{(j)}_{i} \la \ABE.\KG(\sfabe.\msk_{i,\skl.\ek^{(j)}[i]}, y^{(j)})$ for $i\in [\sklpklen]$.
            It then returns $\qusk^{(j)} \seteq (\{ \sfabe.\sk^{(j)}_{i} \}_i ,\skl.\qdk^{(j)})$ to $\qA$ and adds $(y^{(j)},\skl.\vk^{(j)},\bot)$ to $\List{\qKG}$.
            
            \item[$\Oracle{\qVrfy}(y,\qusk^\prime)$:] Given $(y,\qusk^\prime)$, it finds an entry $(y,\vk,V)$ from $\List{\qKG}$ and parses $\qusk^\prime = (\{\sfabe.\sk^\prime_i\}_i,\skl.\qdk^\prime)$. (If there is no such entry, it returns $\bot$.) 
            If $y=y^{(j)}$ for $j\neq \tilde{j}$, $\qB$ returns $\decision \seteq \SKL.\qVrfy(\skl.\vk^{(j)}, \skl.\qdk^\prime)$ to $\qA$. 
            Otherwise (i.e., if $y=y^{(\tilde{j})}$), $\qB$ submits $\skl.\qdk^\prime$ to its verification oracle.
            Then, 
            \[ \decision \seteq \SKL.\qVrfy(\skl.\vk,\skl.\qdk^\prime) \]
            is computed and returned to $\qB$.
            $\qB$ then returns $\decision$ to $\qA$.
            It finally updates the entry into $(y,\vk,\decision)$ if $V=\bot$.
            
            \end{description}
        \item 
        When $\qA$ sends $(x^*,m_0, m_1)$ to the challenger, 
        $\qB$ checks whether there are multiple entries $(y,\vk,V)$ in $\List{\qKG}$ such that $R(x^*,y)=1$ or there is an entry $(y,\vk,V)$ in $\List{\qKG}$ with $R(x^*,y)=1$ and $V=\top$.
         If so, $\qB$ aborts the game and outputs $0$ as its guess.
         Otherwise, $\qB$ defines $j^*\in [Q]$ as in $\hybi{1}$.
         It then aborts and outputs $0$ if $j^*\neq \tilde{j}$.
         Otherwise, $\qB$ computes $\ct^*$ as follows.
         It first submits $(m_0,m_1)$ to its challenger.
         Then, the challenger runs 
         \[
         \skl.\ct \la \SKL.\Enc(\skl.\ek, m_{\overline{\coin}})
         \]
         and returns it to $\qB$, where $\overline{\coin}\in \bit$ is the coin chosen by the challenger.
         $\qB$ then runs $(\{ \lab_{i} \}_{i\in [\sklpklen]}, \tildeE ) \gets \SimGC(1^\secp, \skl.\ct )$ and computes
        $\sfabe.\ct_{i,b} \la \ABE.\Enc(\sfabe.\pk_{i,b}, \lab_{i})$
        for $i\in [\sklpklen]$ and $b\in \bit$.
         Then, $\qB$ sets $\ct^*\seteq ( \{\sfabe.\ct_{i,b}\}_{i,b}, \tildeE )$ and gives it to $\qA$.
         
         \item $\qA$ then continues to make queries to $\Oracle{\qKG}(\cdot)$ and $\Oracle{\qVrfy}(\cdot,\cdot)$.
         $\qB$ answers the queries in the same manner as before the challenge query.
         
         \item $\qA$ finally outputs its guess.
         $\qB$ outputs the same bit as its guess.
\end{enumerate}
\end{description}
We then have 
\begin{align}
\advb{\SKL,\qB}{ind}{kla}(\secp) 
%&=2\left|\Pr[\mbox{ $\qB$ outputs $\coin^\prime$ }]-\frac{1}{2}\right|\\
&=\left|\Pr[\mbox{ $\qB$ outputs $1$ }|\overline{\coin}=0]-\Pr[\mbox{ $\qB$ outputs $1$ }|\overline{\coin}=1]
\right|\\
&=\left|\Pr[\hybi{3}=1]-\Pr[\hybi{4}=1]\right|
\end{align}
where the probabilities are taken over the randomness used in the respective games. Thus, $\left|\Pr[\hybi{3}=1]-\Pr[\hybi{4}=1]\right|=\negl(\secp)$ by the security of $\SKL$.
This completes the proof of \Cref{lem:abeskl_skl_pke_step}. 
\end{proof}

This completes the proof of \cref{thm:ABESKL_from_ABE} for the case of adaptive security. 

\paragraph{The proof for selective security.}
The statement for selective security can be obtained immediately by considering the same sequence of games as adaptive security case with natural adaptations. In particular, we modify the reduction algorithm in \cref{lem:abeskl_abe_step} so that it outputs $x^*$ at the beginning of the game right after given $x^*$ from $\qA$.

An alternative option is to consider a simpler proof that is tailored to selective setting. This is possible because the proof obtained by adapting the adaptive setting to the selective setting includes a redundant step. 
In particular, we consider a sequence of games without $\hybi{1}$. The reason why $\hybi{1}$ is not necessary is that in the selective setting, the reduction algorithm obtains $x^*$ at the beginning of the game and can use this information throughout the game. 
In particular, whenever $\qA$ makes a key query $y^{(j)}$, the reduction algorithm can check whether $j^*=j$ holds or not by computing the value of $R(x^*,y^{(j)})$ and there is no need to guess it.
By introducing this change, we can improve the reduction cost to be independent of $Q$.
\end{proof}

\subsection{$Q$-Bounded Distinguishing Key Construction}
\label{sec:ABE_SKL_q_bounded}
We construct an ABE-SKL scheme $\qABESKL =(\Setup,\qKG,\Enc,\qDec,\qVrfy)$ for relation $R:\cX\times \cY \to \bit$ with $q$-bounded distinguishing key $\ADAINDKLA$ (resp., $\SELINDKLA$) security
from an ABE-SKL scheme $\oneABESKL =\oneABESKL.(\Setup,\qKG,\Enc,\qDec,\qVrfy)$ for the same relation $R$ with $1$-bounded distinguishing key $\ADAINDKLA$ (resp., $\SELINDKLA$) security.
We note that the construction here is essentially the same as \cite{PKC:ISVWY17}, which converts a single collusion secure ABE scheme into a $q$-bounded collusion secure ABE.
However, our proof is more complex reflecting the fact that the adversary is allowed to make unbounded number of key queries (though the number of distinguishing keys is bounded).

The following construction uses parameters $v\seteq v(\secp)$ and $w\seteq w(\secp)$.
We will set the parameters in \cref{thm:qABESKL_from_1ABE}. 

\begin{description}
\item[$\Setup(1^\secp)$:] $ $
\begin{itemize}
    \item 
    For $i\in [v]$ and $j\in [w]$, run $(\oneabeskl.\pk_{i,j},\oneabeskl.\msk_{i,j}) \gets \oneABESKL.\Setup(1^\secp)$.
    \item Output $(\pk,\msk)\seteq (\{ \oneabeskl.\pk_{i,j} \}_{i\in [v], j\in [w] },\{ \oneabeskl.\msk_{i,j}\}_{i\in [v], j\in [w] })$.
\end{itemize}
\item[$\qKG(\msk,y)$:] $ $
\begin{itemize}
	\item For $i\in [v]$, choose $j_i \gets [w]$. 
	\item Run $(\oneabe.\vk_i, \oneabe.\qusk_{i }) \gets \oneABE.\qKG(\oneabe.\msk_{i,j_i},y)$
	for $i\in [v]$.
	\item Output $\qusk \seteq \{ j_i, \oneabe.\qusk_i \}_{i\in [v] } $ and $\vk \seteq \{ \oneabe.\vk_i \}_{i\in [v] } $.
\end{itemize}
\item[$\Enc(\pk,x,m)$:] $ $
\begin{itemize}
\item Choose $\mu_1,\ldots, \mu_{v-1} \chosen \bin^{\ell}$ and set 
$\mu_v \seteq (\oplus_{i\in [v-1]}\mu_i) \oplus m$, where $\oplus$ denotes bit-wise XOR here.
\item Run $\oneabe.\ct_{i,j} \la \oneABE.\Enc(\oneabe.\pk_{i,j}, x, \mu_i  )$
for $i\in [v]$ and $j \in [w]$.
\item Output $\ct \seteq \{ \oneabe.\ct_{i,j} \}_{i\in [v], j\in [w] }$.
\end{itemize}
\item[$\qDec(\qusk,x,\ct)$:] $ $
\begin{itemize}
\item Parse $\qusk \seteq \{ j_i, \oneabe.\qusk_i \}_{i\in [v] }$ and $\ct \seteq \{ \oneabe.\ct_{i,j} \}_{i\in [v], j\in [w] }$.
\item Compute $\mu^\prime_{i} \gets \oneABE.\Dec(\oneabe.\qusk_{i},x,\oneabe.\ct_{i,j_i})$ for $i\in [v]$.
\item Compute and output $m' \seteq \oplus_{i\in [v]}\mu^\prime_i$.
\end{itemize}
\item[$\qVrfy(\vk,\qusk^\prime)$:] $ $
\begin{itemize}
\item Parse $\vk = \{ \oneabe.\vk_i \}_{i\in [v] } $ and $\qusk' \seteq \{ j_i, \oneabe.\qusk'_i \}_{i\in [v] } $.
\item Compute $\decision_i\la \oneABE.\qVrfy(\oneabe.\vk_i,\oneabe.\qusk^\prime_i)$
for $i\in [v]$.
\item If $\decision_i=\top$ for all $i\in [v]$, output $\top$. Otherwise, output $\bot$.
\end{itemize}
\end{description}
It is straightforward to see that the decryption correctness and the verification correctness of the above scheme follow from those of $\oneABE$.

\begin{theorem}\label{thm:qABESKL_from_1ABE}
Assuming $\oneABESKL$ is $1$-bounded distinguishing key $\ADAINDKLA$ (resp., $\SELINDKLA$) secure, $\qABESKL$ is $q$-bounded distinguishing key $\ADAINDKLA$ (resp., $\SELINDKLA$) secure if we set the parameters as follows: 
\begin{itemize}
    \item 
    For the adaptive case, we assume that the size of the ciphertext attribute space $|\cX_\secp|$ is bounded by $2^{n(\secp)}$ for some polynomial function $n(\secp)$.
    We then set $v= 2(\secp + n )$ and $w= q^2$.
    \item For the selective case, we set $v=\secp$ and $w=q^2$.
\end{itemize}
\end{theorem}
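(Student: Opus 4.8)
The plan is to follow the balls-and-bins strategy of~\cite{PKC:ISVWY17}: I show that, with overwhelming probability over the random slot choices $\{j_i\}$ made in $\qKG$, there is a bin $i^*\in[v]$ in which all of the at most $q$ distinguishing keys fall into pairwise distinct instances, and then argue indistinguishability by switching the single share $\mu_{i^*}$ that the ciphertexts of bin $i^*$ carry, reducing each switch to the $1$-bounded distinguishing key security of $\oneABESKL$. First I would isolate the purely combinatorial claim. For a fixed candidate target $x^*$, the slot vectors $(j_1^{(k)},\dots,j_v^{(k)})$ attached to the distinguishing keys are independent and uniform over $[w]^v$, and whether bin $i$ contains a collision depends only on the $i$-th coordinates; hence the $v$ events ``bin $i$ is bad'' are mutually independent. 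With $w=q^2$ and at most $q$ balls, a union bound gives $\Pr[\text{bin } i \text{ bad}]\le\binom{q}{2}/w<1/2$, so $\Pr[\text{every bin bad}]<2^{-v}$.

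For the selective case $x^*$ is fixed in advance, so taking $v=\secp$ already makes this bad event $2^{-\secp}=\negl(\secp)$. For the adaptive case the adversary learns all pre-challenge slot vectors before committing to $x^*$, so I must union bound over all $|\cX_\secp|\le 2^{n}$ possible targets; choosing $v=2(\secp+n)$ yields $2^{n}\cdot 2^{-2(\secp+n)}\le 2^{-\secp}=\negl(\secp)$. A minor care is needed because the number of distinguishing keys is itself determined adaptively, and the adversary sees the slots; I would handle this by padding the distinguishing set to exactly $q$ fresh independent slot vectors and observing that a bin good for this superset is good for every subset, so the bound only improves, and that the freshly sampled slots are uniform regardless of the adversary's adaptive choices.

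Given a good bin, I would run the main reduction. First guess $i^*\gets[v]$, losing a factor $1/v$, and abort unless $i^*$ turns out to be good. Exploiting the additive structure of the secret sharing, sample $\mu_i\gets\bit^\ell$ for every $i\ne i^*$ independently of the challenge bit and set the bin-$i^*$ share to the correction term $s_\beta\seteq(\oplus_{i\ne i^*}\mu_i)\oplus m_\beta$, so that only bin $i^*$ depends on the message. Then interpolate instance by instance: in hybrid $\hybi{j}$ (for $j=0,\dots,w$) instances $1,\dots,j$ of bin $i^*$ encrypt $s_1$ while the rest encrypt $s_0$, so $\hybi{0}$ is an encryption of $m_0$ and $\hybi{w}$ an encryption of $m_1$. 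The step $\hybi{j-1}\to\hybi{j}$ alters only $\oneabe.\ct_{i^*,j}$ from $s_0$ to $s_1$, which I reduce to the $1$-bounded distinguishing key security of $\oneABESKL$ under master key $\oneabe.\msk_{i^*,j}$: the reduction holds every other master key, simulates all other instances and bins itself, forwards to the external key and verification oracles exactly the queries touching instance $(i^*,j)$, and submits $(s_0,s_1)$ as the external challenge messages for attribute $x^*$. Summing over the $w=\poly(\secp)$ steps gives $\abs{\Pr[\hybi{0}=1]-\Pr[\hybi{w}=1]}=\negl(\secp)$.

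The main obstacle is verifying that this reduction respects the external game despite the \emph{unbounded} number of non-distinguishing key queries, which is precisely where the argument departs from~\cite{PKC:ISVWY17}. Because the good-bin property guarantees that instance $(i^*,j)$ receives at most one distinguishing key, and that key is returned before the challenge by the rules of the $q$-bounded game, the arbitrarily many further queries routed to $\oneabe.\msk_{i^*,j}$ are all attributes $y$ with $R(x^*,y)=0$ and hence admissible in the $1$-bounded distinguishing key experiment of $\oneABESKL$. A second delicate point is that in the adaptive setting the reduction embeds the external instance into $(i^*,j)$ before learning which queries are distinguishing, so it forwards all bin-$(i^*,j)$ key and verification queries blindly and relies on the good-bin event (together with the correct guess of $i^*$) to ensure the external challenger's admissibility check passes; conditioned on these events the simulation is perfect, and since the $1/v$, $w$, and combinatorial losses are all polynomial, chaining the three paragraphs yields $q$-bounded distinguishing key $\ADAINDKLA$ (resp.\ $\SELINDKLA$) security of $\qABESKL$.
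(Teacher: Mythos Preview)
Your proposal follows essentially the same route as the paper: isolate the combinatorial ``good bin'' event, guess the good bin with a $1/v$ loss, use the additive sharing so that only bin $i^*$ carries the message, and then walk across the $w$ instances of that bin one at a time, reducing each step to the $1$-bounded distinguishing key security of $\oneABESKL$. You also correctly identify the key difficulty compared to \cite{PKC:ISVWY17}, namely that the reduction must absorb an unbounded number of non-distinguishing key queries at the embedded instance, and you resolve it the same way the paper does.

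Two small points where the paper is slightly more careful than your sketch. First, you write ``abort unless $i^*$ turns out to be good'' and claim a clean $1/v$ loss; if several bins are good, the abort probability is not $1-1/v$ and the accounting for the distinguishing advantage becomes awkward (the number of good bins can correlate with the adversary's output). The paper avoids this by fixing $i^*$ to be the \emph{smallest} good index and aborting unless the guess $\tilde{i}$ equals that specific value, which gives exactly $\Pr[\hybi{2}=1]=\Pr[\hybi{1}=1]/v$. Second, for the combinatorial bound you use $\Pr[\text{bin }i\text{ bad}]\le\binom{q}{2}/w<1/2$ giving $2^{-v}$, whereas the paper bounds it by $1-(1-(q-1)/w)^q\le 1-e^{-1}$ giving $(1-e^{-1})^v$; both work with the stated parameters, and your bound is arguably cleaner. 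Your ``padding to exactly $q$ fresh slot vectors'' device is a reasonable way to handle the adaptive dependence of $K_{x^*}$ on the slots; the paper instead takes the union bound over all $x\in\cX_\secp$ directly and applies the selective-style estimate to each fixed $x$.
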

\begin{proof}[Proof of~\cref{thm:qABESKL_from_1ABE}]
Here, we first focus on the proof for the case of $q$-bounded distinguishing key $\ADAINDKLA$ and later mention the difference for the case of $q$-bounded distinguishing key $\SELINDKLA$.
We define a sequence of hybrid games.
\begin{description}
\item[$\hybi{0}$:] This is the same as $\expc{\qABE,\qA,q}{ada}{ind}{kla}(1^\secp,0)$. More specifically, it is as follows.
        \begin{enumerate}
            \item The challenger generates $(\oneabe.\pk_{i,j},\oneabe.\msk_{i,j}) \gets \oneABE.\Setup(1^\secp)$
            for $i\in [v]$ and $j \in [w]$ and sends $\pk\seteq \{\oneabe.\pk_{i,j}\}_{i,j}$ to the adversary $\qA$.
            The challenger then initializes the list $\List{\qKG}$ to be an empty set. 
            $\qA$ can access the following oracles.
            \begin{description}
            \item[$\Oracle{\qKG}(y^{(k)})$:] Given the $k$-th query $y^{(k)}$ with $k\in [Q]$, if there is an entry of the form $(y^{(k)},\vk,V)$, it outputs $\bot$. Otherwise, it chooses $j_i^{(k)} \gets [w]$ for $i\in [v]$ and runs $(\oneabe.\vk_i^{(k)}, \oneabe.\qusk_{i }^{(k)}) \gets \oneABE.\qKG(\oneabe.\msk_{i,j_i},y^{(k)})$ for $i\in [v]$. It then returns $\qusk^{(k)} \seteq \{j_i^{(k)}, \oneabe.\qusk_i^{(k)} \}_{i\in [v] } $ and $\vk^{(k)} \seteq \{ \oneabe.\vk_i^{(k)} \}_{i\in [v] } $ to $\qA$ and adds $(y^{(k)},\vk^{(k)},\bot)$ to $\List{\qKG}$.
            \item[$\Oracle{\qVrfy}(y,\qusk^\prime)$:] Given $(y,\qusk^\prime)$, it finds an entry $(y,\vk,V)$ from $\List{\qKG}$ and parses $\qusk^\prime = \{j_i, \qusk_i^\prime\}_i$. (If there is no such entry, it returns $\bot$.) 
            It then computes $\decision_i \seteq \oneABE.\qVrfy(\oneabe.\vk_i,\qusk^\prime_i)$ for $i\in [v]$ and checks if $\decision_i=\top$ for all $i\in [v]$.
            If so, it returns $\decision \seteq \top$ to $\qA$. 
            Otherwise, it returns $\decision \seteq \bot$ to $\qA$.
            It finally updates the entry into $(y,\vk,\decision)$ if $V=\bot$.
            \end{description}
            \item When $\qA$ sends $(x^*,m_0, m_1)$ to the challenger, the challenger computes the set $K_{x^*}\seteq \{ k\in [Q_1]: R(x^*,y^{(k)})=1 \}$, where $Q_1 \leq Q$ is the number of key queries made by $\qA$ so far.
            If we have $V=\top$ for all entries of the form $(y^{(k)}, \vk, V)$ in $\List{\qKG} $ with $k\in K_{x^*}$ and $|K_{x^*}|\leq q$, 
            the challenger chooses $\mu_1,\ldots, \mu_{v-1} \chosen \bin^{\ell}$, sets $\mu_v \seteq (\oplus_{i\in [v-1]}\mu_i) \oplus m_0$, and computes $\oneabe.\ct_{i,j} \la \oneABE.\Enc(\oneabe.\pk_{i,j}, x^*, \mu_i  )$ for $i\in [v]$ and $j\in [w]$.
            It then sends $\ct^* \seteq \{ \oneabe.\ct_{i,j} \}_{i,j}$ to $\qA$. 
            Otherwise (i.e., if $|K_{x^*}|> q$ or if there is an entry of the form $(y^{(k)}, \vk, \bot)$ for some $k\in K_{x^*}$), it aborts the game and outputs $0$. 
            
            \item $\qA$ continues to make queries to $\Oracle{\qKG}(\cdot)$ and  $\Oracle{\qVrfy}(\cdot,\cdot )$.
            However, $\qA$ is not allowed to send a key attribute $y$ such that $R(x^*,y)=1$ to $\Oracle{\qKG}$.
            \item $\qA$ outputs a guess $\coin^\prime$ for $\coin$. The challenger outputs $\coin'$ as the final output of the experiment.
        \end{enumerate}
\item[$\hybi{1}$:] 
This game is the same as $\hybi{0}$ except for the way $\ct^*$ is generated.
In particular, when $\qA$ submits $(x^*,m_0,m_1)$, the challenger aborts the game and outputs $0$ as the outcome of the game if there is no $i^*$ such that $\{ j_{i^*}^{(k)} \}_{k\in K_{x^*}}$ are all distinct. 
Otherwise, the challenger continues the game as specified in $\hybi{0}$ .

We observe that unless there is no such $i^*$, the game is the same as the previous one.
We bound the probability of this occuring. 
Let us first consider the case where $\qA$ fixes its target $x^*$ at the beginning of the game (i.e., selective security setting). 
In this case, by simple probability calculation, we can show that the probability that $i^*$ does not exist is exponentially small in the parameter $v$. 
However, in the adaptive case, the adversary can adaptively choose $x^*$ dependent on the values of $\{j_i^{(k)}\}_{i\in [v], k\in [Q]}$ and the proof for the selective case no longer works. To deal with the added flexibility given to the adversary, we use the 
union bound over all $x\in \cX$ and then use the above bound for each fixed $x$. This requires the parameter $v$ to grow dependent on the size of $\log|\cX_\secp|$ so that the sum of the probabilities is still small enough even after taking the union bound.
Based on the above discussion, we can prove $\abs{ \Pr[\hybi{0}= 1] - \Pr[\hybi{1}= 1] }=\negl(\secp)$. 
We refer to \cref{thm:upperbound_on_probability_of_BAD} for the detail.

\item[$\hybi{2}$:] 
This game is the same as $\hybi{1}$ except that the challenger chooses random $\tilde{i} \gets [v]$ at the beginning of the game. Then, right before the challenger computes $\ct^*$, it checks whether $\tilde{i}= i^*$, where $i^*$ is the smallest index such that $\{ j_{i^*}^{(k)} \}_{k\in K_{x^*}}$ are all distinct. 
\footnote{Note that $i^*$ is not defined until $\qA$ chooses $x^*$.%
%\takashi{Is $i^*$ defined?} \shota{I thought $\hybi{1}$ gave a definition, but rigorously speaking, it did not. I added an explicit definition here.}
}
If so, the challenger continues the game until $\qA$ outputs its guess.
Otherwise, it aborts the game and outputs $0$ as the outcome of the game. 

Since the choice of $\tilde{i}$ is independent from the view of $\qA$ and the outcome of the game is $1$ only when $\tilde{i}=i^*$,
we can easily see that $\Pr[\hybi{2}=1] = \Pr[\hybi{1} =1]/v $.

\item[$\hybi{3}$:] 
This is the same as $\hybi{2}$ except for how $\mu_1,\ldots, \mu_v$ are generated. 
In particular, $\qA$ first chooses $\mu_1,\ldots, \mu_v \la \bit^\ell$ and discards $\mu_{i^*}$. 
It then sets $\mu_{i^*}\seteq (\oplus_{i\in [v]\backslash\{i^*\}})\oplus m_{0}$.
It can be easily seen that the distribution of $\mu_1,\ldots, \mu_v$ is unchanged from the previous game and thus we have $\Pr[\hybi{2}= 1] = \Pr[\hybi{3}= 1] $. 

\item[$\hybi{4}$:] 
This is the same as $\hybi{3}$ except that $\mu_{i^*}$ is set as $\mu_{i^*}\seteq (\oplus_{i\in [v]\backslash\{i^*\}}\mu_i)\oplus m_{1}$.

We claim that this change is not noticed by $\qA$ by the security of the underlying $\oneABE$. 
To show this, we first observe that the game differs from the previous one only in how $\{ \oneabe.\ct_{i^*,j} \}_{j\in[w]}$ are generated.
We then change each plaintext encrypted in $\{ \oneabe.\ct_{i^*,j}\}_{j}$ one by one by using the security of the underlying $\oneABE$.
This is possible since for each $\oneABE$ instance with index $(i^*,j)$,
$\qA$ is given only at most one distinguishing key by the change we introduced in $\hybi{1}$ and thus we can use the security of $\oneABE$ for such instances.
We therefore have $\abs{\Pr[\hybi{3}= 1] - \Pr[\hybi{4}= 1]} = \negl(\secp)$.
We refer to \cref{thm:qABE_from_1ABE_1ABE_step} for the detail. 

\item[$\hybi{5}$:] 
This is the same as $\expc{\qABE,\qA,q}{ada}{ind}{kla}(1^\secp,1)$.
\end{description}
From the above discussion, we have
\begin{eqnarray}\label{Eq:qabe_skl_semifinal_step}
   \abs{ v \Pr[\hybi{3}=1] -  \Pr[\hybi{0}=1]}
    = \abs{ \Pr[\hybi{1}=1] - \Pr[\hybi{0}=1] 
    %\takashi{\Pr[\hybi{0}=1]?}
    }
    \leq   \negl(\secp).
\end{eqnarray}
We then observe that $\hybi{5}$ (resp., $\hybi{4}$) is the same as $\hybi{0}$ (resp., $\hybi{3}$) except that $m_1$ is used for the encryption instead of $m_0$. Therefore, we obtain 
%\begin{equation}\label{Eq:abe_skl_final_step}
$
     \abs{ v \Pr[\hybi{4}=1] -  \Pr[\hybi{5}=1]}
    \leq \negl(\secp)
$
%\end{equation}
analogously to Eq.~\eqref{Eq:qabe_skl_semifinal_step} by considering similar sequence of games with $m_0$ being replaced by $m_1$ in a reverse order.
We therefore have
\begin{eqnarray*}
   &&\abs{\expc{\qABE,\qA,q}{ada}{ind}{kla}(1^\secp,0)-\expc{\qABE,\qA, q}{ada}{ind}{kla}(1^\secp,1)}
   \\
   &&= \abs{ \Pr[\hybi{0}=1] -  \Pr[\hybi{5}=1]}
   \\
   && \leq \abs{ \Pr[\hybi{0}=1] -  v \Pr[\hybi{3}=1]} +
   v \cdot \abs{ \Pr[\hybi{3}=1] - \Pr[\hybi{4}=1]} +
   \abs{ \Pr[\hybi{5}=1] -  v \Pr[\hybi{4}=1]} 
   \\
   && \leq \negl(\secp)
\end{eqnarray*}
as desired. 
It remains to prove \cref{thm:upperbound_on_probability_of_BAD,thm:qABE_from_1ABE_1ABE_step}.

\begin{lemma}\label{thm:upperbound_on_probability_of_BAD}
$\abs{\Pr[\hybi{0}= 1] - \Pr[\hybi{1}= 1]} = \negl(\secp)$ holds both for selective and adaptive settings. 
\end{lemma}
\begin{proof}
We first show the statement for the selective case.
The proof for this case is the same as \cite[Lemma 1]{PKC:ISVWY17}, but we provide the proof here for completeness.
In the selective case, the probability that $\{j_i^{(k)}\}_{k\in K_{x^*}}$ are not all distinct for some fixed $i$ is 
\[
1- \frac{w(w-1)\cdots (w-q+1)}{w^q} 
\leq 1 - \left( 1-\frac{q-1}{w} \right)^q.
\]
Therefore, the probability that there is no $i^*$ satisfying the requirement is at most
\[\left( 1 - \left( 1-\frac{q-1}{w} \right)^q \right)^v\]
which is negligible when $v=\secp$ and $w=q^2$ since 
\[
\left( 1 - \left( 1-\frac{q-1}{w} \right)^q \right)^v
\leq 
\left( 1 - e^{-1} \right)^\secp = 2^{-O(\secp)}.
\]
We then consider the adaptive case. 
We have 
\begin{eqnarray*}
\Pr[ \mbox{$\{j_i^{(k)}\}_{k\in K_{x^*}}$ are not all distinct} ]
& = & \sum_{x\in \cX_\secp}\Pr[ x^*=x \land \mbox{$\{j_i^{(k)}\}_{k\in K_{x}}$ are not all distinct} ] 
\\
&\leq & \sum_{x\in \cX_\secp}\Pr[  \mbox{$\{j_i^{(k)}\}_{k\in K_{x}}$ are not all distinct} ] 
\\
&\leq & \sum_{x\in \cX_\secp}  \left( 1 - \left( 1-\frac{q-1}{w} \right)^q \right)^v
\\
&\leq & |\cX_\secp| \left( 1 - e^{-1} \right)^v
\\
&\leq & 2^{-\secp},
\end{eqnarray*}
where the probabilities are taken over all randomness used in the game.
In the above, third line follows from the same analysis as the selective case and the forth and the fifth lines follow from our parameter setting.
\end{proof}

\begin{lemma}\label{thm:qABE_from_1ABE_1ABE_step}
If $\oneABE$ is $1$-bounded distinguishing key $\ADAINDKLA$, $\abs{\Pr[\hybi{3}= 1] - \Pr[\hybi{4}= 1]} = \negl(\secp)$. 
\end{lemma}
\begin{proof}
This can be reduced to the $1$-bounded distinguishing key $\ADAINDKLA$ security of $\oneABE$ by a standard hybrid argument, where we modify the plaintext encrypted in $\oneabe.\ct_{i^*,j}$ for each $j\in[w]$ one by one. 
More precisely, the reduction works as follows.

We define additional hybrids $\hybi{3.k}$ for $k\in [w]$ as follows.
In the following, let $\xi_b \seteq (\oplus_{i\in[v]\backslash\{i^*\}}\mu_i)\oplus m_b$ for $b\in \bit$.
\begin{description}
\item[$\hybi{3.\tau}$:] This is identical to $\hybi{3}$ except that $\oneabe.\ct_{i^*,j}$ is generated as 
\begin{equation}
\label{eq:qABECT}
\oneabe.\ct_{i^*,j}\gets
\begin{cases}
\oneABE.\Enc(\oneabe.\pk_{i^*,j}, \xi_1 ) & j<\tau
\\
\oneABE.\Enc(\oneabe.\pk_{i^*,j}, \xi_0 )  & j\ge \tau
\end{cases}
\end{equation}
for $j\in [\secp]$. 
\end{description}
Clearly, we have $\hybi{3}=\hybi{3.1}$ and $\hybi{4}=\hybi{3.w+1}$. Thus, it suffices to prove that 
$
\left|\Pr[\hybi{3.\tau+1}=1]-\Pr[\hybi{3.\tau}=1]\right|=\negl(\secp) 
$
for all $\tau\in [w]$.
Remark that the only difference between $\hybi{3.\tau+1}$ and $\hybi{3.\tau}$ is the way of generating $\oneabe.\ct_{i^*,\tau}$.
To show that $\left|\Pr[\hybi{3.\tau+1}=1]-\Pr[\hybi{3.\tau}=1]\right|=\negl(\secp)$, 
we construct $\qB$ against the security of $\oneABE$ as follows. 

\begin{description}
\item[$\qB(\oneabe.\pk)$:] It works as follows.
\begin{enumerate}
        \item It first chooses random $\tilde{i} \gets [v]$.
        \item The challenger generates $(\oneabe.\pk_{i,j},\oneabe.\msk_{i,j}) \gets \oneABE.\Setup(1^\secp)$
            for $(i,j) \in ([v]\times [w]) \backslash \{ (\tilde{i},k )\}$.
            It then sets $\oneabe.\pk_{\tilde{i},k} \seteq \oneabe.\pk$
            and sends $\pk\seteq \{\oneabe.\pk_{i,j}\}_{i,j}$ to the adversary $\qA$.
            It then initializes the list $\List{\qKG}$ to be an empty set. 
            $\qB$ then simulates the following oracles for $\qA$.
            \begin{description}
            \item[$\Oracle{\qKG}(y^{(k)})$:] Given the $k$-th query $y^{(k)}$ with $k\in [Q]$ from $\qA$, $\qB$ returns $\bot$ to $\qA$ if there is an entry of the form $(y^{(k)},\vk,V)$. Otherwise, it chooses $j_i^{(k)} \gets [w]$ for $i\in [v]$ and runs $(\oneabe.\vk_i^{(k)}, \oneabe.\qusk_{i }^{(k)}) \gets \oneABE.\qKG(\oneABE.\msk_{i,j_i},y^{(k)})$ for $i\in [v]\backslash\{ \tilde{i}\}$. 
            If $j_{\tilde{i}}^{(k)}=\tau$, 
            it sends $y^{(k)}$ to its key generation oracle and is given
            \[
            \oneabe.\qusk \gets \oneABE.\qKG( \oneabe.\msk, y^{(k)} ).
            \]
            Then, it sets $\oneabe.\qusk^{(k)}_{\tilde{i}}\seteq \oneabe.\qusk $.
            Otherwise (i.e., if $j_{\tilde{i}}^{(k)}\neq \tau$), it runs 
            $(\oneabe.\vk_{\tilde{i}}^{(k)}, \oneabe.\qusk_{\tilde{i} }^{(k)}) \gets \ABE.\qKG(\oneABE.\msk_{{\tilde{i}},j_{\tilde{i}}},y^{(k)})$ by itself.
            Finally, $\qB$ returns $\qusk^{(k)} \seteq \{j_i^{(k)}, \oneabe.\qusk_i^{(k)} \}_{i\in [v] } $ to $\qA$ and adds $(y^{(k)},\vk^{(k)},\bot)$ to $\List{\qKG}$, where $\vk^{(k)} \seteq \{ \oneabe.\vk_i^{(k)} \}_{i\in [v] } $.

            \item[$\Oracle{\qVrfy}(y,\qusk^\prime)$:] Given $(y,\qusk^\prime)$, it finds an entry $(y,\vk,V)$ from $\List{\qKG}$ and parses $\qusk^\prime = \{j_i, \qusk_i^\prime\}_i$. (If there is no such entry, it returns $\bot$.) 
            It then computes $\decision_i \seteq \oneABE.\qVrfy(\oneabe.\vk_i,\qusk^\prime_i)$ for $i\in [v]$.
            If $j_{\tilde{i}}=\tau$, $\qB$ makes a query to its own verification oracle to obtain 
            \[
            \decision_{\tilde{i}} \seteq \oneABE.\qVrfy(\oneabe.\vk,\qusk^\prime_{\tilde{i}}).
            \]
            Otherwise, $\qB$ runs $\decision_{\tilde{i}} \seteq \oneABE.\qVrfy(\oneabe.\vk_{\tilde{i},\tau},\qusk^\prime_{\tilde{i}})$ by itself.
            Finally, it checks if $\decision_{i}=\top$ for all $i\in [v]$.
            If so, it returns $\decision \seteq \top$ to $\qA$. 
            Otherwise, it returns $\decision \seteq \bot$ to $\qA$.
            It finally updates the entry into $(y,\vk,\decision)$ if $V=\bot$.
            \end{description}
            \item When $\qA$ sends $(x^*,m_0, m_1)$ to the challenger, 
            $\qB$ aborts and outputs $0$ 
            if either $|K_{x^*}|> q$ or there is an entry of the form $(y^{(k)}, \vk, \bot)$ for some $k\in K_{x^*}$.
            It also aborts and outputs $0$ if $i^*\neq \tilde{i}$, which includes the case that there is no $i^*$ satisfying the properties we defined in $\hybi{1}$.
            Otherwise, it chooses $\mu_1,\ldots, \mu_v \la \bit^\ell$ and sets 
            $\xi_0 \seteq (\oplus_{i\in [v]\backslash\{i^*\}}\mu_i)\oplus m_{0}$ and $\xi_1 \seteq (\oplus_{i\in [v]\backslash\{i^*\}}\mu_i)\oplus m_{1}$.
            It then computes $\oneabe.\ct_{i,j} \la \oneABE.\Enc(\oneabe.\pk_{i,j}, x^*, \mu_i  )$ for $i\in [v]\backslash\{i^*\}$ and $j\in [w]$
            and
            $\oneabe.\ct_{i^*,j}$ for $j\in [w]\backslash \{\tau\}$ as 
            \cref{eq:qABECT}. 
            It then submits $(\xi_0,\xi_1)$ to its challenger.
            Then,
            \[
            \oneabe.\ct \gets \Enc(\oneabe.\pk, \xi_{\overline{\coin}} ) 
            \]
            is run and $\oneabe.\ct$ is returned to $\qB$, where $\overline{\coin}$ is the random bit chosen by $\qB$'s challenger.
            Finally, $\qB$ sets $\oneabe.\ct_{i^*,\tau} \seteq \oneabe.\ct$
            and sends $\ct^* \seteq \{ \oneabe.\ct_{i,j} \}_{i,j}$ to $\qA$. 
            
            \item $\qA$ continues to make queries to $\Oracle{\qKG}(\cdot)$ and  $\Oracle{\qVrfy}(\cdot,\cdot )$.
            However, $\qA$ is not allowed to send a key attribute $y$ such that $R(x^*,y)=1$ to $\Oracle{\qKG}$.
            \item $\qA$ outputs a guess $\coin^\prime$ for $\coin$. The challenger outputs $\coin'$ as the final output of the experiment.
        \end{enumerate} 
\end{description}
We first argue that $\qB$ does not make more than two distinguishing key queries.
This is because $\qB$ aborts and outputs $0$ before it makes a challenge query if there is no $i^*$ with the required conditions. 
For such $i^*$, we have that $\{ j^{(k)}_{i^*} \}_{k\in K_{x^*}}$ are all distinct and thus in particular, $\qB$ needs to simulate only single distinguishing key for the  $(i^*,\tau)$-th instance, to which the reduction algorithm embeds the $\oneABE$ instance.

We then have 
\begin{align}
\advc{\oneABE,\qB,1}{ada}{ind}{kla}(\secp) 
%&=2\left|\Pr[\mbox{ $\qB$ outputs $\coin^\prime$ }]-\frac{1}{2}\right|\\
&=\left|\Pr[\mbox{ $\qB$ outputs $1$ }|\overline{\coin}=0]-\Pr[\mbox{ $\qB$ outputs $1$ }|\overline{\coin}=1]
\right|\\
&=\left|\Pr[\hybi{3}=1]-\Pr[\hybi{4}=1]\right|
\end{align}
where the probabilities are taken over the randomness used in the respective games. Thus, $\left|\Pr[\hybi{3}=1]-\Pr[\hybi{4}=1]\right|=\negl(\secp)$ by the adaptive security of $\oneABE$.
This completes the proof of \Cref{thm:qABE_from_1ABE_1ABE_step}.

\end{proof}
This completes the proof of \cref{thm:qABESKL_from_1ABE} for the case of adaptive security.

\paragraph{The proof for selective security.}
The proof for selective security can be obtained immediately by considering the same sequence of games as adaptive security case with natural adaptations. 
There are two main differences. 
The proof for \cref{thm:upperbound_on_probability_of_BAD} requires different parameters for selective and adaptive cases. We refer to the proof of the lemma for the detail. 
Another difference is that we modify the reduction algorithm in \cref{thm:qABE_from_1ABE_1ABE_step} so that it outputs $x^*$ at the beginning of the game right after given $x^*$ from $\qA$.
\end{proof}

\subsection{Instantiations}
Here, we explain new schemes that can be obtained by applying the conversions that we showed in \cref{sec:ABE_SKL_one_bounded,sec:ABE_SKL_q_bounded} to existing IBE/ABE schemes. 
Our constructions are fully generic and can upgrade almost all ABE schemes\footnote{
Our conversion in \cref{sec:ABE_SKL_q_bounded} for the adaptive security case poses the restriction that the size of the cipheretxt attribute space of the ABE should be bounded by $2^{\poly(\secp)}$ for some polynomial $\poly(\secp)$. 
This means that we cannot apply the conversion for adaptively secure ABE for DFA for example,
since the ciphertext attribute is of unbounded length and there is no such bound for the size of the ciphertext attribute space.
However, we do not know any concrete ABE scheme from standard assumptions for which we cannot apply our conversion. 
} into the one with the security against key leasing attacks with the help of IND-KLA secure PKE-SKL scheme, which can be instantiated from any (post quantum) PKE. 
Here, we mention some instantiations, all of which are obtained from the standard LWE assumption.  
\begin{itemize}
    \item If we start from selectively secure ABE scheme for circuits \cite{STOC:GorVaiWee13,EC:BGGHNS14} and apply the conversions in \cref{sec:ABE_SKL_one_bounded,sec:ABE_SKL_q_bounded}, 
    we obtain an ABE-SKL scheme for circuits with $q$-bounded distinguishing key $\SELINDKLA$ security for any $q=\poly(\secp)$.
    \item If we start from adaptively secure ABE for inner products over the integer \cite{AC:KNYY20} and apply the conversions in \cref{sec:ABE_SKL_one_bounded,sec:ABE_SKL_q_bounded}, we obtain an ABE-SKL scheme for the same predicate with $q$-bounded distinguishing key $\ADAINDKLA$ security for any $q=\poly(\secp)$.
    We note that the conversion in \Cref{sec:ABE_SKL_q_bounded} for adaptive security case can be applied for the scheme, since the size of the ciphertext attribute space is bounded by $2^{\poly(\secp)}$ for the primitive. 
    Similar implications can be obtained for adaptively secure $t$-CNF formulae for $t=O(1)$~\cite{C:Tsabary19} and fuzzy IBE for small universe \cite{AC:KNYY20}. 
    \item  If we start from adaptively (resp., selectively) secure IBE \cite{EC:AgrBonBoy10,EC:CHKP10} and apply the conversion in \cref{sec:ABE_SKL_one_bounded}, we obtain IBE-SKL scheme with $1$-bounded distinguishing key $\ADAINDKLA$ (resp., $\SELINDKLA$) security. 
    We note that $1$-bounded distinguishing key security for the case of IBE is a more natural security notion than that for the case of ABE with other relations since there is only one attribute that is eligible for decrypting a ciphertext in the case of IBE (i.e., the identity that is associated with the ciphertext), whereas there can be exponentially many such attributes in general.
\end{itemize}
% !TEX root = main.tex

%macros for this section are moved to macros_adhoc.tex because some of them may be needed in earlier sections.

\section{Public-Key Functional Encryption with Secure Key Leasing}\label{sec:PKFE_SKL}

\subsection{Definitions}\label{sec:PKFE_SKL_def}
\begin{definition}[PKFE with Secure Key Leasing]
A PKFE-SKL scheme $\PKFESKL$ is a tuple of six algorithms $(\Setup, \qKG, \Enc, \qDec,\qCert,\certvrfy)$. 
Below, let $\cX$, $\cY$, and $\cF$ be the plaintext, output, and function spaces of $\PKFESKL$, respectively.
\begin{description}
\item[$\Setup(1^\secp)\ra(\pk,\msk)$:] The setup algorithm takes a security parameter $1^\lambda$, and outputs a public key $\pk$ and master secret key $\msk$.
\item[$\qKG(\msk,f)\ra(\qfsk,\vk)$:] The key generation algorithm takes a master secret key $\msk$ and a function $f \in \calF$, and outputs a functional decryption key $\qfsk$ and a verification key $\vk$.

\item[$\Enc(\pk,x)\ra\ct$:] The encryption algorithm takes a public key $\pk$ and a plaintext $x \in \cX$, and outputs a ciphertext $\ct$.

\item[$\qDec(\qfsk,\ct)\ra\tlx$:] The decryption algorithm takes a functional decryption key $\qfsk$ and a ciphertext $\ct$, and outputs a value $\tilde{x}$.

% \item[$\qCert(\qfsk)\ra\cert$:] The certification algorithm takes a function decryption key $\qfsk$, and outputs a classical string $\cert$.

\item[$\qVrfy(\vk,\qfsk^\prime)\ra\top/\bot$:] The verification algorithm takes a verification key $\vk$ and a quantum state $\qfsk^\prime$, and outputs $\top$ or $\bot$.

\item[Decryption correctness:]For every $x \in \cX$ and $f \in \calF$, we have
\begin{align}
\Pr\left[
\qDec(\qfsk, \ct) \allowbreak = f(x)
\ \middle |
\begin{array}{ll}
(\pk,\msk) \la \Setup(1^\secp)\\
(\qfsk,\vk)\gets\qKG(\msk,f)\\
\ct\gets\Enc(\pk,x)
\end{array}
\right] 
=1-\negl(\secp).
\end{align}

\item[Verification correctness:] For every $f \in \calF$, we have 
\begin{align}
\Pr\left[
\qVrfy(\vk,\qfsk)=\top
\ \middle |
\begin{array}{ll}
(\pk,\msk) \la \Setup(1^\secp)\\
(\qfsk,\vk)\gets\qKG(\msk,f)\\
% \cert \lrun \qCert(\qfsk)
\end{array}
\right] 
=1-\negl(\secp).
\end{align}

\end{description}
\end{definition}
\begin{remark}
Although Kitagawa and Nishimaki~\cite{AC:KitNis22} require SKFE-SKL to have classical certificate generation algorithm for deletion, we do not since it is optional.
If there exists a PKE-SKL scheme that has a classical certificate generation algorithm, our PKFE-SKL scheme in~\cref{sec:CR_PKFE_SKL} also has a classical certificate generation algorithm.
\end{remark}

\begin{definition}[Adaptive Indistinguishability against Key Leasing Attacks]\label{def:ada_ind_kla_PKFESKL}
We say that a PKFE-SKL scheme $\PKFESKL$ for $\Xs,\Ys$, and $\Fs$ is an adaptively indistinguishable secure against key leasing attacks (Ada-IND-KLA), if it satisfies the following requirement, formalized from the experiment $\expc{\qA,\PKFESKL}{ada}{ind}{kla}(1^\secp,\coin)$ between an adversary $\qA$ and a challenger:
        \begin{enumerate}
            \item At the beginning, the challenger runs $(\pk,\msk)\gets\Setup(1^\secp)$.
            Throughout the experiment, $\qA$ can access the following oracles.
            \begin{description}
            \item[$\Oracle{\qKG}(f)$:] Given $f$, it finds an entry $(f,\vk,V)$ from $\List{\qKG}$. If there is such an entry, it returns $\bot$. Otherwise, it generates $(\qfsk,\vk)\la\qKG(\msk,f)$, sends $\qfsk$ to $\qA$, and adds $(f,\vk,\bot)$ to $\List{\qKG}$.
            \item[$\Oracle{\qVrfy}(f,\qfsk^\prime)$:] Given $(f,\qfsk^\prime)$, it finds an entry $(f,\vk,V)$ from $\List{\qKG}$. (If there is no such entry, it returns $\bot$.) It computes $d \gets \qVrfy(\vk,\qfsk^\prime)$ and sends $d$ to $\qA$. If $V = \top$, it does not update $\List{\qKG}$. Else if $V = \bot$, it updates the entry by setting $V\seteq d$.
            % \item[$\Oracle{\qKG}(f)$:] Given $f$, the challenger checks if there is an entry $(f,\vk,M)$ in $\List{\qKG}$. If so, the challenger outputs $\bot$. Otherwise, it generates $(\qfsk,\vk)\la\qKG(\msk,f)$, sends $\qfsk$ to $\qA$, and adds $(f,\vk,\notyet)$ to $\List{\qKG}$.
            % \item[$\Oracle{\qVrfy}(f,\qfsk^\prime)$:] Given $(f,\qfsk^\prime)$, it finds an entry $(f,\vk,M)$ from $\List{\qKG}$. (If there is no such entry, it returns $\bot$.) If $M \ne \notyet$, it returns $\mathsf{Verified}$. If $M = \notyet$, it returns $M^\prime \seteq \qVrfy(\vk,\qfsk^\prime)$ and updates the entry into $(f,\vk,M^\prime)$.
            \end{description}
            \item When $\qA$ sends $(x_0^*,x_1^*)$ to the challenger, the challenger checks if for any entry $(f,\vk,V)$ in $\List{\qKG}$ such that $f(x_0^*)\ne f(x_1^*)$, it holds that $V=\top$. If so, the challenger generates $\ct^*\la\Enc(\pk,x_\coin^*)$ and sends $\ct^*$ to $\qA$. Otherwise, the challenger outputs $0$. Hereafter, $\qA$ is not allowed to send a function $f$ such that $f(x_0^*)\ne f(x_1^*)$ to $\Oracle{\qKG}$.
            \item $\qA$ outputs a guess $\coin^\prime$ for $\coin$. The challenger outputs $\coin'$ as the final output of the experiment.
        \end{enumerate}
        For any QPT $\qA$, it holds that
\ifnum\llncs=0        
\begin{align}
\advc{\PKFESKL,\qA}{ada}{ind}{kla}(\secp) \seteq \abs{\Pr[\expc{\PKFESKL,\qA}{ada}{ind}{kla} (1^\secp,0) \ra 1] - \Pr[\expc{\PKFESKL,\qA}{ada}{ind}{kla} (1^\secp,1) \ra 1] }\leq \negl(\secp).
\end{align}
\else
\begin{align}
\advc{\PKFESKL,\qA}{ada}{ind}{kla}(\secp) 
&\seteq \abs{\Pr[\expc{\PKFESKL,\qA}{ada}{ind}{kla} (1^\secp,0) \ra 1] - \Pr[\expc{\PKFESKL,\qA}{ada}{ind}{kla} (1^\secp,1) \ra 1] }\\
&\leq \negl(\secp).
\end{align}
\fi
\end{definition}
\begin{remark}
%\Cref{def:ada_ind_kla_PKFESKL} assumes that the adversary does not send a key query $f$ such that $f$ had been already queried 
\Cref{def:ada_ind_kla_PKFESKL} assumes that the adversary does not get more than one decryption key for the same $f$ for simplification as~\cref{remark:same_query_remark}.  
%\takashi{I slightly modified the way of explanation. (The previous version says that the adversary does not query the same $f$ more than once, but strictly speaking, we do not assume so.)}
%\takashi{I couldn't see where this is assumed in \Cref{def:ada_ind_kla_PKFESKL}. I found that a similar simplification already appeared in ABE-SKL. In the definition for ABE-SKL, the key generation oracle checks if the same query was made before and if so it returns the same key as before. I think we can use a similar definition here. Then I feel this remark should go to the ABE section.}
%\ryo{In~\cref{def:ada_lessor_ABESKL}, the key oracle returns $\bot$ if the same query was sent. If we want to return the same key, we need to record keys too in $\List{\qKG}$.  Anyway, we should make them consistent. I followed the definition for ABE-SKL. We can move this remark to the ABE section.}
%\ryo{I add a remark in~\cref{sec:ibe_abe_skl}.}
%\takashi{The meaning of this sentence is not very clear to me because the security in such a setting is not defined.}
%\ryo{Removed the sentence.}
\end{remark}

\subsection{Constructions}\label{sec:CR_PKFE_SKL}

We describe our PKFE-SKL scheme in this section.
We construct a PKFE-SKL scheme $\PKFESKL =(\Setup,\qKG,\Enc,\qDec,\qVrfy)$ by using the following building blocks.
\begin{itemize}
\item IND-KLA secure PKE-SKL $\SKL = \SKL.(\qKG,\Enc,\qDec,\qVrfy)$.
\item Adaptively secure PKFE $\FE = \FE.(\Setup,\KG,\Enc,\Dec)$.
\item Adaptively single-ciphertext function private SKFE $\SKFE=\SKFE.(\Setup,\KG,\Enc,\Dec)$.
\item Pseudorandom-secure SKE $\SKE= \SKE.(\Enc,\Dec)$.
\item Puncturable PRF $\PRF=(\prfgen,\prf,\Puncture)$.
\end{itemize}
We set $\xpadlen \seteq \abs{\skfe.\ct} -\abs{x}$ and $\skectlen \seteq \abs{\ske.\ct}$, where $\abs{x}$ is the input length of $\PKFESKL$, $\abs{\skfe.\ct}$ is the ciphertext length of $\SKFE$, and $\abs{\ske.\ct}$ is the ciphertext length of $\SKE$.
\begin{description}
\item[$\Setup(1^\secp)$:] $ $
\begin{itemize}
    \item Generate $(\fe.\pk,\fe.\msk) \gets \FE.\Setup(1^\secp)$.
    \item Output $(\pk,\msk)\seteq (\fe.\pk,\fe.\msk)$.
\end{itemize}
\item[$\qKG(\msk,f)$:] $ $
\begin{itemize}
	\item Generate $(\skl.\ek,\skl.\qsk,\skl.\vk)\gets \SKL.\qKG(1^\secp)$.
	\item Choose $\ske.\ct \chosen \zo{\skectlen}$.
	\item Construct a circuit $W[f,\skl.\ek,\ske.\ct]$, which is described in~\cref{fig:func_wrap_W}.
	\item Generate $\fe.\sk_W \gets \FE.\KG(\fe.\msk,W[f,\skl.\ek,\ske.\ct])$.
	\item Output $\qfsk \seteq (\fe.\sk_W,\skl.\qsk)$ and $\vk \seteq \skl.\vk$.
\end{itemize}
\item[$\Enc(\pk,x)$:] $ $
\begin{itemize}
\item Choose $\prfkey \chosen \prfgen(1^\secp)$.
\item Compute $\fe.\ct \gets \FE.\Enc(\fe.\pk,(x\concat 0^{\xpadlen},\bot,\prfkey))$.
\item Output $\ct \seteq \fe.\ct$.
\end{itemize}
\item[$\qDec(\qfsk,\ct)$:] $ $
\begin{itemize}
\item Parse $\qfsk = (\fe.\sk,\skl.\qsk)$ and $\ct =\fe.\ct$.
\item Compute $\skl.\ct \gets \FE.\Dec(\fe.\sk,\fe.\ct)$.
\item Compute and output $y \gets \SKL.\qDec(\skl.\qsk,\skl.\ct)$.
\end{itemize}
\item[$\qVrfy(\vk,\qfsk^\prime)$:] $ $
\begin{itemize}
\item Parse $\vk = \skl.\vk$ and $\qfsk^\prime=(\fe.\sk^\prime,\skl.\qsk^\prime)$.
\item Compute and output $\SKL.\qVrfy(\skl.\vk,\skl.\qsk^\prime)$.
\end{itemize}
\end{description}

\protocol
{Function $W[f,\skl.\ek,\ske.\ct](x^\prime,\ske.\sk,\prfkey)$}
{The description of $W[f,\skl.\ek,\ske.\ct]$}
{fig:func_wrap_W}
{
\begin{description}
\setlength{\parskip}{0.3mm} % between paragraph
\setlength{\itemsep}{0.3mm} % between items
\item[Constants:] Function $f$, PKE-SKL encryption key $\skl.\ek$, SKE ciphertext $\ske.\ct$.
\item[Input:] Plaintext $x^\prime$, SKE key $\ske.\sk$, PRF key $\prfkey$.
\end{description}
\begin{enumerate}
	\item If $\ske.\sk = \bot$, do the following:
	\begin{itemize}
		\item Parse $x^\prime = x \concat \barx$ such that $\abs{\barx}=\xpadlen$.
	\item Compute and output $\skl.\ct \seteq \SKL.\Enc(\skl.\ek,f(x);\prf_{\prfkey}(\skl.\ek))$.
	\end{itemize}
	\item If $\ske.\sk \ne \bot$, do the following:
	\begin{itemize}
	\item Compute $\skfe.\sk \gets \SKE.\Dec(\ske.\sk,\ske.\ct)$.
	\item Compute and output $z \seteq \SKFE.\Dec(\skfe.\sk,x^\prime)$.
	\end{itemize}
\end{enumerate}
}

\paragraph{Correctness.}
The decryption correctness of $\PKFESKL$ follows from the correctness of $\FE$ and the decryption correctness of $\SKL$.
The verification correcntess of $\PKFESKL$ follows from the verification correcntess of $\SKL$.

\subsection{Security Proofs}\label{sec:PKFE_SKL_proofs}
We prove the security of $\PKFESKL$.
\begin{theorem}\label{thm:PKFESKL_from_PKFE}
If $\PKFE$ is adaptively secure, $\SKFE$ is adaptively single-ciphertext function private, $\PRF$ is a secure punctured PRF, and $\SKE$ has the ciphertext pseudorandomness, then $\PKFESKL$ above is Ada-IND-KLA.
\end{theorem}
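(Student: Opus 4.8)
The plan is to prove $\expc{\PKFESKL,\qA}{ada}{ind}{kla}(1^\secp,0)\approx\expc{\PKFESKL,\qA}{ada}{ind}{kla}(1^\secp,1)$ by routing the challenge ciphertext through the \emph{trapdoor mode} of the circuit $W[f,\skl.\ek,\ske.\ct]$ (the Trojan method). I fix once and for all a single $\SKE$ key $\ske.\sk$ and a single $\SKFE$ master key $\skfe.\msk$, and a message format $m=(x_0^*,x_1^*,\prfkey,t,\skl.\ek^\dagger,c^\dagger)$ together with a function $g_f$ hardwiring $(f,\skl.\ek_f,\mathrm{idx}_f)$ that, on input $m$, outputs $c^\dagger$ if $\skl.\ek_f=\skl.\ek^\dagger$ and otherwise outputs $\SKL.\Enc(\skl.\ek_f,f(x_b^*);\prf_{\prfkey}(\skl.\ek_f))$ with $b=1$ iff $\mathrm{idx}_f\le t$. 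First I would enter trapdoor mode in two steps: (i) by the ciphertext pseudorandomness of $\SKE$, replace each random $\ske.\ct$ produced by $\Oracle{\qKG}$ with $\SKE.\Enc(\ske.\sk,\skfe.\sk_{g_f})$ where $\skfe.\sk_{g_f}\gets\SKFE.\KG(\skfe.\msk,g_f)$; since the normal-mode plaintext carries $\ske.\sk=\bot$ these SKE ciphertexts are never decrypted, so the change is undetectable; (ii) by the adaptive security of $\FE$, replace the challenge plaintext $(x_0^*\concat 0^{\xpadlen},\bot,\prfkey)$ by $(\skfe.\ct,\ske.\sk,\cdot)$ with $\skfe.\ct=\SKFE.\Enc(\skfe.\msk,m)$ and $m$ initialized to $t=0$, $\skl.\ek^\dagger=c^\dagger=\bot$. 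This reduction is valid because for every queried $W[f,\skl.\ek_f,\ske.\ct]$ the trapdoor output $\SKFE.\Dec(\skfe.\sk_{g_f},\skfe.\ct)$ equals the normal-mode output $\SKL.\Enc(\skl.\ek_f,f(x_0^*);\prf_{\prfkey}(\skl.\ek_f))$, and this equality holds for keys queried both before and after the challenge.

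Next I would walk a sequence of hybrids $G_0,\dots,G_Q$ indexed by the \emph{key-query order} ($Q$ being the number of $\Oracle{\qKG}$ queries), where $G_t$ uses the message $m$ with threshold value $t$, so that passing from $G_{t-1}$ to $G_t$ flips the $t$-th key from evaluating $f(x_0^*)$ to $f(x_1^*)$. Indexing by query order rather than by ``which key is distinguishing'' is exactly what lets the reduction avoid guessing, which is essential because the number of distinguishing keys is unbounded and chosen adaptively. When the $t$-th key is non-distinguishing, i.e.\ $f_t(x_0^*)=f_t(x_1^*)$, incrementing $t$ leaves every $g_f$-output unchanged, so $G_{t-1}\approx G_t$ is a single message-only application of the adaptive single-ciphertext function privacy of $\SKFE$ (all key pairs identical, one encryption-query pair $(m_{t-1},m_t)$).

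When the $t$-th key is distinguishing I would interpose a micro-sequence: (a) by function privacy (message only), set $\skl.\ek^\dagger=\skl.\ek_{f_t}$ and $c^\dagger=\SKL.\Enc(\skl.\ek_{f_t},f_t(x_0^*);\prf_{\prfkey}(\skl.\ek_{f_t}))$ and replace $\prfkey$ in $m$ by $\Puncture(\prfkey,\{\skl.\ek_{f_t}\})$, which is output-preserving by punctured correctness for all other keys and by construction for $f_t$; (b) by pseudorandomness at the punctured point, swap the randomness in $c^\dagger$ for fresh $\tilde r$; (c) by IND-KLA security of $\SKL$ under $\skl.\ek_{f_t}$, change $c^\dagger$ to $\SKL.\Enc(\skl.\ek_{f_t},f_t(x_1^*);\tilde r)$; (d),(e) reverse (b) and (a), now folding the value back into the index-branch at threshold $t$ so that $f_t$ evaluates $f_t(x_1^*)$, landing in $G_t$. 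The reduction for (c) plants the external IND-KLA challenger's $\skl.\ek,\skl.\qsk$ into $\qfsk_{f_t}$, forwards $\qA$'s verification queries on that key to the external $\Oracle{\qVrfy}$, and—after $\qA$ commits to $(x_0^*,x_1^*)$—submits $(f_t(x_0^*),f_t(x_1^*))$ and receives $c^\dagger$, which it places into $m$ (generated at challenge time, so this is possible). Because the $\PKFESKL$ challenger refuses to output $\ct^*$ unless every distinguishing key already has $V=\top$, the ``returned-before-challenge'' precondition of IND-KLA is automatically satisfied whenever the game does not already abort. After $G_Q$, symmetric unwinding of the two mode-switch steps reaches the real $\coin=1$ game via $\FE$ security (trapdoor-to-normal with plaintext $x_1^*$, valid since at $t=Q$ every $g_f$ already evaluates $f(x_1^*)$) and $\SKE$ pseudorandomness.

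I expect the main obstacle to be the adaptive, unbounded distinguishing-key regime: moving an arbitrary, adaptively-chosen set of keys to $x_1^*$ while keeping the shared $\SKFE$ message $m$ of \emph{bounded} length. The threshold-counter-plus-single-injection-slot encoding is the device that resolves this, and the delicate part is checking that every function-privacy hop is \emph{exactly} output-preserving—through puncturing, injection, and de-injection, and for keys queried after the challenge—while the bookkeeping that translates the $\PKFESKL$ verification-oracle and abort conditions into a legitimate IND-KLA oracle interaction is the place where the argument must be carried out with care.
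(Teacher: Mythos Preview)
Your proposal is correct and follows essentially the same route as the paper: switch to trapdoor mode via $\SKE$ pseudorandomness and $\FE$ adaptive security, then walk a threshold counter over all key queries using $\SKFE$ function privacy, with an inject--puncture--PRF--IND-KLA micro-sequence at each step, and unwind. The only cosmetic differences are that the paper identifies the injection slot by comparing the hardwired index $i$ against $j$ (rather than matching $\skl.\ek$), and it runs the full micro-sequence uniformly for every index, noting that the IND-KLA step collapses when $f_j(x_0^*)=f_j(x_1^*)$ rather than branching on that condition.
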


\begin{theorem}\label{thm:bounded_PKFESKL_from_bounded_PKFE}
If $\PKFE$ is $q$-bounded adaptively secure, $\SKFE$ is adaptively single-ciphertext function private, $\PRF$ is a secure punctured PRF, and $\SKE$ has the ciphertext pseudorandomness, then $\PKFESKL$ above is $q$-bounded Ada-IND-KLA.
\end{theorem}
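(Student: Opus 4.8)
The plan is to run exactly the same hybrid argument one would use for the unbounded case (\cref{thm:PKFESKL_from_PKFE}), replacing every appeal to adaptive $\PKFE$ security by its $q$-bounded analogue and checking that the corresponding reduction never issues more than $q$ key queries to the underlying $\PKFE$. Concretely, let $\qA$ be a $q$-bounded $\ADAINDKLA$ adversary making key queries $f^{(1)},\dots,f^{(Q)}$ with $Q\le q$; each query spawns one $\SKL$ instance $(\skl.\ek^{(i)},\skl.\qsk^{(i)},\skl.\vk^{(i)})$, one random $\ske.\ct^{(i)}$, and one $\fe.\sk$ for the circuit $W[f^{(i)},\skl.\ek^{(i)},\ske.\ct^{(i)}]$. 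Write $D\seteq\{i : f^{(i)}(x_0^*)\ne f^{(i)}(x_1^*)\}$ for the set of distinguishing keys; by the rules of \cref{def:ada_ind_kla_PKFESKL} every $i\in D$ must have been accepted by $\Oracle{\qVrfy}$ before $\ct^*$ is produced. The goal is to move from $\coin=0$ to $\coin=1$ while only ever altering the message underlying the $\SKL$ ciphertexts recovered in the trapdoor branch of $W$.

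First I would switch each hard-wired $\ske.\ct^{(i)}$ from a uniform string to $\SKE.\Enc(\ske.\sk^*,\skfe.\sk^{(i)})$, where $\ske.\sk^*$ is a single fresh $\SKE$ key and $\skfe.\sk^{(i)}\gets\SKFE.\KG(\skfe.\msk,G^{(i)})$ for a function $G^{(i)}$ that hard-codes $f^{(i)},\skl.\ek^{(i)}$ and, on a message carrying $(x,\prfkey)$, outputs $\SKL.\Enc(\skl.\ek^{(i)},f^{(i)}(x);\prf_{\prfkey}(\skl.\ek^{(i)}))$. Since $\ske.\sk^*$ does not yet appear in $\qA$'s view and the challenge ciphertext still carries $\bot$ (so $W$ stays in the normal branch), this is indistinguishable by ciphertext pseudorandomness of $\SKE$. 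Next I would use $q$-bounded adaptive $\PKFE$ security to replace the challenge plaintext $(x_0^*\concat 0^{\xpadlen},\bot,\prfkey)$ by $(\skfe.\ct^*,\ske.\sk^*,\prfkey)$ with $\skfe.\ct^*\gets\SKFE.\Enc(\skfe.\msk,(x_0^*,\prfkey))$; this is legitimate because in the trapdoor branch each $W[f^{(i)},\dots]$ now computes $\SKFE.\Dec(\skfe.\sk^{(i)},\skfe.\ct^*)=G^{(i)}(x_0^*,\prfkey)$, which equals the value produced by the normal branch, so all $Q\le q$ functional keys decrypt the two plaintexts identically. This is the one step where collusion-boundedness is consumed: the reduction forwards exactly $\qA$'s $Q\le q$ key queries, so $q$-bounded adaptive $\PKFE$ security suffices. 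Having entered trapdoor mode, I would then use adaptive single-ciphertext function privacy of $\SKFE$ together with the puncturable $\PRF$ (puncture $\prfkey$ at the points $\skl.\ek^{(i)}$, carry the values $\prf_{\prfkey}(\skl.\ek^{(i)})$ inside $\skfe.\ct^*$ via a function-privacy change, and invoke pseudorandomness at the punctured points) to replace $\prf_{\prfkey}(\skl.\ek^{(i)})$ by truly uniform $r_i$ in the output $\SKL.\Enc(\skl.\ek^{(i)},f^{(i)}(x_0^*);r_i)$.

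With genuine randomness in place, the core step switches $f^{(i)}(x_0^*)$ to $f^{(i)}(x_1^*)$ for every $i\in D$ by a hybrid over distinguishing keys, each transition reducing to $\indKLA$ security of $\SKL$ under $\skl.\ek^{(i)}$; for $i\notin D$ nothing changes since $f^{(i)}(x_0^*)=f^{(i)}(x_1^*)$. In the $i$-th reduction the challenger-supplied $(\skl.\ek,\skl.\qsk)$ is planted as the $i$-th instance, $\qA$'s verification queries on $f^{(i)}$ are forwarded to $\Oracle{\qVrfy}$ of the $\indKLA$ game, and the $\SKL$ challenge ciphertext---requested only after $\qA$ has returned all distinguishing keys, so that $V=\top$ holds as demanded by \cref{def:IND-CPA_PKESKL}---is embedded into the $i$-th output through the single $\SKFE$ ciphertext. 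Finally I would unwind the $\PRF$, $\SKFE$, $\PKFE$, and $\SKE$ steps in reverse order to arrive at the real game with $\coin=1$. I expect the main obstacle to be this $\indKLA$ step: one must check that the timing and the stateful verification oracle of the $\ADAINDKLA$ game are simulated faithfully through all the Trojan layers (so that a key returned by $\qA$ really induces $V=\top$ in the embedded $\indKLA$ instance), and that the $\SKFE$ function-privacy invariant---that only a single $\SKFE$ ciphertext is ever produced, with matching outputs on each planted key before and after---is maintained throughout. The remaining steps are routine (including the padding needed for the $\SKFE$ and $\PKFE$ admissibility constraints), and since only the $\PKFE$-switch consumes the collusion budget, bounding $\qA$'s key queries by $q$ yields the claim.
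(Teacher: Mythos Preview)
Your plan is correct and matches the paper's treatment: the paper simply states that the proof of \cref{thm:bounded_PKFESKL_from_bounded_PKFE} is almost identical to that of \cref{thm:PKFESKL_from_PKFE}, the sole point being that the single hybrid step invoking $\PKFE$ security (your trapdoor switch $\hybi{1}\to\hybi{2}$) forwards exactly $\qA$'s $Q\le q$ key queries, so $q$-bounded adaptive security suffices there while all other steps (SKE, SKFE function privacy, puncturable PRF, IND-KLA of $\SKL$) are unaffected by the collusion bound. Your identification of this step as the only place the budget is consumed is precisely the observation the paper relies on.
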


The proof of~\cref{thm:bounded_PKFESKL_from_bounded_PKFE} is almost the same as that of~\cref{thm:PKFESKL_from_PKFE}. Hence, we focus on the proof of~\cref{thm:PKFESKL_from_PKFE}.
We can also consider a simulation-based security for $q$-bounded security as Kitagawa and Nishimaki~\cite{AC:KitNis22} and believe that we can achieve it using a similar technique. However, it is out of scope of this work.
\ryo{If we consider the bounded collusion-resistant setting, we can conisder a simulation-based definition as~\cite{AC:KitNis22}, but that may be too much.}\takashi{I agree}

\begin{proof}[Proof of~\cref{thm:PKFESKL_from_PKFE}]
In the proof, we embed an SKFE ciphertext $\skfe.\ct \gets \SKFE.\Enc(\skfe.\msk,(x,\bot,\prfkey,0,\bot))$ into the challenge ciphertext.
More specifically, we generate $\fe.\ct \gets \PKFE.\Enc(\fe.\pk,(\skfe.\ct,\ske.\sk,\bot))$ and $\ske.\ct \gets \SKE.\Enc(\ske.\sk,\SKFE.\KG(\skfe.\msk,T[f,\skl.\ek]))$, where $T[f,\skl.\ek]$ is described in~\cref{fig:func_wrap_SKFE}.
By using this embedding, we can use the function privacy of $\SKFE$ and can alter both plaintexts and functions in the proof.

\protocol
{Function $T[f,\skl.\ek](x,x_1,\prfkey,j,\skl.\ct^\ast)$}
{The description of $T[f,\skl.\ek]$}
{fig:func_wrap_SKFE}
{
\begin{description}
\setlength{\parskip}{0.3mm} % between paragraph
\setlength{\itemsep}{0.3mm} % between items
\item[Constants:] Function $f$, encryption key of PKE-SKL $\skl.\ek$.
\item[Input:] Plaintext $x$ and $x_1$, PRF key $\prfkey$, index $j$, a PKE-SKL ciphertext $\skl.\ct^\ast$.\footnote{Although inputs $(x_1,j,\skl.\ct^\ast)$ are not used in this function, we need them in hybrid games later.}
\end{description}
\begin{enumerate}
	\item Compute and output $\skl.\ct \seteq \SKL.\Enc(\skl.\ek,f(x);\prf_{\prfkey}(\skl.\ek))$. 
\end{enumerate}
}
Let $q$ be the total number of key queries to $\Oracle{\qKG}$. In the collusion-resistant setting, $q$ is an unbounded polynomial. Note that even if $q$ is an unbounded polynomial, we need only $\poly(\secp)$ bits to describe $q$ as an integer.
We assume that the adversary does not send the same $f$ to $\Oracle{\qKG}$ more than once without loss of generality.
We define a sequence of hybrid games.
\begin{description}
\item[$\hybi{0}$:] This is the same as $\expc{\PKFESKL,\qA}{ada}{ind}{kla}(1^\secp,0)$. More specifically, it is as follows.
        \begin{enumerate}
            \item The challenger generates $(\fe.\pk,\fe.\msk)\gets\FE.\Setup(1^\secp)$ and sends $\pk\seteq \fe.\pk$ to $\qA$.
            $\qA$ can access the following oracles.
            \begin{description}
            \item[$\Oracle{\qKG}(f_i)$:] 
            Given $f_i$, %if there is an entry $(f_i,\vk_i,V_i)$, it outputs $\bot$. Otherwise, 
            it generates $(\skl.\ek_i,\skl.\qsk_i,\skl.\vk_i)\la\SKL.\qKG(1^\secp)$, $\ske.\ct_i \chosen \zo{\skectlen}$, and $\fe.\sk_{W,i} \gets \FE.\KG(\fe.\msk,W[f_i,\skl.\ek_i,\ske.\ct_i])$, sends $\qfsk_{i} \seteq (\fe.\sk_{W,i},\skl.\qsk_i)$ to $\qA$, and adds $(f_i,\vk_i,\bot)$ to $\List{\qKG}$.
            \item[$\Oracle{\qVrfy}(f_i,\qfsk^\prime_i)$:] Given $(f_i,\qfsk^\prime_i)$, it finds an entry $(f_i,\vk_i,V_i)$ from $\List{\qKG}$ and parse $\qfsk^\prime_i = (\fe.\sk^\prime_i,\skl.\qsk^\prime_i)$. (If there is no such entry, it returns $\bot$.) It returns $d \seteq \SKL.\qVrfy(\skl.\vk,\skl.\qsk^\prime)$. If $V_i = \top$, it does not update the entry. Otherwise, it updates the entry by setting $V_i \seteq d$.
            \end{description}
            \item When $\qA$ sends $(x_0^*,x_1^*)$ to the challenger, the challenger checks if for any entry $(f,\vk,V)$ in $\List{\qKG}$ such that $f(x_0^*)\ne f(x_1^*)$, it holds that $V=\top$. If so, the challenger generates $\prfkey \gets \PRF.\Gen(1^\secp)$ and $\fe.\ct^\ast \gets \FE.\Enc(\fe.\pk,(x_\coin^\ast\concat 0^{\xpadlen},\bot,\prfkey))$ and sends $\ct^* \seteq \fe.\ct^\ast$ to $\qA$. Otherwise, the challenger outputs $0$. Hereafter, $\qA$ is not allowed to send a function $f$ such that $f(x_0^*)\ne f(x_1^*)$ to $\Oracle{\qKG}$.
            \item $\qA$ outputs a guess $\coin^\prime$ for $\coin$. The challenger outputs $\coin'$ as the final output of the experiment.
        \end{enumerate}
\item[$\hybi{1}$:] This is the same as $\hybi{0}$ except that for all $i\in[q]$, we generate $\ske.\ct_i \gets \SKE.\Enc(\ske.\sk,\skfe.\sk_i)$, where $\skfe.\sk_i \gets \SKFE.\KG(\skfe.\msk,T[f_i,\skl.\ek_i])$ and $(\skl.\ek_i,\skl.\qsk_i,\skl.\vk_i) \gets \SKL.\KeyGen(1^\secp)$. Note that the SKE secret key $\ske.\sk$ never appears in the view of $\qA$. Hence, we obtain $\abs{\Pr[\hybi{0}=1] - \Pr[\hybi{1} =1]} = \negl(\secp)$ by the security of $\SKE$.\ryo{I omit the proof since it is trivial. If you think we should write down the reduction, please let me know.}\takashi{I don't think that's needed.}
\item[$\hybi{2}$:] This is the same as $\hybi{1}$ except that we generate $\fe.\ct^\ast \gets \PKFE.\Enc(\fe.\pk,(\skfe.\ct^\ast,\ske.\sk,\bot))$, where $\skfe.\ct^\ast \gets \SKFE.\Enc(\skfe.\msk,(x_0^\ast,\bot,\prfkey,0,\bot))$. By the definition of $W$ described in~\cref{fig:func_wrap_W}, if we decrypt $\fe.\ct^\ast$ by $\fe.\sk_i$, we obtain
\begin{itemize}
\item $\SKL.\Enc(\skl.\ek_i,f(x_0^\ast);\prf_{\prfkey}(\skl.\ek_i))$ in $\hybi{1}$ since the plaintext in $\fe.\ct^\ast$ is $(x_0^\ast \concat 0^{\xpadlen},\bot,\prfkey)$,
\item $z_i = \SKFE.\Dec(\skfe.\sk_i,\skfe.\ct^\ast)$ in $\hybi{2}$ since $\ske.\ct$ is a ciphertext of $\skfe.\sk_i$ and the plaintext in $\fe.\ct^\ast$ is $(\skfe.\ct^\ast,\ske.\sk,\bot)$, where $\skfe.\ct^\ast =\SKFE.\Enc(\skfe.\msk,(x_0^\ast,\bot,\prfkey,0,\bot))$. By the correctness of $\SKFE$, $z_i = \SKL.\Enc(\skl.\ek_i,f(x_0^\ast);\prf_{\prfkey}(\skl.\ek_i))$.
\end{itemize}
That is, for all $i\in [q]$, it holds that $W[f_i,\skl.\ek_i,\ske.\ct_i](x_0^\ast \concat 0^{\xpadlen},\bot,\prfkey) = W[f_i,\skl.\ek_i,\ske.\ct_i](\skfe.\ct^\ast,\ske.\sk,\bot)$. Hence, we can use the security of $\PKFE$ and obtain $\abs{\Pr[\hybi{1}=1] - \Pr[\hybi{2} =1]} = \negl(\secp)$. See~\cref{lem:pkfe_hyb_one_two} for the detail.

After this game, we can focus on $\SKFE$.
\item[$\hybi{3}$:] This is the same as $\hybi{2}$ except that we generate $\skfe.\ct^\ast \gets \SKFE.\Enc(\skfe.\msk,(x_0^\ast,x_1^\ast,\prfkey,0,\bot))$ and $\skfe.\sk_i \gets \SKFE.\KG(\skfe.\msk,\Vhyb[f_i,\skl.\ek_i,i])$, where $\Vhyb[f_i,\skl.\ek_i,i]$ is described in~\cref{fig:func_wrap_SKFE_hybrid}. Since $i \in [q]$, it holds that $\Vhyb[f_i,\skl.\ek_i,i](x_0^\ast,x_1^\ast,\prfkey,0,\bot) = T[f_i,\skl.\ek_i](x_0^\ast,\bot,\prfkey,0,\bot)$ for all $i\in [q]$.
Hence, by the adaptively single-ciphertext function privacy of $\SKFE$, we obtain $\abs{\Pr[\hybi{2}=1] - \Pr[\hybi{3} =1]} = \negl(\secp)$. See~\cref{lem:pkfe_hyb_two_three} for the detail.
\item[$\hybij{3}{j}$:] This is the same as $\hybi{3}$ except that we generate $\skfe.\ct^\ast \gets \SKFE.\Enc(\skfe.\msk,(x_0^\ast,x_1^\ast,\prfkey,j,\bot))$. Apparently, $\hybij{3}{0}$ is the same as $\hybi{3}$. We show it holds that $\abs{\Pr[\hybij{3}{j-1}=1] - \Pr[\hybij{3}{j} =1]} = \negl(\secp)$ for $j \in [q]$ in~\cref{lem:embed_and_forget}.
\item[$\hybi{4}$:] This is the same as $\hybij{3}{q}$ except that we generate $\skfe.\sk_i \gets \SKFE.\KG(\skfe.\msk,T[f_i,\skl.\ek_i])$ and $\skfe.\ct^\ast \gets \SKFE.\Enc(\skfe.\msk,(x_1^\ast,\bot,\prfkey,0,\bot))$.
Recall that in $\hybij{3}{q}$, we use $\skfe.\sk_i \gets \SKFE.\KG(\skfe.\msk,\Vhyb[f_i,\skl.\ek_i,i])$ and $\skfe.\ct^\ast \gets \SKFE.\Enc(\skfe.\msk,(x_0^\ast,x_1^\ast,\prfkey,q,\bot))$. By the definition of $\Vhyb$ and $T$, it holds that for all $i\in [q]$,
\begin{align}
\Vhyb[f_i,\skl.\ek_i,i](x_0^\ast,x_1^\ast,\prfkey,q,\bot) & = \SKL.\Enc(\skl.\ek_i,f_i(x_1^\ast);\prf_{\prfkey}(\skl.\ek_i)) \\
&= T[f_i,\skl.\ek_i](x_1^\ast,\bot,\prfkey,0,\bot).
\end{align}
Hence, we can use the adaptively single-ciphertext function privacy of $\SKFE$ and obtain $\abs{\Pr[\hybij{3}{q}=1] - \Pr[\hybi{4} =1]} = \negl(\secp)$. See~\cref{lem:pkfe_hyb_three_four} for the detail.

Now, we use $x_1^\ast$ instead of $x_0^\ast$ and erased $x_0^\ast$ in the challenge ciphertext. Hence, we focus on $\PKFE$ again and undo the changes from $\hybi{1}$ to $\hybi{2}$ and from $\hybi{0}$ to $\hybi{1}$.
\item[$\hybi{5}$:] This is the same as $\hybi{4}$ except that we generate $\fe.\ct^\ast \gets \PKFE.\Enc(\fe.\pk,(x_1^\ast\concat 0^{\xpadlen},\bot,\prfkey))$. This is the reverse transition from $\hybi{1}$ to $\hybi{2}$, so we obtain $\abs{\Pr[\hybi{4}=1] - \Pr[\hybi{5} =1]} = \negl(\secp)$ by the security of PKFE as the proof of~\cref{lem:pkfe_hyb_one_two}.
\item[$\hybi{6}$:] This is the same as $\hybi{5}$ except that we generate $\ske.\ct_i \chosen \zo{\ell}$. As the transition from $\hybi{0}$ to $\hybi{1}$, we obtain $\abs{\Pr[\hybi{5}=1] - \Pr[\hybi{6} =1]} = \negl(\secp)$ by the ciphertext pseudorandomness of SKE.
It is easy to see that $\hybi{6}$ is the same as $\expc{\PKFESKL,\qA}{ada}{ind}{kla}(1^\secp,1)$.\ryo{As the transition from $\hybi{0}$ to $\hybi{1}$, I omit the proof.}
\end{description}
\protocol
{Function $\Vhyb[f_i,\skl.\ek_i,i](x_0,x_1,\prfkey,j,\skl.\ct^\ast)$}
{The description of $\Vhyb[f_i,\skl.\ek_i,i]$}
{fig:func_wrap_SKFE_hybrid}
{
\begin{description}
\setlength{\parskip}{0.3mm} % between paragraph
\setlength{\itemsep}{0.3mm} % between items
\item[Constants:] Function $f_i$, PKE-SKL encryption key $\skl.\ek_i$, index $i$.
\item[Input:] Plaintext $x_0$ and $x_1$, PRF key $\prfkey$, index $j$, a PKE-SKL ciphertext $\skl.\ct^\ast$. 
\end{description}
\begin{enumerate}
	\item If $i \le j$, compute and output $\skl.\ct_i \seteq \SKL.\Enc(\skl.\ek_i,f_i(x_1);\prf_{\prfkey}(\skl.\ek_i))$.
	\item If $i > j$, compute and output $\skl.\ct_i \seteq \SKL.\Enc(\skl.\ek_i,f_i(x_0);\prf_{\prfkey}(\skl.\ek_i))$.
\end{enumerate}
}

If we prove~\cref{lem:embed_and_forget}, we complete the proof of~\cref{thm:PKFESKL_from_PKFE}.
\end{proof}

\begin{lemma}\label{lem:embed_and_forget}
For all $j \in [q]$, it holds that $\abs{\Pr[\hybij{3}{j-1}=1] - \Pr[\hybij{3}{j} =1]} = \negl(\secp)$ if $\SKFE$ is fully function private, $\SKL$ is IND-KLA, and $\PRF$ is a puncturable PRF.
\end{lemma}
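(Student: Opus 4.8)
The plan is to interpolate between $\hybij{3}{j-1}$ and $\hybij{3}{j}$ through a short chain of sub-hybrids that isolates the single $\SKFE$ functional key $\skfe.\sk_j$ generated for the $j$-th key query $f_j$, since this is the only key whose decryption of the challenge $\SKFE$ ciphertext differs between the two games: under index $j-1$ it outputs $\SKL.\Enc(\skl.\ek_j,f_j(x_0^\ast);\prf_{\prfkey}(\skl.\ek_j))$, whereas under index $j$ it outputs $\SKL.\Enc(\skl.\ek_j,f_j(x_1^\ast);\prf_{\prfkey}(\skl.\ek_j))$ (for $i\ne j$ the output of $\Vhyb[f_i,\skl.\ek_i,i]$ is the same under both indices). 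Throughout I would condition on the overwhelming-probability event that $\{\skl.\ek_i\}_{i\in[q]}$ are pairwise distinct, which licenses invoking punctured correctness at $\skl.\ek_j$ for every $i\ne j$.

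First I would use the (adaptively single-ciphertext) function privacy of $\SKFE$ (\Cref{def:full_func_privacy_SKFE}) to push the $j$-th key's output into the fifth input slot of the $\SKFE$ plaintext. Concretely, I change the challenge plaintext from $(x_0^\ast,x_1^\ast,\prfkey,j-1,\bot)$ to $(x_0^\ast,x_1^\ast,\prfkey_{\ne \skl.\ek_j},j-1,\skl.\ct^\ast)$ with $\skl.\ct^\ast = \SKL.\Enc(\skl.\ek_j,f_j(x_0^\ast);\prf_{\prfkey}(\skl.\ek_j))$, and simultaneously replace $\skfe.\sk_j$'s function $\Vhyb[f_j,\skl.\ek_j,j]$ by $\Vembed[f_j,\skl.\ek_j,j]$, which simply echoes its fifth input $\skl.\ct^\ast$. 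The input/output consistency required by function privacy holds because on the challenge plaintext $\Vhyb[f_j,\skl.\ek_j,j]$ (with input index $j-1<j$) already outputs exactly $\skl.\ct^\ast$, while for $i\ne j$ the function is unchanged and its value is preserved by punctured correctness (it never evaluates the PRF at $\skl.\ek_j$). Note $\Vembed$ is independent of the challenge, so it can be produced at key-query time, and since the PKFE-SKL game has a single challenge ciphertext the single-ciphertext restriction on $\SKFE$ suffices.

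Having made $\prf_{\prfkey}(\skl.\ek_j)$ the only occurrence of the unpunctured key, I would next invoke pseudorandomness at the punctured point (\Cref{def:pprf}) to replace it by a fresh uniform $R^\ast$, giving $\skl.\ct^\ast=\SKL.\Enc(\skl.\ek_j,f_j(x_0^\ast);R^\ast)$, and then switch $\skl.\ct^\ast$ to $\SKL.\Enc(\skl.\ek_j,f_j(x_1^\ast);R^\ast)$ using the IND-KLA security of $\SKL$ (\Cref{def:IND-CPA_PKESKL}) under key $\skl.\ek_j$. The reduction embeds the external $\SKL$ instance as the $j$-th one (taking $\skl.\ek_j,\skl.\qsk_j$ from the challenger), answers each $\Oracle{\qVrfy}(f_j,\cdot)$ query by forwarding $\skl.\qsk^\prime$ to the $\SKL$ verification oracle, and generates all $\FE$, $\SKE$, $\SKFE$ and remaining $\SKL$ material itself. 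At challenge time it computes $f_j(x_0^\ast),f_j(x_1^\ast)$: if they are equal the two sub-hybrids coincide and it manufactures $\skl.\ct^\ast$ on its own, and if they differ then $f_j$ is a distinguishing key, so the challenger issues $\ct^\ast$ only when $V_j=\top$, which is exactly the precondition for the IND-KLA challenger to release $\skl.\ct^\ast$ on input $(f_j(x_0^\ast),f_j(x_1^\ast))$. Finally I reverse the PPRF step (restoring randomness $\prf_{\prfkey}(\skl.\ek_j)$ for the now-$x_1^\ast$ message) and reverse the function-privacy step, moving to plaintext $(x_0^\ast,x_1^\ast,\prfkey,j,\bot)$ with $\Vhyb[f_j,\skl.\ek_j,j]$ (input index $j$, output $f_j(x_1^\ast)$); this matches $\skl.\ct^\ast$ on the challenge and leaves the $i\ne j$ keys unaffected, yielding exactly $\hybij{3}{j}$. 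Summing the five hops gives $\abs{\Pr[\hybij{3}{j-1}=1]-\Pr[\hybij{3}{j}=1]}=\negl(\secp)$.

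The main obstacle will be the IND-KLA step: I must make the returnability bookkeeping of the two games line up, namely argue that the PKFE-SKL abort condition (every distinguishing key, in particular $f_j$, satisfies $V=\top$ before $\ct^\ast$ is produced) translates precisely into the IND-KLA requirement that $\skl.\qsk_j$ is verified-returned before the $\SKL$ challenge query, all while faithfully simulating the possibly many verification queries on $f_j$ through the stateful $\SKL$ verification oracle and handling the non-distinguishing case (where no return is mandated) by having the reduction construct $\skl.\ct^\ast$ directly. The function-privacy hops are routine once the fifth-slot injection is in place, and the PPRF hops need only the distinctness of the $\skl.\ek_i$.
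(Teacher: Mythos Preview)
Your proposal is correct and follows essentially the same route as the paper's proof: externalize the $j$-th $\SKL$ ciphertext into the fifth slot via $\SKFE$ function privacy, puncture the PRF key, invoke puncturing security to free the randomness, apply IND-KLA of $\SKL$ at index $j$ (with the same case split on whether $f_j$ is distinguishing), and then reverse. The only differences are cosmetic streamlining---you merge the paper's separate ``switch to $\Vembed$'' and ``replace $\prfkey$ by $\prfkey_{\ne\skl.\ek_j}$'' function-privacy hops into one (yielding five hops instead of the paper's seven $\sfG_0,\ldots,\sfG_7$), you swap only the $j$-th key's function rather than all of them, and you make explicit the overwhelming-probability distinctness of the $\skl.\ek_i$ that the paper's punctured-correctness argument silently relies on.
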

\begin{proof}
We define a sequence of hybrid games.
\begin{description}
\item[$\sfG_0$:] This is the same as $\hybij{3}{j-1}$. That is, $\skfe.\ct^\ast \gets \SKFE.\Enc(\skfe.\msk,(x_0^\ast,x_1^\ast,\prfkey,j-1,\bot))$ and $\skfe.\sk_i \gets \SKFE.\KG(\skfe.\msk,\Vhyb[f_i,\skl.\ek_i,i])$.
\item[$\sfG_1$:] This is the same as $\sfG_0$ except that we generate $\skfe.\ct^\ast \gets \SKFE.\Enc(\skfe.\msk,(x_0^\ast,x_1^\ast,\prfkey,j,\skl.\ct^\ast))$, where $\skl.\ct^\ast \gets \SKL.\Enc(\skl.\ek_j,f_j(x_0^\ast);\prf_{\prfkey}(\skl.\ek_j))$ and $\skfe.\sk_i \gets \SKFE.\KG(\skfe.\msk,\Vembed[f_i,\skl.\ek_i,i])$, where $\Vembed[f_i,\skl.\ek_i,i]$ is described in~\cref{fig:func_wrap_SKFE_embed}.
By the definitions of $\Vhyb$ and $\Vembed$, it holds that
\[\Vhyb[f_i,\skl.\ek_i,i](x_0^\ast,x_1^\ast,\prfkey,j-1,\bot)=\Vembed[f_i,\skl.\ek_i,i](x_0^\ast,x_1^\ast,\prfkey,j,\skl.\ct^\ast)\]
for all $i\in [q]$ since $\skl.\ct^\ast$ is an encryption of $f_j(x_0^\ast)$.
Hence, by the adaptively single-ciphertext function privacy of $\SKFE$, we obtain $\abs{\Pr[\sfG_0 =1] - \Pr[\sfG_1 =1]} = \negl(\secp)$. See~\cref{lem:pkfe_Ghyb_zero_one} for the detail.
\item[$\sfG_2$:] This is the same as $\sfG_1$ except that we use a punctured PRF key $\prfkey_{\ne  \skl.\ek_j} = \Puncture(\prfkey, \skl.\ek_j)$. By the functionality of punctured PRF keys, it holds that
\[\Vembed[f_i,\skl.\ek_i,i](x_0^\ast,x_1^\ast,\prfkey,j,\skl.\ct^\ast)=\Vembed[f_i,\skl.\ek_i,i](x_0^\ast,x_1^\ast,\prfkey_{\ne  \skl.\ek_j},j,\skl.\ct^\ast)\]
for all $i\in [q]$. Note that $\Vembed$ directly uses $\skl.\ct^\ast$ instead of computing $\SKL.\Enc(\skl.\ek_j,f_j(x_0^\ast);\prf_{\prfkey}(\skl.\ek_j))$, so $\prfkey_{\ne \skl.\ek_j}$ is sufficient for the functional equivalence. The only difference between the two games is whether the PRF key is $\prfkey$ or $\prfkey_{\ne \skl.\ek_j}$. Hence, we can use the adaptively single-ciphertext function privacy of $\SKFE$ and obtain $\abs{\Pr[\sfG_1 =1] - \Pr[\sfG_2 =1]} = \negl(\secp)$. We omit the proof since it is easy.\ryo{I omit the proof since it is easy. If you think we should write down it, please let me know.}
\item[$\sfG_3$:] This is the same as $\sfG_2$ except that we generate $\skl.\ct^\ast \gets \SKL.\Enc(\skl.\ek_j,f_j(x_0^\ast))$. That is, we use uniform randomness for generating $\skl.\ct^\ast$. By the punctured pseudorandomness of $\PRF$, we obtain $\abs{\Pr[\sfG_2 =1] - \Pr[\sfG_3 =1]} = \negl(\secp)$. We omit the proof since it is easy.\ryo{I omit the proof since it is easy. If you think we should write down it, please let me know.}
\item[$\sfG_4$:] This is the same as $\sfG_3$ except that we generate $\skl.\ct^\ast \gets \SKL.\Enc(\skl.\ek_j,f_j(x_1^\ast))$.
We consider two cases.
\begin{itemize}
 \item If $(f_j,\vk_j,\bot)$ is recorded in $\List{\qKG}$, that is, valid $\qfsk_{j}$ is not returned, it must hold that $f_j(x_0^\ast)=f_j(x_1^\ast)$ by the requirement of Ada-IND-KLA security. In this case, the distribution of $\skl.\ct^\ast \gets \SKL.\Enc(\skl.\ek_j,f_j(x_0^\ast))$ is trivially the same as that of $\skl.\ct^\ast \gets \SKL.\Enc(\skl.\ek_j,f_j(x_1^\ast))$. Hence, we obtain $\Pr[\sfG_3 =1] = \Pr[\sfG_4 =1]$.
\item If $(f_j,\vk_j,\top)$ is recorded in $\List{\qKG}$, that is, it is certified that the adversary returned valid $\qfsk_{j}$, it could hold that $f_j(x_0^\ast)\ne f_j(x_1^\ast)$ by the requirement of Ada-IND-KLA security. In this case, we use IND-KLA security of $\SKL$ since $\skl.\qsk_j$ was returned. We have that $\skl.\ct^\ast \gets \SKL.\Enc(\skl.\ek_j,f_j(x_0^\ast))$ is computationally indistinguishable from $\skl.\ct^\ast \gets \SKL.\Enc(\skl.\ek_j,f_j(x_1^\ast))$. Hence, we obtain $\abs{\Pr[\sfG_3 =1] - \Pr[\sfG_4 =1]} = \negl(\secp)$ in this case. See~\cref{lem:pkfe_Ghyb_three_four} for the detail.
\shota{Reference here may be incorrect.}\ryo{Fixed.}
\end{itemize}
Hence, we obtain $\abs{\Pr[\sfG_3 =1] - \Pr[\sfG_4 =1]} = \negl(\secp)$ in either cases.
\item[$\sfG_5$:] This is the same as $\sfG_4$ except that we undo the change in $\sfG_3$. That is, we use $\prf_{\prfkey}( \skl.\ek_j)$ for the randomness of $\skl.\ct^\ast$. We obtain $\abs{\Pr[\sfG_4 =1] - \Pr[\sfG_5 =1]} = \negl(\secp)$ by the punctured pseudorandomness of $\PRF$. We omit the proof since it is easy.\ryo{I omit the proof since it is easy. If you think we should write down it, please let me know.}
\item[$\sfG_6$:] This is the same as $\sfG_5$ except that we undo the change in $\sfG_2$. That is, we use a unpunctured PRF key $\prfkey$. We obtain $\abs{\Pr[\sfG_1 =1] - \Pr[\sfG_2 =1]} = \negl(\secp)$ by the adaptively single-ciphertext function privacy as the transition from $\sfG_1$ to $\sfG_2$. So, we omit the proof.\ryo{I omit the proof since it is easy. If you think we should write down it, please let me know.}
\item[$\sfG_7$:] This is the same as $\sfG_6$ except that we undo the change in $\sfG_1$, but the index is still $j$. That is, we use $\skfe.\ct^\ast \gets \SKFE.\Enc(\skfe.\msk,(x_0^\ast,x_1^\ast,\prfkey,j,\bot))$ and $\skfe.\sk_i \gets \SKFE.\KG(\skfe.\msk,\Vhyb[f_i,\skl.\ek_i,i])$. We obtain $\abs{\Pr[\sfG_6 =1] - \Pr[\sfG_7 =1]} = \negl(\secp)$ by the adaptively single-ciphertext function privacy of $\SKFE$. The proof is similar to that of~\cref{lem:pkfe_Ghyb_zero_one}. So, we omit the proof.
\end{description}

\protocol
{Function $\Vembed[f_i,\skl.\ek_i,i](x_0,x_1,\prfkey,j,\skl.\ct^\ast)$}
{The description of $\Vembed[f_i,\skl.\ek_i,i]$}
{fig:func_wrap_SKFE_embed}
{
\begin{description}
\setlength{\parskip}{0.3mm} % between paragraph
\setlength{\itemsep}{0.3mm} % between items
\item[Constants:] Function $f_i$, encryption key of PKE-SKL $\skl.\ek_i$, index $i$.
\item[Input:] Plaintext $x_0$, $x_1$, PRF key $\prfkey$, index $j$, an SKL ciphertext $\skl.\ct^\ast$.
\end{description}
\begin{enumerate}
	\item If $i=j$, output $\skl.\ct^\ast$.
	\item If $i < j$, compute and output $\skl.\ct_i \seteq \SKL.\Enc(\skl.\ek_i,f_i(x_1);\prf_{\prfkey}(\skl.\ek_i))$.
	\item If $i > j$, compute and output $\skl.\ct_i \seteq \SKL.\Enc(\skl.\ek_i,f_i(x_0);\prf_{\prfkey}(\skl.\ek_i))$.
\end{enumerate}
}
It is easy to see that $\sfG_7$ is the same as $\hybij{3}{j}$. Therefore, we complete the proof.
\end{proof}

By~\cref{thm:bounded_PKFESKL_from_bounded_PKFE,thm:1ct_adaptive_function_private_SKFE,thm:pprf-owf,thm:pseudorandom_ske,thm:bounded_PKFE_from_PKE,thm:PKE-SKL_from_PKE}, we obtain the following corollary.
\begin{corollary}
If there exists IND-CPA secure PKE, there exists $q$-bounded Ada-IND-KLA PKFE-SKL for $\Ppoly$.
\end{corollary}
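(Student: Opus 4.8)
The plan is to simply assemble the building blocks required by \cref{thm:bounded_PKFESKL_from_bounded_PKFE} and then invoke that theorem. Observe that the construction of $\PKFESKL$ in \cref{sec:CR_PKFE_SKL} needs four ingredients: a $q$-bounded adaptively secure PKFE, an adaptively single-ciphertext function private SKFE, a pseudorandom-secure SKE, and a puncturable PRF. My first step is to show each of these follows from the sole assumption of an IND-CPA secure PKE scheme.

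First I would note the standard fact that an IND-CPA secure PKE scheme implies the existence of one-way functions (indeed, OWFs follow from essentially any nontrivial symmetric or public-key primitive). Granting OWFs, \cref{thm:1ct_adaptive_function_private_SKFE} yields the adaptively single-ciphertext function private SKFE for $\Ppoly$, \cref{thm:pprf-owf} yields the puncturable PRF for the required input/output lengths, and \cref{thm:pseudorandom_ske} yields the pseudorandom-secure SKE. The remaining two ingredients come directly from the PKE assumption without passing through OWFs: \cref{thm:bounded_PKFE_from_PKE} gives a $q$-bounded adaptively secure PKFE for $\Ppoly$ from IND-CPA PKE, and \cref{thm:PKE-SKL_from_PKE} gives an IND-KLA secure PKE-SKL scheme from IND-CPA PKE. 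Note that the IND-KLA PKE-SKL is itself obtained through the longer chain in \cref{sec:pke-skl-defs,sec:PKE-CoIC,sec:PKE-SKL} (1-query OW-KLA, CoIC-KLA from CPFE, parallel amplification, then the OW-to-IND and 1-query-to-poly-query lifts), but for the purpose of this corollary I may treat \cref{thm:PKE-SKL_from_PKE} as a black box.

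Having collected all the components, the second and final step is to instantiate the construction of \cref{sec:CR_PKFE_SKL} with them and apply \cref{thm:bounded_PKFESKL_from_bounded_PKFE}, which certifies that the resulting $\PKFESKL$ is $q$-bounded Ada-IND-KLA secure for $\Ppoly$. The only mild subtlety to check is that the function class supported by the PKFE and SKFE building blocks is rich enough to contain the wrapping circuits $W[f,\skl.\ek,\ske.\ct]$ and $T[f,\skl.\ek]$ used in the construction; since all of SKFE, PKFE, and CPFE here are stated for $\Ppoly$ and these wrappers are polynomial-size, this is immediate.

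I do not expect any genuine obstacle here: the corollary is a bookkeeping composition of already-established implications, and there is no new analysis to perform. The closest thing to a ``hard part'' is merely being explicit that the IND-CPA-to-OWF implication licenses the symmetric-key primitives, so that every hypothesis of \cref{thm:bounded_PKFESKL_from_bounded_PKFE} is discharged from the single stated assumption.
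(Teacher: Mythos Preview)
Your proposal is correct and matches the paper's own justification essentially line for line: the paper simply cites \cref{thm:bounded_PKFESKL_from_bounded_PKFE,thm:1ct_adaptive_function_private_SKFE,thm:pprf-owf,thm:pseudorandom_ske,thm:bounded_PKFE_from_PKE,thm:PKE-SKL_from_PKE} and declares the corollary, which is exactly the assembly of ingredients you describe (with the PKE $\Rightarrow$ OWF step left implicit).
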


By~\cref{thm:PKFESKL_from_PKFE,thm:1ct_adaptive_function_private_SKFE,thm:pprf-owf,thm:pseudorandom_ske,thm:PKE-SKL_from_PKE} and known theorems about PKFE~\cite{TCC:GarSri16,TCC:LiMic16,C:KNTY19}, we obtain the following corollary.
\begin{corollary}
If there exists single-key selective-message-function secure\footnote{The adversary must select the target plaintext pair and function at the beginning of the game. This is the same as weakly selective security by Garg and Srinivasan~\cite{TCC:GarSri16}.} and weakly compact PKFE for $\Ppoly$, there exists Ada-IND-KLA PKFE for $\Ppoly$.
\end{corollary}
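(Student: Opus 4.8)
The plan is to derive each building block required by \cref{thm:PKFESKL_from_PKFE} from the single hypothesis of a single-key, selective-message-function secure, weakly compact $\PKFE$ for $\Ppoly$, and then to invoke that theorem as a black box. Recall that \cref{thm:PKFESKL_from_PKFE} converts (i) an adaptively secure (collusion-resistant) $\PKFE$ for $\Ppoly$, (ii) an adaptively single-ciphertext function private $\SKFE$ for $\Ppoly$, (iii) a secure puncturable PRF, and (iv) a ciphertext-pseudorandom $\SKE$ --- together with the IND-KLA secure PKE-SKL scheme $\SKL$ used as a building block in \cref{sec:CR_PKFE_SKL} --- into an $\ADAINDKLA$ secure $\PKFESKL$ for $\Ppoly$. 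Hence it suffices to show that all of (i)--(iv) and the $\SKL$ ingredient follow from the stated assumption, and then feed them into \cref{thm:PKFESKL_from_PKFE}.

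The key step is ingredient (i): bootstrapping the given single-key, weakly compact scheme into a fully collusion-resistant and adaptively secure $\PKFE$ for $\Ppoly$. This is exactly what the known transformations of \cite{TCC:GarSri16,TCC:LiMic16,C:KNTY19} provide, and the hypothesis ``single-key selective-message-function secure and weakly compact'' is phrased precisely to match their input requirement (it coincides with the weakly selective notion of \cite{TCC:GarSri16} recalled in the footnote). Concretely, I would first apply the compactness-to-collusion machinery of \cite{TCC:GarSri16,TCC:LiMic16} to obtain a collusion-resistant (selective) $\PKFE$, and then invoke the security upgrade of \cite{C:KNTY19} to promote selective to adaptive security, verifying along the way that the supported circuit class remains $\Ppoly$ and that no extra assumption is silently introduced.

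The remaining ingredients are pure plumbing, as they all reduce to one-way functions. A single-key $\PKFE$ for $\Ppoly$ trivially yields an IND-CPA secure $\PKE$ (issue a functional key for the identity function and encrypt messages as plaintexts), and $\PKE$ in turn implies OWFs. Consequently, \cref{thm:1ct_adaptive_function_private_SKFE} supplies the adaptively single-ciphertext function private $\SKFE$ for (ii), \cref{thm:pprf-owf} supplies the puncturable PRF for (iii), and \cref{thm:pseudorandom_ske} supplies the ciphertext-pseudorandom $\SKE$ for (iv). For the $\SKL$ ingredient, \cref{thm:PKE-SKL_from_PKE} turns the same IND-CPA secure $\PKE$ into an IND-KLA secure PKE-SKL scheme. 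Plugging (i)--(iv) and this PKE-SKL scheme into \cref{thm:PKFESKL_from_PKFE} yields the desired $\ADAINDKLA$ secure $\PKFESKL$ for $\Ppoly$.

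I expect the only genuinely nontrivial point to be the bootstrapping of ingredient (i): although \cite{TCC:GarSri16,TCC:LiMic16,C:KNTY19} are used as black boxes, some care is needed to confirm that their security preconditions are met by a ``single-key selective-message-function secure and weakly compact'' $\PKFE$, and that the composed output is simultaneously collusion-resistant, adaptively secure, and defined for the full class $\Ppoly$, all without importing assumptions beyond the stated one. Everything else is a straightforward composition of implications that are already established earlier in the paper, so no further calculation is required.
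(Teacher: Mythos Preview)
Your proposal is correct and follows essentially the same approach as the paper: the paper derives this corollary in one line by invoking \cref{thm:PKFESKL_from_PKFE,thm:1ct_adaptive_function_private_SKFE,thm:pprf-owf,thm:pseudorandom_ske,thm:PKE-SKL_from_PKE} together with the bootstrapping results of \cite{TCC:GarSri16,TCC:LiMic16,C:KNTY19}, which is exactly the chain of implications you spell out. Your write-up is in fact more detailed than the paper's own justification, which gives no proof beyond citing these references.
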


\else %submission version

\input{submission_preliminaries}
\input{submission_PKE-SKL}
\input{submission_CoIC}
\input{submission_construction_PKE-SKL}
\input{submission_ABE-SKL}
\input{submission_FE-SKL}
\fi

\ifnum\anonymous=1
\else
 \section*{Acknowledgement}
We thank Jiayu Zhang for pointing out a technical similarity to \cite{STOC:Zhang21,FOCS:Zhang22}, Prabhanjan Ananth for discussions on the relationship between our work and their concurrent work \cite{APV23}, and anonymous reviewers of QIP 2023 and Eurocrypt 2023 for their valuable comments. 
This work was supported in part by the DST “Swarnajayanti” fellowship, Cybersecurity Center of Excellence, IIT Madras, National Blockchain Project and the Algorand Centres of Excellence programme managed by Algorand Foundation. Any opinions, findings, and conclusions or recommendations expressed in this material are those of the author(s) and do not necessarily reflect the views of sponsors.
The fourth author was partially supported by JST AIP Acceleration Research JPMJCR22U5 and JSPS KAKENHI Grant Number 19H01109, Japan. 
\fi

	\ifnum\llncs=1
\bibliographystyle{splncs04}
\bibliography{bib/abbrev3,bib/crypto,bib/siamcomp_jacm,bib/other}
	\else
\bibliographystyle{alpha} 
\bibliography{bib/abbrev3,bib/crypto,bib/siamcomp_jacm,bib/other}
	\fi

\ifnum\cameraready=0
	\ifnum\llncs=0
	%%%%%% Full version region %%%%%%
	\appendix
%%% appendix files here
   % !TEX root = main.tex

\section{SDE Implies PKE-SKL}\label{sec:SDE_and_PKE-SKL}
In this section, we discuss the relationship between SDE and PKE-SKL. 
There are many incomparable security definitions for SDE in the literature. 
Coladangelo et al. \cite{C:CLLZ21} defined two incomparable security definitions called \emph{CPA-style anti-piracy} and \emph{random challenge anti-piracy}.\footnote{They actually also defined stronger variants of them called strong anti-piracy security and strong anti-piracy against random plaintexts. See \cite[Definition 6.11 and D.4 in the full version]{C:CLLZ21} for the detail.\ryo{I slightly modified these sentences.}}  All constructions of SDE in \cite{C:CLLZ21} are shown to satisfy both CPA-style anti-piracy and random challenge anti-piracy. 
Georgiou and Zhandry \cite{EPRINT:GeoZha20} defined yet another security definition, which is similar to but slightly different from CPA-style anti-piracy of \cite{C:CLLZ21}.\footnote{We note that \cite{EPRINT:GeoZha20}  appeared before \cite{C:CLLZ21}.} Though we do not see any relationships between security notions in \cite{EPRINT:GeoZha20} and \cite{C:CLLZ21}, it seems possible to prove that the construction given in \cite{EPRINT:GeoZha20} satisfies both CPA-style anti-piracy and random challenge anti-piracy of \cite{C:CLLZ21} because it is very similar to one of the schemes given in \cite{C:CLLZ21}.\footnote{Here, we are referring to the construction of SDE based on one-shot signatures and extractable witness encryption in  \cite[Section 5]{EPRINT:GeoZha20}.
For proving that the scheme satisfies the security notions of~\cite{C:CLLZ21}, we will need to go trough the ``strong'' variants of them similarly to \cite{C:CLLZ21}. 
}
In the following, we show that SDE with random challenge anti-piracy implies IND-KLA secure PKE-SKL. This means that all known constructions of SDE can be used to construct PKE-SKL. 

The definitions of SDE and its random challenge anti-piracy are given below. 
The syntax of SDE is identical to that of PKE except that the key generation and decryption algorithms are quantum and the decryption key is quantum. 
\begin{definition}[Single-Decryptor Encryption]
A single-decryptor encryption (SDE) scheme $\SDE$ is a tuple of three algorithms $(\qKG, \Enc, \qDec)$. 
Below, let $\cX$  be the message space of $\SDE$. 
\begin{description}
\item[$\qKG(1^\secp)\ra(\ek,\qdk)$:] The key generation algorithm takes a security parameter $1^\lambda$, and outputs an encryption key $\ek$ and a decryption key $\qdk$. 

\item[$\Enc(\ek,\msg)\ra\ct$:] The encryption algorithm takes an encryption key $\ek$ and a message $\msg \in \cX$, and outputs a ciphertext $\ct$.

\item[$\qDec(\qdk,\ct)\ra\tilde{\msg}$:] The decryption algorithm takes a decryption key $\qdk$ and a ciphertext $\ct$, and outputs a value $\tilde{\msg}$.

\item[Correctness:]For every $\msg \in \cX$, we have
\begin{align}
\Pr\left[
\qDec(\qdk, \ct) \allowbreak = \msg
\ \middle |
\begin{array}{ll}
(\ek,\qdk)\gets\qKG(1^\secp)\\
\ct\gets\Enc(\ek,\msg)
\end{array}
\right] 
=1-\negl(\secp).
\end{align}
\end{description}
\end{definition}

In the following definition of random challenge anti-piracy, we use the notion of quantum programs with classical inputs and outputs as defined in  \Cref{def:Q_program_C_IO}. 
\begin{definition}[Random Challenge Anti-Piracy]
We say that an SDE scheme $\SDE$  with the message space $\Xs$ satisfies random-challenge anti-piracy, if it satisfies the following requirement, formalized from the experiment $\expb{\SDE,\qA}{rand}{chal}(1^\secp)$ between an adversary $\qA$ and a challenger $\qC$:
        \begin{enumerate}
            \item  $\qC$ 
            runs $(\ek,\qdk)\gets\qKG(1^\secp)$ and sends $\ek$ and $\qdk$ to $\qA$. 
            \item 
          $\qA$ sends two (possibly entangled) quantum programs $(\qD_0,\qD_1)$ with classical inputs and outputs to $\qC$. %(See \Cref{def:Q_program_C_IO} for the definitions of quantum programs with classical inputs and outputs and compact descriptions of a family of unitaries.) 
           \item 
           For $b\in \bit$, 
           $\qC$ 
            chooses $\msg^*_b\gets \Xs$, generates $\ct^*_b\la\Enc(\ek,\msg^*_b)$, and runs $\qD_b$ on input $\ct^*_b$ to obtain an output $\msg_b$.  
         $\qC$ outputs $1$ if $\msg_b=\msg^*_b$ for $b\in \bit$ and otherwise outputs $0$ as the final output of the experiment.
        \end{enumerate}
        For any QPT $\qA$, it holds that
\begin{align}
\advb{\SDE,\qA}{rand}{chal}(\secp) \seteq \Pr[\expb{\SDE,\qA}{rand}{chal} (1^\secp) \ra 1]\leq \negl(\secp).
\end{align}
\end{definition}

We prove the following theorem.
\begin{theorem}
If there exists an SDE scheme that satisfies random challenge anti-piracy, there exists an IND-KLA secure PKE-SKL scheme. 
\end{theorem}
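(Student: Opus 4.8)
The plan is to use the given SDE scheme essentially verbatim as a PKE-SKL scheme, equipping it with a verification algorithm that simply \emph{tests whether the returned key still decrypts}, and then to invoke \Cref{thm:one-query-ow_to_full} to lift the resulting $1$-query one-wayness to full IND-KLA security. Concretely, given $\SDE=(\SDE.\qKG,\SDE.\Enc,\SDE.\qDec)$ with super-polynomial message space $\Xs$, I would set $\PKESKL.\qKG(1^\secp)$ to run $(\ek,\qdk)\gets\SDE.\qKG(1^\secp)$ and output $(\ek,\qdk,\vk)$ with $\vk\seteq\ek$; let $\PKESKL.\Enc\seteq\SDE.\Enc$ and $\PKESKL.\qDec\seteq\SDE.\qDec$; and define $\PKESKL.\qVrfy(\vk,\widetilde{\qdk})$ to sample $\msg\gets\Xs$, compute $\ct\gets\SDE.\Enc(\ek,\msg)$ and $\msg'\gets\SDE.\qDec(\widetilde{\qdk},\ct)$, and output $\top$ iff $\msg'=\msg$. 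Decryption correctness is inherited from $\SDE$, and verification correctness follows from $\SDE$'s decryption correctness (an honest $\qdk$ decrypts a fresh encryption of a random message, hence passes the test); reusability of the key, needed so that verification does not harm an unreturned key, follows from \cref{rem:reusability}.

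By \Cref{thm:one-query-ow_to_full} it suffices to prove that this scheme is $1$-query OW-KLA secure, and by \cref{rem:no-post-challenge-query} I may assume the single verification query is made before the challenge. So let $\qA$ be a $1$-query OW-KLA adversary. After $\qA$ submits its returned key $\widetilde{\qdk}$, the adversary's global state lives on two disjoint registers: the register $R_0$ carrying $\widetilde{\qdk}$ (sent to the verification oracle) and the register $R_1$ carrying $\qA$'s residual internal state. I would build an adversary $\qB$ against random-challenge anti-piracy as follows: $\qB$ receives $(\ek,\qdk)$, runs $\qA(\ek,\qdk)$ until it returns $\widetilde{\qdk}$, and then outputs the pair of programs $(\qD_0,\qD_1)$, where $\qD_0$ on input $\ct_0$ computes and outputs $\SDE.\qDec(\widetilde{\qdk},\ct_0)$ (acting on $R_0$), and $\qD_1$ on input $\ct_1$ feeds $\top$ to $\qA$ as the verification response, hands $\ct_1$ to $\qA$ as the challenge, and outputs $\qA$'s message guess (acting on $R_1$).

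The crux is to check that $\qB$'s anti-piracy success probability equals $\qA$'s OW-KLA advantage. Writing $A$ for the event ``$\qD_0$ outputs the random message hidden in its fresh challenge'' and $B$ for the event ``$\qD_1$ outputs the random message hidden in its fresh challenge,'' the anti-piracy game wins exactly on $A\wedge B$. In the OW-KLA game, event $A$ is precisely the event that $\widetilde{\qdk}$ passes $\qVrfy$ (a test decryption of an independent random message), and, conditioned on the oracle returning $\top$, $\qA$'s winning event is precisely $B$ with the challenge being an independent encryption of a random message. Since the verification test acts only on $R_0$ while the challenge-recovery computation acts only on $R_1$, and the two use independent fresh randomness, the operations commute and the joint distribution of $(A,B)$ is identical in the two games; hence $\advb{\SDE,\qB}{rand}{chal}(\secp)=\advb{\PKESKL,\qA}{ow}{kla}(\secp)$, which is negligible by assumption. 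This yields $1$-query OW-KLA security and, via \Cref{thm:one-query-ow_to_full}, IND-KLA security.

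The main obstacle is this last step: rigorously justifying that splitting the post-return state across the two disjoint registers and evaluating the two programs on \emph{independent} fresh challenges faithfully reproduces the correlated pair of events in the OW-KLA game. This requires arguing that the verification-oracle measurement (on $R_0$) and the challenge-phase computation (on $R_1$) act on disjoint subsystems with independent randomness, so that the order of evaluation is immaterial and the joint acceptance/recovery probability is preserved; care is also needed to confirm that feeding the hardwired response $\top$ to $\qA$ inside $\qD_1$ exactly matches $\qA$'s conditional view in the real game, and that $|\Xs|=\secp^{\omega(1)}$ so that OW-KLA is meaningful and the Goldreich--Levin-based lift inside \Cref{thm:one-query-ow_to_full} applies.
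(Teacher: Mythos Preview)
Your proposal is correct and follows essentially the same route as the paper: the same construction (verification by test-decryption of a fresh random message, with $\vk=\ek$), the same reduction splitting the post-return state into $\qD_0=\SDE.\qDec(\widetilde{\qdk},\cdot)$ and $\qD_1=$ ``rest of $\qA$,'' and the same appeal to \Cref{thm:one-query-ow_to_full}. The paper dispatches your ``main obstacle'' in one line by invoking the deferred measurement principle (the verification measurement on $R_0$ can be postponed past $\qA$'s computation on $R_1$), which is exactly the disjoint-register/commuting-operations argument you spell out.
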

\begin{proof}
Let $\SDE=(\SDE.\qKG, \SDE.\Enc, \SDE.\qDec)$ be an SDE scheme that satisfies random challenge anti-piracy. 
By \cref{thm:one-query-ow_to_full}, it suffices to construct a one-query OW-KLA secure PKE-SKL scheme.
We construct a one-query OW-KLA secure PKE-SKL scheme $\SKL=(\SKL.\qKG, \SKL.\Enc, \SKL.\qDec, \SKL.\qVrfy)$ as follows. 
\begin{description}
\item[$\SKL.\qKG(1^\secp)$:]~
Run $(\sde.\ek,\sde.\qdk)\gets \SDE.\qKG(1^\secp)$
    and output $\skl.\ek\seteq \sde.\ek$, $\skl.\qdk\seteq \sde.\qdk$, and $\skl.\vk\seteq \sde.\ek$. 

\item[$\SKL.\Enc(\skl.\ek,\msg)$:] This is identical to $\SDE.\Enc$.

\item[$\SKL.\qDec(\qdk,\ct)$:] This is identical to $\SDE.\qDec$. 

\item[$\SKL.\qVrfy(\skl.\vk,\skl.\widetilde{\qdk})$:] 
Parse $\skl.\vk=\sde.\ek$, 
choose $\msg^*\gets \Xs$, 
run $\ct^* \gets \SDE.\Enc(\sde.\ek,\msg^*)$ and $\msg\gets \SDE.\qDec(\skl.\widetilde{\qdk},\ct^*)$,  
and output $\top$ if and only if $\msg=\msg^*$. 
\end{description}
Suppose that $\SKL$ is not one-query OW-KLA secure. Let $\qA$ be a QPT adversary that breaks the one-query OW-KLA security of $\SKL$. We construct a QPT adversary $\qB$ that breaks the random challenge anti-piracy of $\SDE$ as follows.
\begin{description}
\item[$\qB(\sde.\ek,\sde.\qdk)$:]
Set $\skl.\ek\seteq \sde.\ek$, $\skl.\qdk\seteq \sde.\qdk$, and $\skl.\vk\seteq \sde.\ek$ and sends $(\skl.\ek,\skl.\qdk,\skl.\vk)$ to $\qA$.
When $\qA$ makes a verification query $\skl.\widetilde{\qdk}$, $\qB$ returns $1$ to $\qA$ as the response from the oracle. Let $\qD_0$ be the quantum program with classical inputs and outputs  that takes $\ct$ as input and outputs $\msg\gets \SDE.\qDec(\skl.\widetilde{\qdk},\ct)$. 
When $\qA$ sends $\requestchallenge$, let $\qD_1$ be the quantum program with classical inputs and outputs, in which $\qA$'s internal state is hardwired, that takes $\ct$ as input, runs the rest of $\qA$ on the challenge ciphertext $\ct$, and outputs $\qA$'s output $\msg$. 
Output $(\qD_0,\qD_1)$. 
\end{description}
By the construction of $\qB$ and the deferred measurement principle, it is immediate to see that 
$\advb{\SDE,\qB}{rand}{chal}(\secp)=\advb{\SKL,\qA}{ow}{kla}(\secp)$. Thus, $\qB$ breaks the random challenge anti-piracy of $\SDE$, which is contradiction. Therefore, $\SKL$ is one-query OW-KLA secure.  
\end{proof}
\begin{remark}[On CPA-Style Anti-Piracy]
We do not know if SDE with CPA-style anti-piracy implies PKE-SKL. On the other hand, it seems possible to show that SDE with the ``strong'' variant of CPA-style anti-piracy (called strong anti piracy~\cite[Definition 6.11 in the full version]{C:CLLZ21}) implies PKE-SKL. 
In the single-bit encryption setting, the security roughly means that the adversary given one decryption key cannot generate two ``good'' distinguishers that distinguish encryptions of $0$ and $1$. %where the goodness is defined by using the approximated projective implementation.
Then our idea is to construct a PKE-SKL scheme whose verification algorithm accepts if a returned decryption key gives a ``good'' distinguisher. Then the strong anti piracy ensures that if the adversary passes the verification, then it cannot keep a ``good'' distinguisher, which in particular means that it cannot distinguish encryptions of $0$ and $1$. Thus, the PKE-SKL scheme is one-query IND-KLA secure. \takashi{Does this explanation make sense? If no, we can remove the technical explanation.}
%We omit the detail since 
%But note that all known constructions of SDE~\cite{EPRINT:GeoZha20,C:CLLZ21} satisfy random challenge anti-piracy as already explained. 
\end{remark}
    
% !TEX root = main.tex
\section{OW-CPA from CoIC-KLA}\label{sec:OW_from_CoIC}
We show the following lemma.
\begin{lemma}
If a PKE scheme with a super-polynomial-size message space is CoIC-KLA secure, then it is OW-CPA secure. \takashi{By the way, can we prove that CoIC-KLA implies IND-CPA? If so, proving CoIC-KLA -> IND-CPA -> OW-CPA might be clearer. 
But I couldn't see it.}\fuyuki{Me neither. I think the problem is that the random coin $a$ is not given to the adversary in CoIC-KLA game. If the adversary is given $a$, the adversary can break CoIC-KLA security using an IND-CPA breaker by giving $\Enc(\ek,\msg_{a\oplus b})$ to the IND-CPA breaker. Is it correct? (In fact, I think there is no problem to give $a$ to the adversary for CoIC-KLA security. We should have defined CoIC-KLA security in this way for clear implications.)}
\end{lemma}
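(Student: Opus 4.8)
The plan is to prove the statement by a reduction that establishes the contrapositive at the advantage level: given a QPT adversary $\qA$ against OW-CPA with advantage $\epsilon=\advb{\PKE,\qA}{ow}{cpa}(\secp)$, I would build a QPT adversary $\qB$ against CoIC-KLA whose advantage is at least $\epsilon^2-\negl(\secp)$, which is non-negligible whenever $\epsilon$ is (here is where $|\Xs|$ super-polynomial is used), contradicting CoIC-KLA security. The reduction $\qB$ ignores both the quantum key $\qdk$ and the verification oracle: on input $(\ek_0,\ek_1,\qdk)$ it samples $\msg_0^*,\msg_1^*\gets\Xs$ uniformly and independently, submits them, and receives $(\ct_0^*,\ct_1^*)$ with $\ct_0^*=\Enc(\ek_0,\msg_a^*)$ and $\ct_1^*=\Enc(\ek_1,\msg_{a\oplus b}^*)$. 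It then runs two fresh, independent copies of $\qA$, on $(\ek_0,\ct_0^*)$ and on $(\ek_1,\ct_1^*)$, obtaining guesses $\msg_0'$ and $\msg_1'$, and outputs $b'=0$ if $\msg_0'=\msg_1'$ and $b'=1$ otherwise.

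First I would check that the setup feeds $\qA$ faithful OW-CPA instances: since $a$ is a uniform bit and $\msg_0^*,\msg_1^*$ are uniform, each $\ct_\beta^*$ encrypts a uniformly random message under a freshly generated key, so each run of $\qA$ recovers the underlying plaintext with probability exactly $\epsilon$. The intuition behind the decision rule is that when $b=0$ both ciphertexts hide the same message $\msg_a^*$, so successful recoveries agree; when $b=1$ they hide the two independent messages $\msg_a^*,\msg_{a\oplus1}^*$, which are distinct except with probability $1/|\Xs|=\negl(\secp)$, so successful recoveries disagree. To make this quantitative I would introduce, for each $m\in\Xs$, the distribution $\mathcal{D}_m$ of $\qA(\ek,\Enc(\ek,m))$ over $\ek\gets\KG(1^\secp)$ and all randomness, so that $\epsilon=\Exp_{m\gets\Xs}[\mathcal{D}_m(m)]$. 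Writing $q_0=\Pr[\msg_0'=\msg_1'\mid b=0]$ and $q_1=\Pr[\msg_0'=\msg_1'\mid b=1]$, a one-line computation gives $\Pr[b'=b]=\tfrac12+\tfrac12(q_0-q_1)$, reducing everything to lower-bounding $q_0-q_1$. Conditioning on the plaintext(s) makes the two runs independent and yields $q_0=\Exp_{m\gets\Xs}[\mathrm{col}(\mathcal{D}_m)]$ and $q_1=\mathrm{col}(\overline{\mathcal{D}})$, where $\overline{\mathcal{D}}=\Exp_{m\gets\Xs}[\mathcal{D}_m]$ and $\mathrm{col}$ denotes collision probability, so that $q_0-q_1=\sum_x\mathrm{Var}_{m}[\mathcal{D}_m(x)]\ge0$.

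The main obstacle is that this variance could a priori be tiny even for a useful $\qA$, since a clever adversary might concentrate its \emph{incorrect} outputs so as to inflate $q_1$; the resolution is precisely where the super-polynomial message space enters. I would decompose $\mathcal{D}_m=\epsilon_m\delta_m+(1-\epsilon_m)\mathcal{E}_m$, where $\epsilon_m=\mathcal{D}_m(m)$, $\delta_m$ is the point mass at $m$, and $\mathcal{E}_m$ is supported off $m$. Because $\overline{\mathcal{D}}(x)$ and $\overline{\mathcal{E}}(x):=\Exp_m[(1-\epsilon_m)\mathcal{E}_m(x)]$ differ only by $\epsilon_x/|\Xs|$, summing the squares gives $q_1=\mathrm{col}(\overline{\mathcal{E}})+\negl(\secp)$, and Jensen's inequality gives $\mathrm{col}(\overline{\mathcal{E}})\le\Exp_m[(1-\epsilon_m)^2\,\mathrm{col}(\mathcal{E}_m)]$. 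Since $q_0=\Exp_m[\epsilon_m^2]+\Exp_m[(1-\epsilon_m)^2\,\mathrm{col}(\mathcal{E}_m)]$, the off-diagonal terms cancel and $q_0-q_1\ge\Exp_m[\epsilon_m^2]-\negl(\secp)\ge(\Exp_m[\epsilon_m])^2-\negl(\secp)=\epsilon^2-\negl(\secp)$ by a final use of Jensen.

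Putting the pieces together, $\advb{\PKE,\qB}{coic}{kla}(\secp)=|q_0-q_1|\ge\epsilon^2-\negl(\secp)$, so CoIC-KLA security forces $\epsilon=\negl(\secp)$, which is exactly OW-CPA security. The routine parts I would defer in a full write-up are the careful bookkeeping of the $\negl(\secp)$ contributions arising from the $1/|\Xs|$ accidental collisions, the verification that $\qB$ runs in quantum polynomial time (it merely samples two messages and runs $\qA$ twice), and the standard observation that $\qB$ never needs the verification oracle or $\qdk$.
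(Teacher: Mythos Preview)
Your proof is correct, but it takes a genuinely different route from the paper's. The paper's reduction \emph{uses} the quantum key: it measures $\qdk$ in the computational basis to obtain a classical key $\dk_\beta$ for a random $\beta\in\{0,1\}$, decrypts $\ct_\beta^*$ honestly (so $\msg_\beta'$ equals the underlying plaintext with overwhelming probability by decryption correctness), and runs the OW-CPA adversary $\qA$ only on the other ciphertext $\ct_{\beta\oplus 1}^*$; it then outputs $0$ iff the two recovered messages agree. With one side pinned down by correctness, the analysis is a two-line case split: when $b=0$ the other side matches with probability essentially $\epsilon$, and when $b=1$ the other ciphertext carries no information about $\msg_\beta'$, so a match happens with probability at most $1/|\Xs|$. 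This gives a \emph{linear} advantage $\ge \epsilon-\negl(\secp)$.

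You instead discard $\qdk$ and run $\qA$ on both sides, which forces you into the collision-probability calculation and yields only a \emph{quadratic} advantage $\ge \epsilon^2-\negl(\secp)$. The trade-off: the paper's argument is shorter and tighter, while yours is more elementary (no correctness is invoked) and in fact proves a mildly stronger statement, namely that the CoIC-KLA game restricted to adversaries who ignore both $\qdk$ and the verification oracle already implies OW-CPA.
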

\begin{proof}
Let $\PKE=(\KG, \Enc, \Dec)$ be a CoIC-KLA secure PKE scheme with the message space $\Xs$ such that $|\Xs|$ is super-polynomial in $\secp$. Toward contradiction, suppose that it is not OW-CPA secure. Let $\qA$ be an adversary that breaks OW-CPA security of $\PKE$. Then we construct $\qB$ that breaks CoIC-KLA security of $\PKE$ as follows.
\begin{description}
\item[$\qB(\ek_0,\ek_1,\qdk)$:] 
Measure $\qdk$ to get $(\beta,\dk_\beta)$ for $\beta\in \bit$. 
Choose $(\msg_0^*,\msg_1^*)\gets \Xs^2$ and send $(\msg_0^*,\msg_1^*)$ to the challenger (without making any oracle query). Upon receiving $(\ct^*_0,\ct^*_1)$ from the challenger, 
run $\msg_\beta'\gets \Dec(\dk_\beta,\ct^*_\beta)$ and $\msg_{\beta\oplus 1}'\gets \qA(\ek_{\beta \oplus 1},\ct^*_{\beta \oplus 1})$ and output $0$ if $\msg_0'=\msg_1'$ and $1$ otherwise.  
\end{description}
Note that the challenger implicitly chooses $a,b\gets \bit$ and generates 
$\ct^*_0\gets \Enc(\ek_0,\msg^*_a)$ and $\ct^*_1\gets \Enc(\ek_1,\msg^*_{a\oplus b})$. 
$\qB$'s goal is to guess $b$. 

If $b=0$, 
by the correctness of $\PKE$, we have $\Pr[\msg'_\beta=\msg^*_{a}]=1-\negl(\secp)$.
By the assumption that $\qA$ breaks OW-CPA security, $\Pr[\msg'_{\beta\oplus 1}=\msg^*_{a}]$ is non-negligible. 
In particular, $\Pr[\qB(\ek_0,\ek_1,\qdk)\rightarrow 0 |b=0]$ is non-negligible.   
If $b=1$, 
by the correctness of $\PKE$, we have $\Pr[\msg'_\beta=\msg^*_{a+\beta}]=1-\negl(\secp)$.
On the other hand,  $\ct^*_{\beta \oplus 1}$ contains no information of $\msg^*_{a\oplus \beta}$. Therefore, $\Pr[\msg'_{\beta \oplus 1}=\msg^*_{a\oplus \beta}]\le 1/|\Xs|=\negl(\secp)$. 
Thus, $\Pr[\qB(\ek_0,\ek_1,\qdk)\rightarrow 0 |b=1]=\negl(\secp)$. 
Thus, $
|2\Pr[\qB(\ek_0,\ek_1,\qdk)\rightarrow b]-1|
=|\Pr[\qB(\ek_0,\ek_1,\qdk)\rightarrow 0 |b=0]-\Pr[\qB(\ek_0,\ek_1,\qdk)\rightarrow 0 |b=1]|$ is non-negligible. This contradicts the assumed  CoIC-KLA security. Thus, $\PKE$ is OW-CPA secure. 
\end{proof}
	% !TEX root = main.tex

\section{Deferred Proofs for PKFE-SKL}\label{sec:proofs_PKFE_SKL}
In this section, we present the deferred proofs in~\cref{sec:PKFE_SKL}.
\begin{lemma}\label{lem:pkfe_hyb_one_two}
If $\PKFE$ is adaptively secure, it holds that $\abs{\Pr[\hybi{1}=1] - \Pr[\hybi{2} =1]} =\negl(\secp)$.
\end{lemma}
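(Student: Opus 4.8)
The plan is to prove this by a direct reduction to the adaptive security of $\PKFE$ (\Cref{def:ad_ind_PKFE}), exploiting the fact that the only difference between $\hybi{1}$ and $\hybi{2}$ is the plaintext encrypted under $\FE$ in the challenge ciphertext. First I would construct a QPT adversary $\qB$ against $\expb{\PKFE,\cdot}{ada}{ind}$ that internally simulates the $\PKFE$-SKL game for $\qA$. Upon receiving $\fe.\pk$ from its own challenger, $\qB$ forwards $\pk \seteq \fe.\pk$ to $\qA$ and itself samples every remaining secret it needs, namely $\skfe.\msk \gets \SKFE.\Setup(1^\secp)$, an SKE key $\ske.\sk$, and, for each key query, a fresh $\PKE$-SKL key tuple $(\skl.\ek_i,\skl.\qsk_i,\skl.\vk_i)\gets \SKL.\qKG(1^\secp)$.

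Next I would specify how $\qB$ answers the oracles. For each $\Oracle{\qKG}(f_i)$ query, $\qB$ computes $\skfe.\sk_i \gets \SKFE.\KG(\skfe.\msk,T[f_i,\skl.\ek_i])$, sets $\ske.\ct_i \gets \SKE.\Enc(\ske.\sk,\skfe.\sk_i)$ exactly as in both $\hybi{1}$ and $\hybi{2}$, and obtains the $\FE$ functional key $\fe.\sk_{W,i}$ for $W[f_i,\skl.\ek_i,\ske.\ct_i]$ by querying its own $\PKFE$ key-generation oracle. It returns $\qfsk_i \seteq (\fe.\sk_{W,i},\skl.\qsk_i)$ and records the entry in $\List{\qKG}$. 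Verification queries $\Oracle{\qVrfy}(f_i,\qfsk'_i)$ are answered locally via $\SKL.\qVrfy$ using the stored $\skl.\vk_i$, since they are independent of $\fe.\msk$. At the challenge phase $\qB$ performs the same admissibility check on $\List{\qKG}$ as the simulated challenger (aborting and outputting $0$ under the same condition), samples $\prfkey \gets \PRF.\Gen(1^\secp)$, computes $\skfe.\ct^\ast \gets \SKFE.\Enc(\skfe.\msk,(x_0^\ast,\bot,\prfkey,0,\bot))$, and submits the plaintext pair $\big((x_0^\ast \concat 0^{\xpadlen},\bot,\prfkey),\ (\skfe.\ct^\ast,\ske.\sk,\bot)\big)$ to its $\PKFE$ challenger, forwarding the returned $\fe.\ct^\ast$ to $\qA$ as $\ct^\ast$. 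Finally $\qB$ outputs whatever $\qA$ outputs.

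The crucial point to verify is that this pair is an admissible $\PKFE$ challenge, i.e.\ that on every queried circuit $W[f_i,\skl.\ek_i,\ske.\ct_i]$ the two plaintexts evaluate identically. This is precisely the functional-equivalence observation from the body: on the first plaintext, $W$ takes the $\ske.\sk=\bot$ branch and outputs $\SKL.\Enc(\skl.\ek_i,f_i(x_0^\ast);\prf_{\prfkey}(\skl.\ek_i))$; on the second plaintext it takes the $\ske.\sk\neq\bot$ branch, recovers $\skfe.\sk_i=\SKE.\Dec(\ske.\sk,\ske.\ct_i)$ by correctness of $\SKE$, and outputs $\SKFE.\Dec(\skfe.\sk_i,\skfe.\ct^\ast)=T[f_i,\skl.\ek_i](x_0^\ast,\bot,\prfkey,0,\bot)=\SKL.\Enc(\skl.\ek_i,f_i(x_0^\ast);\prf_{\prfkey}(\skl.\ek_i))$ by correctness of $\SKFE$. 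Since both outputs depend only on $x_0^\ast$, equality holds for \emph{every} queried $f_i$ (including post-challenge key queries) regardless of whether $f_i$ is a distinguishing function, so $\qB$ never makes an inadmissible $\PKFE$ key query.

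By construction, when the $\PKFE$ challenger's bit is $0$ the view of $\qA$ is exactly $\hybi{1}$, and when it is $1$ the view is exactly $\hybi{2}$; hence $\advb{\PKFE,\qB}{ada}{ind}(\secp)=\abs{\Pr[\hybi{1}=1]-\Pr[\hybi{2}=1]}$, which is negligible by adaptive security of $\PKFE$. The main obstacle is really just the admissibility argument above: I must confirm that the \emph{correctness} of $\SKE$ and $\SKFE$ yields exact (not merely computational) functional equality of the two challenge plaintexts on all queried circuits, and that this equality is insensitive to $x_1^\ast$ so that post-challenge distinguishing keys create no inadmissible query for the underlying $\PKFE$ game.
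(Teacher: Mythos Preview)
Your proposal is correct and follows essentially the same reduction as the paper's proof: build a $\PKFE$ adversary that simulates $\Oracle{\qKG}$ by sampling $\skfe.\msk$, $\ske.\sk$, and the $\SKL$ key tuples itself, obtains each $\fe.\sk_{W,i}$ via its own key oracle, and submits the challenge pair $\big((x_0^\ast\concat 0^{\xpadlen},\bot,\prfkey),(\skfe.\ct^\ast,\ske.\sk,\bot)\big)$, with admissibility established by exactly the functional-equivalence argument you give. Your explicit remark that equality depends only on $x_0^\ast$ (hence also covers post-challenge key queries) is a nice addition the paper leaves implicit.
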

\begin{proof}
We construct an adversary $\qB$ for $\PKFE$ by using the distinguisher $\qD$ for these two games.
\begin{enumerate}
\item $\qB$ is given $\fe.\pk$ and sends $\pk\seteq \fe.\pk$ to $\qD$. $\qB$ also generates $\ske.\sk \chosen \zo{\secp}$ and $\skfe.\msk \gets \SKFE.\Setup(1^\secp)$.
\item When $\qD$ sends $f_i$ to $\Oracle{\qKG}$, %if there is an entry $(f_i,\vk_i,V_i)$ in $\List{\qKG}$, $\qB$ outputs $\bot$. Otherwise, 
$\qB$ generates $(\skl.\ek_i,\skl.\qsk_i,\skl.\vk_i) \gets \SKL.\qKG(1^\secp)$, $\skfe.\sk_i \gets \SKFE.\KG(\skfe.\msk,V[f_i,\skl.\ek_i])$, and $\ske.\ct_i \gets \SKE.\Enc(\ske.\sk, \skfe.\sk_i)$. Then, $\qB$ sends $W[f_i,\skl.\ek_i,\ske.\ct_i]$ to its challenger and receives $\fe.\sk_{W,i} \gets \FE.\KG(\fe.\msk,W[f_i,\skl.\ek_i,\ske.\ct_i])$. $\qB$ returns $\qfsk_{i} \seteq (\fe.\sk_{W,i},\skl.\qsk_i)$ to $\qD$ and adds $(f_i,\skl.\vk_i,\bot)$ to $\List{\qKG}$.
\item When $\qD$ sends $(f_i,\qfsk_i^\prime)$ to $\Oracle{\qVrfy}$, $\qB$ finds an entry $(f_i,\skl.\vk_i,V_i)$ from $\List{\qKG}$ and parses $\qfsk_i^\prime = (\fe.\sk_i^\prime,\skl.\qsk_i^\prime)$. $\qB$ returns $d \seteq \SKL.\qVrfy(\skl.\vk_i,\skl.\qsk_i^\prime)$. If $V_i = \top$, $\qB$ does not update the entry. Otherwise, $\qB$ updates the entry by setting $V_i \seteq d$.
\item When $\qD$ sends $(x_0^\ast,x_1^\ast)$, $\qB$ generates $\prfkey\gets \PRF.\Gen(1^\secp)$, $\skfe.\ct^\ast \gets \SKFE.\Enc(\skfe.\msk,(x_0^\ast,\bot,\prfkey,0,\bot))$. $\qB$ sets $X_0^\ast \seteq (x_0^\ast \concat 0^{\xpadlen},\bot,\prfkey)$ and $X_1^\ast \seteq (\skfe.\ct^\ast,\ske.\sk,\bot)$, sends $(X_0^\ast,X_1^\ast)$ to its challenger, and receives $\fe.\ct^\ast$. $\qB$ passes $\ct^\ast \seteq \fe.\ct^\ast$ to $\qD$.
\item $\qB$ outputs what $\qD$ outputs.
\end{enumerate}
 By the definition of $W$ described in~\cref{fig:func_wrap_W}, if we decrypt $\fe.\ct^\ast$ by $\fe.\sk_{W,i}$, we obtain
\begin{itemize}
\item $\SKL.\Enc(\skl.\ek_i,f(x_0);\prf_{\prfkey}(\skl.\ek_i))$ if $\fe.\ct^\ast$ is generated from $X_0^\ast$,
\item $z_i = \SKFE.\Dec(\skfe.\sk_i,\skfe.\ct^\ast)$ if $\fe.\ct^\ast$ is generated from $X_1^\ast$ since $\ske.\ct_i$ is a ciphertext of $\skfe.\sk_i$, where $\skfe.\ct^\ast=\SKFE.\Enc(\skfe.\msk,(x_0^\ast,\bot,\prfkey,0,\bot))$. By the correctness of $\SKFE$ and the definition of $T[f_i,\skl.\ek_i]$, it holds that $z_i = \SKL.\Enc(\skl.\ek_i,f(x_0^\ast);\prf_{\prfkey}( \skl.\ek_i))$.
\end{itemize}
That is, for all $i\in [q]$, it holds that $W[f_i,\skl.\ek_i,\ske.\ct_i](X_0^\ast) = W[f_i,\skl.\ek_i,\ske.\ct_i](X_1^\ast)$, and $\qB$ is a valid adversary of PKFE.

It is easy to see that if $\fe.\ct^\ast$ is an encryption of $X_0^\ast$ and $X_1^\ast$, $\qB$ perfectly simulates $\hybi{1}$ and $\hybi{2}$, respectively. This completes the proof.
\end{proof}

\begin{lemma}\label{lem:pkfe_hyb_two_three}
If $\SKFE$ is adaptively single-ciphertext function private, it holds that $\abs{\Pr[\hybi{2}=1] - \Pr[\hybi{3} =1]} = \negl(\secp)$.
\end{lemma}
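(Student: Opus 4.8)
The plan is to reduce the indistinguishability of $\hybi{2}$ and $\hybi{3}$ directly to the adaptively single-ciphertext function privacy of $\SKFE$. Recall that the only differences between the two games are that (i) the embedded $\SKFE$ challenge ciphertext switches from $\skfe.\ct^\ast \gets \SKFE.\Enc(\skfe.\msk,(x_0^\ast,\bot,\prfkey,0,\bot))$ to $\SKFE.\Enc(\skfe.\msk,(x_0^\ast,x_1^\ast,\prfkey,0,\bot))$, and (ii) each embedded $\SKFE$ key switches from $\skfe.\sk_i \gets \SKFE.\KG(\skfe.\msk,T[f_i,\skl.\ek_i])$ to $\SKFE.\KG(\skfe.\msk,\Vhyb[f_i,\skl.\ek_i,i])$. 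First I would record the key observation that these functions agree on the two challenge plaintexts: since every key query has index $i\in[q]$ and hence $i>0$, the $i\le j$ branch of $\Vhyb$ (with $j=0$) is never taken, so
\begin{align}
\Vhyb[f_i,\skl.\ek_i,i](x_0^\ast,x_1^\ast,\prfkey,0,\bot)
&= \SKL.\Enc(\skl.\ek_i,f_i(x_0^\ast);\prf_{\prfkey}(\skl.\ek_i)) \\
&= T[f_i,\skl.\ek_i](x_0^\ast,\bot,\prfkey,0,\bot)
\end{align}
holds identically for every $i\in[q]$ and every choice of $x_0^\ast,x_1^\ast$.

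Given this, I would construct an adversary $\qB$ against the function privacy game of $\SKFE$ that internally simulates the Ada-IND-KLA experiment for a $\PKFESKL$ distinguisher $\qA$. The adversary $\qB$ generates $(\fe.\pk,\fe.\msk)\gets\FE.\Setup(1^\secp)$ and $\ske.\sk\gets\zo{\secp}$ itself and forwards $\pk\seteq\fe.\pk$ to $\qA$; crucially it never samples $\skfe.\msk$ but instead uses its $\SKFE$ oracles. On the $i$-th key query $f_i$, $\qB$ samples $(\skl.\ek_i,\skl.\qsk_i,\skl.\vk_i)\gets\SKL.\qKG(1^\secp)$, queries $\Oracle{\KG}(T[f_i,\skl.\ek_i],\Vhyb[f_i,\skl.\ek_i,i])$ to obtain $\skfe.\sk_i$, sets $\ske.\ct_i\gets\SKE.\Enc(\ske.\sk,\skfe.\sk_i)$, computes $\fe.\sk_{W,i}\gets\FE.\KG(\fe.\msk,W[f_i,\skl.\ek_i,\ske.\ct_i])$, returns $\qfsk_i\seteq(\fe.\sk_{W,i},\skl.\qsk_i)$, and records $(f_i,\skl.\vk_i,\bot)$ in $\List{\qKG}$. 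Verification queries are answered exactly as in the games using the self-generated $\skl.\vk_i$. Upon the challenge $(x_0^\ast,x_1^\ast)$, $\qB$ samples $\prfkey\gets\PRF.\Gen(1^\secp)$, makes its single allowed call $\Oracle{\Enc}((x_0^\ast,\bot,\prfkey,0,\bot),(x_0^\ast,x_1^\ast,\prfkey,0,\bot))$ to obtain $\skfe.\ct^\ast$, sets $\fe.\ct^\ast\gets\FE.\Enc(\fe.\pk,(\skfe.\ct^\ast,\ske.\sk,\bot))$, and hands $\ct^\ast\seteq\fe.\ct^\ast$ to $\qA$, finally echoing $\qA$'s guess. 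When the $\SKFE$ challenger's bit is $0$ the view of $\qA$ is exactly $\hybi{2}$, and when it is $1$ it is exactly $\hybi{3}$, so $\qB$'s advantage equals $\abs{\Pr[\hybi{2}=1]-\Pr[\hybi{3}=1]}$.

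The step I expect to require the most care is verifying that $\qB$ is a \emph{legal} adversary. Function privacy demands that for the single encryption query $(X_0,X_1)=((x_0^\ast,\bot,\prfkey,0,\bot),(x_0^\ast,x_1^\ast,\prfkey,0,\bot))$ and every key query $(F_0,F_1)=(T[f_i,\skl.\ek_i],\Vhyb[f_i,\skl.\ek_i,i])$ one has $F_0(X_0)=F_1(X_1)$, which is precisely the displayed identity above; because it holds for every $f_i$ and regardless of $x_0^\ast,x_1^\ast$, the $\SKFE$ challenger never aborts even though some key queries may be issued before the challenge plaintexts are known (this is also why $\prfkey$, which lives in the plaintext rather than in the circuit constants, need not be fixed at key-generation time). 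I would additionally arrange, as is standard, that the circuits $T[\cdot]$ and $\Vhyb[\cdot]$ are padded to a common size and that the two plaintext tuples have equal length (the slot holding $\bot$ versus $x_1^\ast$ padded accordingly), so that the $\abs{F_0}=\abs{F_1}$ and $\abs{X_0}=\abs{X_1}$ conditions are met. Since only one encryption query is ever made, the adaptively single-ciphertext restriction is respected, and the function privacy of $\SKFE$ yields $\abs{\Pr[\hybi{2}=1]-\Pr[\hybi{3}=1]}=\negl(\secp)$.
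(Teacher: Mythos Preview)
Your proposal is correct and follows essentially the same reduction as the paper: both construct an $\SKFE$ function-privacy adversary that self-generates $(\fe.\pk,\fe.\msk)$ and $\ske.\sk$, answers key queries via $\Oracle{\KG}(T[f_i,\skl.\ek_i],\Vhyb[f_i,\skl.\ek_i,i])$, and answers the challenge via the single encryption query $((x_0^\ast,\bot,\prfkey,0,\bot),(x_0^\ast,x_1^\ast,\prfkey,0,\bot))$, with legality justified by the identity $\Vhyb[f_i,\skl.\ek_i,i](x_0^\ast,x_1^\ast,\prfkey,0,\bot)=T[f_i,\skl.\ek_i](x_0^\ast,\bot,\prfkey,0,\bot)$ for all $i\in[q]$. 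Your additional remarks on circuit/plaintext padding and on why the validity constraint is met even for pre-challenge key queries are sound elaborations the paper leaves implicit.
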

\begin{proof}
We construct an adversary $\qB$ for $\SKFE$ by using the distinguisher $\qD$ for these two games.
\begin{enumerate}
\item $\qB$ generates $(\fe.\pk,\fe.\msk)\gets \FE.\Setup(1^\secp)$ and $\ske.\sk \chosen \zo{\secp}$, and sends $\pk \seteq \fe.\pk$ to $\qD$.
\item When $\qD$ sends $f_i$ to $\Oracle{\qKG}$, %if there is an entry $(f_i,\vk_i,V_i)$ in $\List{\qKG}$, $\qB$ outputs $\bot$. Otherwise, 
$\qB$ generates $(\skl.\ek_i,\skl.\qsk_i,\skl.\vk_i) \gets \SKL.\qKG(1^\secp)$, sends a key query $(F_{0,i},F_{1,i})\seteq (T[f_i,\skl.\ek_i],\Vhyb[f_i,\skl.\ek_i,i])$ to its challenger, and receives $\skfe.\sk_i$. $\qB$ also generates $\ske.\ct_i \gets \SKE.\Enc(\ske.\sk, \skfe.\sk_i)$ and $\fe.\sk_{W,i} \gets \FE.\KG(\fe.\msk,W[f_i,\skl.\ek_i,\ske.\ct_i])$. $\qB$ returns $\qfsk_{i} \seteq (\fe.\sk_{W,i},\skl.\qsk_i)$ to $\qD$ and adds $(f_i,\skl.\vk_i,\bot)$ to $\List{\qKG}$.
\item When $\qD$ sends $(f_i,\qfsk_i^\prime)$ to $\Oracle{\qVrfy}$, $\qB$ finds an entry $(f_i,\skl.\vk_i,V_i)$ from $\List{\qKG}$ and parses $\qfsk_i^\prime = (\fe.\sk_i^\prime,\skl.\qsk_i^\prime)$. $\qB$ returns $d \seteq \SKL.\qVrfy(\skl.\vk_i,\skl.\qsk_i^\prime)$. If $V_i = \top$, $\qB$ does not update the entry. Otherwise, $\qB$ updates the entry by setting $V_i \seteq d$.
\item When $\qD$ sends $(x_0^\ast,x_1^\ast)$, $\qB$ generates $\prfkey\gets \PRF.\Gen(1^\secp)$, sets $X_0^\ast \seteq (x_0^\ast,\bot,\prfkey,0,\bot)$ and $X_1^\ast \seteq (x_0^\ast,x_1^\ast,\prfkey,0,\bot)$, sends an encryption query $(X_0^\ast,X_1^\ast)$ to its challenger, and receives $\skfe.\ct^\ast$. $\qB$ also generates $\fe.\ct^\ast \gets \FE.\Enc(\fe.\pk,(\skfe.\ct^\ast,\ske.\sk,\bot))$ and passes $\ct^\ast \seteq \fe.\ct^\ast$ to $\qD$.
\item $\qB$ outputs what $\qD$ outputs.
\end{enumerate}
Since $i \in [q]$, it holds that $\Vhyb[f_i,\skl.\ek_i,i](x_0^\ast,x_1^\ast,\prfkey,0,\bot) = T[f_i,\skl.\ek_i](x_0^\ast,\bot,\prfkey,0,\bot)$ for all $i\in [q]$. That is, $F_{0,i}(X_0^\ast)=F_{1,i}(X_1^\ast)$ for all $i \in [q]$ and $\qB$ is an valid adversary for SKFE.

If $\skfe.\ct^\ast$ is an encryption of $X_0^\ast$ and $\skfe.\sk_i$ is a functional decryption key for $F_{0,i}$, $\qB$ perfectly simulate $\hybi{2}$. If $\skfe.\ct^\ast$ is an encryption of $X_1^\ast$ and $\skfe.\sk_i$ is a functional decryption key for $F_{1,i}$, $\qB$ perfectly simulate $\hybi{3}$. This completes the proof.
\end{proof}

\begin{lemma}\label{lem:pkfe_hyb_three_four}
If $\SKFE$ is adaptively single-ciphertext function private, it holds that $\abs{\Pr[\hybij{3}{q}=1] - \Pr[\hybi{4} =1]} = \negl(\secp)$.
\end{lemma}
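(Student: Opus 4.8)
The plan is to reduce directly to the adaptively single-ciphertext function privacy of $\SKFE$, in close analogy with the proof of \cref{lem:pkfe_hyb_two_three}. I would construct an adversary $\qB$ against the $\SKFE$ function-privacy game that generates every component independent of $\skfe.\msk$ by itself and forwards all $\SKFE$-related objects to its challenger, so that depending on the challenger's bit $\qB$ perfectly simulates either $\hybi{4}$ or $\hybij{3}{q}$ for the distinguisher $\qD$. Concretely, $\qB$ first runs $(\fe.\pk,\fe.\msk)\gets \FE.\Setup(1^\secp)$, samples $\ske.\sk\chosen \zo{\secp}$, and sends $\pk\seteq \fe.\pk$ to $\qD$. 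On the $i$-th key query $f_i$ it samples $(\skl.\ek_i,\skl.\qsk_i,\skl.\vk_i)\gets \SKL.\qKG(1^\secp)$, submits the function pair $(F_{0,i},F_{1,i})\seteq(T[f_i,\skl.\ek_i],\Vhyb[f_i,\skl.\ek_i,i])$ to its key-generation oracle to obtain $\skfe.\sk_i$, and then forms $\ske.\ct_i\gets \SKE.\Enc(\ske.\sk,\skfe.\sk_i)$ and $\fe.\sk_{W,i}\gets \FE.\KG(\fe.\msk,W[f_i,\skl.\ek_i,\ske.\ct_i])$, returning $\qfsk_i\seteq(\fe.\sk_{W,i},\skl.\qsk_i)$ and updating $\List{\qKG}$ exactly as in the games; verification queries are answered locally via $\skl.\vk_i$.

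At the challenge step, upon receiving $(x_0^\ast,x_1^\ast)$, $\qB$ samples $\prfkey\gets \PRF.\Gen(1^\secp)$ and issues its single encryption query $(X_0^\ast,X_1^\ast)\seteq((x_1^\ast,\bot,\prfkey,0,\bot),(x_0^\ast,x_1^\ast,\prfkey,q,\bot))$, obtains $\skfe.\ct^\ast$, computes $\fe.\ct^\ast\gets \FE.\Enc(\fe.\pk,(\skfe.\ct^\ast,\ske.\sk,\bot))$, and forwards $\ct^\ast\seteq\fe.\ct^\ast$ to $\qD$, finally echoing $\qD$'s guess. When the $\SKFE$ challenger uses $(F_{0,i},X_0^\ast)$ the view is that of $\hybi{4}$, and when it uses $(F_{1,i},X_1^\ast)$ it is that of $\hybij{3}{q}$; since only one encryption query is made, the adaptively single-ciphertext restriction is respected even though the number of key queries $q$ may be an unbounded polynomial.

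The key step I would verify carefully is that $\qB$ is a legal $\SKFE$ adversary, i.e.\ that $F_{0,i}(X_0^\ast)=F_{1,i}(X_1^\ast)$ holds for every $i\in[q]$. Reading off the definitions of $T$ (\cref{fig:func_wrap_SKFE}) and $\Vhyb$ (\cref{fig:func_wrap_SKFE_hybrid}), and using that every queried index satisfies $i\le q$ so that the $i\le j$ branch of $\Vhyb$ with $j=q$ fires, I obtain
\begin{align}
T[f_i,\skl.\ek_i](x_1^\ast,\bot,\prfkey,0,\bot)
&= \SKL.\Enc(\skl.\ek_i,f_i(x_1^\ast);\prf_{\prfkey}(\skl.\ek_i)) \\
&= \Vhyb[f_i,\skl.\ek_i,i](x_0^\ast,x_1^\ast,\prfkey,q,\bot),
\end{align}
which is exactly the required equality. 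It then follows that $\abs{\Pr[\hybij{3}{q}=1]-\Pr[\hybi{4}=1]}$ equals the function-privacy advantage of $\qB$, which is $\negl(\secp)$.

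I expect the only genuine subtlety to be this functional-equivalence and length bookkeeping rather than any conceptual difficulty: one must confirm that $\Vhyb$ at final index $j=q$ collapses to the $x_1^\ast$ branch for all legitimate key indices, and that the circuit pair $(T,\Vhyb)$ and the plaintext pair $(X_0^\ast,X_1^\ast)$ are padded to equal sizes (consistent with $\xpadlen$ and the common description length assumed for the wrapped functions) so that the $\SKFE$ game does not abort on a length mismatch. Everything else is a routine perfect-simulation argument structurally identical to \cref{lem:pkfe_hyb_two_three}.
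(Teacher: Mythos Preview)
Your proposal is correct and follows essentially the same approach as the paper's proof: both construct a reduction to the adaptively single-ciphertext function privacy of $\SKFE$ by having $\qB$ generate everything except the $\SKFE$ keys and ciphertext itself, and both verify the functional equivalence $\Vhyb[f_i,\skl.\ek_i,i](x_0^\ast,x_1^\ast,\prfkey,q,\bot)=T[f_i,\skl.\ek_i](x_1^\ast,\bot,\prfkey,0,\bot)$ via the $i\le q$ branch. The only difference is a harmless relabeling---the paper sets $(F_{0,i},F_{1,i})=(\Vhyb,T)$ and $(X_0^\ast,X_1^\ast)=((x_0^\ast,x_1^\ast,\prfkey,q,\bot),(x_1^\ast,\bot,\prfkey,0,\bot))$, whereas you swap the $0$ and $1$ slots---which of course does not affect the argument.
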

\begin{proof}
We construct an adversary $\qB$ for $\SKFE$ by using the distinguisher $\qD$ for these two games.
\begin{enumerate}
\item $\qB$ generates $(\fe.\pk,\fe.\msk)\gets \FE.\Setup(1^\secp)$ and $\ske.\sk \chosen \zo{\secp}$, and sends $\pk \seteq \fe.\pk$ to $\qD$.
\item When $\qD$ sends $f_i$ to $\Oracle{\qKG}$, %if there is an entry $(f_i,\vk_i,V_i)$ in $\List{\qKG}$, $\qB$ outputs $\bot$. Otherwise, 
$\qB$ generates $(\skl.\ek_i,\skl.\qsk_i,\skl.\vk_i) \gets \SKL.\qKG(1^\secp)$, sends a key query $(F_{0,i},F_{1,i})\seteq (\Vhyb[f_i,\skl.\ek_i,i],T[f_i,\skl.\ek_i])$ to its challenger, and receives $\skfe.\sk_i$. $\qB$ also generates $\ske.\ct_i \gets \SKE.\Enc(\ske.\sk, \skfe.\sk_i)$ and $\fe.\sk_{W,i} \gets \FE.\KG(\fe.\msk,W[f_i,\skl.\ek_i,\ske.\ct_i])$. $\qB$ returns $\qfsk_{i} \seteq (\fe.\sk_{W,i},\skl.\qsk_i)$ to $\qD$ and adds $(f_i,\skl.\vk_i,\bot)$ to $\List{\qKG}$.
\item When $\qD$ sends $(f_i,\qfsk_i^\prime)$ to $\Oracle{\qVrfy}$, $\qB$ finds an entry $(f_i,\skl.\vk_i,V_i)$ from $\List{\qKG}$ and parses $\qfsk_i^\prime = (\fe.\sk_i^\prime,\skl.\qsk_i^\prime)$. $\qB$ returns $d \seteq \SKL.\qVrfy(\skl.\vk_i,\skl.\qsk_i^\prime)$. If $V_i = \top$, $\qB$ does not update the entry. Otherwise, $\qB$ updates the entry by setting $V_i \seteq d$.
\item When $\qD$ sends $(x_0^\ast,x_1^\ast)$, $\qB$ generates $\prfkey\gets \PRF.\Gen(1^\secp)$, sets $X_0^\ast \seteq (x_0^\ast,x_1^\ast,\prfkey,q,\bot)$ and $X_1^\ast \seteq (x_1^\ast,\bot,\prfkey,0,\bot)$, sends an encryption query $(X_0^\ast,X_1^\ast)$ to its challenger, and receives $\skfe.\ct^\ast$. $\qB$ also generates $\fe.\ct^\ast \gets \FE.\Enc(\fe.\pk,(\skfe.\ct^\ast,\ske.\sk,\bot))$ and passes $\ct^\ast \seteq \fe.\ct^\ast$ to $\qD$.
\item $\qB$ outputs what $\qD$ outputs.
\end{enumerate}
By the definition of $\Vhyb$ and $T$, it holds that for all $i\in [q]$,
\begin{align}
\Vhyb[f_i,\skl.\ek_i,i](x_0^\ast,x_1^\ast,\prfkey,q,\bot) & = \SKL.\Enc(\skl.\ek_i,f_i(x_1^\ast);\prf_{\prfkey}(\skl.\ek_i)) \\
&= T[f_i,\skl.\ek_i](x_1^\ast,\bot,\prfkey,0,\bot).
\end{align}
That is, $F_{0,i}(X_0^\ast)=F_{1,i}(X_1^\ast)$ for all $i \in [q]$ and $\qB$ is an valid adversary for SKFE.

If $\skfe.\ct^\ast$ is an encryption of $X_0^\ast$ and $\skfe.\sk_i$ is a functional decryption key for $F_{0,i}$, $\qB$ perfectly simulate $\hybij{3}{q}$. If $\skfe.\ct^\ast$ is an encryption of $X_1^\ast$ and $\skfe.\sk_i$ is a functional decryption key for $F_{1,i}$, $\qB$ perfectly simulate $\hybi{4}$. This completes the proof.
\end{proof}

\begin{lemma}\label{lem:pkfe_Ghyb_zero_one}
If $\SKFE$ is adaptively single-ciphertext function private, it holds that $\abs{\Pr[\sfG_0 =1] - \Pr[\sfG_1 =1]} = \negl(\secp)$.
\end{lemma}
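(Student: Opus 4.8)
The plan is to reduce the gap between $\sfG_0$ and $\sfG_1$ directly to the adaptively single-ciphertext function privacy of $\SKFE$, exactly in the style of \cref{lem:pkfe_hyb_two_three,lem:pkfe_hyb_three_four}. The two games differ only inside the single $\SKFE$ instance: in $\sfG_0$ the challenge $\SKFE$ ciphertext encrypts $(x_0^\ast,x_1^\ast,\prfkey,j-1,\bot)$ and every functional key is for $\Vhyb[f_i,\skl.\ek_i,i]$, whereas in $\sfG_1$ it encrypts $(x_0^\ast,x_1^\ast,\prfkey,j,\skl.\ct^\ast)$ with $\skl.\ct^\ast \gets \SKL.\Enc(\skl.\ek_j,f_j(x_0^\ast);\prf_{\prfkey}(\skl.\ek_j))$ and every key is for $\Vembed[f_i,\skl.\ek_i,i]$. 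Since all the outer objects ($\FE$, $\SKE$, and the per-key $\SKL$ triples) are generated identically in both games, a reduction $\qB$ can simulate the whole $\PKFESKL$ game for a distinguisher $\qD$ by running $\FE.\Setup$, sampling $\ske.\sk$ and the $(\skl.\ek_i,\skl.\qsk_i,\skl.\vk_i)$ itself, and outsourcing only the $\SKFE$ key generation and the single $\SKFE$ encryption to its own function-privacy challenger.

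Concretely, I would have $\qB$ answer the $i$-th query $f_i$ by generating $(\skl.\ek_i,\skl.\qsk_i,\skl.\vk_i)$, submitting the function pair $(\Vhyb[f_i,\skl.\ek_i,i],\Vembed[f_i,\skl.\ek_i,i])$ to its key oracle to receive $\skfe.\sk_i$, then wrapping it exactly as in the scheme: $\ske.\ct_i \gets \SKE.\Enc(\ske.\sk,\skfe.\sk_i)$, $\fe.\sk_{W,i}\gets\FE.\KG(\fe.\msk,W[f_i,\skl.\ek_i,\ske.\ct_i])$, and returning $(\fe.\sk_{W,i},\skl.\qsk_i)$. Verification queries are handled by $\SKL.\qVrfy$ as in the real game. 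At the challenge, $\qB$ forms $\skl.\ct^\ast$ from $\skl.\ek_j$ and $f_j(x_0^\ast)$, submits the message pair $X_0^\ast=(x_0^\ast,x_1^\ast,\prfkey,j-1,\bot)$ and $X_1^\ast=(x_0^\ast,x_1^\ast,\prfkey,j,\skl.\ct^\ast)$ to its encryption oracle, and embeds the returned $\skfe.\ct^\ast$ into $\fe.\ct^\ast \gets \FE.\Enc(\fe.\pk,(\skfe.\ct^\ast,\ske.\sk,\bot))$. When the hidden $\SKFE$ bit is $0$ (resp.\ $1$), $\qB$ reproduces $\sfG_0$ (resp.\ $\sfG_1$) perfectly, so its advantage equals $\abs{\Pr[\sfG_0=1]-\Pr[\sfG_1=1]}$, giving the bound.

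The step I expect to be the crux is checking that $\qB$ is an \emph{admissible} function-privacy adversary, i.e.\ that $\Vhyb[f_i,\skl.\ek_i,i](X_0^\ast)=\Vembed[f_i,\skl.\ek_i,i](X_1^\ast)$ for all $i\in[q]$ (lengths match by padding). For $i<j$ both sides equal $\SKL.\Enc(\skl.\ek_i,f_i(x_1^\ast);\prf_{\prfkey}(\skl.\ek_i))$ and for $i>j$ both equal the analogue with $f_i(x_0^\ast)$; the only interesting case is $i=j$, where $\Vhyb$ evaluated on index $j-1$ outputs $\SKL.\Enc(\skl.\ek_j,f_j(x_0^\ast);\prf_{\prfkey}(\skl.\ek_j))$ (because $j>j-1$) while $\Vembed$ evaluated on index $j$ outputs its hardwired $\skl.\ct^\ast$, and these coincide precisely because $\skl.\ct^\ast$ was defined to be that encryption. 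This is the identity already flagged in the description of $\sfG_1$, and it is exactly what makes the detour through $\Vembed$ legitimate.

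The subtlety I would address explicitly is timing: to assemble $X_1^\ast$, $\qB$ must know $f_j$ and $\skl.\ek_j$ at the moment it queries the $\SKFE$ encryption oracle, which is fine whenever the $j$-th key query precedes the challenge. For a $j$ whose key is requested only after the challenge, the Ada-IND-KLA rules force $f_j(x_0^\ast)=f_j(x_1^\ast)$, so $\Vhyb[f_j,\skl.\ek_j,j]$ returns the identical ciphertext string on indices $j-1$ and $j$ (the encryption randomness $\prf_{\prfkey}(\skl.\ek_j)$ being fixed); in that case $\sfG_0$ and $\sfG_1$ already agree on all decryption outputs, and the transition can instead be argued by the plain message-privacy of $\SKFE$ (encrypting index $j-1$ versus $j$ under the unchanged $\Vhyb$ keys), which needs no knowledge of $f_j$ at encryption time. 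Splitting into these two cases removes the only real obstacle, and the remaining steps are the routine simulation and advantage bookkeeping described above.
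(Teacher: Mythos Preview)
Your argument is correct and essentially identical to the paper's: the same reduction $\qB$ that simulates $\FE$, $\SKE$, and the $\SKL$ triples itself, forwards each key pair $(\Vhyb[f_i,\skl.\ek_i,i],\Vembed[f_i,\skl.\ek_i,i])$ to the $\SKFE$ key oracle and the single message pair $(X_0^\ast,X_1^\ast)$ to the encryption oracle, and verifies admissibility via the three-way case split on $i$ versus $j$. Your closing paragraph on the timing of the $j$-th key query goes beyond what the paper makes explicit; the paper's reduction silently assumes $f_j$ and $\skl.\ek_j$ are available at challenge time, and your proposed case split (for post-challenge $j$ use the fact that $f_j(x_0^\ast)=f_j(x_1^\ast)$ and invoke function privacy with unchanged $\Vhyb$ keys and messages differing only in the index) is a sound way to close that gap.
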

\begin{proof}
We construct an adversary $\qB$ for $\SKFE$ by using the distinguisher $\qD$ for these two games.
\begin{enumerate}
\item $\qB$ generates $(\fe.\pk,\fe.\msk)\gets \FE.\Setup(1^\secp)$ and $\ske.\sk \chosen \zo{\secp}$, and sends $\pk \seteq \fe.\pk$ to $\qD$.
\item When $\qD$ sends $f_i$ to $\Oracle{\qKG}$, %if there is an entry $(f_i,\vk_i,V_i)$ in $\List{\qKG}$, $\qB$ outputs $\bot$. Otherwise, 
$\qB$ generates $(\skl.\ek_i,\skl.\qsk_i,\skl.\vk_i) \gets \SKL.\qKG(1^\secp)$, sends a key query $(F_{0,i},F_{1,i})\seteq (\Vhyb[f_i,\skl.\ek_i,i],\Vembed[f_i,\skl.\ek_i,i])$ to its challenger, and receives $\skfe.\sk_i$. $\qB$ also generates $\ske.\ct_i \gets \SKE.\Enc(\ske.\sk, \skfe.\sk_i)$ and $\fe.\sk_{W,i} \gets \FE.\KG(\fe.\msk,W[f_i,\skl.\ek_i,\ske.\ct_i])$. $\qB$ returns $\qfsk_{i} \seteq (\fe.\sk_{W,i},\skl.\qsk_i)$ to $\qD$ and adds $(f_i,\skl.\vk_i,\bot)$ to $\List{\qKG}$.
\item When $\qD$ sends $(f_i,\qfsk_i^\prime)$ to $\Oracle{\qVrfy}$, $\qB$ finds an entry $(f_i,\skl.\vk_i,V_i)$ from $\List{\qKG}$ and parses $\qfsk_i^\prime = (\fe.\sk_i^\prime,\skl.\qsk_i^\prime)$. $\qB$ returns $d \seteq \SKL.\qVrfy(\skl.\vk_i,\skl.\qsk_i^\prime)$. If $V_i = \top$, $\qB$ does not update the entry. Otherwise, $\qB$ updates the entry by setting $V_i \seteq d$.
\item When $\qD$ sends $(x_0^\ast,x_1^\ast)$, $\qB$ generates $\prfkey\gets \PRF.\Gen(1^\secp)$ and $\skl.\ct^\ast \gets \SKL.\Enc(\skl.\ek_j,f_j(x_0^\ast);\prf_{\prfkey}(\skl.\ek_j))$, sets $X_0^\ast \seteq (x_0^\ast,x_1^\ast,\prfkey,j-1,\bot)$ and $X_1^\ast \seteq (x_0^\ast,x_1^\ast,\prfkey,j,\skl.\ct^\ast)$, sends an encryption query $(X_0^\ast,X_1^\ast)$ to its challenger, and receives $\skfe.\ct^\ast$. $\qB$ generates $\fe.\ct^\ast \gets \FE.\Enc(\fe.\pk,(\skfe.\ct^\ast,\ske.\sk,\bot))$ and passes $\ct^\ast \seteq \fe.\ct^\ast$ to $\qD$.
\item $\qB$ outputs what $\qD$ outputs.
\end{enumerate}
By the definitions of $\Vhyb$ and $\Vembed$, it holds that
\begin{align}
\Vhyb[f_i,\skl.\ek_i,i](x_0^\ast,x_1^\ast,\prfkey,j-1,\bot) & = \SKL.\Enc(\skl.\ek_i,f_i(x_0^\ast);\prf_{\prfkey}(\skl.\ek_i))\\
&= \Vembed[f_i,\skl.\ek_i,i](x_0^\ast,x_1^\ast,\prfkey,j,\skl.\ct^\ast)
\end{align} for all $i\in [j,q]$ since $\skl.\ct^\ast$ is an encryption of $f_j(x_0)$.
it also holds that
\begin{align}
\Vhyb[f_i,\skl.\ek_i,i](x_0^\ast,x_1^\ast,\prfkey,j-1,\bot) & = \SKL.\Enc(\skl.\ek_i,f_i(x_1^\ast);\prf_{\prfkey}(\skl.\ek_i))\\
&= \Vembed[f_i,\skl.\ek_i,i](x_0^\ast,x_1^\ast,\prfkey,j,\skl.\ct^\ast)
\end{align} for all $i\in [1,j-1]$. Hence, for all $i\in [q]$, it holds that $F_{0,i}(X_0^\ast)=F_{1,i}(X_1\ast)$ and $\qB$ is an valid adversary for SKFE. If $\skfe.\ct^\ast$ is an encryption of $X_0^\ast$ and $\skfe.\sk_i$ is a functional decryption key for $F_{0,i}$, $\qB$ perfectly simulate $\sfG_{0}$. If $\skfe.\ct^\ast$ is an encryption of $X_1^\ast$ and $\skfe.\sk_i$ is a functional decryption key for $F_{1,i}$, $\qB$ perfectly simulate $\sfG_{1}$. This completes the proof.
\end{proof}

\begin{lemma}\label{lem:pkfe_Ghyb_three_four}
If $\SKL$ IND-KLA, it holds that $\abs{\Pr[\sfG_3 =1] - \Pr[\sfG_4 =1]} = \negl(\secp)$.
\end{lemma}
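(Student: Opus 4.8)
The plan is to reduce the gap between $\sfG_3$ and $\sfG_4$ directly to the IND-KLA security of $\SKL$, embedding the external challenge instance into the $j$-th key slot. First I would recall that, after the changes already introduced in $\sfG_2$ and $\sfG_3$, the only difference between the two games is the message encrypted inside $\skl.\ct^\ast = \SKL.\Enc(\skl.\ek_j, f_j(x_b^\ast))$ (for $b=0$ versus $b=1$). Crucially, $\skl.\ct^\ast$ now uses \emph{fresh uniform} randomness rather than $\prf_{\prfkey}(\skl.\ek_j)$, and the key handed to the $\Vembed$ circuits is punctured at $\skl.\ek_j$, so $\skl.\ek_j$'s secret material is consumed \emph{only} in $\skl.\ct^\ast$, which enters the experiment solely as the hardwired input to $\skfe.\ct^\ast$ via the $i=j$ branch of $\Vembed$. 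This fresh-randomness form is exactly what makes an embedding of the IND-KLA challenge ciphertext possible.

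Next I would handle the sub-events in which no reduction is needed. If $\qD$ never makes $\Oracle{\qVrfy}$ output $\top$ on $(f_j,\cdot)$ before the challenge phase, then the \ADAINDKLA{} game rules force $f_j$ to be non-distinguishing, i.e. $f_j(x_0^\ast)=f_j(x_1^\ast)$; in that case $\SKL.\Enc(\skl.\ek_j, f_j(x_0^\ast))$ and $\SKL.\Enc(\skl.\ek_j, f_j(x_1^\ast))$ are identically distributed and all other components are generated identically, so $\sfG_3$ and $\sfG_4$ coincide. (And whenever $f_j$ is distinguishing but unreturned, the challenge ciphertext is never produced, so $\skl.\ct^\ast$ is never read and the modification is again invisible.) It therefore suffices to bound the difference conditioned on the sub-event that a valid $\skl.\qsk_j$ is returned before the challenge.

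For that sub-event I would build a reduction $\qB$ against IND-KLA security of $\SKL$. On input $(\skl.\ek,\skl.\qdk)$, $\qB$ sets $(\skl.\ek_j,\skl.\qsk_j)\seteq(\skl.\ek,\skl.\qdk)$ and generates all remaining $\SKL$ keys $(\skl.\ek_i,\skl.\qsk_i,\skl.\vk_i)_{i\ne j}$ together with the $\FE$, $\SKFE$, $\SKE$ and $\PRF$ material itself, exactly as in $\sfG_3$/$\sfG_4$. It answers $\Oracle{\qKG}$ queries honestly; for $\Oracle{\qVrfy}$ on $(f_i,\cdot)$ with $i\ne j$ it parses out $\skl.\qsk_i'$ and runs $\SKL.\qVrfy(\skl.\vk_i,\skl.\qsk_i')$ locally, whereas for $(f_j,\skl.\qsk_j')$ — where it does not hold $\skl.\vk_j$ — it forwards $\skl.\qsk_j'$ to its own verification oracle and relays the answer. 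When $\qD$ outputs $(x_0^\ast,x_1^\ast)$, $\qB$ submits the pair $(f_j(x_0^\ast),f_j(x_1^\ast))$ as its IND-KLA challenge messages, receives $\skl.\ct^\ast$, hardwires it into $\skfe.\ct^\ast\gets\SKFE.\Enc(\skfe.\msk,(x_0^\ast,x_1^\ast,\prfkey_{\ne \skl.\ek_j},j,\skl.\ct^\ast))$, assembles $\fe.\ct^\ast$, hands $\ct^\ast$ to $\qD$, and finally echoes $\qD$'s guess. Since the IND-KLA challenger encrypts $\msg_{\coin}^\ast$ with fresh randomness, $\qB$ perfectly simulates $\sfG_3$ when its secret coin is $0$ and $\sfG_4$ when it is $1$, so on this sub-event $\advb{\SKL,\qB}{ind}{kla}(\secp)$ equals $\abs{\Pr[\sfG_3=1]-\Pr[\sfG_4=1]}$ up to the identical sub-events above, giving the claimed negligible bound.

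The step I expect to be the main obstacle is arguing that the simulation is faithful precisely along the ordering/return dimension. I must verify that every verification query on $f_j$ — including repeated queries and, in the multi-query setting, post-challenge queries — can be answered through the single external oracle without knowledge of $\skl.\vk_j$, and that the IND-KLA challenger's gating on $V=\top$ (its refusal to release the challenge unless a valid key was returned first) aligns exactly with the event on which the reduction is meaningful. Concretely, the \ADAINDKLA{} requirement that any distinguishing key be returned before the challenge ciphertext is what guarantees that, whenever the embedded change matters, the external challenger does deliver $\skl.\ct^\ast$; the sub-events where the reduction would stall are exactly those where $\sfG_3$ and $\sfG_4$ are already identical. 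Establishing this alignment carefully is the crux, after which the bound follows immediately from IND-KLA security.
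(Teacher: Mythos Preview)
Your proposal is correct and follows essentially the same approach as the paper: the same case split (non-distinguishing $f_j$ gives identical games; otherwise the $j$-th key was returned before the challenge and one reduces to IND-KLA), and the same reduction that embeds the external $\SKL$ instance into slot $j$, forwards $\Oracle{\qVrfy}(f_j,\cdot)$ queries to the IND-KLA verification oracle, and plugs the received $\skl.\ct^\ast$ into $\skfe.\ct^\ast$. Your choice of challenge messages $(f_j(x_0^\ast),f_j(x_1^\ast))$ is in fact the precise version of what the paper intends, and your discussion of why the IND-KLA gating on $V=\top$ lines up with the sub-event on which the change matters is exactly the point that makes the simulation go through.
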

\begin{proof}
We focus on the case where the adversary returns a valid $\qfsk_j = (\fe.\sk_{W,j},\skl.\qsk_j)$, which is the answer to the $j$-th key query, since $f_j(x_0^\ast)= f_j(x_1^\ast)$ must hold if $\qfsk_j$ is not returned.
Hence $f_j(x_0^\ast) \ne f_j(x_1^\ast)$ is allowed in this case.

We construct an adversary $\qB$ for $\SKL$ by using the distinguisher $\qD$ for these two games.
\begin{enumerate}
\item $\qB$ is given $(\skl.\ek^\ast,\skl.\qsk^\ast)$ and sets $(\skl.\ek_j,\skl.\ek_j) \seteq (\skl.\ek^\ast,\skl.\qsk^\ast)$.
\item $\qB$ generates $(\fe.\pk,\fe.\msk)\gets \FE.\Setup(1^\secp)$, $\skfe.\msk \gets \SKFE.\Setup(1^\secp)$, and $\ske.\sk \chosen \zo{\secp}$, and sends $\pk \seteq \fe.\pk$ to $\qD$.
\item When $\qD$ sends the $i$-th query $f_i$ to $\Oracle{\qKG}$, %if there is an entry $(f_i,\vk_i,V_i)$ in $\List{\qKG}$, $\qB$ outputs $\bot$. Else 
if $i\ne j$, $\qB$ generates $(\skl.\ek_i,\skl.\qsk_i,\skl.\vk_i) \gets \SKL.\qKG(1^\secp)$. For all $i \in [q]$, $\qB$ generates $\skfe.\sk_i \gets \SKFE.\KG(\skfe.\msk,\Vembed[f_i,\skl.\ek_i,i])$, $\ske.\ct_i \gets \SKE.\Enc(\ske.\sk, \skfe.\sk_i)$, and $\fe.\sk_{W,i} \gets \FE.\KG(\fe.\msk,\allowbreak W[f_i,\skl.\ek_i,\ske.\ct_i])$, and returns $\qfsk_{i} \seteq (\fe.\sk_{W,i},\skl.\qsk_i)$ to $\qD$. Note that $\skl.\qsk_j =\skl.\qsk$ is given from the challenger. If $i\ne j$, $\qB$ adds $(f_i,\skl.\vk_i,\bot)$ to $\List{\qKG}$.
If $i=j$, $\qB$ adds $(f_j,\bot,\bot)$ to $\List{\qKG}$.
\item When $\qD$ sends $(f_i,\qfsk_i^\prime)$ to $\Oracle{\qVrfy}$, $\qB$ finds an entry $(f_i,\skl.\vk_i,V_i)$ from $\List{\qKG}$ and parses $\qfsk_i^\prime = (\fe.\sk_i^\prime,\skl.\qsk_i^\prime)$.
\begin{itemize}
\item If $f_i \ne f_j$, $\qB$ returns $d \seteq \SKL.\qVrfy(\skl.\vk_i,\skl.\qsk_i^\prime)$ since $\skl.\vk_i \ne \bot$. If $V_i =\top$ $\qB$ does not update the entry. Otherwise, $\qB$ updates the entry by setting $V_i \seteq d$.
\item Else if $f_i = f_j$, $\qB$ sends $\skl.\qsk_j^\prime$ to its challenger ($\Oracle{\SKL.\qVrfy}$ of IND-KLA), receives the result $d_j$, and passes $d_j$ to $\qD$. If $V_j =\top$, $\qB$ does not update the entry. Otherwise, $\qB$ updates the entry by setting $V_j \seteq d_j$.
\end{itemize}
\item When $\qD$ sends $(x_0^\ast,x_1^\ast)$, $\qB$ generates $\prfkey\gets \PRF.\Gen(1^\secp)$ and $\prfkey_{\ne \skl.\ek_j} = \Puncture(\prfkey,\skl.\ek_j)$, sends $(x_0^\ast,x_1^\ast)$ to its challenger, and receives $\skl.\ct^\ast \gets \SKL.\Enc(\skl.\ek_j,f_j(x_\coin^\ast))$. $\qB$ generates $\skfe.\ct^\ast \gets  \SKFE.\Enc(\skfe.\msk,\allowbreak (x_0^\ast,x_1^\ast,\prfkey_{\ne \skl.\ek_j},j,\skl.\ct^\ast))$ and $\fe.\ct^\ast \gets \FE.\Enc(\fe.\pk,(\skfe.\ct^\ast,\ske.\sk,\bot))$ and passes $\ct^\ast \seteq \fe.\ct^\ast$ to $\qD$.
\item $\qB$ outputs what $\qD$ outputs.
\end{enumerate}
It is easy to see that $\qB$ perfectly simulates $\sfG_3$ and $\sfG_4$ if $\coin=0$ and $\coin=1$, respectively.
This completes the proof.
\end{proof}

\else
%%%%% LNCS-style submission version region %%%%%
\if0
	\newpage
	 	\appendix
	 	\setcounter{page}{1}
 	{
	\noindent
 	\begin{center}
	{\Large SUPPLEMENTAL MATERIALS}
	\end{center}
 	}
	\setcounter{tocdepth}{2}
	 	\ifnum\noaux=1
 	\else
{\color{red}{We attached the full version of this paper as a separated file (auxiliary supplemental material) for readability. It is available from the program committee members.}}

\fi
\fi

	\setcounter{tocdepth}{2}
	\tableofcontents

	\fi
	\else
	%%%%%%% Camera-ready region (no appendix) %%%%%%
\fi

\end{document}